\theoremstyle{plain}
\newtheorem{thm}{Theorem}[section]
\newtheorem{lem}[thm]{Lemma}
\newtheorem{cor}[thm]{Corollary}
\newtheorem{prop}[thm]{Proposition}
\newtheorem*{thm*}{Theorem}
\newtheorem{rem}[thm]{Remark}
\newtheorem{cond}{Condition}
\theoremstyle{remark}
\newtheorem{defn}[thm]{Definition}
\newtheorem{ex}{Example}
\newcommand{\p}{\mathbb{P}}
\newcommand{\e}{\mathbb{E}}
\newcommand{\lk}{\left[ }
\newcommand{\rk}{\right] }
\newcommand{\lc}{\left(}
\newcommand{\rc}{\right)}
\newcommand{\R}{\mathbb{R}}
\newcommand{\N}{\mathbb{N}}
\newcommand{\bc}{\mathcal{B}}
\newcommand{\id}{\mathbf{1}}
\newcommand{\sumn}{\sum_{i=1}^n}
\newcommand{\sumd}{\sum_{i=1}^d}
\newcommand{\prodd}{\prod_{i=1}^d}
\newcommand{\suminf}{\sum_{i=1}^\infty}
\newcommand{\iinn}{{i\in\N}}
\newcommand{\ijdn}{{(i,j)\in d\times\N}}
\newcommand{\leqn}{{1\leq i\leq n}}
\newcommand{\leqd}{{1\leq i\leq d}}
\newcommand{\rmd}{\mathrm{d}}
\newcommand{\bmx}{{\bm{x}}}
\newcommand{\bmy}{{\bm{y}}}
\newcommand{\bmY}{{\bm{Y}}}
\newcommand{\ubmY}{{\underline{\bm{Y}}}}
\newcommand{\ubmy}{{\underline{\bm{y}}}}
\newcommand{\ubmZ}{{\underline{\bm{Z}}}}
\newcommand{\bmb}{{\bm{b}}}
\newcommand{\bmX}{{\bm{X}}}
\newcommand{\bmxij}{{\bm{x}_{IJ}}}
\newcommand{\bmXIJ}{{\bm{X}_{IJ}}}
\newcommand{\ubmX}{{\underline{\bm{X}}}}
\newcommand{\ubmx}{{\underline{\bm{x}}}}
\newcommand{\bmZ}{{\bm{Z}}}
\newcommand{\binfty}{{\bm{\infty}}}
\newcommand{\ee}{\mathcal{E}}
\newcommand{\bmt}{{\bm{t}}}
\DeclareMathOperator*{\cov}{Cov}
\DeclareMathOperator*{\IDRM}{IDRM}
\DeclareMathOperator*{\minid}{min-id}
\DeclareMathOperator{\crm}{CRM}
\begin{document}

\begin{frontmatter}
\title{Infinitely divisible priors for multivariate survival functions}
\runtitle{Infinitely divisible priors for multivariate survival functions}

\begin{aug}
\author[A]{\fnms{Florian}~\snm{Brück} \ead[label=e1]{florian.brueck.edu@gmail.com}},

\address[A]{Research Institute for Statistics and Information Science, University of Geneva \printead[presep={ ,\ }]{e1}}

\end{aug}

\begin{abstract}
    This article introduces a novel framework for nonparametric priors on real-valued random vectors, which can be viewed as a multivariate generalization of neutral-to-the right priors. It is based on randomizing the exponent measure of a minimum-infinitely divisible random vector by an infinitely divisible random measure and naturally incorporates partially exchangeable data as well as exchangeable random vectors. We show how to construct hierarchical priors from simple building blocks and embed many models from Bayesian nonparametric survival analysis into our framework. The prior can concentrate on discrete or continuous distributions and other properties such as dependence, moments and moments of mean functionals are characterized. The posterior predictive distribution is derived in a general framework and is refined under some regularity conditions. In addition, a framework for the simulation from the posterior predictive distribution is provided, which is illustrated by an application to partially exchangeable data in a survival analysis context.
    As a byproduct, the construction of tractable infinitely divisible random measures is studied and the concept of subordination of homogeneous completely random measures by homogeneous completely random measures is extended to the subordination of homogeneous completely random measures by infinitely divisible random measures. This technique allows to create vectors of dependent infinitely divisible random measures with tractable Laplace transforms and serves as a general tool for the construction of tractable infinitely divisible random measures. 
\end{abstract}

\begin{keyword}[class=MSC]
\kwd[Primary ]{62G05}
\kwd[; secondary ]{60G57}
\end{keyword}

\begin{keyword}
\kwd{Bayesian nonparametrics}
\kwd{Infinitely divisible random measure}
\kwd{Min-id distribution}
\kwd{Exchangeable sequences}
\end{keyword}

\end{frontmatter}

\section{Introduction}
The Dirichlet process serves as the cornerstone of Bayesian nonparametrics. Since its introduction in \cite{fergusonbayesiananalysis1973}, significant efforts have been made to extend its underlying construction methods, leading to more flexible models for random probability measures and, consequently, more flexible frameworks for Bayesian analysis.
Essentially, these extensions can be categorized into three categories. The first type of extensions is based on the construction of the Dirichlet process as $\mu(\cdot)/\mu(\mathbb{Y})$, where $\mu$ is a Gamma distributed completely random measure (CRM) on a Polish space $\mathbb{Y}$. Dropping the assumption of the Gamma distribution, the resulting random probability measures are known as normalized CRMs, which were introduced in \cite{jamespruensterlijoi2009}. This framework has been leveraged in many works due to its flexibility and analytical tractability of the posterior, see e.g.\ \cite{chen2013,barrioslijoinietobarajaspreunster2013,lijoinipotipruenster2014,argiento2020} for examples in clustering, topic modeling and density estimation.
The second type of extensions of the Dirichlet process are based on its stick-breaking representation \cite{sethuraman1994} of the form $\sum_{i\in\N} \pi_i \delta_{Y_i}$, where $\pi_i=V_i \prod_{j=1}^{i-1} (1-V_i)$ with $(V_i)_{i\in\N}\overset{i.i.d.}{\sim} Beta(1,\alpha)$ independent of an i.i.d.\ sequence $(Y_i)_{i\in\N}$. This construction principle is very popular due to its simplicity and flexibility and it has been generalized in many ways, see notably \cite{kingman1967completely,pitmanyor1997,ishwaranjames2001}, and applications are found in species sampling, clustering or hierarchical mixture models.

The third construction principle of the Dirichlet process is only valid in the univariate and real-valued case, in which the Dirichlet process can be viewed as a neutral-to-the-right (NTR) process \cite{doksumtailfree1974}. An NTR process is a random probability measure constructed via a random survival function of the form $\exp\lc -\mu(-\infty,t])\rc$, where $\mu$ is a CRM on $\R$. Extensions of this construction principle to the multivariate framework are less investigated than in the other two cases, since it is not obvious how a multivariate analog of $\exp\lc -\mu(-\infty,t])\rc$ should be constructed to yield a valid multivariate survival function. Most works thus have focused on creating vectors of dependent univariate NTR processes, see e.g.\ \cite{epifanilijoi2010,lijoinipoti2014,rivapalacioleisen2018}, with the exception of \cite{jamesspntr2006} who constructs a multivariate random probability measure with one specific margin being an NTR process.
The content of this paper can be viewed as the natural generalization of an NTR prior to the multivariate setting. Consider the random survival function of a real-valued random vector $\bmX\in\R^d$ given by
\begin{align}
  S_\mu(\bmx)=P\lc X_1>x_1,\ldots,X_d>x_d\mid\mu\rc=\exp\lc - \mu\lc  (\bmx,\binfty]^\complement \rc \rc,  \label{introminid}
\end{align}
where $\mu$ denotes a random measure on $(-\infty,\infty]^d$ which is finite on every set of the form $ (\bmx,\binfty]^\complement:= (-\infty,\infty]^d\setminus (\bmx,\binfty]$. The measure $\mu$ is called exponent measure due to its appearance in the exponent of $S_\mu$ and the corresponding random probability measure is denoted as $\minid(\mu)$. Since it is not trivial to see that $S_\mu$ is a well-defined multivariate survival function for every $\mu$, we remark that survival functions of the form (\ref{introminid}) naturally appear in the study of extremes of independent random vectors, where they are called minimum infinitely divisible (min-id) survival functions, explaining the notation $\minid(\mu)$ and referring to \cite{Husler1989,resnickextreme1987} for more background information. It is easy to see that this construction principle is identical to the construction of an NTR prior when $\mu$ is a univariate CRM. When $\mu$ is a multivariate CRM each margin of $S_\mu$ corresponds to an NTR process and it can be checked that $S_\mu(\bmx_1),(S_\mu(\bmx_1)-S_\mu(\bmx_2))/S_\mu(\bmx_1),\ldots,(S_\mu(\bmx_{K-1})-S_\mu(\bmx_K))/S_\mu(\bmx_{K-1})$ is a collection of independent random variables for every $\bmx_1<\ldots<\bmx_K$, extending the classical neutrality concept for univariate distributions from \cite{doksumtailfree1974} to the multivariate setting. Thus, random probability measures which are constructed by choosing a CRM $\mu$ in (\ref{introminid}) may be viewed as the natural multivariate generalizations of NTR processes. 

Restricting $\mu$ to the class of CRMs in (\ref{introminid}) should however be seen as a limitation, since every realization of $\mu$ is purely discrete and a corresponding random vector inherits this discrete structure. In this paper, we will go far beyond CRMs by allowing the random measure $\mu$ to be infinitely divisible, i.e.\ for every $n\in\N$ there exist i.i.d.\ random measures $\lc \mu^{(i,n)}\rc_{\leqn}$ such that $\mu\sim \sum_{i=1}^n \mu^{(i,n)}$, see \cite[Chapter 3.3]{kallenberg2017} for an account on infinitely divisible random measures (IDRMs). The framework of IDRMs is much more flexible than that of CRMs, since IDRMs naturally include continuous as well as discrete random measures. Additionally, the class of IDRMs is closed under many natural operations on measures such as integration. For example, the univariate models of \cite{Dykstralaud1981,loweng1989} for the cumulative hazard of a random variable are of the form $\int_0^t g(x)\mu(\rmd x)$ for some diffuse CRM $\mu$ and thus they are naturally included in the framework of IDRMs. Furthermore, many models for partially exchangeable (survival) data may also be embedded in this framework, such as \cite{epifanilijoi2010,lijoinipoti2014,rivapalacioleisen2018,camerlengilijoipruenster2021}. In essence, this is always a consequence of these models being based on transformations of vectors of dependent CRMs, which often result in IDRMs.  

At the heart of many multivariate Bayesian models is the construction of vectors of dependent CRMs, which are conditionally independent on a baseline CRM, see \cite{catalano2023} for an overview of the topic. These vectors of dependent CRMs may or may not be smoothed at a higher level in the construction of the final model, but the discrete nature of the baseline CRM implies certain tie-like dependencies in the resulting vector of dependent random measures. Our framework allows to avoid these tie-like dependencies stemming from discrete baseline CRMs, as it allows for the constructions of vectors of dependent IDRMs, which are conditionally independent given a continuous baseline IDRM. This type of construction may be seen as a useful tool beyond the scope of this paper as it enables ``smoother'' dependence structures between  dependent random measures.

Our specification of the prior can be summarized by the following hierarchical construction of the random probability distribution for a random vector $\bmX$:
\begin{align} 
\bmX\sim \minid(\mu),\ \mu\sim \IDRM . \label{intrhierarchconstr}
\end{align}
The rest of this paper will be devoted to formalizing (\ref{intrhierarchconstr}) in a sound mathematical framework for nonparametric Bayesian analysis. We already want to mention that the derivation of the posterior distribution will essentially be based on applications of de Finetti's theorem, since we will show that the law of the exchangeable sequence $(\bmX_j)_{j\in\N}\overset{i.i.d.}{\sim}\minid(\mu)$ is again in the class of min-id distributions, which might be interpreted as a conjugacy property of our class of prior distributions. This conjugacy property will allow to represent our posterior without the need for the introduction of an auxiliary latent structure, since results about the conditional distributions of min-id sequences can be leveraged to identify the posterior of the random probability measure $\minid(\mu)$.

The main text is organized as follows. Section \ref{secexminididrm} introduces (exchangeable) sequences of min-id random vectors and IDRMs and derives a one-to-one correspondence of these two objects. Section \ref{secidempriors} formally introduces our framework for prior distributions and discusses how to construct tractable models from simple building blocks. Further, several models from the literature are embedded into our framework. Section \ref{secpostdistr} derives the posterior distribution in the most general setting and Section \ref{seccclosedformreppost} provides a framework to construct continuous priors with analytically tractable posterior distribution. Section \ref{secmultivmodel} demonstrates the flexibility of our framework by introducing a tractable model for partially exchangeable data based on a continuous dependence structure and discusses and illustrates its simulation. Finally, Section \ref{secconclusion} discusses potential implications and future research. \\
The paper is complemented by a supplementary material which contains a variety of addition results. First, Supplementary Material \ref{appintrominprocess} provides an introduction to min-id distributions and their stochastic representations and conditional distributions. We strongly recommend all readers who are unfamiliar with the topic to first read this introductory material before proceeding with the main text of the paper. Section \ref{appsimfromposterior} provides a general framework to sample from the posterior predictive distribution of a min-id prior. Supplementary Material \ref{appsimpartiallyexdata} exploits this general framework to derive a simulation algorithm for the posterior of the the model from Section \ref{secmultivmodel} and also contains additional simulation results. Supplementary Material \ref{apppriormom} contains additional properties of min-id priors. Supplementary Material \ref{appdiscprior} discusses the case of discrete priors and derives the posterior distribution of the multivariate version of the NTR prior.  An additional example of a min-id prior is presented in Supplementary Material \ref{appexamples}. The proofs of the results from the supplementary material are collected in Appendix \ref{appproofssupp} whereas the proofs of the results from the main text are collected in Supplementary Material \ref{appproofs}. Technical details on random measure are deferred to Supplementary Material \ref{appdefnrandmeas}. As the notation in the paper is rather heavy, an overview of the most important notation can be found in Supplementary Material \ref{appgloss}.

\section{Exchangeable min-id sequences and infinitely divisible random measures}
\label{secexminididrm}
This section introduces exchangeable sequences of min-id random vectors, which naturally arise by imposing a prior on the exponent measure of a random vector according to (\ref{intrhierarchconstr}). Further, we introduce infinitely divisible random measures on $(-\infty,\infty]^d$ which will serve as the prior on the exponent measures in the construction of min-id random vectors. Finally, we prove that a sequence of min-id random vectors whose associated exponent measure is randomized by an infinitely divisible random measure stays in the class of exchangeable min-id sequences, which will be the essential technical ingredient to derive the posterior distribution.

In the following, will always assume that our random objects are defined on a common probability space $(\Omega,\mathcal{F},P)$, if not explicitly mentioned otherwise. Moreover, to differentiate a sequence of random vectors, the first $m$ random vectors in a sequence of random vectors and a random variable in the sequence we use the respective notation $\ubmX:=(\bmX_j)_{j\in\N}:=\lc\lc X_{i,j}\rc_{1\leq i\leq d}\rc_{j\in\N}$, $\ubmX_m:=(\bmX_j)_{1\leq j\leq m}$ and $X_{i,j}$. Similarly, a sequence of vectors, the first $m$ vectors in a sequence of vectors and an entry in the sequence are denoted as $\ubmx:=(\bmx_j)_{j\in\N}:=\lc\lc x_{i,j}\rc_{\leqd}\rc_{j\in\N}$, $\ubmx_m:=(\bmx_j)_{1\leq j\leq m}$ and $x_{i,j}$, respectively.  Moreover, the operators on vectors/sequences $+,-,*,/,\min,\max,\sup,\inf$  are always applied componentwise.

\subsection{Min-id sequences}
One of the central objects of this section are exchangeable sequences of random vectors. The following definition formalizes what we mean by exchangeable.
\begin{defn}
\label{defnexrandvec}
    A sequence of random vectors $\ubmX:=\lc \bmX_1,\bmX_2,\ldots\rc$ is called exchangeable (exchangeable sequence) if for every deterministic permutation $\tau$ of $\N$ we have $\ubmX\sim \lc \bmX_{\tau(1)},\bmX_{\tau(2)},\ldots\rc$.
\end{defn}
This definition should not be confused with partial exchangeability, which often occurs in hierarchical models in Bayesian nonparametrics. Partial exchangeability allows different permutations in every margin of the sequence of random vectors and thus is a stronger condition.
We focus on sequences and random vectors on the space 
$$E_p:=(-\binfty,\binfty]= (-\infty,\infty]^p,$$ 
where $p\in\N\cup\{\infty\}$. $E_p$ inherits the standard topology and corresponding Borel $\sigma$-algebra from $[-\infty,\infty]^p$. Frequently, the subspace $E^\prime_p :=E_p\setminus\{\binfty\}$, which inherits its topology and $\sigma$-algebra from $E_p$, will be important in theoretical considerations. 

Min-id sequences are at the heart of this paper. Formally, they can be defined in various equivalent ways, which are presented jointly, because we will frequently use all of these definitions in our derivations. To ease the notation, denote the index set of a sequence of (random) vectors as $d\times \N:= \{ (i,j) \mid 1\leq i\leq d,j\in \N\}$. Further, we will also use the notation $IJ:=\lc (i_k,j_k)\rc_{1\leq k\leq p}$ to refer to a $p$-dimensional margin of $d\times \N$ and implicitly use the important convention $\min_\emptyset =\binfty$. Finally, to emphasize that a random sequence (resp.\ vector) $\ubmX$ (resp.\ $\bmX_{IJ}$) takes values in a product space with index set $d\times\N$ (resp.\ $IJ$), we denote $E_{IJ}=(-\infty,\infty]^{\vert IJ\vert}$ for any $IJ\subset d\times\N$; and similarly for $E^\prime_{IJ}$.

\begin{defn}[\cite{vatanmaxidinfinitedim,resnickextreme1987}]
    \label{defnexminseq}
A sequence of random vectors $\ubmX\in  E_{d\times \N}$ such that $P(X_{i,j}>x)>0$ for all $(i,j)\in d\times\N$ and $x\in\R$ is called a min-id sequence if it satisfies one of the following equivalent conditions:
\begin{enumerate}
    \item[$(i)$] For every $n\in\N$ there exist i.i.d.\ sequences of random vectors $\lc\ubmX^{(i,n)}\rc_{\leqn}$ such that $\ubmX\sim \min_{\leqn} \ubmX^{(i,n)}$. 
    \item[$(ii)$] There exists a Poisson random measure (PRM) $N=\sum_{k\in\N} \delta_{\ubmx^{(k)}}$ on $E^\prime_{d\times \N}$ with unique intensity measure $\lambda$, which satisfies $\lambda\lc \{ \ubmy\in E^\prime_{d\times \N}\mid y_{i,j}\leq x\}) \rc<\infty$ for all $x\in\R$ and $\ijdn$, such that 
    \begin{align}
       \ubmX \sim \min_{k\in \N}\ubmx^{(k)}. \label{prmrepminid} 
    \end{align}
    \item[$(iii)$] For every $n\in\N$, all $p$-dimensional margins $IJ$ of $\ubmX$ satisfy that $S_{IJ}\lc \bmx_{IJ}\rc^{1/n}:=P\lc \bmX_{IJ} >\bmx_{IJ}\rc^{1/n}$ defines a proper survival function of an $E_{IJ}$-valued random vector. 
\end{enumerate}
The finite dimensional margins of a min-id sequence are called min-id (random) vectors. Further, a sequence $\ubmX$ is called exchangeable min-id sequence if it is an exchangeable sequence which is also min-id.
\end{defn}

Note that every measure $\lambda$ on $E^\prime_{d\times \N}$ which satisfies the conditions of Definition \ref{defnexminseq} $(ii)$ uniquely specifies the law of a min-id sequence. Since $\lambda$ appears in the exponent of the survival function of $\ubmX$, it is called the exponent measure of $\ubmX$. Let us denote $\lambda^{(IJ)}$ as the marginalization of $\lambda$ to margin $IJ$, i.e.\ the intensity of a PRM on $E^\prime_p$ such that the analog of (\ref{prmrepminid}) holds for $\bmX_{IJ}$, which we also call exponent measure of $\bmX_{IJ}$. One can easily check that when extending $\lambda$ ($\lambda^{(IJ)}$) to $E_{d\times \N}$ ($E_{IJ}$) the mass of $\lambda$ ($\lambda^{(IJ)}$) on $\binfty$ is irrelevant for the distribution of $\ubmX$ ($\bmX_{IJ}$), which is why the point $\binfty$ had to be disregarded to obtain uniqueness in Definition \ref{defnexminseq}$(ii)$. Since it later will turn out to be convenient, we frequently view $\lambda^{(IJ)}$ and $\lambda$ as measures on $E_{IJ}$ and $E_{d\times \N}$, where we employ the convention $\lambda^{(IJ)}(\{\binfty\})=\lambda(\{\binfty\})=\infty$. With this convention, the exponent measure $\lambda^{(IJ)}$ of $\bmX_{IJ}$ can be viewed as the projection of the exponent measure $\lambda$ of $\ubmX$ to the components $IJ$, i.e.\ $\lambda^{(IJ)}(A)=\lambda(\{\ubmx\mid \bmx_{IJ}\in A\})$, and for all $\ubmx\in [-\infty,\infty)^{d\times\N}$ we can simply write 
$$P(\ubmX>\ubmx)=\exp\Big( -\lambda\big( \{\ubmy\in E_{d\times \N}\mid y_{i,j}\leq x_{i,j} \text{ for some } \ijdn \} \big) \Big),$$
while the analogous expression holds for $\bmX_{IJ}$ and $\lambda^{(IJ)}$. To simplify the notation we will denote $\ubmX\sim\minid(\lambda)$ ($\bmX_{IJ}\sim \minid(\lambda^{(IJ)})$) to denote that $\ubmX$ ($\bmX_{IJ}$) is a min-id sequence (vector) with exponent measure $\lambda$ ($\lambda^{(IJ)}$).

\begin{ex}[Min-id random vectors with independent components]
    \cite[5.3.1. (i)]{resnickextreme1987} shows that $\bmX_{IJ}\sim\minid(\lambda^{(IJ)})$ has independent components if and only if $\lambda^{(IJ)}$ concentrates on lines through $\binfty$, i.e.\ $\lambda^{(IJ)}$ is supported on the set $\cup_{i=1}^{p} \ee_{i,j}$, where 
    $$\ee_{i,j}:=\big\{ \bmx\in E_{p}^\prime \mid x_{i,j}<\infty ,\ x_{k,l}=\infty\ \forall (k,l)\in IJ\text{ s.t. } (i,j)\not=(k,l)  \big\}.$$
    To illustrate this, note that when $\lambda^{(IJ)}$ is concentrated on $\cup_{i=1}^{p} \ee_{i,j}$ we have that
    $$P(\bmX_{IJ}>\bmx)=\exp\lc - \sumd \lambda^{(IJ)} \big( \{ y_{i,j}\in \ee_{i,j} \mid y_{i,j}\leq x_{i,j}\}\big)\rc.$$
    Thus, the survival function of $\bmXIJ$ is the product of the terms $\exp\big( -\lambda^{(IJ)} \big( \{ y_{i,j}\in \ee_{i,j} \mid$ $ y_{i,j}\leq x_{i,j}\}\big)\big)$, which are the marginal survival functions of the $X_{i,j}$. 
\end{ex}
    
    The example emphasizes the importance of allowing for mass of $\lambda$ (resp.\ $\lambda^{(IJ)}$) on $E_{d\times\N}\setminus \R^{d\times \N}$ (resp.\ $E_{IJ}\setminus \R^{\vert IJ\vert}$), since this is needed to be allow for independent components in min-sequences (resp.\ random vectors). Moreover, as we will see later in Examples \ref{exhierdepmixhaz} and \ref{exhazratmix}, many Bayesian models from the literature use priors which concentrate on min-id random vectors with (conditionally) independent components and thus such random vectors will naturally also play an important role in our framework. 

\begin{rem}[The role of $\binfty$]
As mentioned above, the mass of an exponent measure on $\binfty$ is not relevant for the distribution of $\ubmX\sim \minid( \lambda)$. However, a careful treatment of the point $\binfty$ is necessary when embedding $\bmXIJ$ in $\ubmX$. To illustrate this, assume that $\ubmX$ has independent entries and recall that the exponent measure of $\bmX_{IJ}$ is given by $\lambda^{(IJ)}=\lambda\lc \{\ubmx\in E_{d\times \N} \mid \bmx_{IJ}\in  \cdot \cap E_{p}^\prime  \}\rc$, which is the projection of $\lambda$ to the components $IJ$. Thus, the mass of $\lambda$ on $\{ \bmx_{IJ}=\infty\}$, which corresponds to the mass of the marginalized components $d\times \N\setminus IJ$, is disregarded in $\lambda^{(IJ)}$ as it is fully projected to $\binfty$ when considering the exponent measure of $\bmX_{IJ}$.  However, when trying to reverse-engineer $\lambda$ from $\lambda^{(IJ)}$ (resp.\ $\ubmX$ from $\bmX_{IJ}$) it is now crucial to take into account the mass that was projected to $\binfty$. Thus, for reconstructing $\lambda$ from $\lambda^{(IJ)}$, one should view $\lambda^{(IJ)}$ as a measure on $E_{p}$, since there is potentially mass of $\lambda$ on $\{\ubmx\in E_{d\times \N}^\prime\mid \bmx_{IJ}=\binfty\}$, which is relevant to determine $\lambda$ and the distribution of $\bmX_{(d\times \N)\setminus IJ}$. This peculiarity will turn out to be important for many derivations and calculations in the paper, as we frequently have to embed $\bmX_{IJ}$ in a sequence, which is why this point is emphasized already at this stage.
    
\end{rem}

\begin{rem}
    We remark that $P(X_{i,j}>x)>0$ for all $(i,j)\in d\times \N$ and $x\in \R$ should be viewed as a standardization of our framework, since, in general, min-id distributions can have an arbitrary upper endpoint of their support. It can be easily checked that $\lc (f_{i,j}(X_{i,j})\rc_{(i,j)\in d\times\N}$ remains a min id sequence in the sense of Definition \ref{defnexminseq}$(i)$ for every collection of non-decreasing transformations $(f_{i,j})_{(i,j)\in d\times\N}$ from $[-\infty,\infty]\to[-\infty,\infty]$. Thus w.l.o.g.\ we can conduct our analysis for min-id sequences which satisfy $P(X_{i,j}>x)>0$ for all $(i,j)\in d\times \N$ and $x\in \R$, whereas results about min-id sequences with finite upper endpoint of their support may be deduced from a suitable non-decreasing marginal transformation of a min-id sequence with $\infty$ as upper endpoint of their support.
\end{rem}

Definition \ref{defnexminseq}$(i)$ entails the name-giving stochastic representation of a min-id sequence, since it shows that every min-id sequence can be represented as the componentwise minimum of arbitrary many i.i.d.\ sequences. On the other hand, the stochastic representation (\ref{prmrepminid}) is the most useful representation for our purposes, since it nicely illustrates that every $\ubmX$ can simply be constructed as the componentwise minimum of countably many sequences $\ubmx^{(k)}$, which are the atoms of a PRM. A natural question that arises from this representation is whether for every $(i,j)\in d\times\N$ there is exactly one $k(i,j)$ such that $\ubmX_{i,j}=x^{(k(i,j))}_{i,j}$, i.e.\ whether the value $X_{i,j}$ can be uniquely associated to a single atom $\ubmx^{(k)}$ of the PRM. In general, the answer to this is negative, since it is rather easy to see that when the intensity $\lambda$ has atoms there can be several $x^{(k)}_{i,j}$ that take the same value. However, when the marginal distributions of $\ubmX$ are continuous, the answer is positive.

\begin{lem}[\text{\cite[Proposition 2.5]{dombryeyiminko2013regular}}]
\label{lemuniquemaxfct}
    Let $\ubmX$ denote a min-id sequence with stochastic representation (\ref{prmrepminid}) such that for every $(i,j)\in d\times \N $ the random variable $X_{i,j}$ has a continuous distribution function. Then, with probability $1$, there exists exactly one $k(i,j)$ such that $X_{i,j}=x^{(k(i,j))}_{i,j}$ for all $(i,j)\in d\times \N$.
\end{lem}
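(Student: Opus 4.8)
The plan is to fix a single index $\ijdn$, reduce the claim to a statement about a one-dimensional Poisson random measure, and then take a union bound over the countably many indices in $d\times\N$. Starting from the representation (\ref{prmrepminid}), $N=\sum_{k\in\N}\delta_{\ubmx^{(k)}}$ is a PRM on $E^\prime_\infty$ with intensity $\lambda$ and $X_{i,j}=\min_{k\in\N}x^{(k)}_{i,j}$. Writing $\lambda^{(i,j)}$ for the image of $\lambda$ under the projection onto the $(i,j)$-th coordinate, the mapping theorem for Poisson processes shows that $N_{i,j}:=\sum_{k\in\N}\delta_{x^{(k)}_{i,j}}$ is a PRM on $(-\infty,\infty]$ with intensity $\lambda^{(i,j)}$, which by Definition \ref{defnexminseq}$(ii)$ satisfies $\lambda^{(i,j)}((-\infty,x])<\infty$ for every $x\in\R$. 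It then suffices to show that, almost surely, exactly one atom of $N$ has $(i,j)$-th coordinate equal to $X_{i,j}$.

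Next I would translate the continuity hypothesis into structural information on $\lambda^{(i,j)}$. From $P(X_{i,j}>x)=\exp(-\lambda^{(i,j)}((-\infty,x]))$ and $P(X_{i,j}\geq x)=\exp(-\lambda^{(i,j)}((-\infty,x)))$ for $x\in\R$, continuity of the distribution function of $X_{i,j}$ (in particular the absence of an atom at $\infty$) is equivalent to $\lambda^{(i,j)}$ being non-atomic on $\R$ together with $\lambda^{(i,j)}(\R)=\infty$; the non-atomicity will rule out ties and the infinite total mass gives $X_{i,j}<\infty$ almost surely. Independently of the hypothesis, $\lambda^{(i,j)}((-\infty,x])\downarrow 0$ as $x\to-\infty$, so $X_{i,j}>-\infty$ almost surely, and $N_{i,j}((-\infty,q])<\infty$ for all $q\in\Q$ almost surely.

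On the intersection of these almost sure events the argument runs as follows. Choosing a rational $q>X_{i,j}$, the set of atoms of $N_{i,j}$ in $(-\infty,q]$ is non-empty and finite, and its minimum equals $X_{i,j}$; hence the infimum defining $X_{i,j}$ is attained. Uniqueness follows from a second-moment computation: by the product form of the second factorial moment measure of a PRM and non-atomicity of $\lambda^{(i,j)}$ on $\R$,
\begin{align*}
\e\Big[\#\{(k,k')\in\N^2:\ k\neq k',\ x^{(k)}_{i,j}=x^{(k')}_{i,j}\leq q\}\Big]
&=\int_{(-\infty,q]}\int_{(-\infty,q]}\id\{a=b\}\,\lambda^{(i,j)}(\rmd a)\,\lambda^{(i,j)}(\rmd b)\\
&=\int_{(-\infty,q]}\lambda^{(i,j)}(\{b\})\,\lambda^{(i,j)}(\rmd b)=0,
\end{align*}
and letting $q\uparrow\infty$ along $\Q$ shows that almost surely no two distinct atoms of $N$ share the same finite $(i,j)$-th coordinate. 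Since an atom with $x^{(k)}_{i,j}=\infty$ cannot realize the finite value $X_{i,j}$, this yields a unique index $k(i,j)$. Intersecting the corresponding almost sure events over $\ijdn$ completes the proof.

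The step I expect to be the main obstacle is the careful treatment of the point $\infty$: one has to use continuity precisely to exclude an atom of the marginal law at $\infty$, equivalently $\lambda^{(i,j)}(\R)=\infty$, so that $X_{i,j}$ is finite and the (generically infinitely many) atoms sent to $\infty$ by the projection become irrelevant, while still retaining the local finiteness $\lambda^{(i,j)}((-\infty,x])<\infty$ that makes the minimum attained rather than merely approached. With those points settled, the remaining ingredients — the mapping theorem and the product form of the second factorial moment measure of a PRM — are entirely standard.
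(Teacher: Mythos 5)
The paper does not prove this lemma at all: it is imported verbatim as Proposition 2.5 of the cited reference \cite{dombryeyiminko2013regular}, so there is no in-paper argument to compare against. Your self-contained proof is correct. The reduction to the one-dimensional projected process $N_{i,j}$ via the mapping theorem (applied on $\{y_{i,j}\leq q\}$ where the intensity is finite), the identification of continuity of the law of $X_{i,j}$ with non-atomicity of $\lambda^{(i,j)}$ on $\R$ plus $\lambda^{(i,j)}(\R)=\infty$, the attainment of the minimum from local finiteness, and the uniqueness via the vanishing of the diagonal under the second factorial moment measure $\lambda^{(i,j)}\otimes\lambda^{(i,j)}$ are all sound; the last step also correctly disposes of possible multiplicities of $N$ itself. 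The one point you rightly flag as delicate — reading ``continuous distribution function'' so as to exclude an atom at $\infty$, which is what forces $X_{i,j}<\infty$ a.s.\ and makes the projected-to-$\infty$ atoms irrelevant — is indeed necessary for the statement to hold as written and is consistent with the setting of the source (and with the paper's standing standardization $P(X_{i,j}>x)>0$ together with Condition \ref{condsuppminid} where the lemma is used).
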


Lemma \ref{lemuniquemaxfct} allows to define random equivalence classes for the index set $d\times \N$ ($IJ$) of a min-id sequence $\ubmX$ ($\bmX_{IJ}$) with continuous margins by collecting all indices $(i,j)$ which have the same minimizing sequence $\ubmx^{(k)}$, i.e.\ by
\begin{align} (i,j)\sim (\Tilde{i},\Tilde{j}) \Leftrightarrow X_{i,j}=x^{(k)}_{i,j} \text{ and } X_{\Tilde{i},\Tilde{j}}=x^{(k)}_{\Tilde{i},\Tilde{j}}.  \label{equivrelhitt}
\end{align}
This equivalence relation is the basis of the so-called hitting scenario of a min-id sequence, describing the latent structure of the atoms of the PRM which are responsible for seeing the realization $\ubmX$ ($\bmX_{IJ}$). It was first introduced in \cite{dombryeyiminko2013regular} and will play an important role in the representation of the posterior in later sections.
\begin{defn}
    \label{defhittingscen}
    Consider a margin $\bmX_{IJ}$ of a min-id sequence $\ubmX$. A hitting scenario $\Theta(IJ)=(\Theta_1,\ldots,\Theta_{L})$ is a random partition of $IJ$, where the partition is generated according to the equivalence relation (\ref{equivrelhitt}) and $L:=L(\Theta(IJ))$ denotes the length of the partition. 
\end{defn}

\subsection{Infinitely divisible exponent measures}

In this paper, we focus on random measures $\mu$ on $E_d$ that can be interpreted as random exponent measures.
To define random measures on $E_d$, we follow the technical elaborations in \cite[Section 1+2]{kallenberg2017} and consider only $d<\infty$, while more details concerning the well-definedness of the following definitions can be found in Supplementary Material \ref{appdefnrandmeas}. We have to take extra care in the definition of $\sigma$-finite random measures on $E_d$ since we later want to allow for measures that explode towards $\binfty$.
First, we fix a measurable partition $(U_i)_{i\in\N}$ of $E_d^\prime$ given by 
\begin{align}
U_i:= \lc E_d\setminus \lc (i,\infty]^d \rc \rc \setminus  \cup_{j=1}^{i-1} U_j \label{deflocseq} 
\end{align}
and let 
$$M_d:=\big\{ \eta \mid \eta \text{ is a measure on }E_d \text{ such that } \eta(U_i)<\infty\text{ and }\eta(\binfty)=\infty \big\}$$ 
denote the subset of $\sigma$-finite measures on $E_d$ which are finite on $(U_i)_{i\in\N}$. Note that $\cup_{j=1}^{i-1} U_j=\lc (i-1,\infty]^d\rc^\complement$. Therefore, every measure $\eta\in M_d$ is an exponent measure of a min-id random vector on $E_d$ as $\eta\lc (\bmx,\binfty]^\complement\rc<\infty$ for all $\bmx\in\R^d$ and we can equivalently write $M_d=\{\eta\mid \eta\text{ is the exponent measure of a random vector on }E_d\}$.
Next, we equip $M_d$ with the $\sigma$-algebra $\mathcal{G}$ that is generated by the evaluations of the measure on finite collections of measurable subsets of $E_d$, which allows us to define a random measure.

\begin{defn}
\label{defnrandommeasure}
    A random measure is a random element $\mu\in M_d$, i.e.\ a measurable map from $\lc\Omega,\mathcal{F},P\rc\to (M_d,\mathcal{G})$.
\end{defn}

Thus, a random measure $\mu$ is the sum of countably many almost-surely finite random measures $(\mu(\cdot\cap U_i))_{i\in\N}$ and an infinite atom at $\binfty$.
The introduction of the ``localizing sets'' $(U_i)_{i\in\N}$ is necessary, since the definition of a random measure on the space of $\sigma$-finite random measures on $E_d$ additionally requires to determine where the measure is almost surely finite, see the technical elaborations in \cite[Section 1+2]{kallenberg2017}. In our case, the definition ensures that a random measure is a $\sigma$-finite exponent measure for the same exhausting sequence $(U_i)_{i\in\N}$, i.e.\ the sets on which a random measure may explode are non-random.

Next, we restrict ourselves to random measures that can be decomposed into sums of arbitrary many i.i.d.\ random measures, called infinitely divisible random measures. The formal definition is as follows.
\begin{defn}
\label{defnidrandommeasure}
 A random measure $\mu$ is called infinitely divisible random measure (IDRM) if for every $n\in\N$ there exists i.i.d.\ random measures $\mu^{(i,n)}$ such that $\mu\sim\sumn \mu^{(i,n)}$.
\end{defn}

Every IDRM can be characterized in terms of its Laplace transform, which follows a specific representation, called the Lévy-Khintchine representation. We recall this result here, since it will be important for the further developments in this paper.

\begin{thm}[\text{\cite[Theorem 3.20]{kallenberg2017}}]
\label{thmidrandmeasurelevykhintchine}
Every IDRM $\mu$ on $E_d$ has the representation
\begin{align}
    \mu\sim \alpha+ \int_{M^0_{d}}\eta  N(\rmd\eta)=\alpha+ \sum_{j\in\N}\eta_j, \label{prmrepidrm}
\end{align} 
where $\alpha\in M_d$ is a unique deterministic exponent measure and $N=\sum_{j\in\N}\delta_{\eta_j}$ is a PRM on $M^0_{d}:=\{ \eta\in M_d \mid  \eta( E_d^\prime)\not=0\}$ with unique intensity measure $\nu$ such that $\int_{M^0_d} \min\{\eta(U_i);1\}$ $\nu(\rmd \eta)<\infty$ for all $i\in\N$.
    The Laplace transform of $\mu$ has the form
\begin{align*}
    &\e\lk \exp\lc -\int_{E_d} f(\bmx) \mu(d\bmx) \rc \rk \\
    &=\exp \bigg( -\int_{E_d} f(\bmx) \alpha(d\bmx)  -\int_{M^0_d} 1-\exp\lc -\int_{E_d} f(\bmx) \eta(d\bmx) \rc  \nu(d\eta)  \bigg),
\end{align*}
for every measurable function $f\geq 0$ on $E_d$.
Conversely, every tuple $(\alpha,\nu)$ satisfying the stated conditions specifies the law of an (unique in law) IDRM on $E_d$.
\end{thm}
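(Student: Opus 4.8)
The plan is to deduce the statement from the general Lévy--Khintchine representation of infinitely divisible random measures on a localized Polish space, i.e.\ from \cite[Theorem 3.20]{kallenberg2017}, after first reducing the measures in $M_d$ to genuinely locally finite measures on $E_d^\prime$. Since every $\eta\in M_d$ satisfies $\eta(\{\binfty\})=\infty$ deterministically, the restriction map $\eta\mapsto\eta|_{E_d^\prime}$ is a bijection between $M_d$ and the set of measures on $E_d^\prime$ that are finite on each localizing set $U_i$. Moreover the sets $B_i:=\cup_{j\leq i}U_j=\lc (i,\infty]^d\rc^\complement\cap E_d^\prime$ form an increasing, covering sequence of $E_d^\prime$, each bounded away from $\binfty$, so they constitute a legitimate localizing sequence. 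One then checks that $\mu$ is a random measure (resp.\ an IDRM) in the sense of Definitions \ref{defnrandommeasure}/\ref{defnidrandommeasure} if and only if $\mu|_{E_d^\prime}$ is a random measure (resp.\ an IDRM) on $E_d^\prime$ with this localizing structure, with the $\sigma$-algebra $\mathcal{G}$ being exactly the one generated by evaluations on relatively bounded measurable sets.

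Applying \cite[Theorem 3.20]{kallenberg2017} to $\mu|_{E_d^\prime}$ then yields a unique deterministic measure $\alpha^\prime$, locally finite on $E_d^\prime$, and a PRM $N=\sum_{j}\delta_{\eta_j}$ on the nonzero locally finite measures on $E_d^\prime$ with a unique intensity $\nu$ satisfying $\int\min\{\eta(U_i);1\}\,\nu(\rmd\eta)<\infty$ for all $i$, such that $\mu|_{E_d^\prime}\sim\alpha^\prime+\sum_j\eta_j$. Re-attaching the deterministic atom $\infty\cdot\delta_{\binfty}$ to $\alpha^\prime$ produces $\alpha\in M_d$, and identifying each $\eta_j$ with its extension in $M^0_d$ produces (\ref{prmrepidrm}); the atom at $\binfty$ must land in the deterministic part precisely because it is non-random. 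The Laplace transform formula then follows from the exponential formula for the PRM $N$: writing $\int f\,\rmd\mu=\int f\,\rmd\alpha+\int_{M^0_d}\lc\int f\,\rmd\eta\rc N(\rmd\eta)$ with $\alpha$ deterministic, one gets $\e[\exp(-\int f\,\rmd\mu)]=\exp(-\int f\,\rmd\alpha)\cdot\e[\exp(-\int(\int f\,\rmd\eta)\,N(\rmd\eta))]$, and the last factor equals $\exp(-\int_{M^0_d}(1-\exp(-\int f\,\rmd\eta))\,\nu(\rmd\eta))$ by the Laplace functional of a PRM; if $f(\binfty)>0$ the identity is trivial since then $\int_{E_d}f\,\rmd\alpha=\infty$ and both sides vanish.

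For the converse, given any $(\alpha,\nu)$ with the stated properties, take $N$ a PRM on $M^0_d$ with intensity $\nu$; the integrability condition $\int\min\{\eta(U_i);1\}\,\nu(\rmd\eta)<\infty$ is exactly what forces, by a Campbell-theorem/Borel--Cantelli argument on the atoms of $N$ (finitely many atoms with $\eta_j(U_i)>1$ a.s., and the sum of the remaining $\eta_j(U_i)$ has finite expectation), that $\sum_j\eta_j(U_i)<\infty$ almost surely for every $i$, so $\mu:=\alpha+\sum_j\eta_j$ defines an element of $M_d$ a.s.; infinite divisibility follows from thinning $N$ into $n$ independent PRMs with intensity $\nu/n$ and splitting $\alpha$ into $\alpha/n$, and uniqueness of $(\alpha,\nu)$ is inherited from the uniqueness in \cite[Theorem 3.20]{kallenberg2017}, equivalently from uniqueness of the Laplace functional. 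The one point that needs genuine care -- and the main obstacle -- is the bookkeeping around $\binfty$ together with the $\sigma$-finiteness: one must verify that $(B_i)$ really meets Kallenberg's axioms for a localized space, that the deterministic infinite atom does not spoil measurability of the restriction map, and that ``locally finite on $E_d^\prime$'' matches ``finite on each $U_i$'' so that the two notions of random measure coincide. Once this translation is pinned down, the rest is routine.
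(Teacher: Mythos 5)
Your proposal is correct and follows essentially the same route as the paper: the paper imports this statement directly from \cite[Theorem 3.20]{kallenberg2017} and handles the only nontrivial point --- the identification of random measures on $E_d$ with deterministic infinite atom at $\binfty$ with random measures on $E_d^\prime$ localized by $(U_i)_{i\in\N}$ --- in the proposition of Appendix \ref{appdefnrandmeas}, which is exactly the reduction you spell out. The remaining steps (Laplace functional of the PRM, thinning for the converse, uniqueness) are the standard arguments and match what the cited theorem provides.
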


The measure $\alpha$ will be called base measure, the measure $\nu$ will be called Lévy measure and the tuple $(\alpha,\nu)$ will be called Lévy-Khintchine characteristics of an IDRM $\mu$.
For our choice of localizing sets $(U_i)_{i\in\N}$ we have that every IDRM is actually an infinitely divisible random exponent measure (IDEM) on $E_d$ with $\alpha(\binfty)=\infty$.

According to the definition of an IDEM $\mu$ there exist i.i.d.\ random measures $\mu^{(i,n)}$ such that $\mu\sim\sumn \mu^{(i,n)}$. An immediate consequence is that each $\mu^{(i,n)}$ must also have localizing sets $(U_i)_{i\in\N}$. Thus, each $\mu^{(i,n)}$ may be identified with a corresponding random exponent measure as well to obtain a decomposition of $\mu$ into sums of i.i.d.\ random exponent measures. Further, one can show that each $\lc \mu^{(i,n)}\rc_{\leqn}$ is infinitely divisible. The results are summarized in the following corollary. 

\begin{cor}
\label{coridexpmisidbyidexpm}
    For every IDEM $\mu$ there exist i.i.d.\ IDEMs $\lc \mu^{(i,n)}\rc_\leqn$  such that $\mu\sim\sumn \mu^{(i,n)}$. Moreover, $\mu\sim \alpha +\sum_{j\in\N} \eta_j$, where $\alpha$ is a deterministic exponent measure on $E_d$ and $\lc \eta_j\rc_{j\in\N}$ are random exponent measures on $E_d$, which constitute the atoms of a PRM on $M_d^0$ with unique intensity $\nu$.
\end{cor}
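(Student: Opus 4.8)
The plan is to dispatch the second assertion as a rereading of Theorem~\ref{thmidrandmeasurelevykhintchine} and to concentrate the real effort on the first. That theorem already provides $\mu\sim\alpha+\int_{M_d^0}\eta\,N(\rmd\eta)=\alpha+\sum_{j\in\N}\eta_j$ with $\alpha\in M_d$ a unique deterministic measure and $N=\sum_{j\in\N}\delta_{\eta_j}$ a PRM on $M_d^0$ with unique intensity $\nu$, so it only remains to observe that the measures occurring here are exponent measures. This is immediate: in the discussion preceding Definition~\ref{defnrandommeasure} the set $M_d$ was identified with the class of exponent measures of random vectors on $E_d$, one has $M_d^0\subseteq M_d$, and $\alpha(\binfty)=\infty$ as noted right after the theorem. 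Hence $\alpha$ is a deterministic exponent measure and each atom $\eta_j$ is a random exponent measure on $E_d$.

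For the first assertion, fix $n\in\N$. By Definition~\ref{defnidrandommeasure} there are i.i.d.\ random measures $\mu^{(i,n)}$, $1\leq i\leq n$, with $\mu\sim\sumn\mu^{(i,n)}$; since each is a random element of $M_d$ and $M_d$ is the class of exponent measures, each $\mu^{(i,n)}$ is automatically a random exponent measure (and $\sumn\mu^{(i,n)}\in M_d$ almost surely, being a finite sum of measures that are a.s.\ finite on every $U_k$ and carry an infinite atom at $\binfty$). The substantive point is that each $\mu^{(i,n)}$ is itself infinitely divisible, and I would prove this by the classical regrouping argument: given $m\in\N$, use infinite divisibility of $\mu$ at level $nm$ to get i.i.d.\ random measures $\mu^{(k,nm)}$, $1\leq k\leq nm$, with $\mu\sim\sum_{k=1}^{nm}\mu^{(k,nm)}$, and set $\widetilde\mu^{(i)}:=\sum_{l=1}^m\mu^{((i-1)m+l,\,nm)}$ for $1\leq i\leq n$. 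Then the $\widetilde\mu^{(i)}$ are i.i.d.\ random measures in $M_d$ and $\mu\sim\sumn\widetilde\mu^{(i)}$.

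The key step is the identification $\widetilde\mu^{(1)}\sim\mu^{(1,n)}$, i.e.\ uniqueness of the ``$n$-th root'' of $\mu$ in the convolution of random measures, and I would establish it via Laplace functionals. For every measurable $f\geq 0$ on $E_d$ with $f(\binfty)=0$, independence and identical distribution give $\e[\exp(-\int_{E_d}f\,\rmd\widetilde\mu^{(1)})]^n=\e[\exp(-\int_{E_d}f\,\rmd\mu)]=\e[\exp(-\int_{E_d}f\,\rmd\mu^{(1,n)})]^n$ (for $f$ with $f(\binfty)>0$ all three quantities equal $0$ and convey no information). Since all three lie in $[0,1]$ and $x\mapsto x^n$ is injective there, one gets equality of the Laplace functionals of $\widetilde\mu^{(1)}$ and $\mu^{(1,n)}$ on all test functions vanishing at $\binfty$; by the standard uniqueness theorem for Laplace functionals of random measures \cite{kallenberg2017} (which on $M_d$ needs only such test functions, the mass at $\binfty$ being the deterministic constant $\infty$), this yields $\widetilde\mu^{(1)}\sim\mu^{(1,n)}$. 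Consequently $\mu^{(1,n)}$ is distributed as $\sum_{l=1}^m\mu^{(l,nm)}$, a sum of $m$ i.i.d.\ random measures; as $m$ was arbitrary, $\mu^{(1,n)}$ is an IDRM, hence an IDEM, and by exchangeability so is every $\mu^{(i,n)}$. (The same computation combined with the converse part of Theorem~\ref{thmidrandmeasurelevykhintchine} in fact identifies the common law of the $\mu^{(i,n)}$ as the IDRM with L\'evy-Khintchine characteristics $(\alpha/n,\nu/n)$.) The main obstacle is exactly this $n$-th-root uniqueness step --- arguing that all i.i.d.\ $n$-fold decompositions of $\mu$ share the same marginal law --- together with the small but unavoidable care needed to run the Laplace-functional uniqueness argument around the deterministic infinite atom at $\binfty$.
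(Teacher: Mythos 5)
Your proposal is correct. For the second assertion you and the paper do the same thing: read the claim off from Theorem \ref{thmidrandmeasurelevykhintchine} together with the fact that $M_d$ was defined to be exactly the class of exponent measures. For the first assertion your route differs from the paper's in the infinite-divisibility step. The paper argues directly through the L\'evy--Khintchine representation: the $n$-th root of the Laplace functional of $\mu$ is the L\'evy--Khintchine functional with characteristics $(\alpha/n,\nu/n)$, and the converse part of Theorem \ref{thmidrandmeasurelevykhintchine} plus uniqueness of Laplace functionals then identifies $\mu^{(i,n)}$ as the IDRM with those characteristics, which is infinitely divisible by construction. You instead run the classical regrouping argument (decompose $\mu$ into $nm$ pieces, group into $n$ blocks of $m$) and reduce everything to uniqueness of the $n$-th convolution root, established via injectivity of $x\mapsto x^n$ on $[0,1]$ applied to Laplace functionals. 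Both arguments hinge on the same root-uniqueness fact; yours makes it explicit and avoids invoking the converse of the L\'evy--Khintchine theorem except in your parenthetical identification of the law, while the paper's is shorter because it leans on that converse. Your handling of the point $\binfty$ (test functions vanishing there, deterministic infinite atom) and of the verification that the $\mu^{(i,n)}$ are genuinely exponent measures matches the paper's remark about choosing a version that is an exponent measure for every realization. No gaps.
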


\begin{ex}[Completely random measure]
\label{exCRM}
Let us illustrate the simplest choice of IDEMs. Consider a completely random measure (CRM) $\mu$ on $E^\prime_d$ without fixed atoms, i.e.\ a random measure with the representation $\alpha+\sum_{k\in \N} a_k \delta_{\bm b_k}$, where $\lc (a_k, \bm b_k)\rc_{k\in\N}$ are the atoms of a PRM on $(0,\infty)\times E_d^\prime$ with intensity $\upsilon(\rmd (a,\bm b))$ such that $\int_0^\infty\int_{ (\bmx,\binfty]^\complement } \min\{1,a\}\upsilon(\rmd (a,\bm b))$ $<\infty$ for every $\bmx \in\R^d$. Then $\mu(A_1),\ldots,\mu(A_n)$ are independent random variables for every collection of disjoint measurable sets $(A_i)_\leqn\subset E_d$, which justifies the term ``completely random''. It is easy to see that $\mu$ is an IDEM with base measure $\alpha+\infty\delta_\infty$ and Lévy measure $\nu(A)=\int_{(0,\infty)\times E_d^\prime }  \id_{\{a\delta_{\bm b}+\infty\delta_\binfty\in A\}} \upsilon(\rmd  (a,\bm b))$. Thus, the Lévy measure of a completely random measure concentrates on weighted Dirac measures.
The Laplace transform of a completely random measure $\mu$ takes the well-known form
\begin{align*}
 &\e\lk \exp\lc- \int_{E_d} f(\bmx)\mu(\rmd\bmx) \rc\rk\\
 &=\exp\lc - \int_{E_d} f(\bmx)\alpha(\rmd\bmx)
 -\int_{(0,\infty)\times E_d} 1-\exp\lc- af(\bm b) \rc \upsilon (\rmd(a,\bm b))\rc.    
\end{align*} 

\end{ex}

\subsection{One-to-one correspondence of exchangeable min-id sequences and infinitely divisible exponent measures}

Let us end this section by providing the connection of IDEMs and exchangeable min-id sequences. First, we formulate a condition which ensures that $\bmX\sim\minid(\mu)$ is real-valued.
\begin{cond}
\label{condsuppminid}
  The IDEM $\mu$ on $E_d $ satisfies $\lim_{n\to\infty}\mu\lc \{ \bmx \in E_d\mid x_i\leq n\} \rc=\infty$ almost surely for all $1\leq i\leq d$. 
\end{cond}

The following theorem is one of the main technical ingredients for the derivation of the posterior distribution. It shows that randomizing the exponent measure of a min-id random vector by an IDEM yields an exchangeable min-id sequence, conditionally on $\mu$ as well as unconditionally (i.e.\ integrating out the randomness of $\mu$). 

\begin{thm}
\label{thmidexpmimpliesexminid}
    Let $\mu$ denote an IDEM with Lévy-Khintchine characteristics $(\alpha,\nu)$ that satisfies Condition \ref{condsuppminid}. Then, we can define a sequence of conditionally i.i.d.\ random vectors $\ubmX\in  \R^{d\times\N} $ via
    \begin{align}
        (\bmX_j)_{j\in\N}\overset{i.i.d.}{\sim} \minid(\mu). \label{eqnconstrexminid}
    \end{align} 
    Further, $\ubmX$ is an exchangeable min-id sequence, whose law is uniquely determined by $(\alpha,\nu)$. 
    The exponent measure $\lambda$ of $\ubmX$ is determined by the Levy-Khintchine characteristics of $\mu$ via
    \begin{align}
        \lambda(A)=\lambda_\alpha(A) +\lambda_{\nu}(A)  \label{thmreplambda},
    \end{align}
    where $\lambda_\alpha$ is a measure concentrated on $\mathcal{W}:=\cup_{i\in\N} \{ \ubmx \in E_{d\times \N} \mid \bmx_j=\binfty \text{ for all } j\not=i\in\N\}$ given by
    $$ \lambda_\alpha(A) :=\sum_{j\in\N} \alpha\lc \big\{ \bmx \in E_d\mid  \times_{k=1}^{j-1} \{ \binfty\} \times \{\bmx\} \times_{k=j+1}^\infty \{\binfty\}  \in  A \big\}\rc $$
    and 
    $$\lambda_{\nu}(A):= \int_{M^0_d}  \otimes_{j\in\N} \minid(\eta) \lc \{ \ubmx \in A\}\rc \nu(\rmd\eta) ,$$
    where $\otimes_{j\in\N} \minid(\eta) $ denotes the law of an i.i.d.\ sequence of min-id random vectors on $E_d$ with exponent measure $\eta$.  
    
    Conversely, for every exchangeable min-id sequence whose exponent measure is of the form $\lambda_\alpha+\lambda_\nu$ for some Lévy characteristics $(\alpha,\nu)$ there exists a unique in law IDEM satisfying Condition \ref{condsuppminid} which generates the exchangeable min-id sequence according to (\ref{eqnconstrexminid}).

\end{thm}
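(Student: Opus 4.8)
\emph{Forward direction: well-definedness, exchangeability, and the min-id property.} The plan is to treat the forward implication and the converse separately; within the forward part I would first settle that \eqref{eqnconstrexminid} makes sense and yields an exchangeable min-id sequence, then identify $\lambda$. Since $\mu\in M_d$, conditionally on $\mu$ we have $\mu((\bmx,\binfty]^\complement)<\infty$ for all $\bmx\in\R^d$, so $\bmx\mapsto\exp(-\mu((\bmx,\binfty]^\complement))$ is a genuine survival function on $E_d$ with $\p(X_{i,j}>x\mid\mu)>0$ (hence $\p(X_{i,j}>x)>0$ after integrating out $\mu$), and Condition~\ref{condsuppminid} is exactly what forces $\p(X_{i,j}=\infty\mid\mu)=\exp(-\nlim\mu(\{\bmy\in E_d:y_i\leq n\}))=0$ almost surely, so that \eqref{eqnconstrexminid} defines an $\R^{d\times\N}$-valued sequence which, being conditionally i.i.d.\ given $\mu$, is exchangeable. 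To see that $\ubmX$ is min-id I would use Corollary~\ref{coridexpmisidbyidexpm}: write $\mu\sim\sumn\mu^{(i,n)}$ with $\mu^{(1,n)},\dots,\mu^{(n,n)}$ i.i.d.\ IDEMs and, conditionally on $(\mu^{(i,n)})_\leqn$, attach to each an independent $\minid(\mu^{(i,n)})$-sequence $\ubmX^{(i,n)}$. The sequences $\ubmX^{(1,n)},\dots,\ubmX^{(n,n)}$ are i.i.d., and since multiplying the conditional survival functions adds the corresponding exponent measures, $\min_\leqn\ubmX^{(i,n)}$ has, conditionally on $\sumn\mu^{(i,n)}$, the law $\otimes_{j\in\N}\minid(\sumn\mu^{(i,n)})$; unconditioning and using $\sumn\mu^{(i,n)}\sim\mu$ gives $\min_\leqn\ubmX^{(i,n)}\sim\ubmX$, which is Definition~\ref{defnexminseq}$(i)$.

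\emph{Forward direction: identification of $\lambda$ and uniqueness of the law.} It suffices to compute the survival functions of the finite-dimensional margins $\bmX_{IJ}$, since these determine the law of $\ubmX$ and, by the uniqueness in Definition~\ref{defnexminseq}$(ii)$, its exponent measure. Conditioning on $\mu$ and using independence across $j$,
\[
\p(\bmX_{IJ}>\bmx_{IJ}\mid\mu)=\exp\Big(-\int_{E_d}f(\bmy)\,\mu(\rmd\bmy)\Big),\qquad f(\bmy)=\#\{j:\ y_i\leq x_{i,j}\ \text{for some}\ i\ \text{with}\ (i,j)\in IJ\},
\]
a bounded function. Unconditioning via the Lévy--Khintchine formula of Theorem~\ref{thmidrandmeasurelevykhintchine} gives $\p(\bmX_{IJ}>\bmx_{IJ})=\exp(-\int_{E_d}f\,\rmd\alpha-\int_{M^0_d}(1-\exp(-\int_{E_d}f\,\rmd\eta))\,\nu(\rmd\eta))$. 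Here $\int_{E_d}f\,\rmd\alpha=\sum_j\alpha(\{\bmy\in E_d:\ y_i\leq x_{i,j}\ \text{for some}\ i\ \text{with}\ (i,j)\in IJ\})$, which by the definition of $\lambda_\alpha$ is exactly $\lambda_\alpha(\{\ubmy:\ y_{i,j}\leq x_{i,j}\ \text{for some}\ (i,j)\in IJ\})$; and $1-\exp(-\int_{E_d}f\,\rmd\eta)=1-\prod_j\exp(-\eta(\{\bmy\in E_d:\ y_i\leq x_{i,j}\ \text{for some}\ i\ \text{with}\ (i,j)\in IJ\}))$ is precisely the mass that $\otimes_{j\in\N}\minid(\eta)$ assigns to the set $\{\ubmy:\ y_{i,j}\leq x_{i,j}\ \text{for some}\ (i,j)\in IJ\}$, so the $\nu$-integral equals $\lambda_\nu$ of that set. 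Hence $\p(\bmX_{IJ}>\bmx_{IJ})=\exp(-(\lambda_\alpha+\lambda_\nu)(\{\ubmy:\ y_{i,j}\leq x_{i,j}\ \text{for some}\ (i,j)\in IJ\}))$ for every finite margin $IJ$; since $\lambda_\alpha+\lambda_\nu$ is finite on every set $\{y_{i,j}\leq x\}$ (from $\eta\in M_d$ for the $\lambda_\alpha$-part, and from the integrability condition on $\nu$ together with $1-e^{-t}\leq\min\{t,1\}$ for the $\lambda_\nu$-part), uniqueness of the exponent measure gives $\lambda=\lambda_\alpha+\lambda_\nu$; the measurability of $\eta\mapsto\otimes_{j\in\N}\minid(\eta)(A)$, needed for $\lambda_\nu$ to be well-defined, is routine. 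Uniqueness of the law of $\ubmX$ in terms of $(\alpha,\nu)$ is then immediate, since $(\alpha,\nu)$ determines $\lambda$ and $\lambda$ determines the law of a min-id sequence.

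\emph{Converse.} Let $\ubmX$ be an exchangeable min-id sequence whose exponent measure is $\lambda_\alpha+\lambda_\nu$ for some Lévy characteristics $(\alpha,\nu)$; by Theorem~\ref{thmidrandmeasurelevykhintchine} there is a unique-in-law IDEM $\mu$ with these characteristics. I would first check that $\mu$ satisfies Condition~\ref{condsuppminid}: projecting $\lambda_\alpha+\lambda_\nu$ onto coordinate $(i,1)$ and using that $\ubmX$ is $\R^{d\times\N}$-valued gives $\alpha(\{y_i<\infty\})+\int_{M^0_d}(1-e^{-\eta(\{y_i<\infty\})})\,\nu(\rmd\eta)=\infty$ for every $i\leq d$; since $1-e^{-t}\geq\tfrac12\min\{t,1\}$, combining this with the zero--one law for the sum of a nonnegative function over the atoms of a PRM (the sum is a.s.\ infinite iff the integral of the truncated function against the intensity diverges) forces $\alpha(\{y_i<\infty\})+\sum_{j\in\N}\eta_j(\{y_i<\infty\})=\infty$ almost surely, i.e.\ $\nlim\mu(\{\bmy\in E_d:y_i\leq n\})=\infty$ almost surely. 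The forward direction then applies to $\mu$ and produces a sequence with exponent measure $\lambda_\alpha+\lambda_\nu$, which by uniqueness of the exponent measure is equal in law to $\ubmX$. For uniqueness of $\mu$ in law I would invoke de Finetti's theorem: the $\mu$-generated sequence has directing random measure $\minid(\mu)$ almost surely, and $\eta\mapsto\minid(\eta)$ is an injective measurable map on $M_d$ (its survival function determines the exponent measure, cf.\ Definition~\ref{defnexminseq}$(ii)$), so the law of $\ubmX$ pins down the law of $\mu$ and hence, by Theorem~\ref{thmidrandmeasurelevykhintchine}, the pair $(\alpha,\nu)$; any other IDEM satisfying Condition~\ref{condsuppminid} and generating $\ubmX$ therefore has the same law.

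\emph{Main obstacle.} The delicate point is the bookkeeping around the atom at $\binfty$: one has to make sure that projecting $\lambda_\alpha+\lambda_\nu$ onto a finite margin $IJ$ really returns the corresponding projections — the mass of $\alpha$ and of the $\eta_j$ carried by coordinates outside $IJ$ must collapse onto $\binfty$ and be discarded — and that $\lambda_\nu$ puts no mass on $\mathcal{W}$ (for $\eta\in M^0_d$ the law $\otimes_{j\in\N}\minid(\eta)$ gives zero mass to sequences with all but one entry equal to $\binfty$, being an infinite product of probabilities strictly below $1$), which is what makes the decomposition $\lambda=\lambda_\alpha+\lambda_\nu$ unambiguous. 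The other subtle step is showing in the converse that $\mu$ inherits Condition~\ref{condsuppminid}; this is where the elementary equivalence $1-e^{-t}\asymp\min\{t,1\}$ and the PRM zero--one law do the work.
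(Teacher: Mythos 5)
Your argument is correct and, for the core of the forward direction, follows the same route as the paper: both compute the marginal survival function $P(\bmX_{IJ}>\bmx_{IJ})$ by conditioning on $\mu$ and applying the L\'evy--Khintchine formula of Theorem \ref{thmidrandmeasurelevykhintchine}, both obtain the min-id property from the decomposition $\mu\sim\sumn\mu^{(i,n)}$ of Corollary \ref{coridexpmisidbyidexpm} (you verify Definition \ref{defnexminseq}$(i)$ by an explicit minimum coupling, the paper verifies $(iii)$ by noting that $(\alpha/n,\nu/n)$ produces the survival function $S_{IJ}^{1/n}$ --- these are interchangeable), and both read off $\lambda=\lambda_\alpha+\lambda_\nu$ from the resulting expression on complements of upper orthants. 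Where you genuinely diverge is the uniqueness of the IDEM: the paper first isolates $\alpha$ via the support argument on $\mathcal{W}$ (which you also note) and then shows that $\lambda$ determines the Laplace transform of $\lc\mu\lc(\bmx_i,\binfty]^\complement\rc\rc_{\leqn}$ at integer arguments, invoking moment-determinacy of non-negative random vectors by their Laplace transform on $\N^n$; you instead invoke the uniqueness of the directing random measure in de Finetti's theorem together with injectivity and bimeasurability of $\eta\mapsto\minid(\eta)$ on $M_d$. Your route is arguably cleaner and avoids the determinacy citation, at the cost of needing the (standard-Borel) measurable-inverse argument. You are also more explicit than the paper on the converse --- the paper dismisses it with ``follows similarly'', whereas you actually verify that Condition \ref{condsuppminid} is inherited from the real-valuedness of $\ubmX$ via $1-e^{-t}\asymp\min\{t,1\}$ and the zero--one law for sums of non-negative functionals over PRM atoms; that added detail is a genuine improvement and the estimate is exactly what is needed.
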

The theorem could be called de Finetti's theorem for exponent measures of exchangeable min-id sequences generated by IDEMs, since it shows that the exponent measure associated to these sequences is a (possibly infinite) mixture of the law of i.i.d.\ min-id sequences plus the exponent measure of a min-id sequence with i.i.d.\ components. Notice that our requirement of $\mu$ taking values in $M_d$ allows to define $\bmX_j$ as a min-id random vector conditionally on every realization of $\mu$.
It should be emphasized that, even though $(\bmX_j)_{j\in\N}$ is real-valued, $\eta$ might be the exponent measure of a min-id random vector whose components can take the value $\infty$, i.e.\ $\minid(\eta)$ is the distribution of a random vector on $E_d$ and $\eta$ might not satisfy Condition \ref{condsuppminid}.
An immediate consequence of Theorem \ref{thmidexpmimpliesexminid} is the following stochastic decomposition of $\ubmX$.

\begin{cor}
\label{corstochdecompminid}
    An exchangeable min-id sequence $\ubmX$ satisfying Condition \ref{condsuppminid} has the following stochastic decomposition:
    $$ \ubmX\sim \min \big\{ \ubmX_1,\ubmX_2  \big\},$$
    where $\ubmX_1$ is a sequence of i.i.d.\ min-id random vectors with exponent measure $\alpha$ and $\ubmX_2$ is an exchangeable min-id sequence with exponent measure $\lambda_\nu$. Moreover, there is no i.i.d.\ sequence of random vectors $\underline{\bmY}_1$ independent of a min-id sequence $\underline{\bmY}_2$ such that $\ubmX_2\sim\min\{\underline{\bmY}_1,\underline{\bmY}_2\}$.
\end{cor}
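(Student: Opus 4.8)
The first statement is essentially a reformulation of Theorem~\ref{thmidexpmimpliesexminid}. I would realize $\ubmX$ through its PRM representation in Definition~\ref{defnexminseq}$(ii)$ with intensity $\lambda=\lambda_\alpha+\lambda_\nu$; by the superposition theorem the driving PRM splits into two independent PRMs with intensities $\lambda_\alpha$ and $\lambda_\nu$, and letting $\ubmX_1$ (respectively $\ubmX_2$) be the componentwise minimum over the atoms of the first (respectively second) one gives $\ubmX\sim\min\{\ubmX_1,\ubmX_2\}$ with $\ubmX_1\sim\minid(\lambda_\alpha)$ and $\ubmX_2\sim\minid(\lambda_\nu)$. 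That $\ubmX_1$ is an i.i.d.\ sequence with each vector distributed as $\minid(\alpha)$ is checked by comparing survival functions: since $\lambda_\alpha$ is concentrated on the ``one active block'' set $\mathcal{W}$ of Theorem~\ref{thmidexpmimpliesexminid} and restricts to $\alpha$ on each block, $\lambda_\alpha(\{\ubmy:y_{i,j}\le x_{i,j}\text{ for some }(i,j)\})=\sum_j\alpha((\bmx_j,\binfty]^\complement)$, which is exactly $-\log$ of the product survival function of such an i.i.d.\ sequence. Exchangeability of $\ubmX_2$ follows from the manifest permutation invariance of the integral defining $\lambda_\nu$ (a $\nu$-mixture of laws of i.i.d.\ sequences), and $P(X_{i,j}>x)>0$ together with $X_{i,j}\le X_{2;i,j}$ gives the required standardization.

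For the second statement the key point, which I would establish first, is that $\lambda_\nu(\mathcal{W})=0$: under $\otimes_{j\in\N}\minid(\eta)$ a given block equals $\binfty$ with probability $\exp(-\eta(E_d^\prime))<1$ (strict since $\eta\in M_d^0$), so the probability that all blocks but at most one equal $\binfty$ is an infinite product of numbers $<1$, hence $0$; integrating against $\nu$ yields $\lambda_\nu(\mathcal{W})=0$. (The same computation shows $\lambda_\nu$ is even concentrated on sequences with infinitely many non-$\binfty$ blocks, but $\lambda_\nu(\mathcal{W})=0$ is all that is needed.) This is the precise sense in which $\ubmX_2$ has ``no i.i.d.\ part'', because the exponent measure of an i.i.d.\ sequence is concentrated on $\mathcal{W}$, exactly as in the Example on independent components.

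Now suppose $\ubmX_2\sim\min\{\underline{\bmY}_1,\underline{\bmY}_2\}$ with $\underline{\bmY}_1$ i.i.d., independent of the min-id sequence $\underline{\bmY}_2$; I would show $\underline{\bmY}_1=\binfty$ a.s., so that the decomposition is trivial (this is how the statement should be read, since $\binfty$ is itself a valid i.i.d.\ sequence). Taking logarithms of the factorized survival functions on finite margins turns $P(\ubmX_2>\ubmx)=P(\underline{\bmY}_1>\ubmx)\,P(\underline{\bmY}_2>\ubmx)$ into
\[
\lambda_\nu(U(\ubmx))=\textstyle\sum_j g(\bmx_j)+\lambda^{Y_2}(U(\ubmx)),\qquad U(\ubmx):=\{\ubmy:y_{i,j}\le x_{i,j}\text{ for some }(i,j)\},
\]
where $g(\bma):=-\log P(\bmY_{1,1}>\bma)\ge0$ for $\bmY_{1,1}$ a generic term of $\underline{\bmY}_1$ (here I use that $\underline{\bmY}_1$ is i.i.d.\ to get the additive-over-blocks term) and $\lambda^{Y_2}$ is the exponent measure of $\underline{\bmY}_2$. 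Evaluating this identity on configurations constrained in a single block and on configurations constrained in two blocks and subtracting, an induction on the number of constrained blocks gives $\lambda_\nu=\lambda^{Y_2}$ on every event constraining at least two distinct blocks; a $\pi$-system / monotone-class argument upgrades this to $\lambda_\nu|_{\mathcal{W}^\complement}=\lambda^{Y_2}|_{\mathcal{W}^\complement}$. Combined with $\lambda_\nu(\mathcal{W})=0$ this forces $\lambda^{Y_2}(U(\ubmx))\ge\lambda^{Y_2}(U(\ubmx)\cap\mathcal{W}^\complement)=\lambda_\nu(U(\ubmx)\cap\mathcal{W}^\complement)=\lambda_\nu(U(\ubmx))$, and since $\lambda_\nu(U(\ubmx))<\infty$ for configurations with bounded support, the displayed identity yields $\sum_j g(\bmx_j)\le0$, hence $g\equiv0$. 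Thus $P(\bmY_{1,1}>\bma)=1$ for all $\bma\in\R^d$, i.e.\ $\bmY_{1,1}=\binfty$ a.s., so $\underline{\bmY}_1=\binfty$ a.s.

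The main obstacle is the middle of this last step: comparing $\lambda_\nu$ and $\lambda^{Y_2}$ on ``at least two active blocks'' events and then transferring the comparison to all of $\mathcal{W}^\complement$. If one reads the statement with $\underline{\bmY}_1$ required to be min-id --- which is natural, since $\min\{\underline{\bmY}_1,\underline{\bmY}_2\}$ is then a decomposition inside the class of min-id sequences --- the argument shortens dramatically: exponent measures add under independent minima (superposition), so $\lambda_\nu=\lambda^{Y_1}+\lambda^{Y_2}$ with $\lambda^{Y_1}$ concentrated on $\mathcal{W}$, whence $\lambda^{Y_1}\le\lambda_\nu$ together with $\lambda_\nu(\mathcal{W})=0$ forces $\lambda^{Y_1}=0$, i.e.\ $\underline{\bmY}_1=\binfty$ a.s. The additional bookkeeping above is only needed to rule out i.i.d.\ factors that are not a priori known to be min-id.
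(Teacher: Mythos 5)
Your proof is correct. The paper gives no explicit argument for this corollary (it is declared an ``immediate consequence'' of Theorem \ref{thmidexpmimpliesexminid}), and the ingredients you use are exactly the ones that proof supplies: the split $\lambda=\lambda_\alpha+\lambda_\nu$ with $\lambda_\alpha$ supported on $\mathcal{W}$, superposition of the driving PRM, and the computation $\lambda_\nu(\mathcal{W})=0$, which appears verbatim in the uniqueness part of the paper's proof of Theorem \ref{thmidexpmimpliesexminid}. Two remarks. First, you are right that the second assertion can only be meant up to the degenerate factor $\underline{\bmY}_1\equiv\binfty$; reducing it to ``$g\equiv 0$'' is the correct reading. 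Second, your longer argument for i.i.d.\ factors $\underline{\bmY}_1$ that are not a priori min-id is genuinely more general than what the corollary's phrasing forces on the paper; the short route you give at the end (exponent measures add under independent minima, $\lambda^{Y_1}$ lives on $\mathcal{W}$, and $\lambda_\nu(\mathcal{W})=0$ kills it) is almost certainly the intended one. The only place where your sketch is thinner than I would like is the monotone-class upgrade from agreement of $\lambda_\nu$ and $\lambda^{Y_2}$ on intersections of two or more single-block constraint sets to agreement on $U(\ubmx)\cap\mathcal{W}^\complement$; this works (write ``block $j'$ is not $\binfty$'' as the increasing limit of $U_{j'}(\bm M)$ as $M\to\infty$ and use inclusion--exclusion plus monotone convergence on sets of finite $\lambda_\nu$-measure), but it deserves a line of justification rather than an appeal to a generic $\pi$-system argument, since complements within a single block leave the generating class.
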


Thus, intuitively, an exchangeable min-id sequence can be divided into a trivial i.i.d.\ part $\ubmX_1$ and a non-trivial part $\ubmX_2$, which is responsible for the dependence of the sequence. For our purposes of imposing a prior on $\bmX$ this result says that $\alpha$ drives an i.i.d.\ "censoring-like" part of $\ubmX$ and $\nu$ is the quantity that purely governs the dependence between the $\bmX_j$ and allows to learn about the distribution of $\bmX$.

\section{IDEM priors}
\label{secidempriors}
Having collected all auxiliary results, we are ready to formally introduce our prior distribution on the space of probability measures on $\R^d$. 

\begin{defn}[IDEM prior]
Let $\mu$ denote an IDEM on $E_d$ satisfying Condition \ref{condsuppminid}. Then, the random probability measure $\minid(\mu)$ is called IDEM prior.
\end{defn}
In abuse of notation we will also call an IDEM $\mu$ a prior in the following, which is justified by the fact that every IDEM prior has a unique associated IDEM due to Theorem \ref{thmidexpmimpliesexminid}.

\begin{ex}[Partial exchangeability and independent components]
It is easy to see that $\bmX\sim\minid(\mu)$ having independent components almost surely implies partial exchangeability of the associated exchangeable sequence $\ubmX$. Thus, we have that $\mu$ is an IDEM for partially exchangeable data whenever $\mu$ almost surely concentrates on $\cup_{i=1}^d\ee_{i}$, where 
$$\ee_i:=\{\bmx\in E_d\mid x_i<\infty, x_j=\binfty\ \forall\ 1\leq j\not=i\leq d\}.$$
\end{ex}

\subsection{Building tractable IDEM priors from simple building blocks}
\label{sectrafosIDRM}
Since IDRMs can be quite complicated objects, it is natural to ask whether one can construct such random measures based on simpler building blocks. Theoretically, the answer is simple. Due to Theorem \ref{thmidrandmeasurelevykhintchine} we simply need to look at PRMs on $M_d^0$ to construct all IDRMs. However, in practice, PRMs on $M_d^0$ are not very convenient to model and one might ask the question whether one can obtain IDRMs by simple transformations of tractable subclasses of IDRMs, e.g.\ such as CRMs. 

It is not hard to see that the property of infinite divisibility is preserved under many natural operations on measures, such as integration. The difficult part is to retain tractable Lévy characteristics, which is especially important, since we will later see that tractable Lévy characteristics are the central ingredient for obtaining tractable formulas for the posterior. Here, we extend results on transformations of univariate infinitely divisible processes from \cite{brueckexactsim2022,brueckmaischerer2023} to multivariate IDRMs to retain infinite divisibility together with tractable Lévy characteristics. In practice one might concatenate several of these operations to obtain a suitable IDEM prior with tractable Lévy characteristics. 

\begin{prop}
\label{proptrafos}
Let $\mu$ denote an IDEM on $E_d$ with Lévy characteristics $(\alpha,\nu)$ and assume that $\mu$ satisfies Condition \ref{condsuppminid}. 
\begin{enumerate}
    \item[$(i)$] Let $g:\ E_d\to E_d$ be measurable and monotone increasing in each component with $g(\binfty)=\binfty$. Define $\mu_g(A):=\int_{\{g(\bmx)\in A\}}\rmd\mu$. Then, $\mu_g$ is an IDEM with Lévy characteristics $(\alpha_g,\nu_g)$, where $\alpha_g(A):=\int_{\{g(\bmx)\in A\}}\rmd\alpha$ and
    $\nu_g(A):=\nu\big(\{\eta \in M_d^0 \mid \eta_g\in A,\ \eta_g(E_d^\prime)$ $\not=0 \}\big)$.
    \item[$(ii)$] Let $g:E_d\to [0,\infty)$ be measurable and bounded on $(\bmy,\binfty]^\complement$ for every $\bmy\in\R^d$, then $\mu^{(g)}(A):=\int_{A} g(\bmx) \mu(\rmd \bmx)$ is an IDEM with Lévy characteristics $(\alpha^{(g)},\nu^{(g)})$, where $\nu^{(g)}(A)=\nu\lc \{ \eta \in M_d^0\mid \eta^{(g)}\in A,\eta^{(g)}(E_d^\prime)\not=0\}  \rc$
    \item[$(iii)$] Let $\beta$ denote a measure on $E_d$ such that $\beta\lc (\bmy,\binfty]^\complement\rc<\infty$ for all $\bmy\in \R^d$. Then, the measure $\mu^{(\beta)} \lc A\rc :=\int_{A}\mu\lc (-\binfty,\bmx]\rc  \beta(\rmd\bmx)$ is an IDEM with Lévy characteristics $(\alpha^{(\beta)},\nu^{(\beta)})$, where $\nu^{(\beta)}(A)=\nu\lc \{ \eta \in M_d^0\mid \eta^{(\beta)}\in A,\eta^{(\beta)}(E_d^\prime)\not=0\}  \rc$.
\end{enumerate}
\end{prop}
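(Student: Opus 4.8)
The plan is to handle the three operations $T$ — the pushforward $\mu\mapsto\mu_g$ in $(i)$, the weighting $\mu\mapsto\mu^{(g)}$ in $(ii)$, and the operator $\mu\mapsto\mu^{(\beta)}$ in $(iii)$ — in one stroke, using that each is \emph{additive in the measure argument} and commutes with countable sums of measures (by linearity together with monotone convergence), and that each is a measurable map $(M_d,\mathcal G)\to(M_d,\mathcal G)$ since it is assembled from the generating evaluation maps $\eta\mapsto\eta(A)$. The first step is to check that $T$ really maps a.s.\ into $M_d$, i.e.\ that the transformed random measure is finite on each localizing set $U_i$ (with the infinite atom at $\binfty$ adjoined by convention, which is immaterial for the exponent-measure interpretation): in $(i)$ this uses that $g$ is increasing with $g(\binfty)=\binfty$, so $g^{-1}(U_i)$ is contained in a finite union $U_1\cup\dots\cup U_{m(i)}$ and hence $\mu_g(U_i)=\mu(g^{-1}(U_i))<\infty$; in $(ii)$ it uses that $g$ is bounded on $((i,\infty]^d)^\complement\supseteq U_i$, so $\mu^{(g)}(U_i)\le(\sup_{U_i}g)\,\mu(U_i)<\infty$; in $(iii)$ it uses $\mu((-\binfty,\bmx])\le\mu((\bmx,\binfty]^\complement)\le\sum_{j\le i}\mu(U_j)<\infty$ for $\bmx\in U_i$ together with $\beta(U_i)\le\beta\bigl(((i,\infty]^d)^\complement\bigr)<\infty$, so $\mu^{(\beta)}(U_i)<\infty$.

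With this in hand, infinite divisibility is inherited at once: from $\mu\sim\sum_{i=1}^n\mu^{(i,n)}$ with i.i.d.\ $\mu^{(i,n)}\in M_d$ (Corollary~\ref{coridexpmisidbyidexpm}), applying the measurable additive map $T$ gives $T(\mu)\sim\sum_{i=1}^n T(\mu^{(i,n)})$ with the $T(\mu^{(i,n)})$ i.i.d.\ in $M_d$, so $T(\mu)$ is an IDEM and, by Theorem~\ref{thmidrandmeasurelevykhintchine}, has unique Lévy characteristics; it only remains to identify them. I would do this via the Laplace functional. For measurable $f\ge0$, which may be taken to vanish at $\binfty$ (otherwise both sides below are $0$), the pushforward change of variables in $(i)$, the definition of $\mu^{(g)}$ in $(ii)$, and Tonelli in $(iii)$ turn $\int f\,\rmd T(\mu)$ into $\int\tilde f\,\rmd\mu$, where $\tilde f=f\circ g$, $\tilde f=fg$, and $\tilde f(\bmy)=\int_{[\bmy,\binfty]}f(\bmx)\,\beta(\rmd\bmx)$ respectively; inserting $\tilde f$ into the Lévy--Khintchine Laplace formula for $\mu$, noting $\int\tilde f\,\rmd\alpha=\int f\,\rmd T(\alpha)$ and $\int\tilde f\,\rmd\eta=\int f\,\rmd T(\eta)$, and substituting $\eta\mapsto T(\eta)$ in the $\nu$-integral — where the integrand vanishes whenever $T(\eta)(E_d^\prime)=0$ — displays the Laplace functional of $T(\mu)$ in Lévy--Khintchine form with base measure $T(\alpha)$ and Lévy measure $(\nu\circ T^{-1})|_{M_d^0}$, i.e.\ precisely the characteristics $(\alpha_g,\nu_g)$, $(\alpha^{(g)},\nu^{(g)})$, $(\alpha^{(\beta)},\nu^{(\beta)})$ claimed. (Structurally, the same follows from the mapping theorem for Poisson processes applied to $\mu\sim\alpha+\sum_j\eta_j$: $\sum_j\delta_{T(\eta_j)}$, thinned to the atoms lying in $M_d^0$, is a PRM with intensity the $M_d^0$-restriction of $\nu\circ T^{-1}$.)

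The conceptual skeleton above is short; the real work — and the main obstacle — is the measure-theoretic bookkeeping in the first step, namely fitting the hypotheses on $g$ (monotone with $g(\binfty)=\binfty$, or locally bounded) and on $\beta$ (locally finite) to the geometry of the localizing sets $(U_i)$ so that the transformed random measure provably lies in $M_d$, handling the point $\binfty$ consistently throughout, and verifying that dropping the null atoms $\{\eta:T(\eta)(E_d^\prime)=0\}$ from $\nu\circ T^{-1}$ is exactly what produces the restriction to $M_d^0$ present in the stated formulas. Case $(iii)$ also needs the Tonelli interchange rewriting the $\mu^{(\beta)}$-integral as a $\mu$-integral and the elementary inclusion $(-\binfty,\bmx]\subseteq(\bmx,\binfty]^\complement$ to stay within the class of exponent measures.
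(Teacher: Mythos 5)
Your proposal follows essentially the same route as the paper's proof: infinite divisibility is inherited from additivity of the transformation, and the Lévy characteristics are read off by rewriting $\int f\,\rmd T(\mu)$ as $\int\tilde f\,\rmd\mu$ and inserting $\tilde f$ into the Lévy--Khintchine Laplace functional (the paper does exactly this, using the PRM mapping theorem in parts $(ii)$ and $(iii)$), so your identification of $(\alpha_g,\nu_g)$, $(\alpha^{(g)},\nu^{(g)})$, $(\alpha^{(\beta)},\nu^{(\beta)})$ matches. One small caveat in your finiteness check for $(i)$: the inclusion $g^{-1}(U_i)\subseteq U_1\cup\dots\cup U_{m(i)}$ does not actually follow from monotonicity together with $g(\binfty)=\binfty$ alone (take $d=1$, $g=\arctan$ on $\R$ with $g(\infty)=\infty$), but this is an implicit regularity assumption the paper itself glosses over by declaring the IDEM property ``obvious,'' so your treatment is if anything more careful than the published proof.
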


The transformation in Proposition \ref{proptrafos}$(i)$ is useful to adjust the margins of a min-id prior to some predefined margins. For $g(\bmx)=(g_1(x_1),\ldots,g_d(x_d))$ the distribution of the $i$-th margin of $\minid(\mu_g)$ only depends on $g_i$ and thus it may be chosen such that $\e\lk\exp\lc-\mu_g\lc \{\bmx \in E^\prime_d\mid x_i\leq \cdot\}\rc\rc\rk$ follows an arbitrary prespecified survival function.
The intuition behind the definition of $\mu^{(\beta)}$ in Proposition \ref{proptrafos}$(iii)$ is that it can be seen as a natural generalization of the popular smoothing of a univariate CRM given by $\int_A \mu\lc(-\infty,s]\rc\rmd s$.

Additional properties of min-id priors such as prior moments, moments of mean functional and their induced dependence structure can be found in Supplementary Material \ref{apppriormom}.

\subsubsection{Subordination of independent CRMs by IDRM}
\label{subsecsubordcrm}

A popular construction scheme for multivariate models in Bayesian nonparametrics is to build vectors of dependent random measures via a hierarchical construction, see e.g.\ \cite{catalano2023} for an overview. Often, these constructions are based on simple transformations of (vectors of) CRMs. For example, in the context of survival analysis, \cite{camerlengilijoipruenster2021} consider a diffuse homogeneous CRM $\mu_0$ on $\R$ such that 
\begin{align}
    \int_{-\infty}^t \int_0^\infty \min\{1,a\}\rho_i(a)\rmd a\mu_0(\rmd y)<\infty\text{ for every } t\in\R \label{condcrmpolishpsace}
\end{align}
and define $d$ conditionally independent homogeneous completely random measures $\lc \mu_i\rc_{1\leq i\leq d}$ with respective base measures $\alpha_i$ and Lévy measures $\nu_i$ given as the image measure of $(a,y)\mapsto a\delta_y$ under $\rho_i(a)\rmd a \mu_0(\rmd y)$, i.e.\ $\nu_i(A)=\int_{-\infty}^\infty\int_0^\infty \id_{\{ a\delta_y \in A\}}\rho_i(a)\rmd a \mu_0(\rmd y)$. 
The resulting $\lc \mu_i\rc_{1\leq i\leq d}$ are dependent random measures, which are conditionally independent CRMs given $\mu_0$. \cite{camerlengilijoipruenster2021} define a model for partially exchangeable data by setting the marginal random survival function of margin $i$ to $\exp( -\mu_i^{(\kappa_i)}\lc (-\infty,t] \rc)$, where for every measurable function $\kappa: \R\times\R\to [0,\infty)$ and every measure $\eta$ on $\R$ we define 
$$ \eta^{(\kappa)}\lc (-\infty,t] \rc:=\int_{-\infty}^t \int_\R \kappa(s,y)\eta(\rmd y)\rmd s \ \forall\  t\in\R$$
and $\eta^{(\kappa)}\lc \infty \rc=\infty$. Now, their model can be embedded into our framework by defining an exponent measure $\mu_{\mu_1^{(\kappa_1)},\ldots,\mu_d^{(\kappa_d)}}$ on $E_d$ via
\begin{align}
    \mu_{\mu_1^{(\kappa_1)},\ldots,\mu_d^{(\kappa_d)}} \lc   \ee_i(t) \rc:=\mu_i^{(\kappa_i)}\lc  (-\infty,t]\rc  \ \forall\ t\in\R \text{ and } 1\leq i \leq d \label{defsubordinatedCRMexpmeasure}
\end{align}
where $\ee_i(t):=\big\{ \bmx\in E_d \mid x_i\leq t, x_j=\infty\ \forall\ 1\leq i\not=j\leq d \big\} $ and $\mu_{\mu_1^{(\kappa_1)},\ldots,\mu_d^{(\kappa_d)}}(\binfty)=\infty$. It was already implicitly derived in \cite{camerlengilijoipruenster2021} that $\mu_{\mu_1^{(\kappa_1)},\ldots,\mu_d^{(\kappa_d)}}$ is infinitely divisible, however its Lévy characteristics were not tractable. Here, we show that the vector of random measures $\lc \mu_i\rc_{1\leq i\leq d}$ is actually not only a vector of CRMs conditionally on $\mu_0$, but also unconditionally. Further, we extend the framework to the case where $\mu_0$ can be an IDRM instead of a homogeneous CRM, which allows to replace the discrete  structure of $\mu_0$ with a continuous structure. To the best of the authors knowledge, such continuous structures at the root could not be considered to date since their Laplace transform has not been tractable. The following result provides the infinite divisibility and Lévy characteristics of such ``subordinated'' CRMs and is also of independent interest beyond the scope of this paper. To ease the notation in the following let $\otimes_{i=1}^d \crm_{\eta_i(\rmd (a,\bm b))}$ denote the law of $d$ independent CRMs with base measure $\infty \delta_\binfty$ and Lévy measures given as the image measure of $(a,\bm b)\mapsto a\delta_{\bm b}$ under $\eta_i$, respectively, dropping $\otimes_{i=1}^d$ when $d=1$.

\begin{prop}
\label{propsubordinationcrm}
Assume that $\mu_0$ is an IDRM on $\R$ with Lévy characteristics $(\alpha_0,\nu_0)$ such that $\mu_0\lc(-\infty,t]\rc<\infty$ almost surely for all $t\in\R$. Then, conditionally on $\mu_0$, let $\lc\mu_i\rc_{1\leq i\leq d}$ denote $d$ independent homogeneous CRMs on $\R$ with base measures $\alpha_i$ and Lévy measures $\nu_i$ given as the image measures of $(a,b)\mapsto a\delta_b$ under $\rho_i(a)\rmd a \mu_0(\rmd b)$, respectively. The following is true:
\begin{enumerate}
    \item[$(i)$]  $\mu_i((-\infty,t])<\infty$ almost surely for all $\leqd$ and $t\in\R$. Moreover, $\lc \mu_i\rc_{1\leq i\leq d}$ is a vector of IDRMs which is jointly infinitely divisible, i.e.\ for every $n\in\N$ one can find i.i.d.\ vectors of random measures $\lc \lc \mu_i^{(j,n)}\rc_{1\leq i \leq d}\rc_{1\leq j\leq n}$ s.t.\ $\lc \mu_i\rc_{1\leq i\leq d}\sim \lc  \sum_{j=1}^n \mu_i^{(j,n)}\rc_{1\leq i\leq d} $. The Lévy characteristics of $\mu_i$ are given by $(\alpha_i,\nu_{1,i}+\nu_{2,i})$ where 
     $$ \nu_{1,i}(A)=\int_\R \int_0^\infty  \id_{\{ a\delta_b +\infty\delta_\binfty\in A\}} \rho_i(a)\rmd a \alpha_0(\rmd b)$$
     and 
     $$ \nu_{2,i}(A)=\int_{M^0_d} \int_{M_d}\id_{\{\infty\delta_{\infty}\neq \eta_i\in A\}} CRM_{\rho_i(a)\rmd a \eta(\rmd b)}(\eta_i) \nu_0(\rmd\eta) .$$
 \item[$(ii)$] When $\mu^{(\kappa_i)}_i\lc (-\infty, t]\rc<\infty$ almost surely for every $t\in \R$ and $1\leq i\leq d$ then $\mu_{\mu_1^{(\kappa_1)},\ldots,\mu_d^{(\kappa_d)}}$ as defined in (\ref{defsubordinatedCRMexpmeasure}) is an IDEM with base measure $\mu_{\alpha_1^{(\kappa_1)},\ldots,\alpha_d^{(\kappa_d)}}$ and Lévy measure $\nu=\nu_1+\nu_2$, where 
    $$  \nu_1(A)= \sumd \int_\R \int_{0}^\infty  \id_{\{ \infty\delta_{\infty}\neq \otimes_{j=1}^{i-1} \delta_\infty \times (a\delta_b)^{(\kappa_i)}\otimes_{j=i+1}^d \delta_\infty \in A \}} \rho_i(a)\rmd a \alpha_0(\rmd b) $$
 and 
$$ \nu_2( A)=\int_{M^0_d}\int_{\times_{i=1}^d M_d}  \id_{\{ \infty\delta_{\infty}\neq  \mu_{\eta^{(\kappa_1)}_1,\ldots,\eta^{(\kappa_d)}_d} \in A \}} \otimes_{i=1}^d \crm_{\rho_i(a)\rmd a \eta(\rmd b)}(\rmd \eta_i) \nu_0(\rmd\eta).$$

\end{enumerate}

\end{prop}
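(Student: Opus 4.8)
The plan is to reduce everything to an application of the Lévy--Khintchine characterization in Theorem \ref{thmidrandmeasurelevykhintchine} together with the PRM representation in Corollary \ref{coridexpmisidbyidexpm}, exploiting the structure of $\mu_0$ as $\alpha_0 + \sum_{j} \eta_j$, where $\sum_j \delta_{\eta_j}$ is a PRM on $M_1^0$ with intensity $\nu_0$. First I would fix $n\in\N$ and write $\mu_0 \sim \sum_{k=1}^n \mu_0^{(k,n)}$ for i.i.d.\ IDRMs $\mu_0^{(k,n)}$ (using Corollary \ref{coridexpmisidbyidexpm}). Conditionally on the whole vector $(\mu_0^{(k,n)})_{k\le n}$, by the superposition property of Poisson random measures a homogeneous CRM directed by $\rho_i(a)\,\mathrm{d}a\,\mu_0(\mathrm{d}b) = \sum_k \rho_i(a)\,\mathrm{d}a\,\mu_0^{(k,n)}(\mathrm{d}b)$ has the same law as $\sum_{k=1}^n \mu_i^{(k,n)}$, where $\mu_i^{(k,n)}$ is, conditionally on $\mu_0^{(k,n)}$, a homogeneous CRM directed by $\rho_i(a)\,\mathrm{d}a\,\mu_0^{(k,n)}(\mathrm{d}b)$, and these are constructed with conditionally independent Poisson inputs across $k$. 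Since the $\mu_0^{(k,n)}$ are i.i.d., so are the vectors $(\mu_i^{(k,n)})_{i\le d}$ across $k$, and their componentwise sum over $k$ recovers $(\mu_i)_{i\le d}$ in law; this gives joint infinite divisibility. The almost-sure finiteness $\mu_i((-\infty,t])<\infty$ follows from condition (\ref{condcrmpolishpsace})-type reasoning: conditionally on $\mu_0$, $\mu_i((-\infty,t])$ is an infinitely divisible random variable whose Lévy measure integrates $\min\{1,a\}$ against $\rho_i(a)\,\mathrm{d}a\,\mu_0((-\infty,t])$, which is finite a.s.\ by assumption, so the CRM is a.s.\ finite on $(-\infty,t]$, and integrating out $\mu_0$ preserves this.

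To identify the Lévy characteristics I would compute the Laplace transform of $\mu_i$ directly. Conditioning on $\mu_0$ and using the CRM Laplace transform from Example \ref{exCRM},
$$
\e\Big[\exp\big(-\textstyle\int f\,\mathrm{d}\mu_i\big)\Big]
=\e\Big[\exp\big(-\textstyle\int_\R\int_0^\infty (1-e^{-af(b)})\,\rho_i(a)\,\mathrm{d}a\,\mu_0(\mathrm{d}b)\big)\Big],
$$
and then applying the IDRM Laplace transform of $\mu_0$ (Theorem \ref{thmidrandmeasurelevykhintchine}) to the nonnegative function $b\mapsto \int_0^\infty (1-e^{-af(b)})\rho_i(a)\,\mathrm{d}a$. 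This yields a double exponential which, after recognizing the $\alpha_0$-term as a CRM-type Laplace exponent and the $\nu_0$-term as $\int_{M_1^0}\big(1-\e_{\mathrm{CRM}}[\exp(-\int f\,\mathrm{d}\eta_i)]\big)\nu_0(\mathrm{d}\eta)$, is exactly the Lévy--Khintchine exponent of an IDEM with base measure $\alpha_i$ and Lévy measure $\nu_{1,i}+\nu_{2,i}$ as stated. Uniqueness of the pair then comes from the uniqueness clause of Theorem \ref{thmidrandmeasurelevykhintchine}. For part $(ii)$, I would first invoke Proposition \ref{proptrafos}$(iii)$ (the $\mu\mapsto\mu^{(\beta)}$ smoothing, applied with $\beta$ encoding $\kappa_i(s,y)\,\mathrm{d}s$) to see that each $\mu_i^{(\kappa_i)}$ is an IDRM with the pushed-forward characteristics, and that the joint infinite divisibility of $(\mu_i)_{i\le d}$ transfers to $(\mu_i^{(\kappa_i)})_{i\le d}$ since the map is applied componentwise and linearly. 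Then $\mu_{\mu_1^{(\kappa_1)},\dots,\mu_d^{(\kappa_d)}}$, defined by (\ref{defsubordinatedCRMexpmeasure}) as the measure on $E_d$ supported on $\cup_i\ee_i$ whose $i$-th ``leg'' is $\mu_i^{(\kappa_i)}((-\infty,\cdot])$, is a measurable linear-type image of the jointly infinitely divisible vector, hence an IDEM; a Laplace transform computation over test functions of the form $f=\sum_i f_i\mathbf{1}_{\ee_i}$ (which determine the law on $\cup_i\ee_i$) identifies its base measure and Lévy measure as the stated $\nu_1+\nu_2$, where $\nu_1$ collects the contribution of $\alpha_0$ and $\nu_2$ that of $\nu_0$, each pushed through the map $\eta\mapsto \mu_{\eta_1^{(\kappa_1)},\dots,\eta_d^{(\kappa_d)}}$ with $(\eta_i)$ conditionally independent CRMs directed by $\rho_i(a)\,\mathrm{d}a\,\eta(\mathrm{d}b)$.

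The main obstacle I anticipate is bookkeeping with the two-layer Poisson structure: carefully justifying that ``conditionally CRM given $\mu_0$'' plus ``$\mu_0$ infinitely divisible'' can be unwound into a \emph{single} PRM on $M_d^0$ with the claimed intensity, i.e.\ correctly splitting the atoms of $\mu_0$'s representation (the deterministic part $\alpha_0$ versus the random atoms $\eta_j$) and showing that an atom $\eta_j$ of $\mu_0$ generates, conditionally, an entire \emph{vector} of CRM atoms $(\eta_i)_{i\le d}$ that together form a single atom $\mu_{\eta_1^{(\kappa_1)},\dots,\eta_d^{(\kappa_d)}}$ in $M_d^0$, rather than $d$ separate atoms. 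This is exactly the subordination phenomenon the proposition is about, and getting the intensity $\nu_2$ right requires showing the inner Poisson construction (directed by $\rho_i(a)\,\mathrm{d}a\,\eta(\mathrm{d}b)$) is the correct conditional law; I would handle this via the Laplace-transform route above, which sidesteps the need to manipulate the PRMs directly and instead matches exponents, with the uniqueness part of Theorem \ref{thmidrandmeasurelevykhintchine} doing the final identification. A secondary technical point is verifying the integrability condition $\int_{M_d^0}\min\{\eta(U_i);1\}\,\nu(\mathrm{d}\eta)<\infty$ for the claimed $\nu$, which reduces to (\ref{condcrmpolishpsace}) and the corresponding integrability built into $\nu_0$ via Fubini.
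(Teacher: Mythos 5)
Your proposal is correct and follows essentially the same route as the paper: almost-sure finiteness via conditioning on $\mu_0$, joint infinite divisibility by splitting $\mu_0$ into $n$ i.i.d.\ IDRM pieces and superposing the conditionally independent CRMs built on each piece, and identification of the Lévy characteristics by computing the Laplace functional (conditioning on $\mu_0$, applying its Lévy--Khintchine representation to $b\mapsto\int_0^\infty(1-e^{-af(b)})\rho_i(a)\,\rmd a$, and matching exponents with the uniqueness clause of Theorem \ref{thmidrandmeasurelevykhintchine}). The paper's computation merely organizes the same calculation by splitting $\mu_0=\alpha_0+\tilde\mu_0$ and tracking the $\alpha_0$- and $\alpha_i$-contributions as separate terms.
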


\begin{rem}
    The assumption that the $(\mu_i)_{0\leq i\leq d}$ in Proposition \ref{propsubordinationcrm} are random measures on $\R$ may be relaxed to allow the $\mu_i$ to be random measures on an arbitrary Polish space $\mathbb{Y}$. The results of Proposition \ref{propsubordinationcrm} remain valid, where $\kappa_i(s,y)$ needs to be defined on $\R\times\mathbb{Y}$, implicitly assuming that the $(\mu_i)_{0\leq i \leq d}$ are well-defined. Similarly, Proposition \ref{proptrafos} ($i$)-($ii$) may be straightforwardly extended to IDRMs on $\mathbb{Y}$. To avoid excessive notation and technical regularity conditions we have focused on the most important case $\mathbb{Y}=\R$ and leave spelling out the details to the reader. 
\end{rem}

\subsection{Examples of IDEM priors from the literature}
\label{secexamples}
Before proceeding with the posterior distribution of an IDEM prior, we embed several well-known models into our framework to illustrate how IDEM priors are constructed in the literature.

\begin{ex}[Neutral-to-the-right-priors \cite{fergusonprioronprob1974,doksumtailfree1974,hjort1990}]
\label{exNTRpriors}
Continuing Example \ref{exCRM} in the case $d=1$, assume we are given a diffuse CRM $\mu$ on $(0,\infty)$. In the spirit of Theorem \ref{thmidexpmimpliesexminid} we can then define a prior in terms of $X\sim\minid(\mu)$. The corresponding prior is well known in Bayesian nonparametrics as an NTR prior \cite{fergusonprioronprob1974,doksumtailfree1974,hjort1990}. 
We obtain that $\ubmX$ is a min-id sequence with exponent measure $\lambda(A)=\lambda_\alpha(A)+\lambda_\nu(A)$, where $\lambda_\nu(A)=\int_{(0,\infty)^2} \otimes_{j\in\N} \minid (a\delta_b) \upsilon(\rmd (a,b)).$
Thus, $\lambda_\nu$ is the mixture of the law of i.i.d.\ sequences, each of them being concentrated on the two points $\{b,\infty\}$. Further, by Corollary \ref{corstochdecompminid} it is easy to see that $\ubmX$ has the stochastic representation 
\begin{align}
    \ubmX\sim \min\Big\{\underline{\bmX}_1; \ubmX_2\Big\} ,\label{stochrepntrseq}
\end{align}\
where $\ubmX_2:=\min_{k\in\N} \ubmx^{(k)}$ is an exchangeable mid-id sequence such that $\ubmx^{(k)}\in \{b_k,\infty\}^\N$ is an exchangeable sequence with marginal distribution $(1-\exp(-a_k))\delta_{b_k}+\exp(-a_k)\delta_{\infty}$ and $\ubmX_1= (y_k)_{k\in\N}$ is an i.i.d.\ sequence with $y_k\sim\minid(\alpha)$. Analogous statements hold when $\mu$ is a multivariate CRM, generalizing NTR priors to the multivariate setting. A detailed treatment of multivariate NTR priors can be found in Supplementary Material \ref{appdiscprior}.
\end{ex}

\begin{ex}[Hierarchically dependent mixture hazards \cite{camerlengilijoipruenster2021}]
\label{exhierdepmixhaz}
As already explained at the beginning of Section \ref{subsecsubordcrm} \cite{camerlengilijoipruenster2021} construct vectors of dependent CRMs $(\mu_i)_{1\leq i\leq d}$ on a Polish space $\mathbb{Y}$, which are conditionally i.i.d.\ given a CRM $\mu_0$. Then, they define the respective survival functions of $d$ groups of observation by $S_i(t):=\exp\lc -\mu_i^{(\kappa_i)}\lc(-\infty,t]\rc\rc$, which is equivalent to assuming $\bmX\sim \minid \lc\mu_{\mu_1^{(\kappa_1)},\ldots,\mu_d^{(\kappa_d)} }\rc$. Thus, the model concentrates on exponent measures of random vectors with independent components, implying partial exchangeability of the resulting sequence $\ubmX$. By Proposition \ref{propsubordinationcrm}, $\mu_{\mu_1^{(\kappa)},\ldots,\mu_d^{(\kappa)}}$ is an IDEM prior with tractable Lévy characteristics.

In the case $d=1$, which also includes the models of \cite{Dykstralaud1981,loweng1989} as special cases, we argue that there is actually one canonical choice of $\mathbb{Y}$ which includes all construction possibilities of $\mu_1$. Recall that every univariate measure can be uniquely associated to a non-negative and non-decreasing cumulative distribution function. Thus, every univariate IDRM can be uniquely associated to a non-negative and non-decreasing infinitely divisible càdlàg process $\lc H(t)\rc_{t\geq 0}:=\lc\mu_1^{(\kappa_1)}([0,t])\rc_{t\geq 0}$. In \cite[Proposition 2.16]{brueckmaischerer2023} it is shown that every non-negative and non-decreasing infinitely divisible càdlàg process can be represented as $\lc\sum_{i\in\N} f_i(t)\rc_{t\geq 0}$, where $\lc f_i\rc_\iinn$ are the atoms of a PRM on the (Polish) space $D^0_{\geq 0,\nearrow}([0,\infty))$ of non-negative and non-decreasing c\`adl\`ag functions indexed by $[0,\infty)$ that satisfy $f(0)=0$, which is equipped with the Borel $\sigma$-algebra generated by the Skorohod topology. Thus, $H(t)\sim \sum_{i\in\N} f_i(t)$, where $(f_i)_{i\in\N}$ are the atoms of a PRM N on $\mathbb{Y}^\prime:=D^0_{\geq 0,\nearrow}([0,\infty))\cap\{f\text{ absolutely continuous}\}$ and choosing $\kappa(s,f)=f^\prime(s)$ yields $\mu_1^{(\kappa_1)}([0,t])\sim H(t)\sim\int_{\mathbb{Y}^\prime}\int_0^t \kappa (s,f)\rmd s N(\rmd f)=\sum_{i\in\N} f_i(t)$. Thus, it suffices to look at PRMs on $\mathbb{Y}^\prime$ to represent all univariate models of \cite{camerlengilijoipruenster2021}. 

\end{ex}

\begin{ex}[Hazard rate mixtures \cite{lijoinipoti2014}]
\label{exhazratmix}
  Two completely random measures $\mu_1$ and $\mu_2$ on a Polish space $\mathbb{Y}$ are coupled by introducing a latent variable $Z\in[0,1]$ such that, conditionally on $Z$, the CRMs can be represented as $\mu_1(A)=\Tilde{\mu}_1(A)+\Tilde{\mu}_0(A)$ and $\mu_2(A)=\Tilde{\mu}_2(A)+\Tilde{\mu}_0(A)$, where $\lc\Tilde{\mu}_i\rc_{0\leq i\leq 2}$ are independent CRMs on $\mathbb{Y}$ with base measure $\infty\delta_{\binfty}$ and intensities $Z\tilde{\nu}$, $(1-Z)\tilde{\nu}$ and $(1-Z)\tilde{\nu}$ for some Lévy measure $\tilde{\nu}$. Thus, $Z$ steers the dependence of $\mu_1$ and $\mu_2$ by governing the magnitude of $\Tilde{\mu}_0$ relative to $\Tilde{\mu}_1$ and $\Tilde{\mu}_2$. The authors define a corresponding random vector $\bmX\in(0,\infty)^2$ via the survival function $S(t_1,t_2)=\exp\lc -\int_0^{t_1} \int_{\mathbb{Y}} \kappa(s,y)\mu_1(\rmd y)\rmd s-\int_0^{t_2} \int_{\mathbb{Y}} \kappa(s,y)\mu_2(\rmd y)\rmd s \rc$, where $\kappa:\R\times\mathbb{Y}\to[0,\infty)$. Conditionally on $Z$ and $(\mu_i)_{i=1,2}$, $\bmX$ is min-id with independent components and the associated exponent measure is given by $\mu_{\mu_1^{(\kappa)},\mu_2^{(\kappa)}}$, which is an IDEM conditionally on $Z$. We obtain that the Lévy measure of $\mu$ is given by $\nu(A)=\nu_0(A)+\nu_1(A)+\nu_2(A)$, where $\nu_0(A)=Z\tilde{\nu}\lc 
 \{(a,y)\in (0,\infty)\times \mathbb{Y} \mid \mu_{(a\delta_y)^{(\kappa)}, (a\delta_y)^{(\kappa)}} \in A\}\rc$, 
 $\nu_1(A)=(1-Z)\tilde{\nu}\lc \{
  (a,y)\in (0,\infty)\times \mathbb{Y}\mid \mu_{(a\delta_y)^{(\kappa)}, \delta_\infty} \in A  \}\rc$ and similarly for $\nu_2$.
 Again, for $d=1$, one can choose $\mathbb{Y}=\mathbb{Y}^\prime$ with $k(s,f)=f^\prime(s)$ to basically obtain the same priors as in~\cite{camerlengilijoipruenster2021}.  
\end{ex}

An additional example embedding the frameworks of \cite{epifanilijoi2010,rivapalacioleisen2018} into the framework of IDEM priors can be found in Supplementary Material \ref{appexamples}.
\section{Posterior distribution of IDEM priors}
\label{secpostdistr}
In this section, we derive the posterior distribution of an IDEM prior by deriving the conditional distribution of $\ubmX$ given $\ubmX_n$ and applying de Finetti's theorem to $(\bmX_{n+1},\bmX_{n+2},\ldots)$ to identify the posterior of an IDEM $\mu$. 

\subsection{Conditional distribution of min-id sequences}
A central ingredient to derive the posterior distribution of an IDEM $\mu$ is the conditional distribution of $\ubmX$ given $\ubmX_n$. In this subsection, we exploit the results of \cite{dombryeyiminko2013regular} to describe these conditional distributions.
Before formally introducing the conditional distribution of an exchangeable min-id sequence, we must introduce some further terminology. Recall the hitting scenario from Definition \ref{defhittingscen}, which determines how the minimum over all atoms of the PRM in (\ref{prmrepminid}) is attained. We will see that the conditional hitting scenario, i.e.\ the conditional distribution of a hitting scenario given $\ubmX_n$, will play a major role in the representation of the posterior. Formally, the definition of a conditional hitting scenario is as follows.

\begin{defn}
    Let $\ubmX$ be an exchangeable min-id sequence. The conditional hitting scenario $\Tilde{\Theta}=\lc \Tilde{\Theta}_1,\ldots,\Tilde{\Theta}_L  \rc$ of the $IJ$-margin $\bmX_{IJ}$ of $\ubmX$ is the random partition of $IJ$ which follows the distribution of the hitting scenario $\Theta(IJ)$ conditionally on $\bmX_{IJ}$, where $L=L(\Tilde{\Theta})$ denotes the length of the partition. Its probability distribution is denoted as $\tau(\bmX_{IJ},\theta):= P(\Tilde{\Theta}=\theta\mid \bmX_{IJ})$.  
\end{defn}

\begin{ex}[Conditional hitting scenarios of NTR process]
\label{excondhitscenntr}
Surprisingly, the conditional hitting scenarios of NTR processes without fixed jumps are very simple. There is only one conditional hitting scenario with positive probability, which can be directly read off from the sample $\ubmX_n$. Simply create a partition of $\{1,\ldots,n\}$ by assigning $i$ and $j$ to the same subset of $\{1,\ldots,n\}$ if $X_{i}=X_{j}$, i.e.\ if random variable $i$ and $j$ form a tie in the data. 
This can be verified using the stochastic representation of a min-id sequence associated to an NTR process given by (\ref{stochrepntrseq}). First, recall that the associated non-negative and non-decreasing additive process $H(t)=\alpha((-\infty,t])+\sum_{b_k\leq t} a_k\delta_{b_k}$ cannot have any fixed points of discontinuity, which implies that the $(b_k)_{k\in\N}$ and $(y_k)_{k\in\N}$ are almost surely distinct. Due to (\ref{stochrepntrseq}), we have $\ubmX\sim \min\{ \min_{k\in\N}  \ubmx^{(k)}, (y_k)_{k\in\N}\}$, where $\ubmx^{(k)}\in\{b_k,\infty\}^\N$. Thus we can only have that $X_i=X_j$ if and only if there exists a $k$ such that $x^{(k)}_i=X_i=b_k=X_j=x^{(k)}_j$, which proves the claim. 
\end{ex}

The second ingredient that will be needed to formulate the posterior distribution of an IDEM prior is the conditional probability distribution of the exponent measure $\lambda$.
Recall that $\lambda$ is generally an infinite measure, which is why the definition of a conditional probability distribution of $\lambda$ requires some care. In analogy to the definition of a regular conditional probability we define the conditional probability distribution of the measure $\lambda$ given $\bmx_{IJ}:=(x_{i_k,j_k})_{1\leq k\leq p}$ via the integral equation
\begin{align}
 &\int_{E_{d\times \N}} \mathsf{G}\big( \bmx_{IJ}, \ubmx_{\setminus IJ} \big) \lambda(\rmd \ubmx)\nonumber \\
 &=  \int_{E^\prime_p
 }\int_{(-\infty,\infty]^{d\times \N\setminus IJ}} G( \bmx_{IJ}, \ubmx_{\setminus IJ}\big) K_{IJ}\big( \bmx_{IJ},  \rmd \ubmx_{\setminus IJ} \big) \lambda^{(IJ)}\big( \rmd \bmx_{IJ} \big), \label{defconddistexpmeasure}
\end{align}
where $\ubmx_{\setminus IJ}:=(x_{i,j})_{ (i,j)\in d\times \N\setminus IJ}$,  $\mathsf{G}(\cdot,\cdot)$ is a function that vanishes on $\{\binfty\}\times (-\infty,\infty]^{d\times \N\setminus IJ}$ and $K_{IJ}(\bmx_{IJ},\cdot)$ denotes a probability kernel on $(-\infty,\infty]^{d\times \N\setminus IJ}$ for every $\bmx_{IJ}\in E^\prime_p$. Formally, the existence of such probability kernels $K_{IJ}$ has been verified for min-id sequences $\ubmX$ with continuous marginal distribution and compact index set in \cite[Appendix A.2]{dombryeyiminko2013regular}, but it can be checked that their definition only relies on the assumption of $\ubmX$ being an element of a Polish space, which is trivially satisfied in our setting. Further, \cite{dombryeyiminko2013regular} provide formulas for the conditional distribution of continuous min-stable processes with continuous marginal distributions, which allows to derive the distribution of $\ubmX_m$ given $\bmXIJ$ under the following strengthening of Condition \ref{condsuppminid} which implies continuous marginal distributions of $\ubmX$ and well-definedness of the conditional hitting scenario.

\begin{cond}
\label{condcontmargins}
    For all $1\leq i\leq d$ the IDEM $\mu$ is such that the random functions $h_i(t):=\mu\lc \{\bmx \in E_d^\prime \mid x_i\leq t\}\rc$ are stochastically continuous with $\lim_{t\to\infty} h_i(t)=\infty$ almost surely.
\end{cond}
Note that the condition does not exclude that a realization of $\mu$ can be a discrete measure, it just excludes that a margin can have a fixed point of discontinuity with positive probability. A proof of the fact that Condition \ref{condcontmargins} is equivalent to continuous margins of $\ubmX$ is given in \cite[Corollary 2.3]{brueckmaischerer2023}. Under Condition \ref{condcontmargins}, an application of \cite[Theorem 3.3]{dombryeyiminko2013regular} yields the  posterior predictive distributions, i.e.\ the conditional distribution of $\ubmX$ given $\ubmX_n$.

\begin{thm}
\label{thmdombryeyiminkcond}
Let $\ubmX$ denote an exchangeable min-id sequence. Under Condition \ref{condcontmargins}, the regular conditional distribution of $\ubmX$ given $\ubmX_n$ is given by
\begin{align}
    &P\lc \bmX_{n+1}>\bmx_{1},\ldots,\bmX_{n+k}>\bmx_{k} \ \big\vert\ \ubmX_n\rc \nonumber \\
    &= \exp\Big(   -\lambda \lc \Big\{  \ubmy \ \big\vert\ \lc \bmy_{n+j}\rc_{1\leq j\leq k} \in \big( \ubmx_{k},\binfty\big]^\complement   \text{ , } \ubmy_n>\ubmX_n  \Big\}\Big) \rc \label{eqnpostcomp1}\\
    &\sum_{\tilde{\theta} \in \mathcal{P}_n}  \tau(\ubmX_n, \tilde{\theta}) 
    \prod_{l=1}^{L(\tilde{\theta})} \frac{K_{\tilde{\theta}_l}\lc \bmX_{\tilde{\theta}_l}, \big\{ \ubmy_{\setminus \Tilde{\theta}_l} \mid  \ubmy_{n\setminus \Tilde{\theta}_l}>\ubmX_{n\setminus \Tilde{\theta}_l}, \lc\bmy_{n+j}\rc_{1\leq j\leq k}>\ubmx_k  \big\} \rc }{ K_{ \tilde{\theta}_l}\lc \bmX_{\Tilde{\theta}_l} ,  \big\{ \ubmy_{\setminus \tilde{\Theta}_l} \mid  \ubmy_{n\setminus \tilde{\theta}_l}>\ubmX_{n\setminus \Tilde{\theta}_l}\big\}\rc } ,\label{eqnpostcomp2}
\end{align}
where $\mathcal{P}_n$ denotes the set of all possible partitions of $d\times n$.
\end{thm}

\subsection{A general representation of the posterior}

The formula from Theorem \ref{thmdombryeyiminkcond} splits into the product of two terms: the term (\ref{eqnpostcomp1}) and the remaining term (\ref{eqnpostcomp2}) that involves a sum over all partitions of $d\times n$. This implies that, conditionally on $\ubmX_n$,
\begin{align}
    \lc \bmX_{n+j}\rc_{j\in\N}\sim \min\big\{ \underline{\bm Y}^{(n)},\underline{\bm Z}^{(n)}\big\}, \label{stochrepposterior}
\end{align} 
where $\underline{\bm Y}^{(n)}$ is distributed according (\ref{eqnpostcomp1}) and independent of $\underline{\bm Z}^{(n)}$, which is distributed according to  (\ref{eqnpostcomp2}).
The term (\ref{eqnpostcomp1}) is easily identified as the restriction of the exponent measure $\lambda$ to the set $A(\ubmX_n):= \big\{ \underline{\bmy}\in E_{d\times \N} \mid   \ubmy_{n}>\ubmX_{n}  \big\}$,
which due to Theorem \ref{thmidexpmimpliesexminid} can be represented as
\begin{align*}
    \lambda(\cdot\cap A(\ubmX_n))&= \lambda_\alpha(\cdot \cap A(\ubmX_n)) + \int_{M_d^0}  \otimes_{j\in\N} \minid(\eta) \lc \cdot \cap A(\ubmX_n)\rc \nu(\rmd\eta) .
\end{align*}
Therefore, for every measurable set of $B \subset E^\prime_{d\times \N}$ we have
\begin{align*}
      &\lambda(\{ \ubmy\in E_{d\times \N}\mid \ubmy_{\setminus d\times n}\in B, \ubmy_n\in A(\ubmX_n)\})\\
      &= \lambda_\alpha(B) +\int_{M_d^0}   \otimes_{j\in\N} \minid(\eta) \lc \{ \lc \bmy_{n+j}\rc_{j\in\N}\in B \} \rc \prod_{j=1}^n \exp\lc -\eta\lc (\bmX_j,\binfty]^\complement \rc\rc \nu(\rmd\eta) 
\end{align*}
Thus, conditionally on $\ubmX_n$, $\ubmY^{(n)}$ is distributed as an exchangeable min-id sequence with exponent measure given by
\begin{align}
    \lambda_\alpha(\cdot)+ \int_{M_d^0}  \otimes_{j\in\N} \minid(\eta) \lc \{ \lc \bmy_{n+j}\rc_{j\in\N}\in \cdot \} \rc\overline{\nu}_n(\rmd\eta), \label{defpostexpmeasure}
\end{align}   
where
\begin{align}  \overline{\nu}_n(\rmd\eta):=  \prod_{j=1}^n \exp\lc -\eta\lc (\bmX_j,\binfty]^\complement \rc\rc \nu(\rmd\eta) \label{defpostlevymeasure}
\end{align}
defines a valid Lévy measure of an IDEM, since it satisfies the conditions from Theorem \ref{thmidrandmeasurelevykhintchine}. Thus, $\underline{\bm Y}^{(n)}$ is
an exchangeable min-id sequence associated to the IDEM $\overline{\mu}_{n}$ with base measure $\alpha$ and Lévy measure $\overline{\nu}_n$.

The remaining term (\ref{eqnpostcomp2}) is more complicated to identify, but it is easily seen that $\underline{\bm Z}^{(n)}$ is generated by an hierarchical structure:
\begin{enumerate}
    \item Draw a conditional hitting scenario $\Tilde{\Theta}$ according to $\tau(\ubmX_n,\cdot)$.
    \item Conditionally on $\Tilde{\Theta}=(\Tilde{\Theta}_1,\ldots,\Tilde{\Theta}_L)$, draw $L$ independent sequences $\ubmZ^{(\Tilde{\Theta}_l)}$ with distribution $K_{ \Tilde{\Theta}_l}(\bmX_{\Tilde{\Theta}_l},\cdot)$ conditionally on $\{  \ubmy_{n\setminus\Tilde{\Theta}_l}>\ubmX_{n\setminus\Tilde{\Theta}_l}\}$ and set  
    \begin{align}
       \underline{\bm Z}^{(n,l)}:=\lc \bmZ_{n+j}^{(\Tilde{\Theta}_l)}\rc_{j\in\N} \label{defcondextrseq} 
    \end{align}.
    \item Define $\underline{\bm Z}^{(n)}=\min_{1\leq l \leq L}  \underline{\bm Z}^{(n,l)} $.
\end{enumerate}
It should be mentioned that, even though $\lc \ubmX_{n+j}\rc_{j\in\N}$ is real-valued, $\underline{\bmZ}^{(n,l)}$ may also assume the value $\infty$ in its entries as its distribution is determined by $K_{\tilde{\Theta}_l}(\bmX_{\Tilde{\Theta}_l},\cdot)$, which are probability distributions on $E_\infty$.

\begin{lem}
\label{lemcondhitseqex}
    Under Condition \ref{condcontmargins} and conditionally on the conditional hitting scenario $\Tilde{\Theta}=(\Tilde{\Theta}_1,\ldots,\Tilde{\Theta}_L)$, the sequences $\lc \lc \bmZ^{(n,l)}_{j}\rc_{j\in\N}\rc_{1\leq l\leq L}$ from (\ref{defcondextrseq}) are exchangeable.
\end{lem}

A consequence of Lemma \ref{lemcondhitseqex} is that, by de Finetti's theorem, each sequence of random vectors $ \underline{\bmZ}^{(n,l)}$  can be associated to a random multivariate survival function $S^{(n,l)}$ such that $ \ubmZ^{(n,l)}$ is conditionally i.i.d.\ given $S^{(n,l)}$. Thus, we can define the random survival function associated to $\ubmZ^{(n)}$ as
$$ S_n(\bmx):=\prod_{l=1}^L  S^{(n,l)} (\bmx) ,$$
which immediately implies that $\ubmZ^{(n)}$ is exchangeable as well.

\begin{thm}[Posterior of IDEM prior]
\label{thmmainresult}
    Let $\mu$ be an IDEM and let $\ubmX$ denote the corresponding exchangeable min-id sequence. Under Condition \ref{condcontmargins}, the posterior predictive distribution of an IDEM prior can be characterized in terms of de Finetti's theorem as follows. Conditionally on $\ubmX_n$, the random survival function associated to the exchangeable sequence $(\bmX_{n+j})_{j\in\N}$ is given by
    $$  \exp\lc-\overline{\mu}_{n}\lc (\bmx,\infty]^\complement\rc\rc S_n(\bmx), $$
    where $\overline{\mu}_{n}$ denotes an IDEM with Lévy-Khintchine characteristics $(\alpha,\overline{\nu}_n)$ independent of $S_n=\prod_{1\leq l\leq L} S^{(n,l)}$, where the $S^{(n,l)}$ are the random survival functions associated to $\ubmZ^{(n,l)}$. 
\end{thm}

It should be mentioned that we had to formulate Theorem \ref{thmmainresult} in terms of de Finetti's theorem, since the author does not know whether or not $S^{(n,l)}$ is always the survival function of a min-id distribution. When imposing an additional regularity condition, one can show that each $S^{(n,l)}$ may be identified with a random exponent measure $\overline{\mu}_n^{(l)}$ and one can phrase Theorem \ref{thmmainresult} directly in terms of the posterior of $\mu$, see Theorem \ref{thmconddistmixtureofminid} below.
In the univariate case $\overline{\mu}_n^{(l)}\lc(x,\infty]^\complement\rc:=-\log(S^{(n,l)}(x))$ always defines a measure and one can always express the posterior of $\lc\bmX_{n+j}\rc_{j\in\N}$ directly in terms of the posterior of $\mu$.

\begin{cor}[Univariate case]
    In the setting of Theorem \ref{thmmainresult} with $d=1$ the posterior distribution of an IDEM prior $\mu$ is given by
    $$ \mu \sim \overline{\mu}_{n} +\sum_{1\leq l\leq L} \overline{\mu}_n^{(l)}, $$
    where $\overline{\mu}_{n}$ denotes an IDEM with Lévy-Khintchine characteristics $(\alpha,\overline{\nu}_n)$ independent of the random exponent measures $\lc\overline{\mu}_n^{(l)}\rc_{1\leq l\leq L}$ defined via $\overline{\mu}_n^{(l)}((-\infty,x]):=-\log(S^{(n,l)}(x)) $.
\end{cor}

\begin{rem}[Partial exchangeability and partial observations]
In the case of partial exchangeability, i.e.\ $\mu$ concentrating on $\cup_{i=1}^d \ee_i$, it is often the case that one observes different numbers of observations per component of $\bmX$, i.e.\ one observes $\bmX_{IJ}$ instead of $\ubmX_n$. Theorem \ref{thmmainresult} can easily be adapted to partial observations of $\ubmX_n$ by considering the conditional hitting scenario of $IJ$ and the respective restriction of $\lambda$. A thorough formulation of Theorem \ref{thmmainresult} under this more general framework would require additional notation and interpretation, which is why we have refrained from spelling out the details and leave this exercise to the reader. A concrete example where we have derived the posterior for partial observations is the case of the multivariate NTR prior, see Supplementary Material \ref{appdiscprior}.
\end{rem}

Clearly, the most complex quantities appearing in the posterior are the conditional distributions $K_{\Tilde{\Theta}_l}$ and the distribution of the conditional hitting scenarios $\tau(\ubmX_n,\cdot)$. Providing universal tractable formulas for these quantities seems rather difficult as they are dependent on the specific choice of $\mu$. Section \ref{seccclosedformreppost} below shows that, under some simple regularity condition, one can obtain tractable formulas for a large class of IDEM priors which concentrate on continuous distributions, which also allow to represent $S^{(n,l)}$ via a min-id distribution. On the other hand,  IDEM priors which concentrate on discrete distributions usually cannot satisfy the regularity conditions of Section \ref{seccclosedformreppost}. Nevertheless, we have derived a closed-form analytical representation of the posterior of a multivariate NTR prior in Supplementary Material \ref{appdiscprior}, which can serve as a template for deriving the posterior of other discrete IDEM priors.

\section{A large class of tractable models for continuous distributions}
\label{seccclosedformreppost}

Since the results of Section \ref{secpostdistr} are rather theoretical and difficult to grasp, we here provide closed form representations of the posterior distribution of continuous IDEMs which follow some regularity condition that is often met in practice. The strategy is to impose a regularity condition on $\mu$ via $(\alpha,\nu)$ such that the resulting exponent measure $\lambda$ retains this regularity condition. This is also convenient in practice since an IDEM prior will usually be specified in terms of its corresponding Lévy characteristics $(\alpha,\nu)$.

Our idea can be viewed as a generalization of the ideas of \cite[Section 4.3]{dombryeyiminko2013regular} to the Bayesian framework, as they provide formulas for $K_{IJ}(\bmx_{IJ},\cdot)$ and $\tau(\bmx_{IJ},\cdot)$ when $\lambda^{(IJ)}$ has a density w.r.t.\ the Lebesgue measure. However, conditions on the level of $\lambda^{(IJ)}$ are not suitable in our framework, since the existence of a density of $\lambda^{(IJ)}$ w.r.t.\ the Lebesgue measure is usually not realistic and difficult to check when imposing a prior on $\mu$. For example, when $\mu$ concentrates on $\cup_{i=1}^d \ee_i$, $\lambda^{(IJ)}$ might or might not have a Lebesgue density, depending on the specific choice of $\mu$. Therefore, a condition on the level of $\lambda^{(IJ)}$ does not seem natural in a Bayesian setting and we take a different approach by imposing a regularity condition on the level of $(\alpha,\nu)$. To be specific, we assume that for $\nu$-almost every $\eta$ $\minid(\eta)$ has a density which can be represented as a sum of lower-dimensional Lebesgue densities w.r.t.\ the disjoint sets 
$$\ee_D=\{ \bmx\in E_d \mid x_i<\infty \text{ for all }i\in D \text{ and }x_j=\infty\text{ for all } 1\leq j\not\in D \leq d\},$$ 
where $D\subseteq\{1,\ldots,d\}$. 
Denoting $\Lambda$ as the Lebesgue measure on $\R$ we define a measure on $(-\infty,\infty]$ by $\Lambda^{(\infty)}:=\delta_\infty+\Lambda$ with the corresponding measure on $E_p$ defined as
$$\Lambda^{(\infty)}_p(\rmd\bmx):= \otimes_{i=1}^p \Lambda^{(\infty)} (\rmd\bmx), $$
which will act as a dominating measure for $\minid(\eta)$.
The definition implies that $\Lambda^{(\infty)}_d(\cdot\cap \ee_D)$ corresponds to the $D$-dimensional Lebesgue measure on $\ee_D$, i.e.\ $\Lambda^{(\infty)}_d$ is the sum of Lebesgue measures on the sets $\ee_D$ and a point mass on $\binfty$. Now, the idea is to deduce that when $\nu$-almost every $\minid(\eta)$ is absolutely continuous w.r.t.\ $\Lambda^{(\infty)}_d$ this implies that $\lambda^{(IJ)}$ is absolutely continuous w.r.t. $\Lambda^{(\infty)}_{\vert IJ\vert}$ as well. Formally, this translates to the following simplifying assumption, which is a strengthened version of Condition \ref{condcontmargins} .

\begin{cond}
    \label{condabsolutecont}
     The Lévy measure $\nu$ of the IDEM $\mu$ satisfies $\lim_{t\to\infty} \int_{M_d^0}  \minid(\eta)\big( \{\bmx\in E_d\mid $ $ x_i\leq t\}\big)\nu(\rmd\eta)=\infty$ for all $1\leq i\leq d$. Moreover, for $\nu$-almost every $\eta$, the probability measure $\minid(\eta)$ and the measure $\alpha$ admit densities $f_\eta(\cdot)$ and $g_\alpha(\cdot)$ w.r.t.\ $\Lambda^{(\infty)}_d$.
\end{cond}

It is worth mentioning that $f_\eta$ is a probability density whereas $g_\alpha$ is generally not, and $g_\alpha(\binfty)=\infty$. Further, $f_\eta(\binfty)>0$ is possible, since $\minid(\eta)$ is a probability distribution on $E_d$ even though $\minid(\mu)$ concentrates on $\R^d$. The following example illustrates Condition \ref{condabsolutecont}.

\begin{ex}
Let us illustrate the bivariate case by a continuation of Example \ref{exhazratmix}, assuming $\lim_{s\to\infty}\kappa(s,y)>0$ for all $y\in\mathbb{Y}$. We know that $g_\alpha=\infty\id_{\{\bmx=\binfty\}}$ and that $\nu$ concentrates on measures of the form $\mu_{(a\delta_y)^{(\kappa)},\delta_\infty}$, $\mu_{\delta_\infty,(a\delta_y)^{(\kappa)}}$, $\mu_{(a\delta_y)^{(\kappa)},(a\delta_y)^{(\kappa)}}$ for some $a\in(0,\infty)$ and $y\in\mathbb{Y}$. Thus, $\minid(\eta)$ has corresponding densities w.r.t.\ $\Lambda^{(\infty)}_2$ of the form $f_{\mu_{(a\delta_y)^{(\kappa)},(a\delta_y)^{(\kappa)}}}(\bmx)=  \lc\prod_{i=1}^2 a\kappa(x_i, y) \rc \exp\lc -a\sum_{i=1}^2 \int_0^{x_i} \kappa(s, y)\rmd s  \rc \id_{\{\bmx\in\R^2\}}$, $f_{\mu_{(a\delta_y)^{(\kappa)},\delta_\infty}}(\bmx)=a\kappa(x_1, y)\exp\lc -a\int_0^{x_1} \kappa(s, y)\rmd s\rc  \id_{\{\bmx\in\ee_1\}}$ and similarly for  $f_{\mu_{\delta_\infty,(a\delta_y)^{(\kappa)}}}(\bmx)$. 
\end{ex}

\begin{rem}
    In general, it is hard to check that a given multivariate probability density is the density of a min-id distribution, see \cite[Exercise 5.1.4]{resnickextreme1987}. Therefore it is difficult to directly specify a multivariate model in terms $f_\eta$ and it is advisable to first specify $\eta$ and then to derive the corresponding density of $\minid(\eta)$ from its survival function. For example, one may specify $\eta=\sum_{D\subseteq\{1,\ldots,d\}} \eta_D$, where $\eta_D$ has a Lebesgue density on $\ee_D$. Then, the corresponding density of $\minid(\eta_D)$ might be derived by applying standard density formulas, accounting for the potential mass on $\infty$ of other coordinates. Note that non-zero $\eta_D$ implies that $(X_i)_{i\in D}$ are dependent for $\bmX\sim \minid(\eta)$. When $\eta=\sum_{i=1}^d \eta_i$ for $\nu$-almost every $\eta$ and $\alpha$ also concentrates on $\cup_{i=1}^d\ee_i$ then $\minid(\mu)$ is a model for partially exchangeable data.
\end{rem}

According to Condition \ref{condabsolutecont}, for every set $ A\in \mathcal{B}\lc E_{d\times m}\rc$ we get 
\begin{align*} 
\lambda_\nu^{(d\times m)}(A)&=\int_{M_d^0} \otimes_{i=1}^m\minid(\eta) (A)\nu(\rmd\eta)=\int_{M_d^0} \int_{ A} \prod_{i=1}^m f_{\eta}(\bmx_i)    \Lambda_{dm}^{(\infty)} (\rmd\ubmx_m) \nu(\rmd\eta) \\
&= \int_{ A_i}  \int_{M_d^0} \prod_{i=1}^m f_{\eta}(\bmx_i) \nu(\rmd\eta)   \Lambda_{md}^{(\infty)} (\rmd\ubmx_m) ,
\end{align*}
where we can apply Fubini because $\nu$ and $\Lambda^{(\infty)}_{dm}$ are both $\sigma$-finite measures, due to Lemma \ref{lemlevymeassigmafinite} in Supplementary Material \ref{appdefnrandmeas}. This implies that the $md$-dimensional margin of $\lambda_\nu$ corresponding to $\ubmX_m$ has the density
\begin{align}
   f^{(d\times m)}(\ubmx_m):=\int_{M_d^0}\prod_{i=1}^m f_{\eta}(\bmx_i)\nu(\rmd\eta) \label{def_f_m} 
\end{align} 
w.r.t.\ $\Lambda^{(\infty)}_{md}$. Note that $f^{(d\times m)}$ is not a probability density anymore, but  $f^{(d\times m)}(\ubmx_m)$  is finite for $\lambda_\nu^{(d\times m)}$-almost every $\ubmx_m\in E_{md}^\prime$, since $\lambda^{(d\times m)}$ is finite in every neighborhood of $\ubmx_m\in E_{md}^\prime$. Moreover, it can easily be observed that for every $IJ\subset d\times m$ the $IJ$-margin of $f^{(d\times m)}$ is given by
\begin{align}
f^{(IJ)}\lc \bmx_{IJ}\rc&=\int_{ E_{md-\vert IJ\vert}} f^{(d\times m)}(\ubmx_m)  \Lambda^{(\infty)}_{md-\vert IJ\vert} \lc\rmd \ubmx_{m\setminus IJ}\rc . \label{def_f_IJ}
\end{align}
which is the density of $\lambda_{\nu}^{(IJ)}$ w.r.t.\ $\Lambda^{(\infty)}_{\vert IJ\vert}$.
By similar arguments, the density of $\lambda_\alpha$ w.r.t.\ $\Lambda^{(\infty)}_{md}$ corresponding to $\ubmX_m$ is given by
\begin{align} 
 g^{(d\times m)}\lc \ubmx_m\rc := \sum_{\substack{1\leq j\leq m}}  g_\alpha(\bmx_j) \id_{\{ \bmx_k=\binfty \text{ for all }k\not=j, 1\leq k\leq m \}}   \label{def_g_m}
 \end{align}
and the density corresponding to an $IJ\subset d\times m$ margin is therefore given by
\begin{align*}
  g^{(IJ)}\lc \bmx_{IJ}\rc &:=\int_{  E_{md-\vert IJ\vert}} g^{(d\times m)}(\ubmx_m) \Lambda^{(\infty)}_{md-\vert IJ\vert}\lc\rmd \ubmx_{m\setminus IJ} \rc. 
  \end{align*}
Now, we are able to derive the regular conditional distributions of $\lambda^{(d\times m)}$, which are defined analogously to (\ref{defconddistexpmeasure}).

\begin{lem}
\label{lemconddistexpregcond}
     Let $\mu$ be an IDEM which satisfies Condition \ref{condabsolutecont} with corresponding densities $f^{(d\times m)}$ and $g^{(d\times m)}$ as defined in (\ref{def_f_m}) and (\ref{def_g_m}). Further, let $m\in\N$ and let $IJ\subset d\times m$. Then, for every $\bmx_{IJ}\in E^\prime_{\vert IJ\vert}$, the conditional probability distribution of $\lambda^{(d\times m)}$ is given by
     $$K_{IJ}\lc \bmx_{IJ},d\ubmx_{m\setminus IJ}\rc=\frac{f^{(d\times m)}\lc \ubmx_m \rc+g^{(d\times m)}\lc \ubmx_m\rc}{f^{(IJ)}\lc \bmx_{IJ}\rc+g^{(IJ)}\lc \bmx_{IJ} \rc} \lc\Lambda^{(\infty)}_{md-\vert IJ\vert}\rc (\rmd\ubmx_{m\setminus IJ}) \text{ a.s.}. $$
\end{lem}

A consequence of Lemma \ref{lemconddistexpregcond} is that we can obtain a simpler representation of $\ubmZ^{(n,l)}$ in the posterior representation of $\ubmX$.

\begin{lem}
\label{lemconddistextrfctregcond}
Under the same conditions as in Lemma \ref{lemconddistexpregcond} and conditionally on the conditional hitting scenario $\Tilde{\Theta}$ and $\ubmX_n$, the survival function of $\lc\bmZ^{(n,l)}_{j}\rc_{1\leq  j\leq k}$ from (\ref{defcondextrseq}) takes the form
     \begin{align*}
    &P\lc  \bm Z^{(n,l)}_{j} >\bmx_{j} \text{ for all }1\leq j\leq k \mid \Tilde{\Theta},\ubmX_n\rc  \\
    &= \frac{ K(\ubmX_n,\Tilde{\Theta}_l) + \int_{M_d^0} \underset{1\leq  j\leq k}{\otimes} \minid(\eta) \lc  \big \{\ubmy_k\mid \ubmy_k>\ubmx_{k} \big\}  \rc  h(\eta,\ubmX_n,\Tilde{\Theta}_l) \nu(\rmd\eta)  }{C(\ubmX_n,\Tilde{\Theta}_l) },  
\end{align*}
where $C(\ubmX_n,\Tilde{\Theta}_l)$ denotes the corresponding normalizing constant and
$$h(\eta,\ubmX_n,\Tilde{\Theta}_l) :=  \int_{\{ \ubmy_{n\setminus \Tilde{\Theta}_l} >\ubmX_{n\setminus\Tilde{\Theta}_l}\}} \prod_{j=1}^n f_\eta\lc (\ubmX_n,\underline{\bmy}_n)(\Tilde{\Theta}_l)\rc \lc  \Lambda^{(\infty)}_{nd-\vert \Tilde{\Theta}_l\vert} \rc \lc\rmd \ubmy_{n\setminus \Tilde{\Theta}_l}\rc,$$
$$ K(\ubmX_n,\Tilde{\Theta}_l):=\int_{\{  \ubmy_{n\setminus \Tilde{\Theta}_l}>\ubmX_{n\setminus \Tilde{\Theta}_l}\}} g^{(d\times n)}\lc (\ubmX_n,\ubmy_n)(\Tilde{\Theta}_l)\rc \Lambda^{(\infty)}_{nd-\vert \Tilde{\Theta}_l\vert}  \lc\rmd \ubmy_{n\setminus \Tilde{\Theta}_l} \rc $$
with $(\ubmX_n,\underline{\bmy}_n)(\Tilde{\Theta}_l):=\lc \id_{\{ (i,j)\in \Tilde{\Theta}_{l} \}}X_{i,j}+ y_{i,j}\id_{\{(i,j)\not \in \Tilde{\Theta}_{l}\}}\rc_{(i,j)\in d\times n}.$
\end{lem}

Thus, the distribution of $\lc \bmZ^{(n,l)}_{j}\rc_{j\in\N}$ is the mixture of min-id distributions and it can be generated as follows. First, draw a Bernoulli random variable $B:=B(\ubmX_n,\Tilde{\Theta}_l)$ with success probability $1-K(\ubmX_n,\Tilde{\Theta}_l)/C(\ubmX_n,\Tilde{\Theta}_l)$. If $B=0$, set $\ubmZ^{(n,l)}_j=\binfty$ for all $1\leq j\leq k$, which corresponds to $\bmZ^{(n,l)}_j\sim\minid(\overline{\mu}^{(l)}_n)$ with $\overline{\mu}^{(l)}_n=\infty\delta_{\binfty}$. Otherwise, simulate a measure $\overline{\mu}^{(l)}_n$ from the appropriately normalized version of $h(\eta,\ubmX_n,\Tilde{\Theta}_l) \nu(\rmd\eta)$ and then simulate sequences of i.i.d.\ random vectors $\lc \bmZ^{(n,l)}_{j}\rc_{1\leq j\leq k}$ with distribution $\minid\big(\overline{\mu}_n^{(l)}\big)$. It should be mentioned that $K(\ubmX_n,\Tilde{\Theta}_l)=0$ whenever $\Tilde{\Theta}_l$ contains more than one distinct $j$ index or $\alpha=\infty \delta_\binfty$, which implies $B=1$ almost surely.
For a full representation of the posterior it remains to derive the distribution of the conditional hitting scenario.
\begin{lem}
\label{lemcondhitscenregcond}
    Under the same conditions as in Lemma \ref{lemconddistexpregcond}, the distribution of the conditional hitting scenario is given by
    \begin{align}
       &\tau(\ubmX_n,\Tilde{\Theta})= \label{conddenshitscen}\\
       &\frac{ \prod_{l=1}^{L(\Tilde{\Theta})}  \int_{\{  \ubmy_{n\setminus\Tilde{\Theta}_l}>\ubmX_{n\setminus\Tilde{\Theta}_l}\}} \lc f^{(d\times n)}+ g^{(d\times n)}\rc\lc (\ubmX_n,\underline{\bmy}_n)(\Tilde{\Theta}_l) \rc  \lc   \Lambda^{(\infty)}_{nd-\vert \Tilde{\Theta}_l\vert}  \rc  \rmd \lc \ubmy_{n\setminus\Tilde{\Theta}_l}\rc     }
    {  \sum_{\hat{\Theta}\in\mathcal{P}_n}  \prod_{l=1}^{L(\hat{\Theta})} \int_{\{  \ubmy_{n\setminus\hat{\Theta}_l}>\ubmX_{n\setminus\hat{\Theta}_l}\}} \lc f^{(d\times n)}+ g^{(d\times n)}\rc\lc (\ubmX_n,\underline{\bmy}_n)(\hat{\Theta}_l) \rc   \lc \Lambda^{(\infty)}_{nd-\vert \hat{\Theta}_l\vert}  \rc   \rmd \lc \ubmy_{n\setminus\hat{\Theta}_l}\rc     }. \nonumber
    \end{align}
\end{lem}

The results of this section are summarized in the following theorem, which provides a tractable formula for the posterior of an IDEM prior under Condition \ref{condabsolutecont}.
\begin{thm}
    \label{thmconddistmixtureofminid}
    Let $\mu$ denote and IDEM which satisfies Condition \ref{condabsolutecont} and assume that $\alpha=\infty\delta_{\binfty}$. Then, the posterior of the corresponding exchangeable min-id sequence $\ubmX$ and $\mu$ is given by
    $$ \lc\bmX_{n+j}\rc_{j\in\N} \mid \ubmX_n \sim \minid\lc \overline{\mu} \rc,\ \ \ \mu \mid \ubmX_n \sim\overline{\mu}=\overline{\mu}_n+\sum_{l=1}^{L(\Tilde{\Theta})} \overline{\mu}^{(l)}_n .$$
    where $\overline{\mu}_n$ is an IDEM with Lévy characteristics $\lc \alpha,\overline{\nu}_n\rc$,  $\lc\Tilde{\Theta},\lc\overline{\mu}_n^{(l)}\rc_{1\leq l\leq L(\tilde{\Theta})}\rc$ are independent of $\overline{\mu}_n$, the distribution of $\Tilde{\Theta}$ is determined by Lemma \ref{lemcondhitscenregcond} and, conditionally on $\Tilde{\Theta}$, the $\lc\overline{\mu}_n^{(l)}\rc_{1\leq l\leq L(\tilde{\Theta})}$ are independent random measures with distribution $C(\ubmX_n,\Tilde{\Theta}_l)^{-1}$ $\lc  h(\eta,\ubmX_n,\Tilde{\Theta}_l) \nu(\rmd\eta) +K(\ubmX_n,\Tilde{\Theta}_l)\delta_{(\infty\delta_{\binfty})}\rc$ as defined in Lemma \ref{lemconddistextrfctregcond}.
\end{thm}

\section{A multivariate model for partially exchangeable data with continuous dependence at the root}
\label{secmultivmodel}

In this section, we illustrate the flexibility of our framework by constructing a model for multivariate survival analysis. The setting is as follows. Assume that one observes survival data from different groups and that the goal is to learn the survival function of individuals in the different groups. We assume that the observations are homogeneous within a group and heterogeneous between the groups, while assuming that the groups share some common, but unknown, structure. In the Bayesian setting such data is commonly modeled by a partially exchangeable framework, where the dimension $d$ represents the number of groups. The shared structure between the groups then allows for the so-called ``borrowing of information'', i.e.\ to learn about the survival function in one group from observations in other groups.

Mathematically, the assumption of partial exchangeability translates to the prior distribution concentrating on the space of product probability measures, see e.g.\ \cite{epifanilijoi2010,lijoinipoti2014,rivapalacioleisen2018,camerlengilijoipruenster2021} for examples of such frameworks. All the latter models have in common that their prior is constructed in such a way that the shared structure between the groups is identical, i.e.\ if the prior has a (smoothed) atom at a certain location then its occurrence is either present in all groups or its occurrence is completely independent from the other groups. The technical reason for this is that the shared structure is usually discrete, since this is technically more convenient to handle. Let us exemplarily explain this for the framework of \cite{camerlengilijoipruenster2021}, which was already discussed in detail in Example \ref{exhazratmix}, while similar comments apply to the frameworks of \cite{epifanilijoi2010,lijoinipoti2014,rivapalacioleisen2018}. The completely random measure $\mu_0$ at the root, which represents the shared structure, is a discrete measure. This implies that the hazard rates of the individual groups are sums of terms of the form $a_i\kappa(s,y_i)$, where the $(y_i)_{i\in\N}$ are the same for all groups. The standard choice $\mathbb{Y}=[0,\infty)$ and $\kappa(s,y)=a \id_{\{s\geq y\}}$ implies that the prior assumes that each group has a jump in the hazard rate starting at the same locations $y_i$. This assumption might be seen as restrictive for survival data, since, even though the hazard rates should be ``similar'', there is no particular reason why the locations $y_i$ should be identical across groups.

Here, we propose a model which solely assumes that an increase in the hazard rate at a certain location in a certain group implies that the probability of an increase in the hazard rate in the other groups close to this location is higher, which reflects the intuition that increases in hazard rates should be ``similar'' but not identical. On a technical level, this will be achieved by choosing a continuous IDEM at the root $\mu_0$ in the construction of the subordinated CRMs in Proposition \ref{propsubordinationcrm} as follows. Choose a CRM $\mu_0$ on $[0,\infty)$ with base measure $\alpha_0$ and Lévy measure $\nu_0$ given by the image measure of $(a,b)\mapsto a\delta_b$ under $l(a,b)\rmd a \rmd b$. Further, choose a kernel $\kappa_0$ and define $\mu^{(\kappa_0)}_0([0,t]):=\int_0^t \int_0^\infty \kappa_0(s,y) \mu_0(\rmd y) \rmd s$. Now, $\mu^{(\kappa_0)}$ is used as the common intensity measure for the construction of $d$ conditionally independent CRMs without base measure, i.e.\ we use the continuous $\mu^{(\kappa_0)}$ as the shared structure for the $d$ different groups. Choosing $d$ kernels $(\kappa_i)_{\leqd}$ and marginal intensities $\lc\rho_i\rc_{1\leq i\leq d}$ we can define the corresponding IDEM $\mu_{\mu^{(\kappa_1)}_1,\ldots,\mu^{(\kappa_d)}_d}$ as in (\ref{defsubordinatedCRMexpmeasure}).
Intuitively speaking, the construction implies that the $\mu_i$ will be dependent but with distinct atoms almost surely. Consequently, $\minid\lc \mu_{\mu^{(\kappa_1)}_1,\ldots,\mu^{(\kappa_d)}_d}\rc$ is a model for partially exchangeable data with marginal survival function $\exp\lc -\mu^{(\kappa_i)}_i\big( (-\infty,x_i]\big)\rc$, i.e.\ for $\bmX\sim\minid(\mu)$
$$ \p\lc \bmX>\bmx\mid \mu \rc=\prodd \exp\lc-\mu_i^{(\kappa_i)}\big( (-\infty,x_i]\big) \rc. $$
We need to impose some regularity conditions on the $\kappa_i$ and $\mu_i$ such that the proposed model is well-defined. 

\begin{cond}
\label{condkernel}
    The kernels $(\kappa_i)_{0\leq i\leq d}$ satisfy 
    \begin{enumerate}
        \item[(K1)]  There exists a $C>0$ such that $\kappa_i(s,b)\leq C$ for all $s,b>0$. Moreover, there exists a $\delta,\tilde{C}>0$ such that for every $b>0$ and all $\tilde{s}\in (b,b+\delta)$ we have $\kappa_i(\tilde{s},b)>\tilde{C}$.
        \item[(K2)] For every $t>0$ there exists a $C>0$ such that $\kappa_i(s,b)=0$ for all $s\leq t$ and $b>C$.
    \end{enumerate}

    In addition, we require
    \begin{enumerate}
        \item[(C1)] $l(a,b)>0$ for all $a,b>0$, $\int_0^\infty l(a,b)\rmd a=\infty$ for every $b>0$ and $\int_0^t \int_0^\infty \min\{a;1\} $ $ l(a,b)\rmd a \rmd b<\infty$ for all $t>0$. Moreover, for every $1\leq i\leq d$ we have $\int_0^\infty\rho_i(a)\rmd a=\infty$.
        \item[(C2)] Either $\int_0^\infty \kappa_i(s,b)\rmd s=\infty$ for all $1\leq i\leq d$ and $b>0$ or $\mu_0(\R)=\infty$ almost surely.
    \end{enumerate}
     
\end{cond}

Examples of kernels satisfying Condition \ref{condkernel} (K1)-(K2) are e.g.\ the Dykstra-Laudt kernel $\kappa(s,b)=\tau\id_{\{s\geq b\}}$, the rectangular kernel $\kappa(s,b)=a\id_{\{ \vert s-b \vert\leq \tau\}}$ and the Ornstein-Uhlenbeck kernel $\kappa(s,b)=\sqrt{2\tau}\exp\lc -\tau(s-b)\rc \id_{\{ s\geq b\}}$, where $a,\tau>0$. The conditions on the $\mu_i$ are very mild and most likely satisfied for all models that are used in practice.
Under Condition \ref{condkernel} one can verify that $\mu_{\mu^{(\kappa_1)}_1,\ldots,\mu^{(\kappa_d)}_d}$ is a well-defined IDEM and that Condition \ref{condabsolutecont} is satisfied, which is summarized in the following Lemma. To simplify the notation in the following we denote for  
$$ e_{\eta,\kappa}(x) =  \lc \id_{\{x<\infty \} }\int_0^\infty \kappa(x,y) \eta(\rmd y)  +\id_{\{x=\infty \} } \rc \exp\lc -\int_0^{x} \int_0^\infty \kappa(s,y) \eta(\rmd y)\rmd s \rc  $$
any exponent measure $\eta$ and kernel $\kappa$.
\begin{lem}
\label{lemsurvivmodel}
    Let $\mu_{\mu^{(\kappa_1)}_1,\ldots,\mu^{(\kappa_d)}_d}$ be constructed as in Proposition \ref{propsubordinationcrm}, where the $\mu_i$ are CRMs without base measure and intensities $\rho_i(a)\rmd a \mu_0^{(\kappa_0)}(\rmd b)$, and $\mu_0$ is a CRM with base measure $\alpha_0$ and intensity $l(a,b)\rmd a\rmd b$.  Additionally assuming that Condition \ref{condkernel} is satisfied, we have that $\mu_i^{(\kappa_i)}([0,t])\in(0,\infty)$ for all $t>0$, $\mu_{\mu^{(\kappa_1)}_1,\ldots,\mu^{(\kappa_d)}_d}$ is an IDEM without base measure and Lévy measure $\nu=\nu_1+\nu_2$ where
    $$  \nu_1(A)= \sumd \int_{0}^\infty \int_{0}^\infty  \id_{\{ \infty\delta_\binfty\neq  \otimes_{j=1}^{i-1} \delta_\infty \times (a\delta_b)^{(\kappa_i)}\otimes_{j=i+1}^d \delta_\infty \in A \}} \rho_i(a)\rmd a \alpha^{(\kappa_0)}_0(\rmd b) $$
    and 
    \begin{align*}
       & \nu_2( A)\\
       &=\int_{0}^\infty \int_{0}^\infty \int_{\times_{i=1}^d M_d}  \id_{\big\{  \mu_{\eta^{(\kappa_1)}_1,\ldots,\eta^{(\kappa_d)}_d} \in A \big\}} \otimes_{i=1}^d \crm_{\rho_i(a)\rmd a (a_0\delta_{b_0})^{(\kappa_0)}(\rmd b)}(\rmd \eta_i) l(a_0,b_0)\rmd a_0\rmd b_0.
    \end{align*} 
    
    Moreover, $\mu_{\mu^{(\kappa_1)}_1,\ldots,\mu^{(\kappa_d)}_d}$ satisfies Condition \ref{condabsolutecont} and the corresponding densities w.r.t.\ $\Lambda^{(\infty)}_d$ are given by $g_\alpha=\infty \id_{\{\binfty\}}$,
        $$  f_{\otimes_{j=1}^{i-1} \delta_\infty \times (a\delta_b)^{(\kappa_i)}\otimes_{j=i+1}^d \delta_\infty}(\bmx)=  e_{a\delta_b,\kappa_i}(x_i)\id_{\{\bmx\in\ee_i\cup\{\binfty\}\}}$$
    and 
        $$f_{\mu_{\eta^{(\kappa_1)}_1,\ldots,\eta^{(\kappa_d)}_d}}(\bmx)= \prodd e_{\eta_i,\kappa_i}(x_i).$$
    
\end{lem}

Theorem \ref{thmconddistmixtureofminid} now directly provides the representation of the posterior of $\mu_{\mu^{(\kappa_1)}_1,\ldots,\mu^{(\kappa_d)}_d}$. We remark that the model is structurally conjugate, as every random measure $\overline{\mu}_n^{(l)}$ appearing in the posterior in Theorem \ref{thmconddistmixtureofminid} is constructed as in (\ref{defsubordinatedCRMexpmeasure}). To see this, observe that $\overline{\mu}_n^{(l)}$ is again constructed via a vector of conditionally independent random measures as follows, assuming $\alpha_0=\infty\delta_\infty$ for simplicity. Draw a baseline random measure 
$(a_0\delta_{b_0})^{(\kappa_0)}$, where $(a_0,b_0)\sim  K(a_0,b_0,\ubmX_n,\Tilde{\Theta}) C(\ubmX_n,\Tilde{\Theta})^{-1} l(a_0,b_0)\rmd a_0\rmd b_0$ and
$$ K(a_0,b_0,\ubmX_n,\Tilde{\Theta}):= \e_{\otimes_{i=1}^d \crm_{\rho_i(a)\rmd a (a_0\delta_{b_0})^{(\kappa_0)}(\rmd b)}}\lk h\lc\mu_{\eta_1^{(\kappa_1)},\ldots,\eta_d^{(\kappa_d)}},\ubmX_n,\Tilde{\Theta}\rc\rk. $$
Construct a vector of univariate random measures $\eta_i$ whose law is given by 
$$ P\lc (\eta_i)_{\leqd}\in A \rc= \frac{\e_{\otimes_{i=1}^d \crm_{\rho_i(a)\rmd a (a_0\delta_{b_0})^{(\kappa_0)}(\rmd b)}}\lk \id_{\{(\eta_i)_{\leqd}\in A\}}h\lc\mu_{\eta_1^{(\kappa_1)},\ldots,\eta_d^{(\kappa_d)}},\ubmX_n,\Tilde{\Theta}\rc\rk }{K(a_0,b_0,\ubmX_n,\Tilde{\Theta})^{-1}}.$$
Then, set $\overline{\mu}_n^{(l)}=\mu_{\eta_1^{(\kappa_1)},\ldots,\eta_d^{(\kappa_d)} }$ according to (\ref{defsubordinatedCRMexpmeasure}).

It remains to find tractable formulas for the conditional hitting scenarios, which by Lemma \ref{lemcondhitscenregcond} is the case if we can express $f^{(IJ)}$ from (\ref{def_f_IJ}) in a tractable way.

\begin{lem}
    \label{lemdenshierachicalmodel}
    In the setting of Lemma \ref{lemsurvivmodel}, the density $f^{(IJ)}$ w.r.t.\ $ \Lambda^{(\infty)}_{\vert IJ\vert}$ associated to $\mu_{\mu^{(\kappa_1)}_1,\ldots,\mu^{(\kappa_d)}_d}$ is given by
    \begin{align}
        f^{(IJ)}\lc \bmx_{IJ} \rc&=\sumd   \id_{\{ x_{k,j}=\infty \forall (k,j)\in IJ, k\not=i\}} \int_0^\infty \int_0^\infty  \lc  \prod_{j:(i,j)\in IJ}  e_{a\delta_b,\kappa_i}(x_{i,j})  \rc  \rho_i(a) \rmd a  \alpha_0^{(\kappa_0)}(\rmd b) \nonumber \\
        &+  \int_{0}^\infty \int_{0}^\infty  \prodd   \e_{\eta_i\sim CRM_{\rho_i(a)\rmd a (a_0\delta_{b_0})^{(\kappa_0)}(\rmd b)}}\Bigg[   \prod_{j:(i,j)\in IJ}  e_{\eta_i,\kappa_i}(x_{i,j})  \Bigg]    l(a_0,b_0) \rmd a_0\rmd b_0 \label{eqnpostconc} 
    \end{align} 
\end{lem}

Finally, let us verify that our model allows a group-dependent increase in the hazard rate around a shared center $b_0$. Assume for simplicity that $\alpha_0=\infty\delta_\infty$, $d=2$, $\kappa_0(s,y)=\id_{\{\vert x-y\vert\leq \tau\}}$ and $\kappa_1(s,y):=\kappa_2(s,y):=\id_{\{s\geq y\}}$.
Then, the hazard rate corresponding to $e_{\eta_i,\kappa_i}$ is simply $\eta_i([0,s])$, which increases in the interval $[b_0-\tau,b_0+\tau]$ and is constant thereafter, while the increase is determined by the realization of $\eta_i$. If we would have chosen to use $\mu_0$ instead of $\mu_0^{(\kappa_0)}$ as our shared structure between the groups $\eta_i$ would have only had one atom at $b_0$ and the hazard rate corresponding to $e_{\eta_i,\kappa_i}$ would jump at $b_0$ and be constant thereafter. Thus, combining this with the discussion above Lemma \ref{lemdenshierachicalmodel}, the $\overline{\mu}_n^{(l)}$ induce similar, but not identical, increases in the hazard rates in the $d$ groups, which reflects the intuition we have announced for the motivation of this model.

\subsection{Simulation from the posterior}
\label{secsimpartiallyexdatamaintext}
 To complement the theoretical derivations in this section, let us illustrate the results in a simulation study. As we have access to the posterior predictive distribution, we develop a simulation algorithm that allows us to draw random vectors which, conditionally on $\ubmX_n$ and $\overline{\mu}$, follow the distribution
$$ \lc\bmX_{n+j}\rc_{1\leq j\leq k} \mid \ubmX_n \sim \minid\lc \overline{\mu} \rc,\ \ \ \mu \mid \ubmX_n \sim\overline{\mu}=\overline{\mu}_n+\sum_{l=1}^{L(\Tilde{\Theta})} \overline{\mu}^{(l)}_n, $$
where $k\in\N$ and the distributions of $\bar{\mu}_n$ and $\lc\bar{\mu}_n^{(l)}\rc_{1\leq l\leq L(\Tilde{\Theta})}$ are described in Theorem \ref{thmconddistmixtureofminid} and further specified in Lemmas \ref{lemsurvivmodel} and \ref{lemdenshierachicalmodel} above. Since the simulation from this model is rather complex we will only give a rough summary of the main ideas here and refer to
Appendices \ref{appsiminprocess} and \ref{appsimfromposterior} for a detailed discussion of a general framework for the simulation from the posterior of a min-id prior. Moreover, the simulation algorithm for the simulation from the posterior of the specific model from this section is discussed in Supplementary Material \ref{appsimpartiallyexdata}, where we also provide practical comments on the implementation. The code to reproduce the simulation study can be found at \url{https://github.com/florianbrueck/IDEM}.

In contrast to the first intuition, we generally do not suggest to try to first simulate $\bar{\mu}_n$ and $\lc\bar{\mu}_n^{(l)}\rc_{1\leq l\leq L(\Tilde{\Theta})}$ and then to simulate from the corresponding min-id distribution, but to exploit known simulation approaches from extreme value theory as introduced in  \cite{ribatet2013conditionalsimmaxstable} and \cite{brueckexactsim2022}. At the core, the reason for this suggestion is that simulation from a min-id distribution is a non-trivial task. In our setting, the exponent measure $\bar{\mu}$ is random and every draw of $\bar{\mu}$ provides a new exponent measure with possibly different characteristics. Thus, it seems rather challenging to develop a general simulation algorithm that can easily deal with changing realizations of the exponent measure. On the other hand. the advantage of relying on simulation algorithms from extreme value theory is that they essentially allow to focus on the simulation from three fixed objects. As already described in Section \ref{secpostdistr}, the posterior distribution of $\lc\bmX_{n+j}\rc_{1\leq j\leq k}$ can be represented as
$$ \lc\bmX_{n+j}\rc_{1\leq j\leq k} \sim\big\{ \ubmY^{(n)}_k,\ubmZ^{(n)}_k \big\},$$
where, by (\ref{defpostexpmeasure}) and (\ref{defpostlevymeasure}), $\ubmY^{(n)}_k$ is distributed as a min-id random vector with exponent measure
$ \int_{M_d^0}  \otimes_{1\leq j\leq k} \minid(\eta)(\cdot) \overline{\nu}_n(\rmd\eta)  =:\bar{\lambda}^{(n)}_k$ 
and $\ubmZ^{(n)}_k$ is distributed as the first $k$ vectors in the exchangeable sequence $\ubmZ^{(n)}$ from (\ref{defcondextrseq}). From (\ref{defcondextrseq}), we recall that $\ubmZ^{(n)}_k$ can be represented as $\min_{1\leq l\leq L(\tilde{\Theta})} \ubmZ^{(n,l)}_k$ where each $\ubmZ^{(n,l)}_k$ is distributed as the first $k$ vectors in an exchangeable sequence of random vectors whose distribution is determined by the conditional distribution of the exponent measure of the prior $\lambda_\nu$ and the conditional hitting scenario $\tilde{\Theta}$, already invoking that $\lambda_\alpha$ vanishes in our current setting. 
The key point to observe is that this representation of $\lc\bmX_{n+j}\rc_{1\leq j\leq k}$ allows simulation from a min-id distribution with fixed exponent measure $\lambda_k^{(n)}$ and, as we show in Supplementary Material \ref{appsimpartiallyexdata}, the simulation of each $\ubmZ_k^{(n,l)}$ essentially requires the same ``ingredients'' as the simulation of $\ubmY^{(n)}_k$. 
Therefore, the suggested simulation procedure for the posterior is to simulate $\ubmY^{(n)}_k$ and $\ubmZ^{(n)}_k$ separately and to take the componentwise minimum of the two resulting vectors to obtain $\lc\bmX_{n+j}\rc_{1\leq j\leq k}$.

In total, this allows to break down the procedure into three main steps. First, one has to simulate from a fixed min-id distribution with exponent measure $\bar{\lambda}^{(n)}_k$, second one needs to simulate from the conditional hitting scenario $\tilde{\Theta}$ and, third, one has to simulate random vectors $\ubmZ^{(n,l)}_k$ whose distribution is determined by the conditional distribution of the exponent measure $\lambda_\nu$. Nevertheless, all three steps require simulation from complicated objects, which are also far more complex than those objects that routinely appear in extreme value theory. Thus, the already developed procedures in extreme value theory are not easily applicable here and, even though the general framework follows the same logical steps, an application of this simulation strategy in a Bayesian framework still requires significant theoretical developments and practical considerations. Again, we refer to Appendices \ref{appsiminprocess}, \ref{appsimfromposterior} and \ref{appsimpartiallyexdata} for more details. The only thing that we want to mention here is that the developed simulation algorithm is based on an MCMC scheme whose underlying densities are evaluated via automatic differentiation of closed-form representable Laplace transforms, which is believed to be a novel concept that has applications beyond the scope of this paper.

The main aim of this section is to provide a proof-of-concept of the practical feasibility of this simulation strategy by considering a toy example with $d=2$ groups and $6$ observations per group, which are given by $\{0.47,0.16,3.01,1.32,0.91,0.17\}$ and $\{0.24,1.42,0.96,1.11,0.14,1.87\}$, respectively. We have chosen the same hyperparameters for each group of observations such that the two groups share an a-priori identical survival function, while allowing for group-specific deviations in the posterior. For example, this type of data could represent survival times in two related patient populations or two slightly different treatment variants. We fix $k=1$ and simulate $1000$ samples from the posterior predictive distribution of $\bmX_{n+1}$. Due to space constraints, we only present here only the plot of the posterior predictive survival function in Figure \ref{figsurvfct}, which is obtained as the empirical survival function of the samples from posterior. Figure \ref{figsurvfct} clearly shows that the posterior shrinks the prior towards the mostly smaller observed event times. Therefore, it showcases that the posterior is able to learn from the data, even for a small sample size. Additional explanations of the setup, simulation results and discussions can be found in Supplementary Material \ref{appsimpartiallyexdataaddplots}.

\begin{figure}[!htbp]
    \centering
    \includegraphics[width=0.8\textwidth]{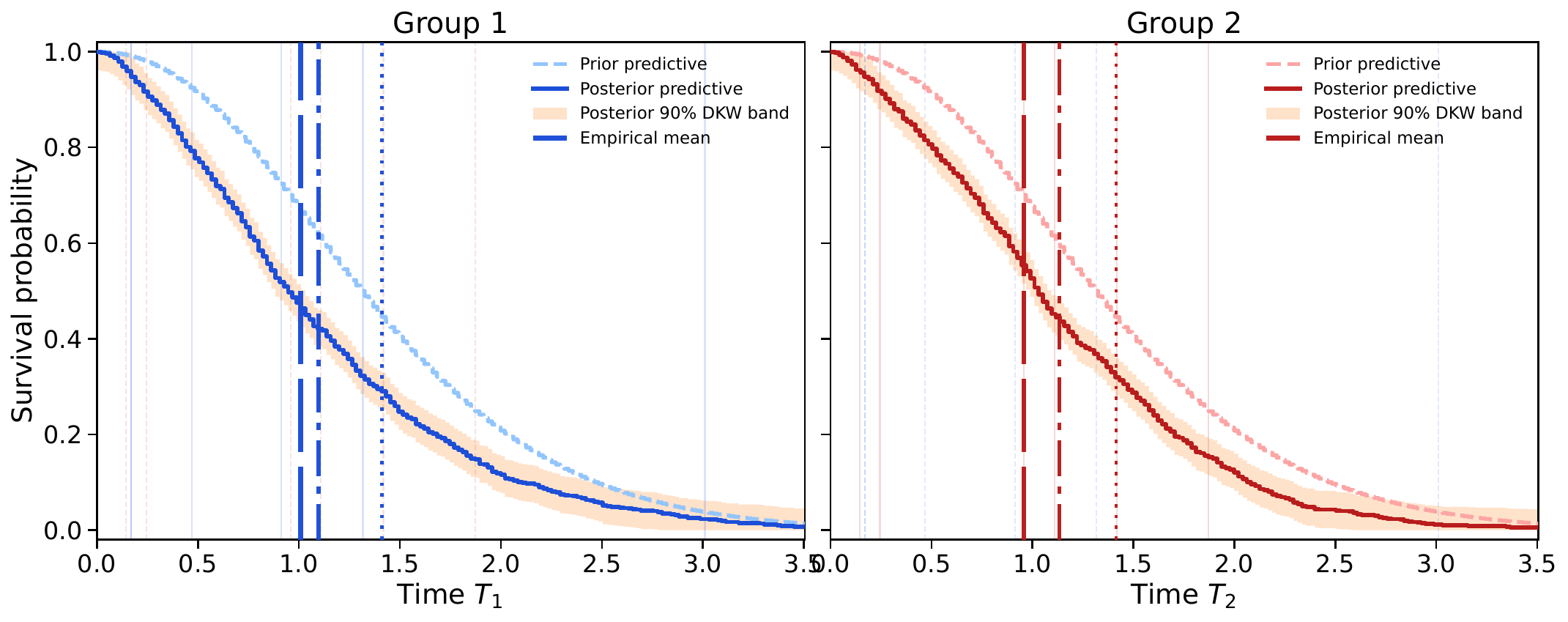}
    \newline
    \caption{Prior (dotted) and posterior predictive (solid) survival functions. Opaque vertical lines mark the observed event times within the groups (blue,red). The dashed, dashed-dotted and dotted vertical lines represent the empirical mean of the observations within the group, the posterior mean and the prior mean, respectively. The yellow shaded regions are the Dvoretzky-Kiefer-Wolfowitz confidence bands.}
    \label{figsurvfct}
\end{figure}

\section{Discussion and Outlook}
\label{secconclusion}

This paper introduces a large class of prior distributions based on IDEMs, which can be seen as a generalization of NTR priors to the multivariate and continuous setting. The framework is shown to be general as it embeds the frameworks of \cite{fergusonprioronprob1974,Dykstralaud1981,loweng1989,hjort1990,epifanilijoi2010,lijoinipoti2014,rivapalacioleisen2018,camerlengilijoipruenster2021}. Further, it is shown how to construct tractable IDEM priors from simple building blocks. As a by-product, the paper also extends the concept of subordination of CRMs by CRMs to subordination of CRMs by IDRMs, which allows to construct vectors of dependent IDRMs with tractable Lévy characteristics and is of independent interest beyond the scope of this paper. In contrast to most of the previous literature, the derivation of the posterior is not based on conditioning on a latent structure, but rather on a de-Finetti type argument, which allows for a closed-form representation of the posterior in terms of the posterior predictive distribution and, under some regularity conditions, also of the IDEM. 

A reader who is familiar with extreme value theory may have already noticed that the whole framework may be re-interpreted in the viewpoint of this field. Essentially, the results of this paper say that it is possible to conduct nonparametric Bayesian inference on the exponent measure of a min-id random vector. Thus, this work can be seen as the natural framework for nonparametric Bayesian inference for min-id distributions in the context of extreme value theory as well and an investigation of this topic is deferred to future research.

Finally, an extension of the current framework to observations from continuous real-valued stochastic processes seems to be in reach, since the results of \cite{dombryeyiminko2013regular} apply to observations from stochastic processes as well. However, there a some technical subtleties that deserve a careful treatment. For example, it remains to find a suitable extensions of hitting scenarios, since these are inherently defined for observations at finitely many locations and an analog on the continuous process level has to be found. Moreover, it is not straightforward to formulate a suitable analog of classical finite-dimensional IDRMs in the infinite-dimensional setting, e.g.\ such as infinite-dimensional NTR priors. We leave these issues as on open problem for future investigation.

\begin{acks}[Acknowledgments]
I would like to thank Matthias Scherer and Jan-Frederik Mai who have hinted me towards this topic and encouraged me to explore this line of research while doing my PhD at the Technical University Munich.
\end{acks}

\begin{funding}
This work was supported by the Swiss National Science Foundation under Grant 186858.
\end{funding}




\bibliographystyle{imsart-number} 
\bibliography{references.bib}       

\newpage
\begin{supplement}
\stitle{Supplementary material for Infinitely divisible priors for multivariate survival functions }
\sdescription{This document contains the supplementary material for Infinitely divisible priors for multivariate survival functions. It contains additional theoretical results, examples and the proofs for the main paper.}
\end{supplement}

\tableofcontents

\section{Introduction to min-id processes}
\label{appintrominprocess}
This section  provides a more thorough introduction to min-id processes $\lc X_t\rc_{t\in T}$, where $T$ is an arbitrary index set. The purpose of this section is to foster the understanding of their stochastic representation and the concepts of extremal functions, hitting scenarios and their conditional distributions. We present these concepts for min-id processes indexed by an arbitrary index sets $T$ as the author firmly believes that the process-view simplifies the understanding of the key concepts as their full beauty and elegancy is only unveiled on this level. Nevertheless, sequences and finite-dimensional min-id vectors are also treated as the index sets $T=d\times\N$ and $T=\{1,\ldots,d\}$ represent these cases. This section is heavily based on \cite{brueckexactsim2022,dombryeyiminko2013regular} and the author refers the interested reader to those publications for the technical details. As a general remark, we want to mention that most of the results in the literature are expressed for max-id processes, the negative of min-id processes, as this is more convenient in the field of extreme value theory. Therefore, many of the results presented below cannot be found in the exact same form elsewhere, as we had to translate the results for max-id processes to min-id processes, which is however the only discrepancy to be found in the following. 

\subsection{Stochastic representation: The extremal functions and the hitting scenario}

Minimum-infinitely divisible (min-id) random vectors $\bmX$ naturally arise as weak limits of scaled (componentwise) minima of independent random vectors of the form
$$ \bm a_n\lc\min_{\leqn} \bmY^{(i)}-\bm b_n\rc \to \bmX , $$
where $\bm a_n \in(0,\infty)^d$,$\bm b_n\in \R^d$ and $\lc\bmY^{(i)}\rc_{i\in\N}$ are independent random vectors, see \cite{Husler1989} for the details. The most important subclass of min-id distribution is the subclass of min-stable distributions, which comprises all possible limit laws of i.i.d.\ sequences $\lc\bmY^{(i)}\rc_{i\in\N}$. Min-stable distributions are well-studied as they constitute the main objects of interest in extreme-value theory, see \cite{resnickextreme2013} for a thorough textbook account. 

Min-id processes $\ubmX=(X_t)_{t\in T}$ are the continuous time analog of min-id random vectors and arise as the possible limits of scaled minima of independent stochastic processes of the form
$$  \underline{\bm a}_n\lc\min_{\leqn} \ubmY^{(i)}-\underline{\bm b}_n\rc \to \ubmX , $$
where $\underline{\bm a}_n \in(0,\infty)^T$,$\underline{\bm b}_n\in \R^T$ are deterministic sequences and $\lc\ubmY^{(i)}\rc_{i\in\N}$ are independent stochastic processes, see 
\cite{Ginecontmaxidprocess1990,baalkemamaxidprocess1993} for details.
Equivalently, a stochastic processes $\ubmX$ is minimum-infinitely divisible (min-id) if for every given $n\in\N$ there exist i.i.d.\ stochastic processes $\lc\ubmX^{(i,n)}\rc_{1\leq i\leq n }$ such that $\ubmX$ is distributed as their componentwise minima, i.e.\
\begin{align}
    \ubmX=\lc X_t\rc_{t\in T}\sim \lc \min_\leqn X^{(i,n)}_t\rc_{t\in T}=:\min_{\leqn} \ubmX^{(i,n)} ,\label{eqnmaxidrep}
\end{align}
explaining the nomenclature minimum-infinitely divisible. 

Under the assumption that $t\mapsto X_t$ is continuous, real-valued and that $\sup \{ x\in\R\mid \p\lc X_t\leq x\rc=1\}=\infty$ for all $t\in T$, which we also assume for the remainder of this section, \cite{Ginecontmaxidprocess1990,baalkemamaxidprocess1993} show that $\ubmX$ can be represented as the componentwise minimum of the atoms of a (usually infinite) PRM $N=\sum_{\iinn} \delta_{f_i}$ on the space of continuous functions 
$$C(T):=\{f\mid f:T\to (-\infty,\infty] \text{ is continuous}\}$$
i.e.,
\begin{align}
    \ubmX\sim \min_{\iinn} f_i \label{eqnmaxidrepasprm}.
\end{align}
The intensity measure $\lambda(\cdot):=\e\lk N(\cdot) \rk$ of the PRM $N$ is also called the exponent measure of $\ubmX$ and it uniquely characterizes its distribution via
$$ P\lc X_{t_i}> x_i \text{ for all } \leqn \rc=\exp\lc -\lambda\lc \{ f\mid f(t_i)\leq x_i\text{ for some }\leqn \}\rc\rc$$
Conversely, for every PRM $N=\sum_{i\in\N} \delta_{f_i}$ on $C(T)$ such that $\lambda\lc \{ f\mid f(t)<\infty \}\rc=\infty$ and $\lambda\lc \{ f\mid f(t)\leq a \}\rc<\infty$ for all $t\in T$ and $a\in \R$ one can define a corresponding continuous min-id process via
$$ \ubmX:=\lc X_t\rc_{t\in T}:=\lc\min_{i\in\N} f_i(t) \rc_{t\in T}.$$
One should observe that for $T=d\times\N$ or $T=\{1,\ldots,d\}$ every min-id process is continuous and thus the stochastic representation above holds for all min-id processes with those index sets.

When working with min-id processes it is very useful to introduce the concept of extremal functions and hitting scenarios. To be able to understand these concepts it is crucial to observe that for $\bmt:=(t_1,\ldots,t_k)$ the value of $\bmX_\bmt:=(X_{t_1},\ldots,X_{t_k})$ is fully determined by the atoms of the so-called extremal point measure at $\bmt$  given by
\begin{align}
    N^+_\bmt:=\sum_{f\in N} \delta_{f}\id_{\big\{ f(t_i)=X_{t_i} \text{ for some } 1\leq i\leq k \big\}}. \label{extremalmeasure}
\end{align} 
Thus, all atoms of the so-called subextremal point measure at $\bmt$ given by
\begin{align}
  N^-_\bmt:=\sum_{f\in N} \delta_{f}\id_{\big\{ f(t_i)<X_{t_i} \text{ for all } 1\leq i\leq k \big\}} \label{subextremalmeasure}
\end{align}
are irrelevant when we are solely interested in $\bmX_\bmt$. The atoms of $N^+_\bmt$, resp.\ $N^-_\bmt$, are called the extremal, resp.\ subextremal, functions at $\bmt$. \cite[Section 2]{dombryeyiminko2013regular} formally analyzes the extremal and subextremal point measures and show that they are indeed well-defined random measures. Moreover, they show that $N_\bmt^+$ has finitely many atoms almost surely and that there is almost surely one extremal function per location $t_i$.

Collecting those indices of $\bmt$ for which the same atom of $N^+_\bmt$ is responsible for seeing the observation $\bmX_\bmt$ one obtains a partition $\Theta=\lc\Theta_i\rc_{1\leq l\leq L(\Theta)}$ of $\bmt$ of length $L(\Theta)$, which is called the hitting scenario of $\bmt$. More precisely, we can write
$$ N^+_\bmt=\sum_{l=1}^{L(\Theta)} \delta_{e_l}$$
where $e_l$ are the extremal functions at locations $\Theta_l$, i.e.\ those functions for which
$$ X_t =e_l(t) \text{ for all } t\in\Theta_l. $$
Thus, the hitting scenario encodes the locations at which an extremal function determines the observation $\bmX_\bmt$.

Fig.\ \ref{figprmfunction} illustrates the extremal and subextremal functions of a continuous max-id process on $\R$ with $\bmt=(0,1,\ldots,5)$ and hitting scenario $\Theta(\{0,\ldots ,5\})=\big\{ \{0,1,2,5\},\{3,4\}\big\}$.

\begin{figure}
    \centering
   \includegraphics[scale=0.5]{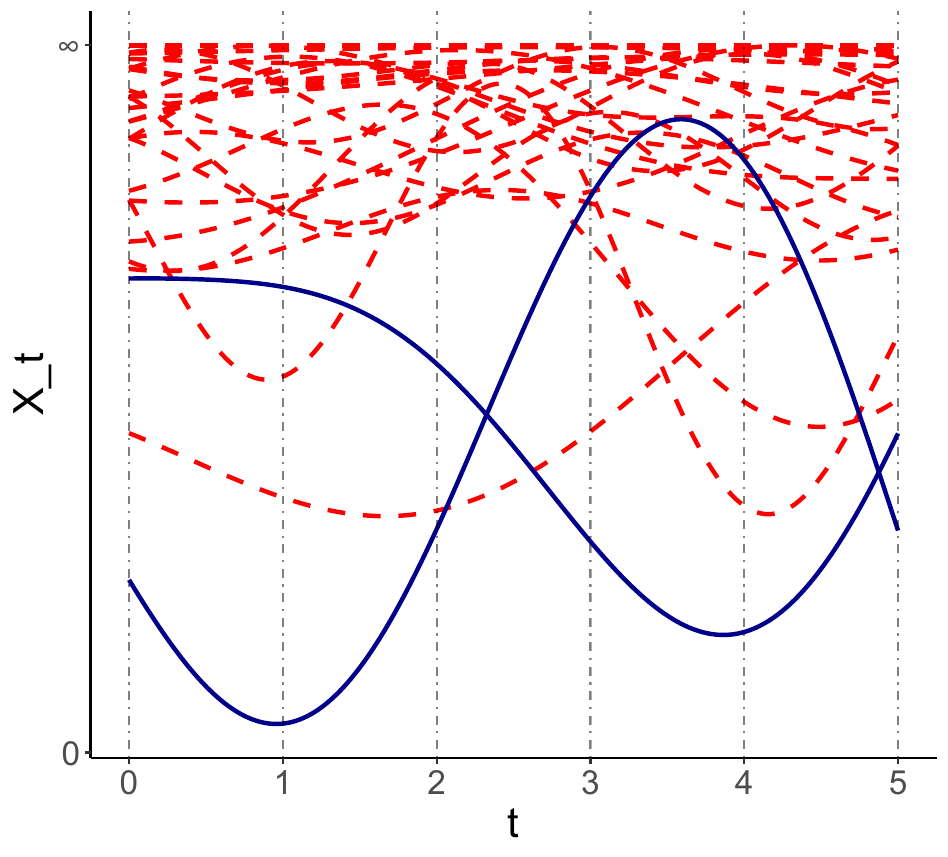}
\caption{\footnotesize  Illustration from \cite{brueckexactsim2022} of extremal and subextremal functions of a PRM $N$ on $C(\R)$. Functions in solid-blue belong to $N^+_{(0,\ldots,5)}$, functions in dashed-red belong to $N^-_{(0,\ldots,5)}$. The hitting scenario is given by $\Theta(\{0,\ldots ,5\})=\big\{ \{0,1,2,5\},\{3,4\}\big\}$}
\label{figprmfunction}
\end{figure}

\subsection{Conditional distribution of min-id processes}

It turns out that the concept of extremal functions and hitting scenarios is crucial to understand the conditional distribution of min-id processes.
The core idea to derive their conditional distributions is to split the conditional distribution of $\ubmX$ given $\bmX_\bmt$ into two parts:
\begin{enumerate}
    \item The conditional distribution of the extremal point measure $N^+_\bmt$, encoded via the conditional hitting scenario with the corresponding extremal functions given by $\lc \Tilde{\Theta},(e_l)_{1\leq l\leq L(\Tilde{\Theta})}\rc$.
    \item The conditional distribution of the subextremal point measure $N_\bmt^-$.
\end{enumerate}

From the discussion above it is obvious that the conditional hitting scenario with the corresponding extremal functions determines the conditional distribution of the extremal point measure and together with the conditional distribution of the subextremal point measure one can fully describe the conditional distribution of $\ubmX$. Intuitively, one first reconstructs the random vector $\bmX_\bmt$ via the conditional hitting scenario $\Tilde{\Theta}$ and the extremal functions $(e_l)_{1\leq l\leq L(\Tilde\Theta)}$ and then reconstructs the remaining part of $\ubmX$ via the conditional distribution of the subextremal point measure. The corresponding stochastic representation of $\ubmX$ given $\bmX_\bmt$ is then given by
$$\ubmX\mid\bmX_\bmt \sim \min\big\{\ubmZ,\ubmY\big\}$$
where 
\begin{align}
   \ubmY\sim \min_{f\in N^-_\bmt} f \text{ and } \ubmZ \sim \min_{1\leq l \leq L(\Tilde{\Theta}(\bmt))}e_l, \label{defconddistdecomp} 
\end{align}
and we use the notation $f\in N$ to say that $N$ has an atom at $f$.

Let us start with the conditional distribution of $N^-_\bmt$ given $N^+_\bmt$, which is derived in \cite[Lemma 3.2]{dombryeyiminkostrongmixing2012}. The authors show that the conditional distribution of $N^-_\bmt$ given $N^+_\bmt$ is equal to the distribution of a PRM with intensity 
$$\id_{\{ f(t_i)>X_{t_i} , 1\leq i\leq k \}}\rmd \lambda(f).$$
Thus, the conditional distribution of the subextremal point measure $N^-_\bmt$ given $N^+_\bmt$ actually solely depends on $\bmX_\bmt$ and thus it is also the conditional distribution of $N^-_\bmt$ given $\bmX_\bmt$. In other words, the conditional distribution of $N^-_\bmt$ given $\bmX_\bmt$ is equal to the distribution of the PRM underlying the stochastic representation of $\ubmX$ subject to the restriction that neither of its atoms changes the value $\bmX_\bmt$, which is encoded by $\id_{\{ f(t_i)>X_{t_i} , 1\leq i\leq k \}}$. This directly implies that $\ubmY$ is again a min-id process whose exponent measure is given by $\id_{\{ f(t_i)>X_{t_i} , 1\leq i\leq k \}}\lambda(\rmd f)$.

It remains to describe the conditional distribution of $N^+_\bmt$ in terms of the conditional hitting scenario and the conditional distribution of the extremal functions. We start with the conditional distribution of the extremal functions given the  conditional hitting scenario, since the distribution of the conditional hitting scenario is more involved. Intuitively, conditionally on the conditional hitting scenario $\Tilde{\Theta}$, one needs to find functions $e_l$ such that 
$$e_l(t)=X_t \text{ for all }t\in\Tilde\Theta_l \text{ and that }e_l(t)>X(t) \text{ for all } t\in \cup_{1\leq l\neq \underline{l}\leq L( \Tilde{\Theta})} \Tilde{\Theta}_{\underline{l}}, .$$
It turns out that the distribution of those functions can be described by the conditional distribution of the exponent measure $\lambda$. As $\lambda$ is usually an infinite measure, let us describe how we define the conditional distribution of an infinite measure, reyling on the description on \cite[Appendix A2]{dombryeyiminko2013regular}. Formally, the conditional distribution of $\lambda$ is defined as follows. For every $\bmt\subset T$ and function $G:(-\infty,\infty]^k\times C(T)\to [0,\infty)$ such that $G(\binfty, C(T))=0 $ there is a disintegration formula for $\lambda$ given by
$$ \int G\lc (f(t))_{t\in\bmt}, (f(t))_{t\in T\setminus \bmt}\rc\lambda(\rmd f) =\int G\lc \bmx,(f(t))_{t\in T\setminus \bmt}\rc K_\bmt\lc\bmx,\rmd(f(t))_{t\in T\setminus \bmt}\rc \lambda^{(\bmt)}(\rmd\bmx) ,$$
where $K_\bmt(\bmx,\cdot)$ is a probability kernel on $C(T\setminus \bmt)$ for every $\bmx\in (-\infty,\infty]^k\setminus \binfty$ and $\lambda^{(\bmt)}$ denotes the $\bmt$-margin of $\lambda$ given by 
$$\lambda^{(\bmt)}(A)=\lambda\lc\{f\mid (f(t))_{t\in \bmt}\in A\rc\}).$$
The existence and well-definedness of $K_\bmt(\bmx,\cdot)$ is provided in \cite{dombryeyiminko2013regular}. Crucially, they also show that
\begin{align}
P\lc e_l\in A_l, 1\leq l\leq L(\Tilde{\Theta}) \mid \Tilde{\Theta}, \bmX_\bmt\rc \nonumber \\
&=\prod_{l=1}^{L(\Tilde\Theta)}K_{\Tilde{\Theta}_l}\lc\bmX_{\Tilde{\Theta}_l},\{ f_l \in A_l;\ (f(t))_{t\in \bmt \setminus \tilde{\Theta}_l}>\bmX_{\bmt\setminus\Tilde{\Theta}_l})\}\rc, \label{extrfctindep}
\end{align}
i.e., conditionally on the hitting scenario $\Tilde{\Theta}$ and $\bmX_\bmt$, the extremal functions $(e_l)_{1\leq l\leq L(\Tilde{\Theta})}$ are independent with distribution determined by the conditional distributions of $\lambda$. 

To fully describe the conditional distribution of a min-id process it solely remains to describe the distribution of the conditional hitting scenario. The crucial ingredient is a family of finite measures 
$\lc\beta_\theta\rc_{\theta\in\mathcal{P}_{\bmt}}$ on $\R^{k}$,
which are defined via 
$$\beta_\theta (A):=P\lc\bmX_\bmt\in A, \Theta=\theta\rc,$$
where $\mathcal{P}_{\bmt}$ denotes the set of all possible partitions of $\bmt$. Intuitively,  $\beta_\theta$ can be viewed as the (non-normalized) distribution of $\bmX_{\bmt}$ when the hitting scenario $\Theta$ is fixed to $\theta$. Therefore, we also have that $\beta:=\sum_{\theta\in\mathcal{P}_{\bmt}} \beta_\theta$ is equal to the distribution of $\bmX_\bmt$ since
$$ \beta(A):=\sum_{\theta\in\mathcal{P}_\bmt} \beta_\theta(A)=P(\bmX_\bmt\in A) $$
Trivially, each $\beta_\theta$ is absolutely continuous w.r.t.\ $\beta$ with a density $d\beta_\theta/d\beta$ on $\R^{\vert IJ\vert}$. \cite[Theorem 3.2.1]{dombryeyiminko2013regular} provides that the distribution of the conditional hitting scenario $\Tilde{\Theta}$ is given by
    $$ \tau(\bmX_\bmt,\theta):=P\lc \Tilde{\Theta}=\theta\rc= P\lc \Theta=\theta\ \big\vert\ \bmX_{\bmt}\rc=\frac{d\beta_{\theta}}{d\beta}(\bmX_\bmt). $$
Thus, one ``solely'' has to find the Radon-Nikodym derivative $d\beta_\theta/d\beta$ to determine the distribution of the conditional hitting scenario. However, since the relation of the $\beta_\theta$ is generally rather complex, providing more intuitive and explicit expressions of $d\beta_\theta/d\beta$ is difficult. A starting point can be that, also due to \cite{dombryeyiminko2013regular}, $\beta_\theta$ has the representation
\begin{align}
    \beta_{\theta}(\rmd\bmx)= \exp\lc -\lambda^{(\bmt)}\lc (\bmx,\binfty]^\complement\rc\rc \prod_{l=1}^{L(\theta)} K_{ \theta_l}\lc \bmx_{\theta_l},\big\{ (f(t))_{t\in \bmt\setminus \theta_l}>\bmx_{\bmt\setminus \theta_l} \big\}\rc   \otimes_{l=1}^{L(\theta)}\lambda^{(\theta_l)}(\rmd \bmx_{\theta_l}). \label{defbetameasures}
\end{align}
For example, when each $\lambda^{(\theta_l)}$ has a $\vert\theta_l\vert$-dimensional Lebesgue density, it is relatively easy to see that standard conditional density arguments can be employed to find $d\beta_\theta/d\beta$.

\subsection{A high level overview of the simulation of the conditional distribution of a min-id process} 
\label{appsiminprocess}
Lastly, we want to provide some intuition about the simulation from the conditional distribution of a min-id process, which is one potential approach to simulate from the posterior of an exchangeable min-id process. However, we take a rather high-level approach as the details strongly depend on the specific model at hand and a schematic overview will help the reader to better grasp the general ideas. From the previous subsection it is clear that a simulation from the conditional distribution of a min-id process requires the simulation from the two stochastic processes $\ubmY$ and $\ubmZ$ which were defined in (\ref{defconddistdecomp}). 

We already know that $\ubmY$ is a min-id process and therefore simulation algorithms for min-id processes can be employed, such as the one proposed in \cite{brueckexactsim2022}. The general idea of the simulation algorithm is simple. One starts with an initialization of the extremal function $e_1$ corresponding to $Y_t$, where the location $t$ is arbitrary. Then one exploits that, conditionally on $Y_t$, $N_t^-$ is again distributed as a PRM with intensity measure $\id_{\{f(t)>e_1(t))\}}\lambda(\rmd f)$ to obtain $Y_{s}$ at another location $s$. This is possible since the only atoms of $N_t^-$ that are relevant to determine $Y_{s}$ are those atoms $f$ which satisfy $f(s)<e_1(s)$, as the remaining atoms are subextremal at location $s$. Due to the fact that $\lambda\lc \{ f\mid f(s)<a\}\rc<\infty$ for all $a\in\R$, this only requires the simulation of finitely many atoms of the corresponding PRM and thus is feasible in practice. Iterating this procedure for the remaining locations of interest allows to obtain an exact simulation of $\bmY_\bmt$ for arbitrary many locations $\bmt$. The possibility of obtaining an initialization of the extremal function $e_1$ that we implicitly assumed above is not a complication. The results of the previous subsection show that one only needs to be able to simulate from the marginal distribution of $Y_t$ and then to simulate from $K_t(Y_t,\cdot)$ to obtain them. For the technical details of this algorithm we refer to \cite{brueckexactsim2022}.

Assuming that the conditional hitting scenario $\Tilde\Theta_l$ has already been simulated, the simulation from $\ubmZ\sim\min_{1\leq l\leq L(\Tilde{\Theta})} \ubmZ^{(n,l)}$ boils down to simulating each $\ubmZ^{(n,l)}$. Their distribution is given by $\ubmZ^{(n,l)}\sim K_{\Tilde{\Theta}_l}(\bmX_{\Tilde{\Theta}_l},\cdot)$ and they are independent of each other, as follows from (\ref{extrfctindep}). The form of $K_{\tilde{\Theta}_l}$ is highly model dependent and the feasibility of simulation from $K_{\Tilde{\Theta}_l}(\bmX_{\Tilde{\Theta}_l},\cdot)$ must be assessed case-by-case. The same is true for simulation from the conditional hitting scenario. In general, it is not easy to derive its distribution and, even if one is able to do so, it is still highly challenging to provide exact simulation algorithms as the size of its state space is the $\vert\bmt\vert$-th Bell number, which grows superexponentially. Thus, one usually must resort to an MCMC scheme to simulate from the conditional hitting scenario, as is described in Section \ref{appgibbssampler} below.


\section{A general recipe for simulation from the posterior of min-id prior}
\label{appsimfromposterior}

In this section, we want to provide a discussion of the potential simulation approaches from the posterior of a min-id prior. Even though the posterior distribution of a min-id prior is known analytically, certain statistical functionals of interest are usually only accessible by simulating from the posterior, e.g.\ credible intervals of $S_\mu(\bm t)$, which is why access to a posterior simulation algorithm is usually essential in applications. Here, we focus the discussion on the simulation from the posterior predictive distribution, i.e.\ on the simulation of the conditional distribution of an exchangeable min-id sequence.
As a consequence, in contrast to the standard approach in Bayesian nonparametrics, this means that we suggest to not try to simulate from $\overline{\mu}\sim\mu\mid \ubmX_n$, but to simulate from $\lc\bmX_{n+i}\rc_{1\leq i\leq k}\mid \ubmX_n$ to derive such quantities. The reason that we follow this approach is that direct simulation from the random measure $ \overline{\mu}$ can be quite involved, since it usually does not have a simple generating process. Further, even when $\overline{\mu}$ could be simulated, one still needs to simulate from $\bmX_{n+i}\sim\minid(\overline{\mu})$ in many cases, which is a non-trivial problem itself as simulation from a min-id distribution corresponding to a given exponent measure is non-trivial. It is to be expected that for each realization of $\overline{\mu}$ one needs to adapt the simulation scheme of $\minid(\overline{\mu})$, which seems unrealistic in practice.

On the contrary, a simulation algorithm for $\lc\bmX_{n+i}\rc_{1\leq i\leq k}\mid \ubmX_n$ can be based on two different approaches. First, one can extend existing simulation algorithms for conditional distributions of min-id processes. This approach will, in theory, always work and we present it as a general recipe. Second, when Condition \ref{condabsolutecont} is satisfied, one can resort to more classical MCMC algorithms to simulate the conditional distributions of min-id processes.

First, let us discuss a general simulation algorithm, which only requires Condition \ref{condcontmargins} to be satisfied. To simplify the notation, we define $m=n+k$ and note that we need to simulate from the conditional distribution of $\ubmX_m$ given $\ubmX_n$. Then, recalling the stochastic representation (\ref{stochrepposterior}) of $\lc \bmX_{n+j}\rc_{j\in\N}\mid \ubmX_n\sim \min\{ \ubmY^{(n)},\ubmZ^{(n)}\}$, the simulation from $\ubmX_m$ boils down to simulating from three objects. First, one needs to simulate the first $k$ components from the exchangeable min-id sequence $\ubmY^{(n)}$, denoted as $\ubmY^{(n)}_k$. Second, one needs to simulate a conditional hitting scenario $\Tilde{\Theta}$ and, third, conditionally on the conditional hitting scenario $\Tilde{\Theta}$, one needs to simulate the first $k$ components from the $L(\Tilde{\Theta})$ random sequences $\lc\lc\bmZ_{j}^{(n,l)}\rc_{j\in\N}\rc_{1\leq l\leq L(\Tilde{\Theta})}$, denoted as $\ubmZ^{(n,l)}_k$, respectively. We discuss the simulation of each of these objects separately in the following paragraphs.

\textbf{$(I)$ Simulation of $\ubmY^{(n)}_k$:\ } To simulate $\ubmY^{(n)}_k$, one might think of first simulating $\overline{\mu}_n$ and then, conditionally on $\overline{\mu}_n$, simulating a random vector from $\minid(\overline{\mu}_n)$. However, as explained above, this is quite challenging since simulating the IDEM $\overline{\mu}_n$ is non-trivial and simulation from $\minid(\overline{\mu}_n)$ for changing realizations of $\overline{\mu}_n$ is also difficult. Therefore, it is generally quite difficult to use this procedure to generate samples from  $\ubmY^{(n)}_k$. Fortunately, exact simulation schemes for min-id processes have been developed in \cite{brueckexactsim2022}. The main advantage of this simulation algorithm is that it does not require to simulate from $\overline{\mu}_n$ to obtain $\ubmY^{(n)}_k$. Instead, it uses concepts from extreme value theory to iteratively simulate $\bmY^{(n)}_1,\ldots,\bmY^{(n)}_k$. Essentially, the simulation algorithm only requires that one is able to simulate from the probability measure corresponding to the appropriately normalized version of $\otimes_{i=1}^k \minid(\eta) \lc \big\{ \cdot \cap \{ \ubmx_k\mid x_{i,j}<a\} \big\}\rc \overline{\nu}_n(\rmd\eta)=:Q^{(i,j)}_{k,a}(\cdot)$ for every $(i,j)\in d\times k$ and $a<\infty$. Note that this still involves sampling a random exponent measure $\eta$, but this is usually simpler than sampling $\overline{\mu}_n$, since $Q_{k,a}^{(i,j)}$ is basically a mixture of min-id distributions with mixing measure of the form $\overline{\nu}_n(\rmd\eta)$. For example, when simulating from the posterior of a univariate NTR prior, simulation from $Q_{k,a}^{(i,j)}$ boils down to simulating $2$-dimensional random vectors instead of a CRM $\overline{\mu}_n$ with infinitely many atoms, see the discussion in \cite[Remark 7]{brueckexactsim2022}. Intuitively, the simulation algorithm only requires to simulate the few relevant atoms $\ubmy^{(e)}$ of the PRM $\sum_{e\in\N} \delta_{\ubmy^{(e)}}$ with intensity $\lambda_\alpha+\lambda_{\overline{\nu}_n}$ such that there is some $(i,j)\in d\times k$ such that $Y^{(n)}_{i,j}=y^{(e)}_{i,j}$. Thus, one can use the simulation algorithm of \cite{brueckexactsim2022} to simulate from $\ubmY^{(n)}_k$ and we refer the reader directly to the paper for the details.

\textbf{$(II)$ Simulation of $\Tilde{\Theta}\sim \Theta\mid\ubmX_n$:\ } Simulating from the conditional hitting scenario is quite involved, since there is a very large number of states that the random partition might assume and there is no simple formula for its distribution. For exponent measures of max-stable distributions which have a density w.r.t.\ the Lebesgue measure an approximate simulation algorithm based on a Gibbs sampling procedure has already been proposed in \cite{ribatet2013conditionalsimmaxstable}. Investigating their algorithm in detail reveals that, even though it is explicitly formulated for max-stable distributions, it essentially does not rely on the max-stability assumption. Further, the condition of assuming a density of the exponent measure w.r.t.\ the Lebesgue measure can also be relaxed to exponent measures having a density w.r.t.\ $ \Lambda_{dn}^{(\infty)}$, including those min-id distributions arising from IDEMs satisfying Condition \ref{condabsolutecont}. Essentially, the approach constructs a Gibbs sampler of the conditional hitting scenario, exploiting the conditional distribution of the conditional hitting scenario when one index $(i,j)$ is deleted from the partition. The precise mathematical derivations and a more detailed explanation is deferred to Section \ref{appgibbssampler}.

\textbf{$(III)$ Simulation of $\ubmZ^{(n,l)}_k$:\ } Provided a realization of the conditional hitting scenario $\Tilde{\Theta}$, it remains to simulate from the $L(\Tilde{\Theta})$ random vectors $\ubmZ^{(n,l)}_k$, whose distribution is determined via the conditional distributions $K_{\Tilde{\Theta}_l}(\bmX_{\Tilde{\Theta}_l},\cdot)$. Those distributions are certainly highly model dependent and it is difficult to try to find a general recipe of simulating from $K_{\Tilde{\Theta}_l}$, since $\lambda$, and thus $K_{\Tilde{\Theta}_l}$, could be discrete, continuous or a mixture of both types of distributions. 
When the exponent measure $\lambda^{(IJ)}$ has a density w.r.t.\ $ \Lambda_{\vert IJ\vert}^{(\infty)}$, in particular exponent measures arising from IDEMs satisfying Condition \ref{condabsolutecont}, the derivations in Section \ref{seccclosedformreppost} show that a general recipe for the the simulation of $\ubmZ^{(n,l)}_k$ can be described as follows. Essentially, one has to simulate a $\overline{\mu}_n^{(l)}$ from a probability measure of the form $ h(\eta,\ubmX_n,\Tilde{\Theta}_l) \nu(\rmd\eta)$ and then, conditionally on $\overline{\mu}_n^{(l)}$, one has to simulate i.i.d.\ random vectors with distribution $\minid\lc \overline{\mu}_n^{(l)}\rc$. Again, it is to be expected that simulating $ \overline{\mu}_n^{(l)}$ is simpler than directly simulating $\overline{\mu}$.

To summarize, in this general simulation framework the simulation of $\ubmY^{(n)}_k$ and $\ubmZ^{(n,l)}_k$ requires the application of the algorithm of \cite{brueckexactsim2022} to simulate from several (simpler) min-id random vectors. In both cases, an application of \cite{brueckexactsim2022} essentially requires to first simulate an $\tilde{\eta}\sim \tilde{h}(\eta)\nu(\rmd\eta)$ for some density $\tilde{h}$ and then to simulate i.i.d.\ random vectors with distribution $\minid(\tilde{\eta})$. It is expected that simulation from $\tilde{h}(\eta)\nu(\rmd\eta)$ can still be non-trivial and it has to be checked case by case if simulation from these random measures is feasible. Furthermore, the simulation from the conditional hitting scenario is difficult and must be treated case-by-case, since there is generally no closed form representation of its probability distribution available.

\subsection{An MCMC scheme under Condition \ref{condabsolutecont}  }
\label{subsecmcmcmscheme}
Under Condition \ref{condabsolutecont}, the density of $\lambda^{(IJ)}$ w.r.t.\ $\Lambda^{(\infty)}_{\vert IJ\vert}$ is available in closed form and one can resort to a fully MCMC-driven approach.
First, Lemma \ref{lemcondhitscenregcond} provides a closed form representation of the conditional distribution of the conditional hitting scenario and therefore the Gibbs sampler from Supplementary Material \ref{appgibbssampler} below can be used for its simulation in step $(II)$. Moreover, Lemma \ref{lemconddistextrfctregcond} applies and the density of each $\ubmZ^{(n,l)}_k$ is known in closed-form. Thus, step $(III)$ above can be replaced by an MCMC scheme whose density is proportional to $\int_{M_d^0}\prod_{j=1}^k f_\eta(\bmx_j)h(\eta,\ubmX_n,\Tilde{\Theta}_l)\nu(\rmd\eta)+K(\ubmX_n, \Tilde{\Theta}_l)$. Moreover, step $(I)$ essentially requires to simulate from $Q_{k,a}^{(i,j)}$ whose density is proportional to $ \int_{M_d^0} \prod_{j=1}^k f_\eta(\bmx_j)\id_{\{x_{i,j}<a\}} \overline{\nu}_n(\rmd\eta) +g^{(d\times k)}(\ubmx_k)\id_{\{x_{i,j}<a\}}$. Thus, steps $(I)$, $(II)$ and $(III)$ can be fully conducted via a (nested) MCMC scheme.
 The bottleneck of this simulation approach is that the MCMC scheme usually involves the evaluation of high dimensional integrals, since $h(\eta,\ubmX_n,\Tilde{\Theta}_l)$ is defined in terms of high-dimensional integrals. In practice, evaluation of these integrals is usually quite difficult and it has to be assessed case-by-case if such evaluations can be conducted with reasonable accuracy. For example, for our model for partially exchangeable data from Section \ref{secmultivmodel} a similar reasoning as for the proof of Lemma \ref{lemdenshierachicalmodel} yields that $h(\eta,\ubmX_n,\Tilde{\Theta}_l)$ is available in closed form and $\int_{M_d^0}\prod_{j=1}^k f_\eta(\bmx_j)h(\eta,\ubmX_n,\Tilde{\Theta}_l)\nu(\rmd\eta)$ can be represented in terms of a $2$-dimensional Lebesgue integral and an expectation w.r.t\ to the law of a CRM. Supplementary Material \ref{appsimpartiallyexdata} shows how to practically apply the suggested MCMC scheme for steps $(I)$, $(II)$ and $(III)$ to our model for partially exchangeable data from Section \ref{secmultivmodel}.

\subsection{Gibbs sampler for conditional hitting scenarios}
\label{appgibbssampler}

In this subsection we extend the Gibbs sampling algorithm of \cite[Section 3]{ribatet2013conditionalsimmaxstable}, which provides a simulation algorithm for the conditional hitting scenarios of max-stable processes whose exponent measure has a Lebesgue density. We mimic their arguments to show that the algorithm can be extended to yield an algorithm for the simulation of the conditional hitting scenario of min-id distributions whose exponent measure has a density w.r.t.\ $\Lambda^{(\infty)}_{nd}$. In essence, the goal is to construct a Gibbs sampler on the space of partitions of $d\times n$ for the conditional hitting scenario $\Tilde{\Theta}\sim \Theta\mid\ubmX_n$ based on
\begin{align}
    \p\lc \Tilde{\Theta}\in \cdot \mid \Tilde{\Theta}_{\setminus(i,j)}=\theta_{\setminus(i,j)} \rc, \label{eqngibbshitscen}
\end{align} 
where $\Tilde{\Theta}_{\setminus(i,j)}$ (resp.\ $\theta_{\setminus(i,j)}$) denotes the partition of $d\times n\setminus \{(i,j)\}$ resulting from deleting the index $(i,j)$ from the partition $\Tilde\Theta$ (resp.\ $\theta$).

The idea is to use (\ref{eqngibbshitscen}) to consecutively update the states, i.e.\ the subset of the partition that $(i,j)$ belongs to, in a classical Gibbs sampling scheme. The conditional distribution in (\ref{eqngibbshitscen}) is convenient in a Gibbs sampling scheme, since there are at most $L(\theta_{\setminus(i,j)})+1\leq nd$ partitions $\Tilde{\theta}$ such that $\Tilde{\theta}_{\setminus(i,j)}=\theta_{\setminus(i,j)}$.

To describe $(\ref{eqngibbshitscen})$ in closed-from let us assume that Condition \ref{condabsolutecont} is satisfied. Then, Lemma \ref{lemcondhitscenregcond} states that the distribution of the conditional hitting scenario $\tau(\ubmX_n,\cdot)$ is given by
    \begin{align*}
       \tau(\ubmX_n,\theta)= C(\ubmX_n)^{-1} \prod_{l=1}^{L(\theta)} v(\theta_l,\ubmX_n)
    \end{align*}
    where 
    $$ v(\theta_l,\ubmX_n)=   \int_{\{  \ubmy_{n\setminus\theta_l}>\ubmX_{n\setminus \theta_l}\}} \lc f^{(d\times n)}+g^{(d\times n)}\rc\lc (\ubmX_n,\underline{\bmy}_n)(\theta_l) \rc  \lc   \Lambda^{(\infty)}_{nd-\vert \theta_l\vert}  \rc  \rmd \lc \ubmy_{n\setminus\theta_l}\rc      $$
    and $C(\ubmX_n)$ is a constant that only depend on $\ubmX_n$,
    recalling the notation $(\ubmX_n,\underline{\bmy}_n)(\theta_l):=\lc \id_{\{ (i,j)\in \theta_{l} \}}X_{i,j}+ y_{i,j}\id_{\{(i,j)\not \in \theta_{l}\}}\rc_{(i,j)\in d\times n}$.
  Therefore, (\ref{eqngibbshitscen}) can be written as 
  \begin{align}
       &P\lc \Tilde{\Theta}=\theta^\star \mid \Tilde{\Theta}_{\setminus(i,j)}=\theta_{\setminus(i,j)} \rc= \frac{ \tau(\ubmX_n,\theta^\star) \id_{\{ \theta^\star_{\setminus(i,j)} =\theta_{\setminus(i,j)} \}} } { \sum_{\tilde{\theta}\in \mathcal{P}_n  }  \tau(\ubmX_n,\tilde{\theta})\id_{\{\tilde{\theta}_{\setminus(i,j)}=\theta_{\setminus(i,j)} \}}  } \nonumber \\
       &\propto \id_{\{ \theta^\star_{\setminus(i,j)} =\theta_{\setminus(i,j)} \}}\prod_{i=1}^{L(\theta^\star)} v(\theta^\star_l,\ubmX_n)  \propto \frac{ 
 \id_{\{ \theta^\star_{\setminus(i,j)} =\theta_{\setminus(i,j)} \}} \prod_{l=1}^{L(\theta^\star)} v(\theta^\star_l,\ubmX_n)  } { \prod_{l=1}^{L(\theta)} v(\theta_l,\ubmX_n) }  \label{eqnmcmcdens}
  \end{align}
  A simple combinatorial argument shows that on the right-hand side of (\ref{eqnmcmcdens}) at most $4$ factors remain, since there are at most two subsets in the partition $\theta^\star$ that are not in $\theta$ and vice versa. Thus, since we have at most $L(\theta_{\setminus(i,j)})+1$ possible partitions $\theta^\star$ with positive probability, we have at most $4dn$ factors to compute in each step of the Gibbs sampling scheme, which yields a practically feasible was to simulate the conditional hitting scenario. One should note that the estimate of $4dn$ states with positive probability is a worst-case estimate and that the number of partitions with positive probability is usually much smaller than $4dn$ since usually $L(\theta_{\setminus(i,j)})+1<< dn$.


\section{Simulation of a model for partially exchangeable data}
\label{appsimpartiallyexdata}
We want to simulate from the posterior of our model for partially exchangeable data with prior 
$$\lc \mu^{(\kappa_i)}_i\rc_{\leqd},\ \lc \mu_i\rc_{\leqd}\ \overset{indep}{\sim} \crm_{\rho_i(a)\mu_0^{(\kappa_0)}(\rmd b)},\ \mu_0\sim \crm_{l(a,b)\rmd (a,b)},$$
which was introduced in Section \ref{secmultivmodel}, assuming that $\mu_0$ has continuous base measure $\alpha_0$ while the $\lc\mu_i\rc_{\leqd}$ do not have a base measure. Further, we assume that we are given observations $\bmX_{IJ}=\lc X_{i,j_i}\rc_{1\leq i\leq j_i,1\leq i\leq d}$ from a partially exchangeable array. To avoid extra notation, we still use $\ubmZ^{(n)}$, $\lc \ubmZ^{(n,l)}\rc_{1\leq l\leq L(\Tilde{\Theta})}$ and $\ubmY^{(n)}$ to denote the sequences appearing in the stochastic representation of the posterior of $\ubmX\mid \bmX_{IJ}$ in (\ref{stochrepposterior}), even though we condition on $\bmX_{IJ}$ instead of $\ubmX_n$. Moreover, to define a proper survival model, we will assume throughout this section that the CRMs $(\eta_i)_{0\leq i\leq d}$ and kernels $(\kappa^{(i)})_{0\leq i\leq d}$ are concentrated on $(0,\infty)$, which ensures that $\ubmX\in(0,\infty)^{d\times\N}$.

We want to follow the simulation approach suggested in Section \ref{subsecmcmcmscheme}. For the simulation from the posterior we need to simulate three types of objects: the conditional hitting scenario $\Tilde{\Theta}=(\Tilde{\Theta}_1,\ldots,\Tilde{\Theta}_L)$ with the associated $\lc \ubmZ^{(n,l)}\rc_{1\leq l\leq L}$ as well as the min-id random vector $\ubmY^{(n)}$. Section \ref{seccclosedformreppost} provides closed-from representations of their distribution in terms of the posterior density. However, in practice, these formulas are difficult to evaluate numerically and the bottleneck for simulation from the posterior via MCMC methods is the speed of the evaluation of these densities. Our strategy to tackle this issue is to first identify that all the involved densities share the same ``structure'' and that their fast and accurate numerical approximation boils down to the fast evaluation of the same type of object. This object will be represented in terms of partial derivatives of the Laplace transform of the underlying IDRMs and we will heavily rely on the automatic differentiation libraries of JAX to enable fast numerical evaluation of the partial derivatives and associated integrals. \footnote{The author has also tried  to evaluate the densities with classical numerical integration methods combined with a Monte-Carlo scheme, but this approach turned out to be way too slow in practice.}

Let $\eta$ denote a CRM without base measure and intensity $\gamma(\rmd a,\rmd b)$, $J_{1},J_2\subset \N$ some disjoint index sets and let $\bmx_{J_1}\in [0,\infty)^{\vert J_1\vert}$ as well as $\bmx_{J_2}\in[0,\infty]^{\vert J_2\vert}$. The key observation for the following derivations is that the common ``structure'' appearing in all of the densities below will be terms of the form
\begin{align}
   \e\lk \prod_{j\in J_1}e_{\eta,\kappa}(x_{j})\prod_{j\in J_2}E_{\eta,\kappa}(x_{j})\rk, \label{eqncommomstrct} 
\end{align}
where we define
$$ e_{\eta,\kappa}(x):= \int_0^\infty\kappa(x,y)\eta(\rmd y)\exp\lc - \int_0^{x}\int_0^\infty \kappa(s,y)\eta(\rmd y) \rmd s\rc $$
and
$$ E_{\eta,\kappa}(x):= \exp\lc - \int_0^{x}\int_0^\infty \kappa(s,y)\eta(\rmd y) \rmd s\rc. $$
There is no simple way to compute these expectations for a given CRM $\eta$, but recognizing that $e_{\eta,\kappa}(x)=-\frac{\partial}{\partial x}E_{\eta,\kappa}(x)$ we can rewrite  
$$\e\lk \prod_{j\in J_1}e_{\eta,\kappa}(x_{j})\prod_{j\in J_2}E_{\eta,\kappa}(y_{j})\rk=\e\lk \prod_{j\in J_1} \frac{-\partial}{\partial x_{j}} E_{\eta,\kappa}(x_{j})\prod_{j\in J_2}E_{\eta,\kappa}(x_{j})\rk.$$
Now, if we were allowed to interchange expectation and the derivatives, we would obtain
$$\e\lk \prod_{j\in J_1} \frac{-\partial}{\partial x_{j}} E_{\eta,\kappa}(x_{j})\prod_{j\in J_2}E_{\eta}(x_{j})\rk=\lc\prod_{j\in J_1} \frac{-\partial}{\partial x_{j}}\rc \e\lk \prod_{j\in J_1\cup J_2}E_{\eta,\kappa}(x_{j})\rk,$$
with 
$$ \e\lk \prod_{j\in J_1\cup J_2}E_{\eta,\kappa}(x_{i,j})\rk=:L_{\kappa,\gamma(\rmd a,\rmd b)}(\bmx_{J_1},\bmx_{J_2}) $$
denoting the Laplace transform of the random vector
$$\lc  \lc \int_0^{x_{j}}\int_0^\infty \kappa(s,y)\eta(\rmd y)\rmd s\rc_{j\in J_1},  \lc \int_0^{x_{j}} \int_0^\infty \kappa(s,y)\eta(\rmd y)\rmd s\rc_{j\in J_2}\rc $$
evaluated at $\bm 1=(1)_{j\in J_1\cup J_2}$.
By Proposition \ref{proptrafos}, $L_{\kappa,\gamma(\rmd a,\rmd b)}$ is available in closed form and thus an evaluation of (\ref{eqncommomstrct}) boils down to calculating its derivatives.

In the argument above, we have assumed that we are allowed to interchange expectation and derivation. This is however non-trivial as $x\mapsto E_{\eta}(x)$ is only differentiable Lebesgue-almost everywhere since $x\mapsto\int_0^\infty \kappa(x,y)\eta(\rmd y)$ is, in general, not continuous. For example, when $\kappa(s,y)=\id_{\{s\geq y\}}$ is the Dykstra-Laudt kernel then $\int_0^\infty \kappa(x,y)\eta(\rmd y)=\eta([0,x])$, which is cádlág but not continuous. Therefore, an application of Leibniz' rule is precluded to interchange the expectation and differentiation. Nevertheless, the following lemma shows that we can interchange expectation and differentiation under general regularity conditions. 

\begin{lem}
\label{lemderivrepdens}
Let $\eta$ denote a CRM on $[0,\infty)$ with intensity $\gamma(\rmd a,\rmd b)$ and let $\kappa$ denote a kernel which satisfies Condition \ref{condkernel}. Let $J_1,J_2\subset \N$ denote two disjoint finite index sets and let $\bmx_{J_1}=\lc x_j\rc_{j\in J_1}\in[0,\infty)^{\vert J_1\vert}$ and $\bmx_{J_2}:=\lc x_j\rc_{j\in J_2}\in[0,\infty]^{\vert J_2\vert}$. We have 
    \begin{align*}
    & \e\lk   \prod_{j\in J_1} e_{\eta,\kappa}(x_j)  \prod_{j\in J_2} E_{\eta,\kappa}(x_j)  \rk =\lc  \prod_{j\in J_1} \frac{-\partial}{\partial x_j}\rc  L_{\kappa,\gamma(\rmd a,\rmd b)}(\bmx_{J_1},\bmx_{J_2})
    \end{align*}
and for $\vert J_1\vert>1$ we have
    \begin{align*}
        &(-1)^{\vert J_1\vert}\lc  \prod_{j\in J_1} \frac{\partial}{\partial x_j}\rc  \log\lc L_{\kappa, \gamma(\rmd a,\rmd b)} \lc \bmx_{J_1},\bmx_{J_2})\rc \rc \\
        &= \int_0^\infty \int_0^\infty  \prod_{j\in J_1} e_{a\delta_b,\kappa}(x_j) \prod_{j\in J_2} E_{a\delta_b,\kappa}(x_j)  \gamma (\rmd a,\rmd b).
    \end{align*}
    $\lc \otimes_{k=1}^{\vert J_1\vert} \Lambda \rc \otimes \lc\otimes_{k=1}^{\vert J_2\vert } \Lambda^{(\infty)}\rc$-almost everywhere.

\end{lem}

Lemma \ref{lemderivrepdens} will be the central ingredient in developing a fast simulation routine for the posterior as it will allow us to calculate complex expectations via automatic differentiation of the Laplace transforms of univariate IDEMs, which are usually available in closed form due to Proposition \ref{proptrafos}.

\subsection{A convenient representation of the posterior densities}
Let us now use the results of Lemma \ref{lemderivrepdens} to represent the posterior densities in a convenient form. To simplify the exposition we assume that for $1\leq i\leq d$ we have $\int_0^\infty \kappa_i(s,y)\eta_i(\rmd y) \rmd s=\infty$ almost surely and we let $(\eta_i)_{\leqd}\overset{\text{indep}}{\sim} CRM_{\rho_i(a)\rmd  (a_0\delta_{b_0})^{(\kappa_0)}\rmd b}$ be fixed. 
First, we obtain from (\ref{def_f_IJ}) that the density of $\lambda^{(IJ)}$ is given by
\begin{align*}
        &f^{(IJ)}\lc \bmx_{IJ}\rc=\sumd   \id_{\{ x_{k,j}=\infty \forall (k,j)\in IJ, k\not=i\}}  \int_0^\infty \int_0^\infty  \lk \prod_{j:(i,j)\in IJ} e_{a\delta_b,\kappa_i}(x_{i,j})  \rk  \rho_i(a)   \rmd a \alpha^{(\kappa_0)}(\rmd b)  \\
        &+  \id_{\{\bmx_{IJ}\in\R^d\}}\int_{0}^\infty \int_{0}^\infty  \prodd   \e_{\eta_i}\Bigg[   \prod_{j:(i,j)\in IJ}  e_{\eta_i,\kappa_i}(x_{i,j}) \Bigg]    l(a_0,b_0) \rmd a_0\rmd b_0 ,
\end{align*} 
since for every non-zero $\eta$ we have $\int_0^\infty \eta([0,s])\rmd s=\infty$ and $g(\infty)=\infty\delta_\infty$.

\textbf{Conditional density of the conditional hitting scenario $\Tilde{\Theta}$:}
By Lemma \ref{lemcondhitscenregcond}, the distribution of the conditional hitting scenario $\Tilde\Theta$ is proportional to
\begin{align*}
\prod_{l=1}^{L(\Tilde{\Theta})}  \int_{\{  \ubmy_{IJ\setminus\Tilde{\Theta}_l}>\ubmX_{IJ\setminus\Tilde{\Theta}_l}\}} f^{(IJ)}\lc (\bmXIJ,\bmy_{IJ})(\Tilde{\Theta}_l) \rc  \lc   \Lambda^{(\infty)}_{|IJ|-\vert \Tilde{\Theta}_l\vert}  \rc  \rmd \lc \ubmy_{IJ\setminus\Tilde{\Theta}_l}\rc ,
\end{align*}
recalling the notation $(\bmXIJ,\bmy_{IJ})(\Tilde{\Theta}_l):=\lc \id_{\{ (i,j)\in \Tilde{\Theta}_{l} \}}X_{i,j}+ y_{i,j}\id_{\{(i,j)\not \in \Tilde{\Theta}_{l}\}}\rc_{(i,j)\in IJ}$. Next, (\ref{eqnmcmcdens}) from Supplementary Material \ref{appgibbssampler} gives that the conditional distribution of the conditional hitting scenario without index $(i,j)$ $P(\Tilde{\Theta}=\theta^\star\mid \Tilde{\Theta}_{\setminus (i,j)}=\tilde{\theta}_{\setminus (i,j)})$ is proportional to
$$ \id_{\{ \theta^\star_{\setminus(i,j)} =\tilde\theta_{\setminus(i,j)} \}}\prod_{i=1}^{L(\theta^\star)} v(\theta^\star_l,\bmXIJ)  \propto \frac{ 
 \id_{\{ \theta^\star_{\setminus(i,j)} =\tilde\theta_{\setminus(i,j)} \}} \prod_{l=1}^{L(\tilde\theta^\star)} v(\theta^\star_l,\bmXIJ)  } { \prod_{l=1}^{L(\tilde\theta)} v(\tilde\theta_l,\bmXIJ) }  $$
 where the factors $v(\theta_l,\bmXIJ)$ are given by 
\begin{align}
    & v(\theta_l,\bmXIJ)= \sumd   \id_{\{ \theta_l\subset i\times n\}}  \int_0^\infty \int_0^\infty  \lk \prod_{j:(i,j)\in \theta_l} e_{a\delta_b}(X_{i,j}) \prod_{j:(i,j)\in IJ\setminus \theta_l} E_{a\delta_b}(X_{i,j})  \rk  \rho_i(a)   \rmd a \alpha_0^{(\kappa_0)}\rmd b   \nonumber\\
&+  \int_{0}^\infty \int_{0}^\infty  \prodd \e_{\eta_i } \Bigg[  \prod_{j:(i,j)\in \theta_{l}}    e_{\eta_i} (X_{i,j}) \prod_{j:(i,j)\in IJ\setminus\theta_l}   E_{\eta_i}(X_{i,j}) \Bigg] l(a_0,b_0)\rmd a_0\rmd b_0 . \ \label{eqndesncondhitscen}
\end{align}

\textbf{Density of $\ubmZ^{(n,l)}$:}
Next, we determine the density of $\ubmZ^{(n,l)}$ from Lemma \ref{lemconddistextrfctregcond}, up to a normalizing constant.
\begin{lem}
\label{lemconddensextrseqpartexch}
   In the settting of Lemma \ref{lemconddistextrfctregcond} and conditionally on $(\bmXIJ,\Tilde{\Theta})$ the density of $\lc\bm Z^{(n,l)}_j\rc_{1\leq j\leq k}=:\ubmZ_k^{(n,l)}$  is proportional to
\begin{align}
&p_{\ubmZ^{(n,l)}_k}(\ubmx_k,\Tilde{\Theta}_l,\bmXIJ):= \label{eqndenscondextrseq} \\
&\sumd  \id_{\{ \ubmx_{k\setminus i\times k}=\binfty , \Tilde{\Theta}_l \subset i\times n\}} \int_0^\infty \int_0^\infty \Bigg( \prod_{j=1}^k e_{a\delta_b,\kappa_i}(x_{i,j})    \prod_{j:(i,j)\in \Tilde{\Theta}_l} e_{a\delta_b,\kappa_i}(X_{i,j}) \nonumber \\
&\hspace{3cm}\prod_{j: (i,j)\in IJ\setminus\Tilde{\Theta}_l} E_{a\delta_b,\kappa_i}(X_{i,j}) \Bigg)  \rho_i(a)\rmd a \alpha_0^{(\kappa_0)} (\rmd b) \nonumber\\
&+  \int_{0}^\infty \int_{0}^\infty  \prodd \e_{\eta_i,\kappa_i } \Bigg[  \prod_{j=1}^k e_{\eta_i,\kappa_i} (x_{i,j}) \prod_{j:(i,j)\in \Tilde{\Theta}_{l}}    e_{\eta_i,\kappa_i} (X_{i,j}) \nonumber\\
&\hspace{3cm} \prod_{j:(i,j)\in IJ\setminus\Tilde{\Theta}_l}   E_{\eta_i,\kappa_i}(X_{i,j}) \Bigg] l(a_0,b_0)\rmd a_0\rmd b_0  \nonumber
\end{align}
for every $1\leq l\leq L$.
\end{lem}

\textbf{Density of the exponent measure of $\ubmY^{(n)}$:}
From Supplementary Material \ref{appintrominprocess} and (\ref{defconddistexpmeasure}) it follows that, by definition of the conditional distribution of the exponent measure, for an exchangeable min-id sequence $\ubmY^{(n)}$ with exponent measure $\bar{\lambda}$ we can represent
$$\bar\lambda(A)=\int_\R \bar{K}_{i,j}\lc y_{i,j},\{  \ubmy\in A\}\rc \bar\lambda^{(i,j)}(\rmd y_{i,j}) $$
for every $(i,j)\in d\times\N$, where $\bar{K}_{i,j}$ denotes the conditional distribution of the exponent measure $\bar\lambda$ at location $(i,j)$ and $\bar{\lambda}_{i,j}$ is the exponent measure of $\bmY^{(n)}_{i,j}$
Thus, given a PRM $N_{i,j}:=\sum_{k\in\N} \delta_{y^{(k)}_{i,j}}$ with intensity $\bar\lambda^{(i,j)}$ we can draw independent sequences $\ubmy^{(k)}\sim \bar{K}_{i,j}(y^{(k)}_{i,j},\cdot)$ and obtain that 
$$ \sum_{k\in\N} \delta_{\ubmy^{(k)}} $$
is a PRM with intensity $\bar\lambda$. Simulation of the univariate PRM $N_{i,j}$ is simple when $\bar\lambda^{(i,j)}((\cdot,\infty))$ can be evaluated, which in turn can be derived from (\ref{defpostexpmeasure}) and (\ref{defpostlevymeasure}). Further, due to partial exchangeability, $\bar{K}_{i,j}(y,\cdot)=\bar{K}_{i,1}(y,\cdot)$. 
Thus, we can simulate $\bar\lambda\lc\{ \ubmy\mid  y_{i,j}>a\}\rc$ exactly by first simulating the finitely many atoms $y_{i,j}^{(k)}>a$ and then, for each $y_{i,j}^{(k)}$ drawing independent realizations $\ubmy^{(k)}\sim K_{i,1}(y_{i,j}^{(k)},\cdot)$, see Appendices \ref{appsimfromposterior} and \ref{appsiminprocess} for more details. 

In our case, the exponent measure of $\ubmY^{(n)}_k$ has conditional distributions $\bar{K}_{i,1}(y,\ubmy_{k\setminus (i,1)}\in\cdot)$ whose density is proportional to
\begin{align}
    &p_{\ubmY^{(n)}_k,i}(y,\ubmy_{k\setminus (i,1)},\bmXIJ):= \nonumber\\
    & \id_{\{ \ubmx_{y\setminus i\times k}=\binfty \}} \int_0^\infty \int_0^\infty \prod_{j=1}^k e_{a\delta_b,\kappa_i}(y_{i,j})  \Bigg( \prod_{j:(i,j)\in IJ} E_{a\delta_b,\kappa_i}(X_{i,j}) \Bigg)  \rho_i(a)\rmd a \alpha_0^{(\kappa_0)}( \rmd b) \nonumber\\
    &+  \int_{0}^\infty \int_{0}^\infty  \prodd \e_{\eta_i,\kappa_i } \Bigg[ \prod_{j=1}^k e_{\eta_i,\kappa_i} (y_{i,j}) \prod_{j:(i,j)\in IJ}   E_{\eta_i,\kappa_i}(X_{i,j}) \Bigg] l(a_0,b_0)\rmd a_0\rmd b_0  ,\label{eqndenspostminidseq}
\end{align}
where we set $y_{i,1}:=y$ to keep the notation concise.

\begin{rem}
\label{rempracticaltrick}
From a practical perspective, it is convenient to observe that the density $p_{\ubmY^{(n)}_k,i}(y,\ubmy_{k\setminus (i,1)},\bmXIJ)$ from (\ref{eqndenspostminidseq}) is of the same type as the density of $\ubmZ^{(n,l)}_k$. In fact, we can construct a ``fictive'' min-id sequence with corresponding ``fictive'' hitting scenario and show that we can simulate from $\bar{K}_{i,1}(y,\ubmy_{k\setminus (i,1)}\in\cdot)$ by simulating the corresponding random vector of the type $\ubmZ_k^{(n,l)}$. This allows us to only implement the simulation of random vectors of the type $\ubmZ_k^{(n,l)}$ to also simulate from $\bar{K}_{i,1}(y,\ubmy_{k\setminus (i,1)}\in\cdot)$ (resp.\ $\ubmY^{(n)}_k$).

To understand how to construct the ``fictive'' min-id sequence, hitting scenario and random vector of type $\ubmZ^{(n,l)}_k$ let $n$ such that $IJ\subset d\times (n-1)$ and define $n_+:=\{n,n+1,\ldots \}$. Further, define the ``reduced'' min-id sequence $\underline{\bm A}$ with index set $IJ \cup d\times n_+$ via
$$\underline{\bm A}:=\lc X_{i,j}\rc_{(i,j)\in IJ\cup d\times n_+}.$$
Then, for $i\in \{1,\ldots,d\}$, define a ``fictive'' hitting scenario of $IJ\cup (i,n)$ by 
$$\Tilde\Theta^{(i)}:=\lc\Tilde\Theta^{(i)}_1,\Tilde\Theta^{(i)}_2\rc:=\big((i,n),IJ \big).$$  
Denote $\mathbf{K}_{i,n}(y,\cdot)$ as the conditional distribution of the exponent measure of $\underline{\bm A}_{IJ\cup  d \times n_+}$ given $A_{i,n}=y$. Let $\underline{\bm{\mathsf{Z}}}^{(\Tilde\Theta^{(i)}_1)}\sim \mathbf{K}_{i,n}(y,\cdot) $, which has index set $IJ\cup d\times n_+$ and coincides with $A_{i,n}=y$ on $ (i,n)$. 
Similarly to (\ref{defcondextrseq}), define 
$$\underline{\bm{\mathsf{Z}}}^{(n,1)}\sim \lc \bm{\mathsf{Z}}^{(\Theta^{(i)}_1)}_{n+j}\rc _{j\geq 1}   \mid \underline{\bm{\mathsf{Z}}}^{(\Tilde\Theta^{(i)}_1)}_{IJ}>\bmXIJ.$$ 
Then, the density of $\underline{\bm{\mathsf{Z}}}^{(n,1)}_{ k\setminus (i,1)}$ is given by $\int_0^\infty p_{\underline{\bm{\mathsf{Z}}}^{(n,1)}_{k}}(\ubmy_k,(i,n),\bm A_{IJ\cup(i,n)}) \rmd y_{i,1} $ due to Lemma \ref{lemconddensextrseqpartexch} and it is easily seen that this coincides with $p_{\ubmY^{(n)}_k,i}(y,\ubmy_{k\setminus (i,1)},\bmXIJ)$.

Thus, to simulate from $\bar{K}_{i,1}(y,\ubmy_{k\setminus (i,1)}\in\cdot)$ we can simulate $\underline{\bm{\mathsf{Z}}}^{(n,1)}_k$ for the fictive min-id sequence $\underline{\bm A}$ with fictive conditional hitting scenario $\Tilde{\Theta}^{(i)}$ and set component $(i,1)$ of $\underline{\bm{\mathsf{Z}}}^{(n,1)}_{k}$ equal to $y$, exploiting partial exchangeability.
\end{rem}

\textbf{Summary:} To conclude this subsection, let us summarize the main message. It is possible to simulate $\Tilde{\Theta}$, $\lc \ubmZ^{(n,l)}_{1\leq l\leq L(\Tilde{\Theta})}\rc$ and $\ubmY^{(n)}$ via standard MCMC algorithms, but it requires fast evaluation of the densities (\ref{eqndesncondhitscen}), (\ref{eqndenscondextrseq})  and (\ref{eqndenspostminidseq}). Their efficient evaluation largely depends on the efficient evaluation of terms of the form 
\begin{align}
    \e_{\eta } \Bigg[ \prod_{j\in J_1} e_{\eta,\kappa} (y_{j}) \prod_{j\in J_2}   E_{\eta,\kappa}(y_{j}) \Bigg] .\label{eqexpectationtodiff}
\end{align}
Due to Lemma \ref{lemderivrepdens}, these terms may be efficiently evaluated via automatic differentiation of terms of the form $L_{\kappa,\gamma(\rmd a\rmd b)}(\bmx_{J_1},\bmx_{J_2})$, which is the strategy that we will employ in the following.

\subsection{A concrete example}
In this section we want to illustrate that the above-mentioned computational tricks are indeed applicable in practice. To this purpose we focus on the concrete example of a generalized gamma CRM $\mu_0$ on $(0,\infty)$ with intensity 
$$l(a,b)\rmd a\rmd b=a^{-1-\sigma} \exp\lc-a\rc \Gamma(1-\sigma)^{-1}\rmd a q(b) \rmd b,$$
where $q(b)$ denotes a non-negative function such that $\int_0^t q(b)\rmd b<\infty$ and $\sigma\in (0,1)$ is a parameter which determines the deviation from the Gamma CRM which is obtained by choosing $\sigma=0$. For $\kappa_0$ we choose the rectangular kernel $\kappa_0(s,y)=(2\tau_0)^{-1}\id_{\{ \vert s-y\vert\leq \tau_0\}}$ for some $\tau_0>0$. Further, the marginal CRMs $\mu_i$ are also based on generalized Gamma CRMs with identical $\sigma$ and Dykstra-Laudt kernel, i.e.\ we choose $\rho_i(a):=\rho(a):=a^{-1-\sigma} \exp\lc-a\rc \Gamma(1-\sigma)^{-1}$ and $\kappa_i(s,y):=\kappa(s,y)=\tau_1\id_{\{s\geq y\}}$. 

\begin{lem} 
\label{lemclosedformexamle}
Let $\eta$ denote a CRM with intensity $\rho(a)\rmd a q(b)\rmd b$ and let $x_0:=0<x_1<\ldots<x_m<\infty$. Then
    \begin{align*}
     L_{\kappa,\rho(a)\rmd a q(b)\rmd b}(\ubmx_m)&=\exp\lc - \sum_{j=1}^m\int_{x_{j-1}}^{x_j}  \frac{\lc \tau_1\sum_{k=j}^m (x_k-b)_+ +1\rc^\sigma -1}{\sigma}  q(b)\rmd b \rc.
\end{align*}
\end{lem}

For simplicity, let us set $q(b)=1$ and $\alpha_0=\Lambda(\cdot \cap (0,\infty))$ in the following. Then, $\alpha_0^{(\kappa_0)}\rmd b=\int_0^\infty \kappa_0(b,y)\rmd y\rmd b$ and  $(a_0\delta_{b_0})^{(\kappa_0)}(\rmd b)=a_0\kappa_0(b,b_0)\rmd b$ and we obtain that for every $a_0,b_0>0$
\begin{align}
     &L_{\kappa,\rho(a)\rmd a (a_0\delta_{b_0})^{(\kappa_0)}(\rmd b)}(\ubmx_m) \label{eqnlaplacetrafoexample}\\
     &=\exp\lc - a_0(2\tau_0)^{-1} \sum_{j=1}^m\int_{x_{j-1}}^{x_j} \frac{\lc \tau_1\sum_{k=j}^m (x_k-b) +1\rc^\sigma -1}{\sigma} \id_{\{\vert b-b_0\vert \leq\tau_0\}}\rmd b \rc \nonumber\\
     &=\exp\Bigg( -a_0(2\tau_0)^{-1}  \sum_{j=1}^m\lc  \frac{\lc \tau_1\sum_{k=j}^m (x_k-b) +1\rc^{\sigma+1} }{-(m-j+1)\tau_1(\sigma+1)\sigma} \Bigg\vert_{b=\min\{\max\{x_{j-1},(b_0-\tau_0)_+\};b_0+\tau_0\} }^{b=\max\{\min\{x_{j},b_0+\tau_0\};(b_0-\tau_0)_+\}} \rc \nonumber\\
     &+a_0(2\tau_0)^{-1}\sigma^{-1}\lc \min\{ x_m;b_0+\tau_0\} -(b_0-\tau_0)_+\rc \Bigg) \nonumber
\end{align}

and
\begin{align}
    &-\log\lc L_{\kappa,\rho(a)\rmd a \alpha_0^{(\kappa_0)}\rmd b}(\ubmx_m) \rc \nonumber\\
    &=  \int_0^\infty  \frac{\lc \tau_1\sum_{j=1}^m (x_j-b)_+ +1\rc^\sigma -1}{\sigma} (2\tau_0)^{-1}\min\{b+\tau_0,2\tau_0\}\big)\rmd b, \label{eqnloglaplaceexample}
\end{align}
where $(b_0-\tau_0)_+:=\max\{b_0-\tau_0;0\}$.

Therefore, applying the results from the previous subsection, the densities $p_{\ubmY^{(n)}_k,i}(y,\ubmy_{k\setminus (i,1)},\bmXIJ)$, $p_{\bm Z^{(n,l)}}(\ubmx_k,\Tilde{\Theta}_l,\bmXIJ)$ and the factors $v(\theta_l,\bmXIJ)$ can be evaluated via automatic differentiation of (\ref{eqnlaplacetrafoexample}) and (\ref{eqnloglaplaceexample}) in conjunction with standard numerical integration methods as the expectations of the form (\ref{eqexpectationtodiff}) appearing in (\ref{eqndesncondhitscen}), (\ref{eqndenscondextrseq})  and (\ref{eqndenspostminidseq}) can be evaluated. 

It remains to describe the marginal distributions for $\ubmY^{(n)}$ in an implementable form. The following Lemma provides this characterizes via the marginal exponent measure $\bar\lambda^{(i,j)}$  of $Y^{(n)}_{i,j}$ in terms of expressions that have already been calculated in closed-form above.
\begin{lem}
\label{lempostminiduniv}
Conditionally on $\bmX_{IJ}$, the exponent measure of $Y^{(n)}_{i,j}$ is defined via 
\begin{align*}
 &\bar{\lambda}^{(i,j)}\lc  (x,\infty]^\complement \rc\\
 &=  \int_0^\infty  \frac{\lc \tau_1 (x-b)_+ +\tau_1\sum_{j:(i,j)\in IJ} (X_{i,j}-b)_+ +1 \rc^\sigma -1  }{\sigma} \alpha_0^{(\kappa_0)}\rmd b  \\
 & -\int_0^\infty  \frac{  \lc\tau_1\sum_{j:(i,j)\in IJ} (X_{i,j}-b)_+ +1\rc^\sigma-1}{\sigma} \alpha_0^{(\kappa_0)}\rmd b  \\
 &+ \int_0^\infty \int_0^\infty \prod_{i=1}^k \zeta_k\lc x ,(X_{i,j})_{j:(i,j)\in IJ} \rc   l(a_0,b_0) \rmd a_0\rmd b_0 ,
\end{align*}
where 
\begin{align*}
 &\zeta_k\lc x,(X_{i,j})_{j:(i,j)\in IJ} \rc\\
 &= \id_{\{i=k\}} \bigg(  L_{\rho(a)\rmd a (a_0\delta_{b_0})^{(\kappa_0)}(\rmd b)}\lc (X_{i,j})_{j:(i,j)\in IJ} \rc  -L_{\rho(a)\rmd a (a_0\delta_{b_0})^{(\kappa_0)}(\rmd b)}\lc x ,(X_{i,j})_{j:(i,j)\in IJ} \rc \bigg)\\
 &\ \ +\id_{ \{i\neq k\}}   L_{\rho(a)\rmd a (a_0\delta_{b_0})^{(\kappa_0)}(\rmd b)}\lc (X_{i,j})_{j:(i,j)\in IJ} \rc 
\end{align*}
and we define $L_{\rho(a)\rmd a (a_0\delta_{b_0})^{(\kappa_0)}(\rmd b)}\lc \emptyset \rc:=1$.
\end{lem}

Thus, we may simulate a PRM $N$ with intensity $\bar{\lambda}^{(i,j)}$ by evaluating the integrals appearing in Lemma \ref{lempostminiduniv}, which require essentially the evaluation of the same objects appearing in the densities $p_{\ubmY^{(n)}_k,i}(y,\ubmy_{k\setminus (i,1)},\bmXIJ)$ and $p_{\bm Z^{(n,l)}}(\ubmx_k,\Tilde{\Theta}_l,\bmXIJ)$. 
To summarize, we have now obtained representations of the densities $p_{\ubmY^{(n)}_k,i}(y,\ubmy_{k\setminus (i,1)},\bmXIJ)$, $p_{\bm Z^{(n,l)}}(\ubmx_k,\Tilde{\Theta}_l,\bmXIJ)$, $v(\theta_l,\bmXIJ)$ and the exponent measure $\bar\lambda^{(i,j)}$ which are implementable in closed form, since all remaining expectations can be calculated via automatic differentiation.

\begin{rem}[Proposal distributions and transition densities in Metropolis Hastings]

As $\ubmY^{(n)}$ and $\bmZ^{(n,l)}$ may take infinite values in their components when $\Tilde{\Theta}_l\subset i\times \N$ for some $i$ it is not obvious how the respective proposal distribution in an MCMC simulation scheme should look like. Thus, we here report our specific choices for the MCMC scheme that has been implemented by us. 
For the Gibbs sampler of $\Tilde{\Theta}$ we follow the elaborations in Supplementary Material \ref{appgibbssampler}. For the simulation of $\ubmZ^{(n,l)}$ and $\ubmY^{(n)}$ we opted for the classical Metropolis-Hastings algorithm. We denote the current state by $\bmx$ (resp.\ $\theta$) and the state that we transition to by $\bmx\prime$ (resp.\ $\theta^\prime$).
Our transition distribution is based on lognormal random variables, which have density
 $$g_s(x)=\frac{1}{x\sqrt{2\pi}}\exp\lc -\lc\log(s)-\log(x)\rc^2 /2\rc,$$
 where $s>0$ is a scale parameter.
If $\Tilde\Theta_l$ contains indices from more than one row $i$ $\ubmZ^{(n,l)}_k$ concentrates on $\R^{d\times k}$ and we choose i.i.d.\ lognormal random variables with marginal scale parameter equal to the current value as proposals, which translates to the transition density
$$ g_1(\ubmx_k^\prime,\ubmx_k)=\prod_{i=1}^d\prod_{j=1}^k g_{x_{i,j}}(x^\prime_{i,j})\id_{\{x^\prime_{i,j}<\infty\}} \text{ w.r.t. } \Lambda_{dk}^{(\infty)}.$$
In the case $\Tilde\Theta_l$ contains indices from only one row $\underline{i}$ the distribution of $\bmZ^{(n,l)}_k$ either takes values in $\R^{d\times k}$ or $A_{\underline{i}}=\{ \ubmx_k\in(0,\infty]^{d\times k}\mid x_{\underline{i},j}\in \R\ \forall 1\leq j\leq d\text{ and } x_{i,j}=\infty\ \forall i\not=\underline{i}\}$ and we choose i.i.d.\ lognormal random variables with scale parameter equal to the current value as proposals, which translates to the transition density
\begin{align*}
    &g_{2,\underline{i}}(\ubmx_k^\prime,\ubmx_k)\\
    &=\frac{ \id_{\{\ubmx_k\in\R^{d\times k}\}} }{2}\Bigg(  \prod_{i=1}^d\prod_{j=1}^k g_{x_{i,j}}(x^\prime_{i,j})\id_{\{x^\prime_{i,j}<\infty\}} + \prod_{j=1}^kg_{x_{i,j}}(x^\prime_{i,j})\id_{\{x_{\underline{i},j}<\infty\}} \prod_{i=1,i\not=\underline{i}}^d\prod_{j=1}^k \id_{\{ x^\prime_{i,j}=\infty\}} \bigg)  \\
    &+\frac{ \id_{\{\ubmx_k\in A_{\underline{i}}\}} }{2}\bigg(  \prod_{j=1}^k g_{x_{\underline{i},j}}(x^\prime_{\underline{i},j})\id_{\{x^\prime_{\underline{i},j}<\infty\}}  \prod_{i=1,i\not=\underline{i}}^d\prod_{j=1}^k g_{1}(x^\prime_{i,j}) \id_{\{x^\prime_{i,j}<\infty\}} \\
    &\hspace{3cm}  +  \prod_{j=1}^kg_{x_{\underline{i},j}}(x^\prime_{\underline{i},j})\id_{\{x^\prime_{\underline{i},j}<\infty\}} \prod_{i=1,i\not=\underline{i}}^d\prod_{j=1}^k \id_{\{ x^\prime_{i,j}=\infty\}} \bigg),
\end{align*} 
 w.r.t. $\Lambda_{dk}^{(\infty)}$, where we default to a scale parameter $1$ when the current value is infinite.

A valid proposal distribution for $\ubmZ^{(n,l)}_k$ is then given by $g_{2,i}(\ubmy_k^\prime,\ubmy_k)$, where for concise notation we have set $y_{i,1}=y$. Due to Remark \ref{rempracticaltrick} above, the same type of proposal can be used to simulate $\ubmY_k^{(n)}$. 
\end{rem}

\subsection{Additional simulation results}
\label{appsimpartiallyexdataaddplots}
We present the detailed parametrization and additional simulation results for our model for partially exchangeable data by continuing Section \ref{secsimpartiallyexdatamaintext} from the main text, which already contains the plot of the posterior survival function in Figure \ref{figsurvfct}. The simulation algorithm is implemented by following the roadmap laid out in the previous subsections. Recall that the observations are given by $\{0.47,0.16,3.01,1.32,0.91,0.17\}$ and $\{0.24,1.42,0.96,1.11,0.14,1.87\}$. The underlying CRMs $(\mu_i)_{0\leq i\leq 2}$ were chosen as generalized gamma CRMs with identical jump intensities. More specifically, we chose $\mu_0$ as a CRM on $(0,\infty)$ with base measure $\alpha_0([0,t])=t$ and intensity 
$\rho(a)\rmd a \rmd b$,
where $\rho(a):=a^{-1-\sigma} \exp\lc-a\rc \Gamma(1-\sigma)^{-1}$ is the jump intensity of a generalized Gamma CRM and $\sigma\in (0,1)$ is a parameter which determines the deviation from the Gamma CRM, which is obtained by choosing $\sigma=0$. As the kernel $\kappa_0$ at the root we chose the rectangular kernel $\kappa_0(s,y)=(2\tau_0)^{-1}\id_{\{ \vert s-y\vert\leq \tau_0\}}$ where for the margins we chose the Dykstra-Laudt kernel $\kappa(s,y)=\tau_1\id_{\{s\geq y\}}$. 
To create the plots we fixed $\tau_0=1$, $\tau_1=2/3$ and $\sigma=1/2$ to illustrate the implied prior and posterior distributions.

Since the posterior simulation procedure is computationally demanding, the purpose of the experiment is not to explore many different simulation scenarios, but rather to provide a transparent illustration of the prior-to-posterior update in a representative small-sample setting. Let us first discuss the posterior predictive survival function in Figure \ref{figsurvfct}. The posterior predictive survival function is estimated by the empirical survival function of the posterior predictive samples, immediately providing a corresponding estimate of the posterior predictive cumulative hazard, which is provided in Figure \ref{fighazrate} below. The Dvoretzky-Kiefer-Wolfowitz confidence bands are computed as function-valued confidence bands for the empirical survival function and then transformed to confidence bands for the cumulative hazard via a simple $\log$ transformation. The survival function and the hazard rate plots both show that the posterior predictive distribution is able to learn from the data, since it shrinks the posterior towards smaller event times, reflecting the fact that our data mostly contains small event times. The effect is quite strongly visible, even though we only have $6$ prior observations in each group, which suggests that the posterior quickly adapts to the observations.

\begin{figure}[!htbp]
    \centering
    \includegraphics[width=0.8\textwidth]{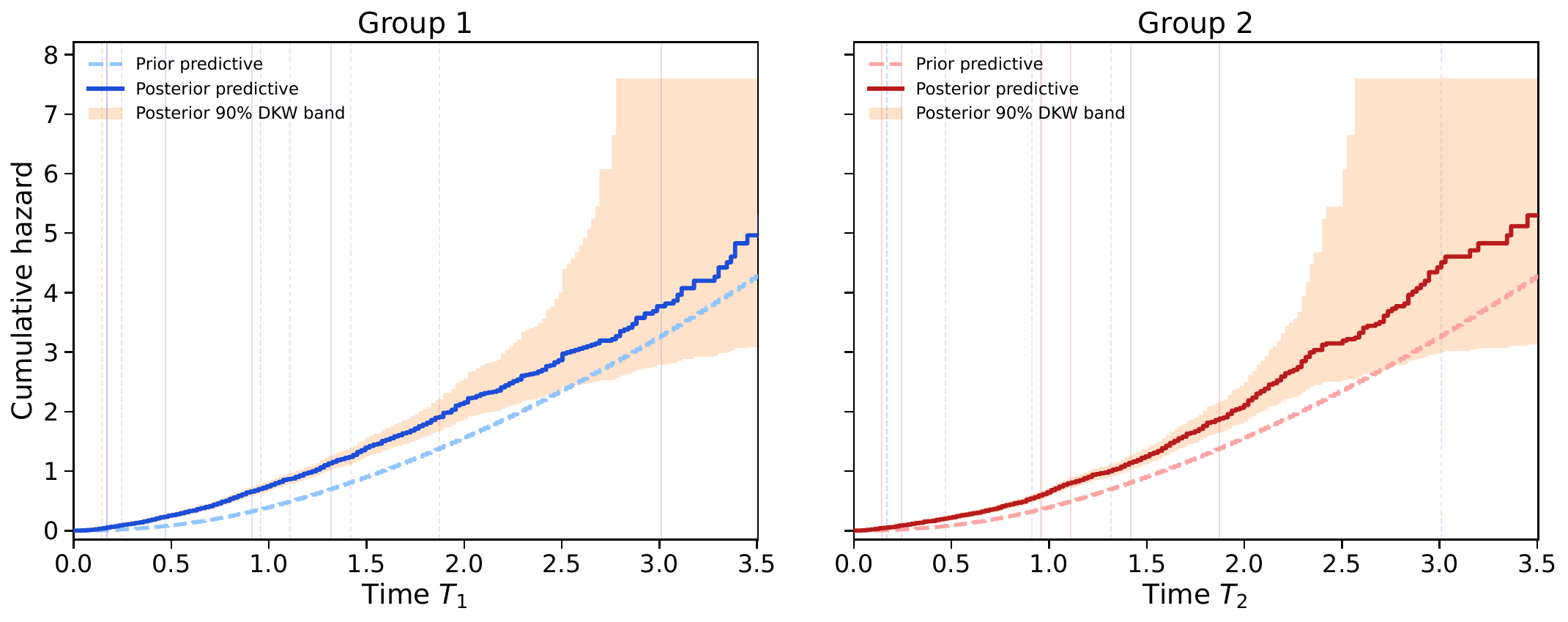}
    \caption{Prior and posterior predictive hazard rate. Colors (blue,red) distinguish the two groups and the opaque vertical lines mark the observed event times within the groups.}
    \label{fighazrate}
\end{figure}

Another natural theoretical object of interest associated with the model is the posterior behavior of hazard-rate related functionals. Recall that one of the motivations for our model for partially exchangeable data was that an observations should trigger an increase in the hazard rate around the observations, which was theoretically verified in the discussion after Lemma \ref{lemdenshierachicalmodel}. In practice, direct estimation of hazard rates from posterior predictive samples is numerically unstable, since it requires density estimation and division by an estimated survival function. We therefore avoid using estimated hazard rates as the primary numerical diagnostic. Instead, we focus on more stable predictive summaries that can be estimated directly from the samples and visualize the outcome of a local hazard rate increase around observations.
To visualize the local effect of the observations without estimating a hazard rate, we introduce two sample-based diagnostics. The first is the average local predictive mass around the observed event times,
\[
L_i(r)
=
\frac{1}{6}
\sum_{j=1}^{6}
P\left(
|X_{i,n+1}-t_{i,j}^{\mathrm{obs}}|\leq r
\mid \ubmX_6
\right),
\]
where \(t_{i,1}^{\mathrm{obs}},\ldots,t_{i,6}^{\mathrm{obs}}\) denote the observed event times in group \(i\). The second is the probability that a new predictive draw falls close to at least one observed event time,
\[
M_i(r)
=
P\left(
\min_{1\leq j\leq 6}
|X_{i,1}-t_{i,j}^{\mathrm{obs}}|
\leq r
\mid \ubmX_6
\right).
\]
In contrast to \(L_i(r)\), which averages local probabilities over the observed event times, \(M_i(r)\) treats the union of all neighborhoods around the observed times as a single event. Both diagnostics are computed under the prior and posterior predictive distributions. If the posterior predictive curve lies above the prior predictive curve for small or moderate values of \(r\), this indicates that conditioning on the data increases the predictive mass near the observed event times.
The resulting plots should therefore be interpreted as posterior predictive diagnostics rather than as direct estimates of the posterior hazard rates. They provide a stable visualization of the effect of conditioning on the data in a small-sample partially exchangeable survival model. In particular, they show whether the posterior predictive distribution moves mass toward the observed event times and how the marginal survival behavior differs from the prior distribution. 

\begin{figure}[!htbp]
    \centering
    \includegraphics[width=0.9\textwidth]{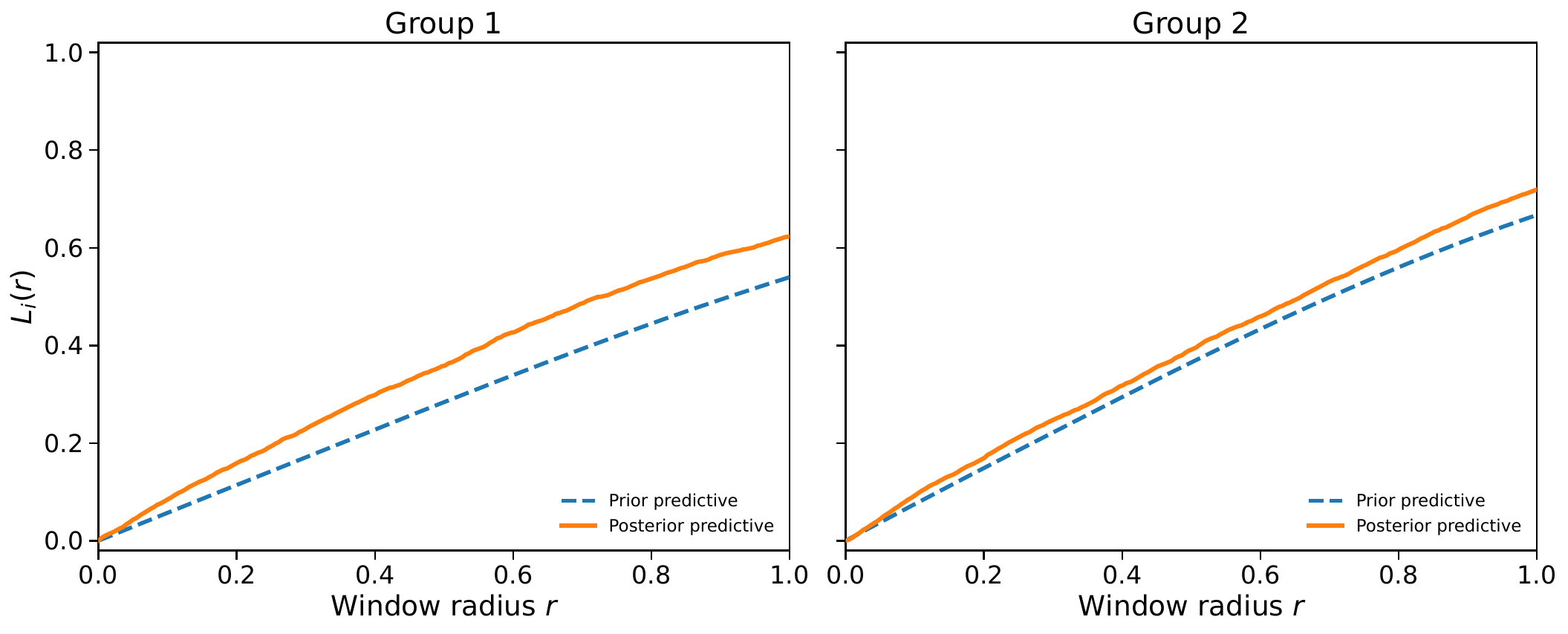}

    \caption{The local predictive mass diagnostics $L_i(r)$ for the prior (dotted,blue) and posterior predictive (solid, orange) distribution.}
    \label{figlocaldiagnosticsmassconc}
\end{figure}

\begin{figure}[!htbp]
    \centering
    \includegraphics[width=0.9\textwidth]{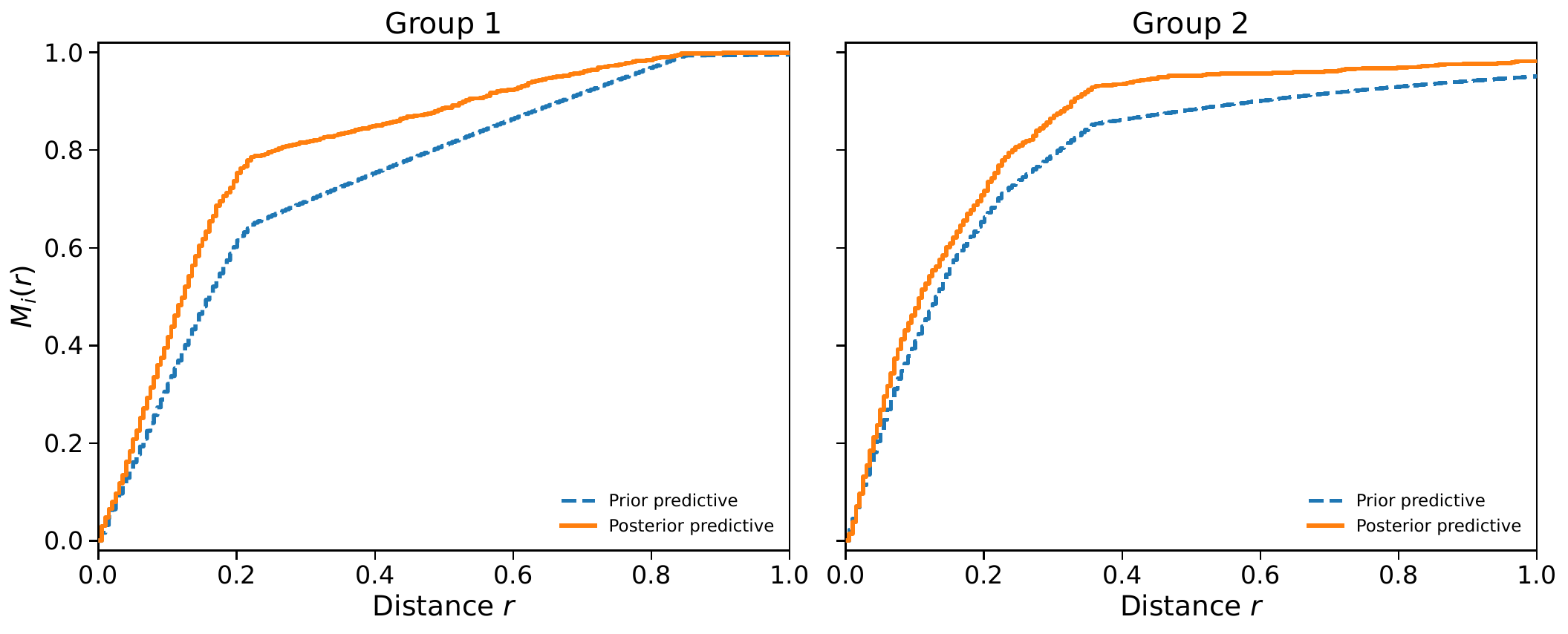}
    \caption{The local predictive nearest-observation probability diagnostics $M_i$ for the prior (dotted,blue) and posterior predictive (solid, orange) distribution.}
    \label{figlocaldiagnosticsclosestobs}
\end{figure}

Figure \ref{figlocaldiagnosticsmassconc} and Figure \ref{figlocaldiagnosticsclosestobs} illustrate that the posterior places significantly more mass close to the observations than the prior, where the majority of the increase is already reached for a small window radius of $0.4$, even though our choice of $\tau_0=1$ theoretically forces the hazard rate increase to be distributed in a window of radius $1$ around each observation. Even though the local predictive mass increase is less pronounced in group $2$, the local predictive nearest-observation probability is only slightly less pronounced. This could be due to the fact that the observations in group $2$ are less dispersed in group $2$ compared to group $1$, requiring a smaller adaption in the posterior to match the observations. However, a theoretical verification of this claim is lacking. To sum up, similarly to the comments above, the plots again suggest that the posterior adapts rather quickly to the observations.

Furthermore, let us illustrate the dispersion of the posterior predictive draws via a scatter plot of the samples. A concentration of the samples around the $45$-degree reference line would indicate a strong dependence of the posterior samples, indicating that the learned common structure dominates posterior and vice versa if the observations are scattered. Figure \ref{figposteriorscatter} below shows a strong dispersion of the posterior samples and thus indicates that the marginal effects dominate the posterior. This is in line with results in the literature, see e.g.\ the related model from \cite{camerlengilijoipruenster2021}, where the authors essentially show that posterior consistency is driven be the marginal specifications.
\begin{figure}[!htbp]
    \centering
    \includegraphics[width=0.4\textwidth]{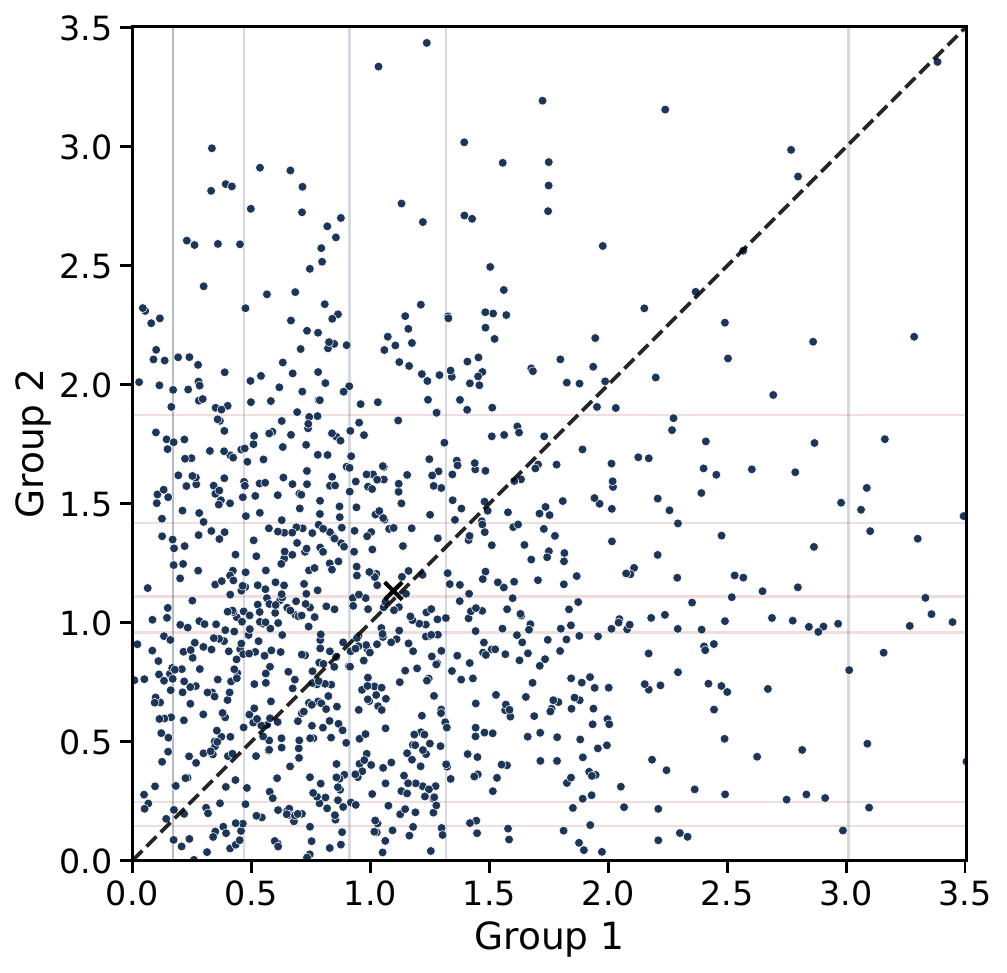}
    \caption{Joint posterior predictive draws for $(X_1, X_2)$ together with the $45$-degree reference line.}
    \label{figposteriorscatter}
\end{figure}

Finally, let us comment on the runtime of the simulation algorithm. In the present implementation, the main bottleneck is not the simulation logic itself but the repeated evaluation of expectations that enter the conditional densities and the probabilities of the conditional hitting scenarios. As explained above, these expectations are computed by automatic differentiation of closed-form Laplace transforms, since a standard Monte-Carlo approach was unpractically slow as it required too many draws from the corresponding additive process. However, it is still numerically expensive because the MCMC scheme has to recompute many such quantities repeatedly over the course of the Gibbs updates for the hitting scenario and in the simulation of the min-id distribution.
Moreover, for each expectation which requires differentiation to be evaluated the computational cost increases rather quickly with the number of derivatives. As a consequence, adding an observation enlarges both the dimension of the mixed-derivative calculations and the combinatorial complexity of the conditional hitting scenarios, so that more candidate states and more expensive expectation evaluations have to be handled in each sweep of the algorithm. On a standard desktop computer, this growth was already severe enough that we could not push the experiment far beyond six observations in each margin within a reasonable runtime. To give a rough intuition, a derivative calculation of order $>10$ took about $500$ms on average. Considering that in each step of the MCMC schemes several such derivatives might need to be calculated, depending on the current draw of the conditional hitting scenario, it becomes apparent that a group size of $10$ would lead to a long runtime for the MCMC algorithm. We  want to stress that this is solely a computational issue and not a theoretical one. A natural direction for future work would therefore be to parallelize these evaluations, for example by using distributed computing on a cluster or GPU-based acceleration for the automatic-differentiation workload. Developing such a optimizations is beyond the scope of the present paper and beyond the computational resources and implementation capabilities available to the author.


\section{Additional properties of min-id priors: prior moments, dependence and moments of mean functionals}
\label{apppriormom}
Let us shortly discuss some of the properties which are implied by an IDEM prior. We express these properties in terms of the Lévy characteristics, since they correspond to the natural description of IDRMs. First, we start by characterizing the prior moments of the random survival function and the cumulative hazard rate. To keep the notation concise, for every $\ubmx_m\in [-\infty,\infty)^{d\times m}, \bm{z}\in \R^m$, denote the $m$-dimensional Laplace transform of $\mu\lc (\cdot,\binfty]^\complement\rc$ as
\begin{align*}
    &L_{(\alpha,\nu)}(\bm{z},\ubmx_m):=\e\lk \exp\lc -\sum_{i=1}^m z_i\mu\lc (\bmx_i,\binfty]^\complement \rc  \rc  \rk \\
    &=\exp\bigg( -\sum_{i=1}^m z_i\alpha\lc (\bmx_i,\binfty]^\complement\rc - \int_{M_d^0}1-\exp\lc -\sum_{i=1}^m z_i\eta \lc (\bmx_i,\binfty]^\complement \rc\rc \nu(\rmd\eta)  \bigg), 
\end{align*}
implicitly assuming that the expectations are well-defined whenever we allow for negative $z_i$. This allows to express the prior moments of the random survival function as well as the moments of the cumulative hazard.
\begin{prop}
\label{proppriormoments}
    Let $\mu$ denote an IDEM with Lévy characteristics $(\alpha,\nu)$, where we denote the survival function associated with $\minid(\mu)$ as $S_\mu(\cdot)$. Then, upon well-definedness of the stated expectations, for every $\ubmx_m\in[-\infty,\infty)^{d\times m} ,\bm{z}\in \R^m$ and $(j_k)_{1\leq k\leq m}\in \N^m$, we have
    \begin{align*}
    &\e\lk \prod_{i=1}^{m} S_\mu(\bmx_i)^{z_i}\rk=L_{(\alpha,\nu)}(\bm{z},\ubmx_m) \text{ and } \\
    &\e\lk  \prod_{i=1}^m \mu\lc (\bmx_i,\binfty]^\complement \rc^{j_i}\rk=(-1)^{\sum_{i=1}^m j_i}\frac{\partial }{\prod_{i=1}^m (\partial z_i)^{j_i}} L_{(\alpha,\nu)}(\bm{z},\ubmx_m)\Big\vert_{\bm z=\bm 0}.
    \end{align*}  
\end{prop}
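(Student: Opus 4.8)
The plan is to view both identities as facts about the multivariate Laplace transform of the random vector $\bm Y:=\lc \mu\lc (\bmx_1,\binfty]^\complement\rc,\ldots,\mu\lc (\bmx_m,\binfty]^\complement\rc\rc$ and to extract them from the Lévy--Khintchine formula of Theorem \ref{thmidrandmeasurelevykhintchine}.

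First I would dispatch the survival-function identity. By (\ref{introminid}) one has $S_\mu(\bmx_i)=\exp\lc -\mu\lc (\bmx_i,\binfty]^\complement\rc\rc$, so $\prod_{i=1}^m S_\mu(\bmx_i)^{z_i}=\exp\lc -\int_{E_d} f\,\rmd\mu\rc$ for the measurable function $f:=\sum_{i=1}^m z_i\,\id_{(\bmx_i,\binfty]^\complement}$, which is nonnegative when every $z_i\geq 0$. Since $\binfty\in (\bmx_i,\binfty]$, the function $f$ vanishes at $\binfty$; and because $\alpha\in M_d$ and every $\eta\in M_d^0$ assign finite mass to each set $(\bmy,\binfty]^\complement$, the integrals $\int_{E_d} f\,\rmd\alpha=\sum_i z_i\alpha\lc (\bmx_i,\binfty]^\complement\rc$ and $\int_{E_d} f\,\rmd\eta=\sum_i z_i\eta\lc (\bmx_i,\binfty]^\complement\rc$ are finite. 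Substituting $f$ into the Laplace-transform formula of Theorem \ref{thmidrandmeasurelevykhintchine} then reproduces exactly the closed-form expression defining $L_{(\alpha,\nu)}(\bm{z},\ubmx_m)$, so that $\e\lk \prod_{i=1}^m S_\mu(\bmx_i)^{z_i}\rk=\e\lk\exp\lc -\sum_i z_i\mu\lc (\bmx_i,\binfty]^\complement\rc\rc\rk=L_{(\alpha,\nu)}(\bm{z},\ubmx_m)$. For $\bm{z}$ with negative entries the same chain of equalities is invoked under the standing assumption that the expectation is finite, i.e.\ by analytic continuation of both expressions for $L_{(\alpha,\nu)}$ from the region of absolute convergence of the Laplace transform.

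Next I would obtain the exponent-measure moments by differentiating $L_{(\alpha,\nu)}(\bm{z},\ubmx_m)=\e\lk\exp\lc -\sum_i z_i Y_i\rc\rk$ in $\bm{z}$. Each application of $\partial/\partial z_i$ brings down a factor $-Y_i$, so formally $(-1)^{\sum_i j_i}\frac{\partial}{\prod_{i=1}^m(\partial z_i)^{j_i}}\e\lk\exp\lc -\sum_i z_i Y_i\rc\rk=\e\lk \prod_i Y_i^{j_i}\exp\lc -\sum_i z_i Y_i\rc\rk$, and evaluating at $\bm{z}=\bm 0$ yields $\e\lk \prod_i Y_i^{j_i}\rk$, which is the asserted formula. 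The interchange of derivative and expectation is legitimate because well-definedness of $\e\lk\prod_i Y_i^{j_i}\rk$ together with finiteness of $L_{(\alpha,\nu)}(\cdot,\ubmx_m)$ on a neighbourhood of $\bm 0$ provides, for some $\epsilon>0$, the exponential-moment bound $\e\lk\exp\lc \epsilon\sum_i Y_i\rc\rk<\infty$; since $Y_i\geq 0$ one then dominates $\prod_i Y_i^{j_i}\exp\lc -\sum_i z_i Y_i\rc$ uniformly for $\bm{z}$ in a (one-sided) neighbourhood of $\bm 0$ via $\prod_i y_i^{j_i}\leq C_\epsilon\exp\lc \epsilon\sum_i y_i\rc$, and dominated convergence yields existence of the partial derivatives up to and including $\bm{z}=\bm 0$ as well as the stated value.

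The two indicator-function bookkeeping steps are routine; the genuine subtlety is the boundary behaviour of the Laplace transform at $\bm{z}=\bm 0$ — both the meaning of ``well-definedness'' when some $z_i<0$ and the legitimacy of differentiating under the expectation there — which I expect to be the main, albeit standard, obstacle and would handle precisely via the local exponential-moment condition sketched above.
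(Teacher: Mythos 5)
Your proposal is correct and coincides with the argument the paper implicitly relies on (no separate proof is given in the appendix): the first identity is immediate from $S_\mu(\bmx_i)=\exp\lc-\mu\lc(\bmx_i,\binfty]^\complement\rc\rc$ together with Theorem \ref{thmidrandmeasurelevykhintchine} applied to $f=\sum_{i=1}^m z_i\id_{(\bmx_i,\binfty]^\complement}$, and the second is standard differentiation of the Laplace transform at $\bm z=\bm 0$. The only cosmetic remark is that your domination step asks for a local exponential moment, which is slightly more than needed; since the $\mu\lc(\bmx_i,\binfty]^\complement\rc$ are nonnegative, one can take one-sided derivatives from $\{z_i>0\}$ and pass to the limit by monotone convergence under the sole assumption that the stated moments are finite.
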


\begin{ex}
   Proposition \ref{proppriormoments} can be combined with Proposition \ref{proptrafos} to obtain moments of functionals of the form $m_{k,g}:=\e\lk \lc \int_{E_d^\prime} g(\bmx)\mu(\rmd \bmx) \rc^k \rk$. Proposition \ref{proptrafos} yields that $m_{k,g}$ is the $k$-th moment of an infinitely divisible random variable with tractable (univariate) Lévy measure $\upsilon(A):=\nu\lc \{ \eta \mid \int_{E_d^\prime} g(\bmx) \eta(\rmd\bmx)\in A \}\rc$ and (univariate) drift $a=\int_{E_d^\prime}g(\bmx)\alpha(\rmd\bmx)$. Thus, one can differentiate the Laplace transform of $ \int_{E_d^\prime} g(\bmx)\mu(\rmd \bmx)$ $k$-times at $0$ to obtain $m_{k,g}$.
\end{ex}

Up to this point, it is not clear whether a min-id prior implies positive or negative dependence of $\bmX$ or $S_\mu(\cdot)$. The next proposition will show that a min-id prior implies positive dependence of many statistical functionals, which is due to association. A random vector $\bm Z$ is called associated if $ \cov\lc g_1(\bm Z),g_2(\bm Z)\rc\geq 0 $ for all functions $g_1,g_2:\R^d\to\R$ which are monotone increasing in each coordinate and for which the covariance is finite. Association is a strong positive dependence property and it is well known that every min-id random vector is associated \cite[Section 5.6]{resnickextreme1987}, which implies that $\ubmX_n$ is associated for every $n\in\N$, conditionally on $\mu$ as well as unconditionally. Here, we will also show that real-valued functionals of $\mu$ are associated and the results are summarized in the following proposition.

\begin{prop}
\label{propassociation}
    Consider a collection of non-negative continuous functions $(f_i)_{1\leq i\leq m}$ with compact support on $E_d^\prime$. Then, for every IDEM $\mu$
    $$ \lc \int_{E_d^\prime} f_1(\bmx) \mu(\rmd \bmx), \ldots, \int_{E_d^\prime} f_m(\bmx) \mu(\rmd \bmx) \rc $$
    is associated. Moreover, $\ubmX_n$ is associated for every $n\in\N$, conditionally on $\mu$ and unconditionally.
\end{prop}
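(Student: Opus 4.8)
The plan is to prove the two association claims separately, each time reducing to the known fact that min-id random vectors are associated \cite[Section 5.6]{resnickextreme1987} and to the standard permanence properties of association (closure under taking monotone functions, under weak limits, and under independent mixing/concatenation; see e.g.\ Esary--Proschan--Walkup). First I would treat the vector of linear functionals $\bigl(\int f_i\,\rmd\mu\bigr)_{1\leq i\leq m}$. Using the PRM representation of Corollary \ref{coridexpmisidbyidexpm}, write $\mu\sim\alpha+\sum_{j\in\N}\eta_j$ with $\sum_{j}\delta_{\eta_j}$ a PRM on $M_d^0$ with intensity $\nu$, so that $\int f_i\,\rmd\mu=\int f_i\,\rmd\alpha+\sum_{j}\int f_i\,\rmd\eta_j$. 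The deterministic part is irrelevant to covariances, and the sum over atoms of a PRM is an infinitely divisible random vector whose Lévy measure is the image of $\nu$ under $\eta\mapsto\bigl(\int f_1\,\rmd\eta,\ldots,\int f_m\,\rmd\eta\bigr)$; this image measure is concentrated on the nonnegative orthant $[0,\infty)^m$ because each $f_i\geq 0$. An infinitely divisible random vector with Lévy measure supported on $[0,\infty)^m$ (and no Gaussian part, which is automatic here since the integral of a PRM has no Gaussian component) is a nonnegative-jump process and is therefore associated: one can see this by approximating the Lévy measure by a finite sum of point masses, for which the random vector is a finite sum of independent vectors each of the form (independent nonnegative scalar)$\,\times\,$(fixed nonnegative direction)—each such summand is associated (it is a monotone function of a single real random variable), independent vectors of associated coordinates are jointly associated, and association passes to the weak limit. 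Continuity and compact support of the $f_i$ are exactly what is needed to make these integrals well-defined, finite, and the approximation/limit arguments clean.

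For the second claim, fix $n\in\N$. Conditionally on $\mu$, $\ubmX_n=(\bmX_1,\ldots,\bmX_n)$ is (the first $n$ terms of) an i.i.d.\ sequence from $\minid(\mu)$, hence itself a min-id random vector in $\R^{d\cdot n}$ by Definition \ref{defnexminseq} (an i.i.d.\ collection of min-id vectors is min-id, with exponent measure the obvious product-type superposition), so conditional association is immediate from \cite[Section 5.6]{resnickextreme1987}. For the unconditional statement, Theorem \ref{thmidexpmimpliesexminid} shows that the \emph{unconditional} law of $\ubmX$ is again that of an exchangeable min-id sequence, with exponent measure $\lambda_\alpha+\lambda_\nu$; restricting to the first $n$ coordinates $\ubmX_n$ is again min-id with exponent measure the projection of $\lambda$ to those coordinates, and therefore $\ubmX_n$ is associated, again by \cite[Section 5.6]{resnickextreme1987}. (Alternatively, and without invoking Theorem \ref{thmidexpmimpliesexminid}, one can argue directly: conditional association of $\ubmX_n$ given $\mu$ plus the fact that the conditional survival function $\bmx\mapsto\exp(-\sum_{k=1}^n\mu((\bmx_k,\binfty]^\complement))$ is, for each fixed argument configuration, a monotone decreasing function of the associated random object $\mu$ via the functionals from the first part, lets one combine the two layers; but going through Theorem \ref{thmidexpmimpliesexminid} is the shortest route.)

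The main obstacle is the first part: one must be careful that the generic IDRM $\mu$ need only be $\sigma$-finite and may explode toward $\binfty$, so the functionals must be kept within the regime where everything is finite—this is precisely why the hypotheses impose that each $f_i$ is continuous with compact support inside $E_d^\prime$ (hence bounded away from $\binfty$), guaranteeing $\int f_i\,\rmd\mu<\infty$ a.s.\ via the local finiteness built into $M_d$ and the integrability condition on $\nu$ in Theorem \ref{thmidrandmeasurelevykhintchine}. Once finiteness and the nonnegativity of the image Lévy measure are in hand, the association of a spectrally nonnegative infinitely divisible vector is the one genuinely load-bearing lemma, and I would state and prove it as a short standalone lemma (finite-point-mass approximation $+$ independence $+$ weak-limit permanence of association) before assembling the proposition.
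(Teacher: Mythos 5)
Your proposal is correct, and the second claim (association of $\ubmX_n$, conditionally and unconditionally) is handled exactly as in the paper: conditionally on $\mu$ the vector $\ubmX_n$ is min-id, and unconditionally Theorem \ref{thmidexpmimpliesexminid} shows $\ubmX$ is again an exchangeable min-id sequence, so both cases reduce to the association of min-id (equivalently, max-id) random vectors from \cite[Section 5.6]{resnickextreme1987}.

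For the first claim your route is genuinely different from the paper's. The paper works at the level of the Poisson random measure $N$ on the space of measures: it restricts to the compact set $K$ supporting the $f_i$, observes that $\eta\mapsto\int_{E_K}f_i\,\rmd\eta$ is vaguely continuous on $M_d(E_K)$, and then runs a two-stage approximation (truncating to the relatively compact sets $M_a=\{\eta\mid\eta(K)<a\}$ and approximating $\id_{\{\eta\in M_a\}}$ from below by continuous functions) so as to invoke the association of PRMs for continuous compactly supported functionals \cite[Proposition 5.30]{resnickextreme1987}, finishing with two weak-limit passages. You instead push $\nu$ forward under $\eta\mapsto\lc\int f_1\,\rmd\eta,\ldots,\int f_m\,\rmd\eta\rc$ to reduce everything to a single finite-dimensional lemma: an infinitely divisible vector with no Gaussian part and Lévy measure on $[0,\infty)^m$ is associated, proved by compound-Poisson/point-mass approximation, independence, and weak-limit permanence. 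Both arguments are sound; yours trades the topological bookkeeping on $M_d(E_K)$ for a classical and self-contained finite-dimensional lemma. It is worth noting that the paper presents exactly this fact (association of non-negative infinitely divisible vectors) as a \emph{consequence} of the proposition in the remark that follows it, whereas you invert the logical order and use it as the engine; since you prove the lemma independently there is no circularity, but you should indeed state and prove it as a standalone lemma as you propose, checking that the pushforward of $\nu$ is a bona fide Lévy measure (which follows from the boundedness and compact support of the $f_i$ together with $\int_{M_d^0}\min\{\eta(U_i);1\}\,\nu(\rmd\eta)<\infty$) and discarding any mass it places at the origin.
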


An immediate consequence of Proposition \ref{propassociation} is that $\cov\lc \mu\lc (\bmx_2,\binfty]^\complement\rc,\mu\lc (\bmx_1,\binfty]^\complement\rc\rc$ $\geq 0$ and $\cov(S_\mu(\bmx_1),S_\mu(\bmx_2))\geq 0$ for all $\bmx_1,\bmx_2\in\R^d$. An abstract implication of Proposition \ref{propassociation} which might be of independent interest beyond the scope of this paper is that non-negative infinitely divisible random vectors are associated, which follows by an appropriate choice of $(f_i)_{1\leq i\leq m}$ and $\mu$, exploiting that association is preserved under weak convergence \cite[Lemma 5.32 iii]{resnickextreme1987}.

Next, we turn to moments of mean functionals of the form 
$$I(f,\mu):=\int_{-\infty}^\infty f(x) \minid(\mu)(\rmd x),$$
where we assume that $d=1$ and that $f:\R\to [0,\infty)$ is monotone. Further, it is easy to see that we can focus on monotone increasing $f$, since we can derive results about monotone decreasing $f$ from the distribution of the monotone increasing function $-f$.

\begin{prop}
\label{propdistrofmeanfct}
Let $\mu$ denote an IDEM on $(-\infty,\infty]$ with Lévy characteristics $(\alpha,\nu)$. Then, for every monotone increasing $f:(-\infty,\infty]\to [0,\infty]$ with $f(\infty)=\infty$ we have $I(f,\mu)\sim I(\textit{id},\mu_f)$, where $\mu_f$ is an IDEM on $[0,\infty]$ with Lévy characteristics $(\alpha_f,\nu_f)$ as defined in Proposition \ref{proptrafos} $(i)$ and $\textit{id}$ denotes the identity function. Moreover, for every $m\in\N$, we have
\begin{align*}
       \e\lk  I(f,\mu)^m\rk&=\int_0^\infty \ldots \int_0^\infty L_{\alpha_f,\nu_f}\big(\bm 1,(t_1,\ldots,t_m)\big) \rmd t_1\ldots\rmd t_m .
\end{align*}    
\end{prop}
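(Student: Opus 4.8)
The plan is to prove the two claims separately: first the distributional identity $I(f,\mu)\sim I(\textit{id},\mu_f)$, and then the moment formula. For the distributional identity, I would use a change-of-variables argument at the level of the random survival function. Write $S_\mu(x)=\exp(-\mu((x,\infty]^\complement))=\exp(-\mu((-\infty,x]))$ since $d=1$. The random probability measure $\minid(\mu)$ has (random) distribution function $1-S_\mu(\cdot)$. For monotone increasing $f$ with $f(\infty)=\infty$, the pushforward of $\minid(\mu)$ under $f$ is a random probability measure on $[0,\infty]$ whose survival function at $t\ge 0$ is $P(f(X)>t\mid\mu)=S_\mu(f^{-1}(t))$, where $f^{-1}$ is the generalized (left-continuous) inverse. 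By Proposition \ref{proptrafos}$(i)$ applied with $g=f$, the pushforward random measure $\mu_f(A)=\int_{\{f(x)\in A\}}\rmd\mu$ is an IDEM on $[0,\infty]$ with Lévy characteristics $(\alpha_f,\nu_f)$, and $\mu_f((-\infty,t])=\mu(\{x\mid f(x)\le t\})$, which matches $\mu((-\infty,f^{-1}(t)])$ up to boundary issues on the (at most countable) set of atoms of $f$. Hence $\minid(\mu)$ pushed forward under $f$ equals $\minid(\mu_f)$ in distribution (jointly over all $t$, as random survival functions), so $I(f,\mu)=\int x\,(\minid(\mu)\circ f^{-1})(\rmd x)\sim\int x\,\minid(\mu_f)(\rmd x)=I(\textit{id},\mu_f)$. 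Condition \ref{condsuppminid} on $\mu$ plus $f(\infty)=\infty$ is what guarantees $\mu_f$ still charges arbitrarily large $t$ so that $I(\textit{id},\mu_f)$ is finite-valued (the mean functional integrates a probability measure that lives on $[0,\infty)$).

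For the moment formula, I would use the layer-cake (Fubini–Tonelli) representation together with the tower property. Since $\minid(\mu_f)$ is, conditionally on $\mu_f$, the law of a nonnegative random variable $Y\ge 0$ with $P(Y>t\mid\mu_f)=S_{\mu_f}(t)$, we have $\e[Y^m\mid\mu_f]=\int_0^\infty\cdots\int_0^\infty P(Y>t_1,\ldots,Y>t_m\mid\mu_f)\,\rmd t_1\cdots\rmd t_m=\int_0^\infty\cdots\int_0^\infty S_{\mu_f}(\max_k t_k)^m\,\rmd t_1\cdots\rmd t_m$ — but more usefully, writing it with the conditionally-i.i.d.\ structure is not needed; instead I would directly compute $\e[I(f,\mu)^m]=\e[I(\textit{id},\mu_f)^m]$ by expanding the $m$-th power as an $m$-fold integral. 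Using $I(\textit{id},\mu_f)=\int_0^\infty S_{\mu_f}(t)\,\rmd t$ (again layer-cake, since $\minid(\mu_f)$ has survival function $S_{\mu_f}$), raising to the $m$-th power gives $\int_0^\infty\cdots\int_0^\infty\prod_{k=1}^m S_{\mu_f}(t_k)\,\rmd t_1\cdots\rmd t_m$. Taking expectations and applying Tonelli to interchange $\e$ and the $m$-fold Lebesgue integral yields $\int_0^\infty\cdots\int_0^\infty\e[\prod_{k=1}^m S_{\mu_f}(t_k)]\,\rmd t_1\cdots\rmd t_m$, and by Proposition \ref{proppriormoments} (the first identity, with $\bm z=\bm 1$ and $\ubmx_m=(t_1,\ldots,t_m)$) the integrand equals $L_{(\alpha_f,\nu_f)}(\bm 1,(t_1,\ldots,t_m))$. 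This is exactly the claimed formula.

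The main obstacle I anticipate is the careful handling of the generalized inverse and the boundary/atom behaviour of $f$: making precise that pushing $\minid(\mu)$ forward under $f$ yields $\minid(\mu_f)$ requires checking that $\mu(\{x:f(x)\le t\})$ and $\mu((-\infty,f^{-1}(t)])$ agree for Lebesgue-almost every $t$ (they can differ only at the countably many $t$ that are values of flat pieces of $f$, which has Lebesgue measure zero and hence does not affect the integrals), and that $f$ not being strictly increasing does not create issues — it does not, since the mean functional and the layer-cake integrals only see the survival function a.e. A secondary technical point is verifying the integrability hypotheses ("upon well-definedness of the stated expectations") so that all the Tonelli interchanges are legitimate; here I would note that all integrands are nonnegative, so Tonelli applies unconditionally and the formula holds with values in $[0,\infty]$, the finiteness being exactly the well-definedness assumption. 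The remaining steps — the layer-cake identities and the appeal to Propositions \ref{proptrafos} and \ref{proppriormoments} — are routine.
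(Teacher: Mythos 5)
Your proof is correct, and the second half (layer-cake representation $I(\textit{id},\mu_f)=\int_0^\infty S_{\mu_f}(t)\,\rmd t$, raising to the $m$-th power, Tonelli, and identifying the integrand as the Laplace transform $L_{(\alpha_f,\nu_f)}(\bm 1,(t_1,\ldots,t_m))$) is exactly the paper's argument. For the first half you take a genuinely different route: you push $\minid(\mu)$ forward under $f$ and match survival functions, whereas the paper works directly with the Poisson representation $X\sim\min_{k}x_k$ for a PRM with intensity $\mu$, uses that a monotone increasing $f$ commutes with the minimum, $f(\min_k x_k)=\min_k f(x_k)$, and then invokes the mapping theorem to recognize the image points as a PRM with intensity $\mu_f$. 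The paper's route buys a cleaner argument with no generalized inverses or boundary bookkeeping at all; your route is more elementary but leads you to worry about atom/flat-piece discrepancies that are in fact not there. Indeed, for any $t$ the set $\{x\mid f(x)>t\}$ is an upper set of the form $(c,\infty]$ or $[c,\infty]$, and for a min-id variable one has $P(X\in A\mid\mu)=\exp\lc-\mu(A^\complement)\rc$ exactly for both types of upper set (the closed case follows by left-continuity of $x\mapsto\mu((-\infty,x])$ along increasing limits), so $P(f(X)>t\mid\mu)=\exp\lc-\mu(\{x\mid f(x)\le t\})\rc=\exp\lc-\mu_f((-\infty,t])\rc$ holds pathwise for every $t$, and the identification of the pushforward with $\minid(\mu_f)$ is exact rather than merely Lebesgue-a.e. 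Your a.e.\ fallback would still suffice for the mean functional, so this does not affect correctness, but you can dispense with it.
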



\section{Discrete priors: the multivariate NTR prior}

\label{appdiscprior}
Providing explicit and general formulas for the posterior of an IDEM is difficult for priors that concentrate on discrete distributions. The technical reason for this is that the counting measure, the natural dominating measure of discrete distributions, is not $\sigma$-finite and one cannot easily find an analog of Condition \ref{condabsolutecont} for discrete priors.  However, as Condition \ref{condcontmargins} is routinely satisfied even by priors concentrating on discrete distributions, the results of Section \ref{secpostdistr} are applicable and one can aim to find explicit formulas for families of discrete priors that are sufficiently regular. Here, we exemplarily show how to obtain explicit formulas for the posterior in the case of the multivariate NTR (mNTR) prior, i.e.\ CRM priors $\mu$ of the form 
\begin{align}
    \mu(A)=\sum_{k\in \N} a_k \delta_{\bmb_k}(A)=\int_0^\infty \int_{(0,\infty)^d} \id_{\{\bmb\in A\}} N(\rmd (a,\bmb)), \label{condmCRM}
\end{align} 
where $N$ is a PRM on $(0,\infty)\times (0,\infty)^d$ with atoms $\lc (a_k,\bmb_k)\rc_{k\in\N}$. Obviously, each margin of $\mu$ is a univariate CRM and thus is a univariate NTR prior.
The mNTR prior is of special interest since it is the natural multivariate extension of univariate NTR priors introduced by \cite{fergusonprioronprob1974,doksumtailfree1974}. The univariate NTR prior has been used as a fundamental building block in many multivariate Bayesian models. Thus, the following closes a long standing gap in the literature and the results may open the door to more refined multivariate applications of (m)NTR priors.

To further simplify our developments we will assume that $\mu$ has no base measure and that the Lévy measure of $\mu$ is infinite and given as the image measure of
\begin{align}
    \lc (0,\infty) \times(0,\infty)^d ,\rho(a,\bmb)\rmd (a,\bmb)\rc;\ (a,\bmb) \mapsto a\delta_{\bmb}, \label{conddensmNTR}
\end{align} 
where $\rho(a,\bmb)$ is chosen such that $ \int_{(\bm t,\binfty]^\complement} \int_{0}^\infty \min\{a,1\} \rho( a,\bmb)\rmd ( a,\bmb)<\infty$ for all $\bm t\in \R^d$. This ensures that $\mu$ is a well-defined IDEM which satisfies Condition \ref{condcontmargins}.

Before we derive the technical results underlying the posterior of an mNTR prior we present the main result of this section: the posterior of an mNTR prior. To this purpose, for any index set $IJ\subset d\times \N$, define $I(IJ):=\{i\mid\exists\ j\text{ s.t.\ } (i,j)\in IJ\}$ with $I(IJ)^\complement=\{1,\ldots,d\}\setminus I(IJ)$. Similarly, define $J(IJ):=\{ j\mid\ \exists \leqd \text{ s.t. } (i,j)\in IJ\}$. We will show below that for every conditional hitting scenario $\Tilde\Theta=\lc \Tilde{\Theta}_l\rc_{1\leq l\leq L(\Tilde{\Theta})}$ with positive probability we have that for all $\leqd$ and $1\leq l\leq L(\Tilde{\Theta})$ the vector $\bmX_{\Tilde{\Theta}_l\cap(\{i\}\times \N)}$ is a collection of ties, i.e.\ a vector consisting of identical values. Let us denote these unique values as $\bmX_{\Tilde\Theta_l}(i)$ and define the $d$-dimensional vector 
$$(\bmX_{\Tilde{\Theta}_l},\bm B)(I(\Tilde{\Theta}_l))=\lc \bmX_{\Tilde\Theta_l}(i)\id_{\{i\in I(\Tilde{\Theta}_l)\}}+B_i \id_{\{i\in I(\Tilde{\Theta}_l^\complement)\}}\rc_\leqd$$
for an arbitrary vector $\bm B$ such that $(B_i)_{i \in I(\Tilde{\Theta}_l)^\complement}$ is uniquely defined. To keep the notation concise, let us also denote $\times_{i=1}^k\int_{A_i}=\int_{A_1}\ldots\int_{A_k}$ for $k\in\N$.

\begin{thm}[Posterior of mNTR prior]
\label{thmpostmNTR}
Let $\mu$ denote a CRM without base measure whose intensity measure is given by (\ref{conddensmNTR}) and let $\ubmX$ denote the corresponding exchangeable min-id sequence such that $(\bmX_i)_{i\in\N}$ are conditionally i.i.d.\ with distribution $\minid(\mu)$. Further, let $IJ\subset d\times\N$ denote a finite index set and choose $n\in\N$ such that $IJ\subset d\times n$. Then, conditionally on $\bmXIJ$, the posterior of $\mu$ is given by
$$  \mu\mid \bmXIJ \sim \bar{\mu}_{IJ} +\sum_{l=1}^{L(\Tilde\Theta)} \bar{\mu}_{IJ}^{(l)}$$
where 
\begin{itemize}
    \item[$(i)$] $\bar{\mu}_{IJ}$ is a CRM with intensity $\rho(a,\bmb)\exp\lc-a\sum_{j\in J(IJ)} \id_{\{ b_i\leq X_{i,j}\text{ for some } (i,j)\in IJ\}}\rc\rmd a\rmd \bmb$.
    \item[$(ii)$] $\Tilde{\Theta}$ is distributed according to 
    $$ P(\tilde{\Theta}=\theta)= \tau(\theta,\bmXIJ)=\frac{q_{\theta}}{\sum_{\overset{\tilde\theta\in \mathcal{P}_{IJ}}{\Psi(\theta)=\Psi(\bmXIJ)}} q_{\tilde\theta} }  \lc \bmXIJ   \rc\id_{\{\Psi(\theta)=\Psi(\bmXIJ)\}} $$
    where $\Psi(\theta)$, $\Psi(\bmXIJ)$ and $q_\theta(\cdot)$ are defined in (\ref{defpsioftheta}),(\ref{defmarginalpartsbyobs}) and (\ref{defqdensmntr}) below.
    \item[$(iii)$] Conditionally on $\Tilde{\Theta}$, $\bar{\mu}_{IJ}^{(l)}\sim  A\delta_{(\bmX_{\Tilde{\Theta}_l},\bm B)(I(\Tilde{\Theta}_l))}$
where $(A,\bm B):=\lc A,\lc (B_i)_{i\in I(\Tilde{\Theta}_l)^\complement}\rc\rc$ has probability density
$$C(\Tilde{\Theta}_l,\bmXIJ)^{-1}\rho_{IJ}^{(l)}\lc a,\lc b_i\rc_{i\in I(\Tilde\Theta_l)^\complement},\bmXIJ\rc \rho\lc a,(\bmX_{\Tilde{\Theta}_l},\bmb)(I(\Tilde{\Theta}_l))\rc  $$
on $(0,\infty)\times (-\infty,\infty)^{\vert I(\Tilde{\Theta}_l))^\complement\vert}$ with
   \begin{align*}
       &\rho_{IJ}^{(l)}\lc a,\lc b_i\rc_{i\in I(\Tilde\Theta_l)^\complement},\bmXIJ\rc\\
       &:=\otimes_{j=1}^n \lc\minid\lc a\delta_{(\bmX_{\Tilde{\Theta}_l},\bmb)(I(\Tilde{\Theta}_l))}\rc \rc  \lc \big\{ \ubmy_n \ \big\vert \  \bmy_{IJ\setminus\tilde{\Theta}_l}>\bmX_{IJ\setminus\Tilde{\Theta}_l};\ \bmy_{\Tilde{\Theta}_l}=\bmX_{\Tilde{\Theta}_l} \big\} \rc 
   \end{align*}
 and  $C(\Tilde{\Theta}_l,\bmXIJ)$ denoting the corresponding normalizing constant. 
  
\end{itemize}

\end{thm}

Similarly to the posterior of a univariate NTR prior, the posterior of an mNTR prior $\mu$ is the sum of a CRM $\bar\mu_{IJ}$ plus atoms with random weights given by $\bar{\mu}_{IJ}^{(l)}$. However, in contrast to the univariate case, the number of atoms is random and the atoms do not necessarily coincide with the observed locations $\lc \bmX_{IJ\cap (d\times \{j\})}\rc_{1\leq j\leq n}$, since the observations in the $i$-margins of $\bmXIJ$ only function as building blocks for the atoms of the posterior. This is due to the fact that, contrary to the univariate case, there are several conditional hitting scenarios with positive probability as it cannot be known by solely observing $\bmXIJ$ whether or not the observations $X_{i,j}$ and $X_{\underline{i},j}$ are induced by the same underlying extremal sequence, see the technical results below for more details. If $d=1$, one recovers the classical result for univariate NTR priors, i.e\ that the atoms in the posterior appear at each distinct entry of $\lc X_{(1,j)}\rc_{1\leq j\leq n}$.

Let us mention that the inclusion of a diffuse base measure $\alpha$ into the prior $\mu$ is feasible by going through the same technical steps as below. However, the formulation of the results would then require even heavier notation, which is why we refrained from treating this more general case. Furthermore, the same underlying ideas as below can be applied to CRMs that do not have an intensity of the form (\ref{conddensmNTR}). However, the formulation of the results is then highly dependent on the exact form of the intensity measure and must be adapted case-by-case.

\subsection{The conditional distributions of the exponent measure and hitting scenario of an mNTR prior}
To derive the posterior of an mNTR prior, it is essential to understand the stochastic representation of the corresponding exchangeable min-id sequence $\ubmX$. The same arguments as in Example \ref{exNTRpriors} yield that a stochastic representation of $\ubmX\overset{i.i.d.}{\sim}\minid(\mu)$ is given by
\begin{align}
    \ubmX \sim \min_{k\in\N} \bmx^{(k)}, \label{stochrepmNTR}
\end{align} 
where, conditionally on a PRM $\tilde N=\sum_{k\in\N} \delta_{(a_k,\bmb_k)}$ on $\R\times \R^d$, the $\ubmx^{(k)}$ are independent i.i.d.\ sequences with marginal distribution $(1-\exp(-a_k))\delta_{\bmb_k}+\exp(-a_k)\delta_\binfty $. Thus, the components of $\ubmx^{(k)}$ concentrate on the two points $\{\bmb_k,\binfty\}$ 
As a consequence, the $i$-margins $(X_{i,j})_{j\in\N}$ of $\ubmX$ can have ties, resembling the familiar tie structure in the observations from a univariate NTR prior. The multivariate analog of this tie structure is encoded by the hitting scenario of $\ubmX$. In contrast to the univariate case, this tie structure cannot be directly read from the data as for $i\neq \underline{i}$ the observations $X_{i,j}$ and $X_{\underline{i},j}$ may originate from the same underlying extremal sequence $\ubmx^{(k)}$ even though $X_{i,j}\neq X_{\underline{i},j}$. However, as one might expect, the tie structure in the $i$-margins of $\ubmX$ will also play a critical role in the posterior of an mNTR prior, since it directly determines those conditional hitting scenarios that have positive probability.

Before determining the distribution of the conditional hitting scenario, we must first derive the conditional distribution of the exponent measure $\lambda$ of $\ubmX$. From the stochastic representation (\ref{stochrepmNTR}) it directly follows that $\lambda$ is given by
\begin{align}
    \lambda(A)=\int_{(0,\infty)\times(0,\infty)^d} \int_{E_{d\times \N}} \id_{\{ \ubmx\in A\}}\otimes_{j\in\N}\minid(a\delta_{\bm b})\lc \rmd\ubmx \rc \rho(a,\bmb)\rmd a\rmd\bmb, \label{expmeasuremNTR}
\end{align}
for every measurable $A\subset E_{d\times\N}$. From (\ref{expmeasuremNTR}) it immediately follows that $\lambda$ has no mass on sets of the form
$$
\big\{ \ubmx \ \big\vert\ \exists\ j\in \N \text{ and } i\neq \underline{ i}\in\{1,\ldots,d\} \text{ s.t.\ }  x_{i,j}<\infty \text{ and } x_{\underline{i},j}=\infty  \big\}
$$
and
$$
\big\{ \ubmx \ \big\vert\ \exists\ j\neq \underline{j}\in \N \text{ and } i\in\{1,\ldots,d\} \text{ s.t.\ } x_{i,j}, x_{i,\underline{j}}\in \R \text{ and } x_{i,j}\neq x_{i,\underline{j}} <\infty \big\}.
$$
In other words, the $j$-th vector $\bmx_j$ in $\ubmx$ must either be finite in every component or be equal to $\binfty$ and all finite values in $(x_{i,j})_{j\in\N}$ must be identical. Therefore, when deriving the conditional distribution $K_{IJ}\lc \bmx_{IJ},\cdot\rc$ of $\lambda$, we focus on $\bmxij$ that comply with these restrictions a priori. More precisely, we only need to derive $K_{IJ}\lc \bmx_{IJ},\cdot\rc$ for 
\begin{align*}
    &\bmx_{IJ}\in \mathcal{U}:= \\
    &\big\{ \bmy_{IJ}\in E^\prime_{IJ}  \mid \forall\ j \in J(IJ) : \text{Either } \bmy_{IJ\cap (d\times \{j\})}=\binfty \text{ or } \bmy_{IJ\cap (d\times \{j\}) }\in\R^{\vert IJ\cap (d\times \{j\})\vert} \}\\
    &\cap\{  \bmy_{IJ}\in E^\prime_{IJ}  \mid  \forall\ i\in I(IJ): \text{ All finite components of }\bmy_{IJ\cap (\{i\}\times \N)} \text{ are identical} \},
\end{align*}
recalling the notation $J(IJ)=\{j\mid \exists\ \leqd \text{ s.t.\ } (i,j)\in IJ\}$.

 To ease the notation in the following denote the finite values in the $i$-margins of $\bmxij\in \mathcal{U}$ as $\bmxij(i)$, if they exist.  Further, denote
$  I(\bmx_{IJ}):=\big\{i\in\{1,\ldots,d\}\mid x_{i,j}<\infty\text{ for some }(i,j)\in IJ\}$, 
$I(\bmx_{IJ})^\complement=\{1,\ldots,d\}\setminus I(\bmxij)$ and for any $\bmb\in \R^d$ let 
$$(\bmxij,\bmb)(I(\bmxij)):= \lc \bmxij(i)\id_{\{i\in I(\bmxij)\}}+b_i\id_{\{i\in I(\bmxij)^\complement\}}\rc_{\leqd}.$$
Note that in case $\bmx_{IJ}\in \R^{\vert IJ\vert}$ we have that $I(\bmx_{IJ})=I(IJ)=\{i\mid \exists \ j\in\N\text{ such that }(i,j)\in IJ \}$ is simply the collection of $i$-margins that are present in $IJ$, whereas $I(IJ)^\complement=\{i\mid (i,j)\not\in IJ \ \forall j\in\N\}$ is the collection of $i$-margins that are not present in $IJ$.

\begin{prop}
\label{propconddistrmNTR}
Let $\mu$ denote a CRM without base measure whose Lévy measure has intensity (\ref{conddensmNTR}). Let $\ubmX$ denote the corresponding exchangeable min-id sequence with exponent measure $\lambda$. Then, for any finite $IJ\subset d\times \N$, $\bmxij\in \mathcal{U}$ and measurable $A\subset E^\prime_{d\times \N\setminus IJ}$, the conditional probability distribution of $\lambda$ given $\bmxij\in\mathcal{U}$ is given by
        \begin{align*}
       &K_{IJ}\lc \bmx_{IJ},A\rc\\
       &=   \lc \times_{i=1}^{\vert I(\bmxij)^\complement\vert }\int_0^\infty\rc \int_0^\infty \otimes_{j\in\N} \minid\lc a\delta_{(\bmxij,\bmb)(I(\bmxij))}\rc \lc \big\{ \ubmy\mid \ubmy_{\setminus IJ}\in A , \bmy_{IJ}=\bmx_{IJ}\big\}\rc \\
       &\hspace{1cm}\rho\lc a,(\bmxij,\bmb)(I(\bmxij)\rc \rmd a \rmd \lc (b_i)_{i\in I(\bmxij)^\complement} \rc C(\bmxij)^{-1}
    \end{align*}
    where  $C(\bmxij)$ denotes the corresponding normalizing constant.

\end{prop}

To derive the conditional hitting scenario of an mNTR prior we recall the family of measures 
$\lc\beta_\theta\rc_{\theta\in\mathcal{P}_{IJ}}$ on $\R^{\vert IJ\vert}$ given by 
\begin{align*}
   \beta_\theta (A):=P\lc\bmXIJ\in A, \Theta=\theta\rc, 
\end{align*}
which were already defined in (\ref{defbetameasures}), recalling that $\mathcal{P_{IJ}}$ denotes all possible partitions of $IJ$. Moreover, we recall that $\beta_\theta$ has the representation
\begin{align*}
    \beta_{\theta}(\rmd\bmxij)= \exp\lc -\lambda^{(IJ)}\lc (\bmxij,\binfty]^\complement\rc\rc \prod_{l=1}^{L(\theta)} K_{ \theta_l}\lc \bmx_{\theta_l},\big\{ \bmy_{IJ\setminus \theta_l}>\bmx_{IJ\setminus \theta_l} \big\}\rc   \otimes_{l=1}^{L(\theta)}\lambda^{(\theta_l)}(\rmd \bmx_{\theta_l}),
\end{align*}
where $\lambda^{(\theta_l)}$ denotes the distribution of the $\theta_l$ margin of $\lambda$ given by 
$$ \lambda^{(\theta_l)}(\cdot):=\lambda\lc \{ \ubmx \mid \bmx_{\theta_l}\in\cdot\}\rc. $$
Intuitively, $\beta_\theta$ can be viewed as the (non-normalized) distribution of $\bmX_{IJ}$ when the hitting scenario is fixed to $\theta$. Each $\beta_\theta$ is absolutely continuous w.r.t.\ to the distribution of $\bmXIJ$ given by $$\minid(\lambda^{(IJ)})=\sum_{\theta\in\mathcal{P}_{IJ}} \beta_\theta$$
with a density $\rmd\beta_\theta/\rmd \minid(\lambda^{(IJ)})$ on $\R^{\vert IJ\vert}$. By \cite[Theorem 3.2.1]{dombryeyiminko2013regular} the conditional hitting scenario $\Tilde{\Theta}$ follows the distribution
    $$ \tau(\bmXIJ,\theta)=P\lc \Theta=\theta\mid \bmX_{IJ}\rc=\frac{\rmd\beta_{\theta}}{\rmd\minid(\lambda^{(IJ)})}(\bmXIJ). $$
Thus, to derive the distribution of the conditional hitting scenario we have to find $\rmd\beta_\theta/\rmd\minid(\lambda^{(IJ)})$. Before turning to this task, we need to introduce further notation. For a fixed partition $\theta$ of $IJ$ define a partition $ \Psi^{(i)}(\theta)$ of $IJ\cap (\{i\}\times\N)$ via
$$\Psi^{(i)}(\theta):=\lc\Psi^{(i)}_l(\theta)\rc_{1\leq l\leq L(\Psi^{(i)})}:=\big\{ \theta_l\cap (\{i\}\times\N)) \mid 1\leq l\leq L(\theta); \theta_l\cap (\{i\}\times\N)\neq \emptyset\big\},$$
which is the partition of the $i$-margin of $IJ$ that is induced by $\theta$. Further, let 
\begin{align}
    \Psi(\theta):=\lc \Psi^{(i)}(\theta)\rc_{i\in I(IJ)} \label{defpsioftheta}
\end{align} 
denote the collection of the partitions of the $i$-margins of $IJ$ that are induced by $\theta$. To illustrate the notation consider $IJ= 2\times 2$ with $\theta =\big\{\{(1,1)\},\{(1,2),(2,1),(2,2)\}\big\}$. Then $\Psi^{(1)}(\theta)=\big\{\{(1,1)\},\{(1,2)\}\big\}$ and $\Psi^{(2)}(\theta)=\big\{\{(2,1),(2,2)\}\big\}$.

From the stochastic representation (\ref{stochrepmNTR}) and by a similar reasoning as for deriving that $\lambda^{(IJ)}$ concentrates on $\mathcal{U}$ one can derive that 
$ \beta_\theta(A)$ concentrates on $d_{\Psi(\theta)}:=\lc\sum_{i\in I(IJ)}L(\Psi^{(i)})(\theta)\rc$-dimensional subspaces of $\R^{\vert IJ\vert}$ given by
\begin{align*}
    G^{\Psi(\theta)}:=& \big\{ \bmy_{IJ}\in \R^{\vert IJ\vert} \ \big\vert\  \text{ For all }\ i\in I(IJ) \text{ and } 1\leq l\neq \underline{l}\leq L(\Psi^{(i)}(\theta)):\ y_{i,j}=y_{i,\underline{j}} \\
    &\hspace{0.5cm}\forall\ (i,j),(i,\underline{j})\in \Psi^{(i)}_l(\theta)
      \text{ and }y_{i,j}\neq y_{i,\underline{j}} \ \forall\ (i,j)\in \Psi^{(i)}_l(\theta) , (i,\underline{j})\in \Psi^{(i)}_{\underline{l}}(\theta) \big\},
\end{align*}

This can be seen as follows. Fixing the hitting scenario $\theta$ implies that all indices $(i,j)$ corresponding to the same subset $\theta_l$ must stem from the same extremal sequence $\ubmx^{(k)}$ whose only finite atom is $\bmb_{k}=(b_{k,i})_{\leqd}$. Thus, $X_{i,j}=X_{i,\underline{j}}$ for all $(i,j),(i,\underline{j})\in\theta_l$. By the same argument one obtains that $X_{i,j}\neq X_{i,\underline{j}}=b_{k,i}$ whenever $(i,j)$ and $(i,\underline{j})$ are in distinct subsets of $\theta$.

The discussion above implies that the support of $\beta_\theta$ is fully determined by $\Psi(\theta)$ and all partitions $\theta$ and $\tilde\theta$ with $\Psi(\theta)=\Psi(\tilde\theta)$ induce the same support of $\beta_{\theta}$ and $\beta_{\tilde{\theta}}$.
The following lemma formally proves these claims and also shows that $\beta_{\theta}$ is singular to $\beta_{\tilde\theta}$ whenever $\Psi(\theta)\neq\Psi(\tilde\theta)$ 

\begin{lem}
\label{lemsuppportbeta}
Let $\theta$ and $\tilde{\theta}$ denote two partitions of $IJ$. We have that $\beta_{\theta}$ is concentrated on $G^{\Psi(\theta)}$. Further, $\beta_{\theta}$ is singular to $\beta_{\tilde\theta}$ whenever $\Psi(\theta)\neq \Psi(\tilde\theta)$.

\end{lem}

 Somewhat conversely, we will now show that $\beta_{\theta}$ and $\beta_{\tilde\theta}$ are absolutely continuous w.r.t.\ a common dominating measure whenever $\Psi(\theta)=\Psi(\tilde{\theta})$.

\begin{prop}
\label{propdenshitscenmNTR}
Let $\theta=(\theta_l)_{1\leq l\leq L(\theta)}$ denote a partition of $IJ$ with corresponding induced partitions $\Psi(\theta)=\lc\Psi^{(i)}(\theta)\rc_{i\in I(IJ)}=\lc \lc \Psi^{(i)}_l(\theta)\rc_{1\leq l\leq L(\Psi^{(i)}(\theta))}\rc_{i\in I(IJ)}$
Then, 
$$\beta_\theta(A)=\int_A q_\theta(\bmb_{IJ})\rmd \lc \lc b_i^{(k)} \rc_{1\leq k\leq L(\Psi^{(i)}(\theta))}\rc_{i\in I(IJ)}$$
where
        \begin{align}
            q_\theta(\bmb_{IJ}):=&\exp\lc -\lambda^{(IJ)} \lc(\bmb_{IJ},\binfty]^\complement\rc \rc\\
            &\prod_{l=1}^{L(\theta)} \Bigg( \int_{(0,\infty)\times (0,\infty)^{\vert I(\theta_l)^\complement\vert}}     \lc\otimes_{j\in\N} \minid\lc a^{(l)}\delta_{(\bmb_{\theta_l},\bmb^{(l)})(I(\theta_l))}  \rc \rc \nonumber \\
            &\Big( \{ \bmx_{\theta_l}= \bmb_{\theta_l}, \bmx_{IJ\setminus\theta_l}>\bmb_{IJ\setminus\theta_l} \} \Big) \rho(a^{(l)}, \bmb^{(l)})  \rmd\lc a^{(l)},(b_i^{(l)})_{i\in I( \theta_l)^\complement} \rc \Bigg)  \label{defqdensmntr}
        \end{align}
with $ \bmb_{IJ}:=\lc  \sum_{k=1}^{L(\Psi^{(i)}(\theta))} b_i^{(k)}\id_{\{ (i,j)\in \Psi^{(i)}_k\}}\rc_{(i,j)\in IJ}$ and  $\bmb_{\theta_l}:=\bmb_{IJ\cap\theta_l}$.

\end{prop}

With these tools at hand, we are ready to determine the conditional hitting scenarios with positive probability. It is crucial to observe that each observation of $\bmXIJ$ uniquely corresponds to a tie structure (partition) $\Psi(\bmXIJ)$, which can be directly read from the data by grouping the indices of all observations in an $i$-margin that take identical values.
Formally, $\Psi^{(i)}(\bmXIJ)$ is the partition of $(\{i\}\times\N)\cap IJ$ induced by the equivalence relation 
\begin{align}
(i,j)\sim ( i,\underline{j})\Leftrightarrow X_{i,j}=X_{i ,\underline{j}}. \label{defmarginalpartsbyobs}
\end{align}
For example, when $IJ=2\times 2$ and $\bmX_{IJ}=\lc (1.5,2.3)^\intercal,(3.7,2.3)^\intercal\rc$ the corresponding tie structure $\Psi(\bmX_{IJ})$ is given by $\Psi^{(1)}(\bmx_{IJ})=\big\{ \{(1,1)\},\{(1,2)\} \big\}$ and $\Psi^{(2)}(\bmx_{IJ})=\big\{ \{(2,1),(2,2)\} \big\}$. 

It now follows from Lemma \ref{lemsuppportbeta} that 
$$  \frac{\rmd\beta_\theta}{\rmd \minid\lc\lambda^{(IJ)}\rc}(\bmXIJ)=0\ \minid\lc\lambda^{(IJ)}\rc\text{-almost-surely }  \forall\ \theta \text{ with }\Psi(\theta)\neq\Psi(\bmXIJ).$$
Moreover, as all remaining conditional hitting scenarios $\Tilde\Theta$ with positive probability correspond to the same support $G^{\Psi(\bmXIJ)}$, one can exploit the singularity again to obtain 
\begin{align}
    \frac{\rmd\beta_\theta}{\rmd\minid\lc\lambda^{(IJ)}\rc}(\bmX_{IJ})=\frac{d\beta_\theta}{d\lc\sum_{\underset{\Psi(\tilde\theta)=\Psi(\bmXIJ)}{\tilde\theta\in \mathcal{P}_{IJ}}}\beta_{\tilde\theta}\rc}(\bmX_{IJ})\id_{\{\Psi(\theta)=\Psi(\bmXIJ)\}}. 
\end{align}
Proposition \ref{propdenshitscenmNTR} now implies that all $\beta_\theta$ with $\Psi(\theta)=\Psi(\bmXIJ)$ have the $d_{\Psi(\bmXIJ)}$-dimensional Lebesgue measure as their common dominating measure. This allows to determine \\
$\rmd\beta_\theta /\rmd \minid\lc\lambda^{(IJ)}\rc$ explicitly.

\begin{cor}
\label{cordistcondhitmntr}
Let $IJ\subset d\times\N$ denote a finite set of indices and let $\mu$ denote a CRM without base measure whose intensity measure is given by (\ref{conddensmNTR}). Further, let $\ubmX$ denote the corresponding exchangeable min-id sequence such that $(\bmX_i)_{i\in\N}$ are conditionally i.i.d.\ with distribution $\minid(\mu)$. Then, the distribution of the conditional hitting scenario $\Tilde{\Theta}$ of $IJ$ of an mNTR prior with intensity (\ref{conddensmNTR}) is given by
\begin{align*}
  \tau(\bmXIJ,\theta)=\frac{q_{\theta}}{\sum_{\overset{\tilde\theta\in \mathcal{P}_{IJ}}{\Psi(\theta)=\Psi(\bmXIJ)}} q_{\tilde\theta} }  \lc \bmXIJ   \rc\id_{\{\Psi(\theta)=\Psi(\bmXIJ)\}}.   
\end{align*}
\end{cor}

In practice, one needs to compute $\vert\{\theta\in\mathcal{P}_{IJ}\mid \Psi(\theta)= \Psi(\bmXIJ)\}\vert$ many non-zero probabilities of a conditional hitting scenarios, which is usually much smaller than the number of possible partitions $2^{\vert IJ\vert}$. In the univariate case $d=1$, there is only one conditional hitting scenario with positive probability since $\theta=\Psi(\bmXIJ)$ is unique, which explains why there is no analog of the conditional hitting scenario appearing in the classical posterior representations for univariate NTR priors in \cite{fergusonprioronprob1974,doksumtailfree1974}.

Combining the results above one can deduce the posterior of an mNTR prior as stated in Theorem \ref{thmpostmNTR} at the beginning of this section.


\section{Additional example of IDEM prior}
\label{appexamples}
\begin{ex}[Dependent NTR priors \cite{epifanilijoi2010,rivapalacioleisen2018}]
\label{exdepNTRpriors}
Extending the framework of NTR priors to the multivariate setting, \cite{epifanilijoi2010,rivapalacioleisen2018} couple vectors of cumulative hazards $\lc \lc H_{i}(t)\rc_{t\geq 0}\rc_{\leqd}$ with the help of Lévy copulas, see e.g.\ \cite{Tankov2016} for a review on Lévy copulas. Essentially, they define marginal non-negative and non-decreasing driftless Lévy processes $\lc\lc\Tilde{H}_i(t)\rc_{\leqd}\rc_{ t\geq 0}$, couple them with a suitable Lévy copula to a multivariate Lévy process, and then apply a uniform deterministic time-change $\gamma(\cdot)$ to the multivariate Lévy process to obtain a multivariate additive process $\lc \lc H_{i}(t)\rc_{\leqd}\rc_{t\geq 0}:=\lc \lc \Tilde{H}_i(\gamma(t)))\rc_{\leqd}\rc_{t\geq 0}$. Then, they define a prior on the random vector $\bmX\in (0,\infty)^d$ in terms of its survival functions $S(t_1,\ldots,t_d):=\prodd \exp\lc -H_i(t_i) \rc$. 
Thus, $S$ defines a min-id survival function, with exponent measure $ \mu_{\mu_1^{(H_1)},\ldots,\mu_d^{(H_d)}}$, where $ \mu_i^{(H_i)}(-\infty,t]:=H_i(t)$. Moreover, it is easy to see that $\mu$ is an IDEM prior without base measure.

To obtain the Lévy characteristics of $\mu$, consider a PRM $N$ on $(0,\infty)\times(0,\infty)^d$ with intensity $\Tilde{\nu} \rmd(a_1,\ldots,a_d,b)=\gamma^\prime(b)\rho(a_1,\ldots,a_d)\rmd (a_1,\ldots,a_d,b)$, where $\rho$ is the multivariate Lévy density that corresponds to the choice of Lévy copula and marginal Lévy processes. Then, the Lévy measure of $\mu$ is given as the image measure of the map $(a_1,\ldots,a_d,b)\mapsto \mu_{a_1\delta_b,\ldots,a_d\delta_b} $ under $\Tilde{\nu}$. Thus, conditionally on $\mu$, each component of $\bmX$ has the same set of atoms, but the probabilities associated with these atoms are different. 

A similar argument to Example \ref{excondhitscenntr} shows that the conditional hitting scenarios of \cite{epifanilijoi2010,rivapalacioleisen2018} can also be simply read off from the realizations of $\ubmX_n$ by partitioning $d\times n$ by the ties in the margins of the data.
\end{ex}


\section{Proofs of the results from the supplementary material}
\label{appproofssupp}

\subsection*{Proof of Lemma \ref{lemderivrepdens}}
\begin{proof}
We start by proving the first statement.
    Note that for a deterministic discrete measure $\eta$ we have that $E_{\eta,\kappa}:\ x\mapsto \exp\lc - \int_0^x \int_0^\infty \kappa(s,y)\eta(\rmd y)\rmd s \rc$ is not differentiable everywhere. But as it contains a Lebesgue integral it is Lebesgue-almost-surely differentiable with Lebesgue-almost-sure derivative  
    $$ -e_{\eta,\kappa}(x):=-\int_0^\infty\kappa(x,y)\eta(\rmd y)  \exp\lc - \int_0^{x}\int_0^\infty \kappa(s,y)\eta(\rmd y) \rmd s\rc.$$
    Unfortunately, in general this is not enough to apply Leibniz' rule to interchange differentiation and expectation and we have to argue more rigorously. First, observe that for every $\eta$ the function $E_{\eta,\kappa}$ is absolutely continuous as a composition of the Lipschitz function $\exp(-\cdot)\id_{\{\cdot\geq 0\}}$ and the absolute continuous function $\int_0^x \int_0^\infty \kappa(s,y)\eta(\rmd y)\rmd s$. Thus, the only candidate for its density is $-e_{\eta,\kappa}(x)$ and therefore we have for all $x\in[0,\infty]$
    $$ 1 -\int_0^x e_{\eta,\kappa}(z) \rmd z=E_{\eta,\kappa}(x). $$
    W.l.o.g.\ assuming that $J_1=\{1,\ldots, \vert J_1\vert\}$ we can rewrite
    \begin{align*}
    &\lc \prod_{j=1}^{\vert J_1\vert} -\frac{\partial }{\partial x_j} \rc L_{\kappa,\gamma(\rmd a,\rmd b)}(\bmx_{J_1},\bmx_{J_2}) \\
    &=\lc \prod_{j=1}^{\vert J_1\vert}  -\frac{\partial }{\partial x_j} \rc \int_\Omega \prod_{j=1}^{\vert J_1\vert} \lc 1-\int_0^{x_j} e_{\eta,\kappa}(z_j) \rmd  z_j \rc \prod_{j\in J_2} E_{\eta,\kappa}(x_j) \crm_{\gamma(\rmd a,\rmd b)}(\rmd\eta) \\
    &=\lc \prod_{j=1}^{\vert J_1\vert} -\frac{\partial }{\partial x_j} \rc  \Bigg( \int_\Omega \prod_{j=2}^{\vert J_1\vert}  \lc 1-\int_0^{x_j}  e_{\eta,\kappa}(z_j) \rmd  z_j \rc\prod_{j\in J_2}  E_{\eta,\kappa}(x_j) \crm_{\gamma(\rmd a,\rmd b)}(\rmd\eta)\\
    &-\int_0^{x_1}  \int_\Omega e_{\eta,\kappa}(z_1)\prod_{j=2}^{\vert J_1\vert} \lc 1-\int_0^{x_j}  e_{\eta,\kappa}(z_j) \rmd  z_j \rc \prod_{j\in J_2} E_{\eta,\kappa}(x_j) \crm_{\gamma(\rmd a,\rmd b)}(\rmd\eta)\rmd z_1 \Bigg) , 
    \end{align*}
    where we could apply Fubini due to $e_{\eta,\kappa}\geq 0 $ and $ 1-\int_0^{x_j}  e_{\eta,\kappa}(z_j) \rmd  z_j \geq 0$. By the Lebesgue differentiation theorem the resulting function is now once Lebesgue-almost-surely differentiable w.r.t.\ $x_1$ and we obtain
    \begin{align*}
      &\frac{-\partial}{\partial x_1} \int_0^{x_1}  \int_\Omega e_{\eta,\kappa}(z_1)\prod_{j=2}^{\vert J_1\vert} \lc 1-\int_0^{x_j}  e_{\eta,\kappa}(z_j) \rmd  z_j \rc \prod_{j\in J_2} E_{\eta,\kappa}(x_j) \crm_{\gamma(\rmd a,\rmd b)}(\rmd\eta)\rmd z_1  \\
      &=  \int_\Omega e_{\eta,\kappa}(x_1) \prod_{j=2}^{\vert J_1\vert}  \lc 1-\int_0^{x_j}  e_{\eta,\kappa}(z_j) \rmd  z_j \rc  \prod_{j\in J_2} E_{\eta,\kappa}(x_j) \crm_{\gamma(\rmd a,\rmd b)}(\rmd\eta) 
    \end{align*} 
    as the $x_1$-Lebesgue-almost sure representation of the derivative. Now, we argue iteratively. Fixing an $x_1$ such that the representation holds we repeat the argument to obtain
    \begin{align*}
        &\lc \prod_{j=1}^{\vert J_1\vert} -\frac{\partial }{\partial x_j} \rc L_{\kappa,l(a,b)\rmd a\rmd b}(\bmx_{J_1},\bmx_{J_2}) =  \int_\Omega \prod_{j=1}^{\vert J_1\vert} e_{\eta,\kappa}(z_j)   \prod_{j\in J_2} E_{\eta,\kappa}(x_j) \crm_{\gamma(\rmd a,\rmd b)}(\rmd\eta)
    \end{align*}
    $(x_1,\ldots,x_m)$-Lebesgue-almost everywhere, where we use that a Cartesian product of sets has Lebesgue measure $0$ as soon as one of the sets in the product has Lebesgue measure $0$.

    Let us turn to the second statement. Using the same arguments as above and using Proposition \ref{proptrafos}, we get
    \begin{align*}
        & \lc \prod_{j=1}^{\vert J_1\vert} \frac{\partial }{\partial x_j} \rc  -\log\lc L_{\kappa_1,  \gamma(\rmd a,\rmd b)} \lc \bmx_{J_1},\bmx_{J_2}\rc \rc \\
       &= \lc \prod_{j=1}^{\vert J_1\vert} \frac{\partial }{\partial x_j} \rc   \int_0^\infty \int_0^\infty \lc 1- \prod_{j=1}^{\vert J_1\vert} \lc 1 -\int_0^{x_j} e_{a\delta_b,\kappa}(z_j)\rmd z_j\rc   \prod_{j\in J_2} E_{a\delta_b,\kappa}(x_j)  \rc \gamma(\rmd a,\rmd b) \\
        &=\lc \prod_{j=1}^{\vert J_1\vert} \frac{\partial }{\partial x_j} \rc   \int_0^\infty \int_0^\infty \int_0^{x_1}  e_{a\delta_b,\kappa}(z_1)  \prod_{j=2}^{\vert J_1\vert} \lc 1-\int_0^{x_j} e_{a\delta_b,\kappa}(z_j)\rmd z_j \rc   \prod_{j\in J_2}E_{a\delta_b,\kappa}(x_j) \rmd z_1 \gamma(\rmd a,\rmd b) \\
        &+\lc \prod_{j=1}^{\vert J_1\vert} \frac{\partial }{\partial x_j} \rc \int_0^\infty \int_0^\infty \lc 1- \prod_{j=2}^{\vert J_1\vert} \lc 1-\int_0^{x_j} e_{a\delta_b,\kappa}(z_j)\rmd z_j\rc  \prod_{j\in J_2}E_{a\delta_b,\kappa}(x_j) \rc \gamma(\rmd a,\rmd b) ,
    \end{align*}
    where we could split the integral into two integrals because $\vert e_{a\delta_b,\kappa}(z)\vert \leq C_1\min\{a,C_2\}$ for all $z_1$ in a neighborhood of $x_1$ for some constants $C_1,C_2>0$ and $\lim_{a\to 0} (1-e_{a\delta_b,\kappa}(z)) =1$ by our assumptions on $\kappa$. This allows to apply Fubini and the Lebesgue differentiation theorem to obtain the partial derivative w.r.t.\ $x_1$ given by
    \begin{align*}
        \lc \prod_{j=2}^{\vert J_1\vert}\frac{\partial}{\partial x_{j}} \rc   \int_0^\infty \int_0^\infty  e_{a\delta_b,\kappa}(x_1)  \prod_{j=2}^{\vert J_1\vert} \lc 1-\int_0^{x_j} e_{a\delta_b,\kappa}(z_j)\rmd z_j \rc    \prod_{j\in J_2}E_{a\delta_b,\kappa}(x_j)   \gamma(\rmd a,\rmd b)
    \end{align*}
    $x_1$-Lebesgue-almost everywhere. Now, we can iterate the argument above to obtain that
    \begin{align}
        &\lc \prod_{j\in J_1}\frac{\partial}{\partial x_{j}} \rc  -\log\lc L_{\kappa_1,  \gamma(\rmd a,\rmd b)} \lc x_{1},\ldots ,x_{m})\rc \rc \nonumber\\
        &= (-1)^{\vert J_1\vert+1}    \int_0^\infty \int_0^\infty  \prod_{j=1}^{\vert J_1\vert} e_{a\delta_b,\kappa}(x_j)    \prod_{j\in J_2}E_{a\delta_b,\kappa}(x_j)   \gamma(\rmd a,\rmd b) \label{eqndiff}
    \end{align}
    $(x_1,\ldots,x_m)$-Lebesgue-almost everywhere.

    To conclude, we want to mention that for $x\in[0,\infty]$ we can have $E_{\eta,\kappa}(x)=0$ with positive probability. In this case the integral vanishes as well as all its partial derivatives. Thus, we can ignore this case in the calculations above and only focus on those cases where $E_{\eta,\kappa}(x)>0$ almost surely. Moreover, the formulas are also correct when $\int_0^\infty \int_0^\infty \kappa_1(s,y)\rmd s  \eta(\rmd y)=\infty$ with positive probability or $\int_0^\infty \kappa_1(s,b)\rmd s=\infty$ on a set of non-zero Lebesgue measure. 
\end{proof}

\subsection*{Proof of Lemma \ref{lemconddensextrseqpartexch}}
\begin{proof}
    Take $n$ large enough s.t. $IJ\subset n\times d$. 
By Lemma \ref{lemconddistextrfctregcond} and \ref{lemsurvivmodel}, conditionally on the conditional hitting scenario $\Tilde{\Theta}$ and $\bmXIJ$, the survival function of $\lc\bmZ^{(n,l)}_{j}\rc_{1\leq  j\leq k}$ takes the form
\begin{align*}
    &P\lc  \bm Z^{(n,l)}_{j} >\bmx_{j} \text{ for all }1\leq j\leq k \mid \Tilde{\Theta},\bmXIJ\rc  \\
    &= \frac{  \int_{M_d^0} \underset{1\leq  j\leq k}{\otimes} \minid(\eta) \lc  \big \{\ubmy_k\mid \ubmy_k>\ubmx_{k} \big\}  \rc  h(\eta,\bmXIJ,\Tilde{\Theta}_l) \nu(\rmd\eta)  }{C(\bmXIJ,\Tilde{\Theta}_l) },  
\end{align*}
where 
$$h(\eta,\bmXIJ,\Tilde{\Theta}_l) :=  \int_{\{ \ubmy_{IJ\setminus \Tilde{\Theta}_l} >\ubmX_{IJ\setminus\Theta_l}\} } \prod_{j=1}^n f_\eta\lc (\bmXIJ,\bmy_{IJ})(\Tilde{\Theta}_l)\rc \lc  \Lambda^{(\infty)}_{nd-\vert \Tilde{\Theta}_l\vert} \rc \lc\rmd \ubmy_{n\setminus \Tilde{\Theta}_l}\rc,$$
with $(\bmXIJ,\bmy_{IJ})(\Tilde{\Theta}_l):=\lc \id_{\{ (i,j)\in \Tilde{\Theta}_{l} \}}X_{i,j}+ y_{i,j}\id_{\{(i,j)\not \in \Tilde{\Theta}_{l}\}}\rc_{(i,j)\in IJ}$
and
$$ C(\bmXIJ,\Tilde{\Theta}_l):= \int_{M_d^0} h(\eta,\bmXIJ,\Tilde{\Theta}_l) \nu(\rmd\eta).$$

Note that $h(\eta,\bmXIJ,\Tilde{\Theta}_l)$ can take two forms given by
\begin{align*}
    &h(\otimes_{j=1}^{i-1} \delta_\infty \times (a\delta_b)^{(\kappa_i)}\otimes_{j=i+1}^d \delta_\infty,\bmXIJ,\Tilde{\Theta}_l) \\
    &= \id_{\{  \Tilde{\Theta}_l \subset i\times n\}} \prod_{(i,j)\in \Tilde{\Theta}_l} e_{a\delta_b,\kappa_i}(X_{i,j})   \prod_{j:(i,j)\in IJ\setminus\Tilde{\Theta}_l} E_{a\delta_b,\kappa_i}(X_{i,j})
\end{align*} 
 and
 \begin{align*}
 h(\mu_{\eta^{(\kappa_1)}_1,\ldots,\eta^{(\kappa_d)}_d},\bmXIJ,\Tilde{\Theta}_l) = \prod_{(i,j)\in \Tilde{\Theta}_{l}}    e_{\eta_i,\kappa_i} (X_{i,j}) \prod_{(i,j)\in IJ\setminus\Tilde{\Theta}_l}    E_{\eta_i,\kappa_i}(X_{i,j}). 
 \end{align*}  
 Thus, noticing that
\begin{align*}
     &\int_{M_d^0} \underset{1\leq  j\leq k}{\otimes} \minid(\eta) \lc  \big \{\ubmy_k\mid \ubmy_k>\ubmx_{k} \big\}  \rc  h(\eta,\bmXIJ,\Tilde{\Theta}_l) \nu(\rmd\eta) \\
&=  \int_{\{\ubmy_k\mid \ubmy_k>\ubmx_{k} \big\}} \int_{M_d^0} \prod_{j=1}^k f_\eta(\bmy_j) h(\eta,\bmXIJ,\Tilde{\Theta}_l) \nu(\rmd\eta) \Lambda^{(\infty)}_{kd}(\rmd \ubmy_k)
\end{align*}
and plugging in $f_\eta$ and $h(\eta,\bmXIJ,\Tilde{\Theta}_l) $ yields the result via a simple calculation of the integral.
\end{proof}

\subsection*{Proof of Lemma \ref{lemclosedformexamle}}
\begin{proof}
    Proposition \ref{proptrafos} implies
    \begin{align*}
     L_{\kappa,\rho(a)\rmd a q(b)\rmd b}(\ubmx_m)&= \exp\lc - \int_0^\infty \int_0^\infty \lc  1-\exp\lc - a\sum_{j=1}^m \int_0^{x_j} \kappa(s,b)\rmd s \rc \rc \rho(a) q(b)\rmd a\rmd b \rc  \\
     &=\exp\lc - \sum_{j=1}^m\int_{x_{j-1}}^{x_j} \int_0^\infty \lc  1-\exp \lc - a\tau_1\sum_{k=j}^m (x_k-b) \rc \rc \rho(a) q(b)\rmd a\rmd b \rc \\
     &=\exp\lc - \sum_{j=1}^m\int_{x_{j-1}}^{x_j}  \frac{\lc \tau_1\sum_{k=j}^m (x_k-b) +1\rc^\sigma -1}{\sigma}  q(b)\rmd b \rc
\end{align*}
where we have used that for $\sigma\in(0,1)$ the following identity holds
$$ \int_0^\infty (1-\exp(-\lambda x))x^{-1-\sigma}\exp\lc-\beta x \rc\rmd x =\Gamma(1-\sigma)\frac{(\beta+\lambda)^\sigma -\beta^\sigma}{\sigma}. $$
\end{proof}

\subsection*{Proof of Lemma \ref{lempostminiduniv}}
\begin{lem}
\label{lempostminid}
Let $IJ\subset d\times\N$ and $\tilde{IJ}\subset d\times \N$ denote arbitrary index sets. Then, conditionally on $\bmX_{IJ}$, the exponent measure of $\bmY^{(n)}_{\Tilde{IJ}}$ is defined via 
\begin{align*}
 &\bar{\lambda}^{(IJ)}_n\lc  (\bmx_{\tilde{IJ}},\binfty]^\complement \rc\\
 &= \sumd \Bigg( \int_0^\infty  \frac{\lc \tau_1\sum_{j:(i,j)\in\Tilde{IJ}} (x_j-b)_+ +\tau_1\sum_{j:(i,j)\in IJ} (X_{i,j}-b)_+ +1 \rc^\sigma -1  }{\sigma} \alpha_0^{(\kappa_0)}\rmd b  \\
 & -\int_0^\infty  \frac{  \lc\tau_1\sum_{j:(i,j)\in IJ} (X_{i,j}-b)_+ +1\rc^\sigma-1}{\sigma} \alpha_0^{(\kappa_0)}\rmd b \Bigg) \\
 &+ \int_0^\infty \int_0^\infty \prodd \zeta_i\lc (x_{i,j})_{j:(i,j)\in \tilde{IJ}} ,(X_{i,j})_{j:(i,j)\in IJ} \rc   l(a_0,b_0) \rmd a_0\rmd b_0 ,
\end{align*}
where 
\begin{align*}
 &\zeta_i\lc (x_{i,j})_{j:(i,j)\in \tilde{IJ}} ,(X_{i,j})_{j:(i,j)\in IJ} \rc\\
 &= \id_{\{\tilde{IJ}\cap i\times \N\neq \emptyset\}} \bigg(  L_{\rho(a)\rmd a (a_0\delta_{b_0})^{(\kappa_0)}(\rmd b)}\lc (X_{i,j})_{j:(i,j)\in IJ} \rc \\
 &\hspace{2cm} -L_{\rho(a)\rmd a (a_0\delta_{b_0})^{(\kappa_0)}(\rmd b)}\lc (x_{i,j})_{j:(i,j)\in \tilde{IJ}} ,(X_{i,j})_{j:(i,j)\in IJ} \rc \bigg)\\
 &\ \ +\id_{\{\tilde{IJ}\cap i\times \N= \emptyset\}}   L_{\rho(a)\rmd a (a_0\delta_{b_0})^{(\kappa_0)}(\rmd b)}\lc (X_{i,j})_{j:(i,j)\in IJ} \rc 
\end{align*}
and we define $L_{\rho(a)\rmd a (a_0\delta_{b_0})^{(\kappa_0)}(\rmd b)}\lc \emptyset \rc:=1$.
\end{lem}

\begin{proof}[Proof of Lemma \ref{lempostminid}]
Define 
\begin{align*}
 &\zeta_i\lc (x_{i,j})_{j:(i,j)\in \tilde{IJ}} ,(X_{i,j})_{j:(i,j)\in IJ} \rc\\
 &:= \id_{\{\tilde{IJ}\cap i\times \N\neq \emptyset\}} \bigg(  L_{\rho(a)\rmd a (a_0\delta_{b_0})^{(\kappa_0)}(\rmd b)}\lc (X_{i,j})_{j:(i,j)\in IJ} \rc \\
 &\hspace{2cm} -L_{\rho(a)\rmd a (a_0\delta_{b_0})^{(\kappa_0)}(\rmd b)}\lc (x_{i,j})_{j:(i,j)\in \tilde{IJ}} ,(X_{i,j})_{j:(i,j)\in IJ} \rc \bigg)\\
 &\ \ +\id_{\{\tilde{IJ}\cap i\times \N= \emptyset\}}   L_{\rho(a)\rmd a (a_0\delta_{b_0})^{(\kappa_0)}(\rmd b)}\lc (X_{i,j})_{j:(i,j)\in IJ} \rc 
\end{align*}
and $L_{\rho(a)\rmd a (a_0\delta_{b_0})^{(\kappa_0)}(\rmd b)}\lc \emptyset \rc:=1$.
The exponent measure of $\bmY^{(n)}_{\Tilde{IJ}}$ is uniquely defined by expressions of the form 
\begin{align*}
 &\bar{\lambda}^{(IJ)}_n\lc  (\bmx_{\tilde{IJ}},\binfty]^\complement \rc\\
 &\overset{(\ref{defpostexpmeasure})}{=} \sumd \int_0^\infty \int_0^\infty \lc 1- \id_{\{\tilde{IJ}\cap i\times \N\neq \emptyset\}}\prod_{j:(i,j)\in \tilde{IJ}}E_{a\delta_b,\kappa}(x_{i,j})\rc \prod_{j:(i,j)\in IJ}E_{a\delta_b,\kappa}(X_{i,j})  \rho(a) \rmd a\alpha_0^{(\kappa_0)}\rmd b \\
 &+ \int_0^\infty \int_0^\infty \prod_{ 1\leq i\leq d} \e_{\eta_i}\lk \lc 1-\id_{\{\tilde{IJ}\cap i\times \N\neq \emptyset\}}\prod_{j:(i,j)\in \tilde{IJ}}E_{\eta_i,\kappa}(x_{i,j})\rc \prod_{j:(i,j)\in IJ}E_{\eta_i,\kappa}(X_{i,j})  \rk l(a_0,b_0) \rmd a_0\rmd b_0 \\
 &=\sumd \int_0^\infty \int_0^\infty \lc 1-\id_{\{\tilde{IJ}\cap i\times \N\neq \emptyset\}} \prod_{j:(i,j)\in \tilde{IJ}}E_{a\delta_b,\kappa}(x_{i,j})\rc  \exp\lc -a\tau_1\sum_{j:(i,j)\in IJ}(X_{i,j}-b)_+\rc   \\
 &a^{-1-\sigma}\exp(-a)\Gamma(1-\sigma)^{-1} \rmd a\alpha_0^{(\kappa_0)}\rmd b \\
 &+ \int_0^\infty \int_0^\infty \prodd \zeta_i \lc (x_{i,j})_{j:j\in \tilde{IJ}} ,(X_{i,j})_{j:j\in IJ} \rc  l(a_0,b_0) \rmd a_0\rmd b_0\\
 &\overset{\star}{=}  \sumd  \int_0^\infty  \frac{\lc \tau_1\sum_{j:j\in \Tilde{IJ}} (x_j-b)_+ +\tau_1\sum_{j:(i,j)\in IJ} (X_{i,j}-b)_+ +1 \rc^\sigma -1  }{\sigma} \alpha_0^{(\kappa_0)}\rmd b  \\
 & -\int_0^\infty  \frac{  \lc\tau_1\sum_{j:(i,j)\in IJ} (X_{i,j}-b)_+ +1\rc^\sigma-1}{\sigma} \alpha_0^{(\kappa_0)}\rmd b  \\
 &+ \int_0^\infty \int_0^\infty \prodd \zeta_i \lc (x_{i,j})_{j:j\in \tilde{IJ}} ,(X_{i,j})_{j:j\in IJ}  \rc l(a_0,b_0) \rmd a_0\rmd b_0,
\end{align*}
where in $\star$ we have again used that
$$ \int_0^\infty (1-\exp(-\lambda x))x^{-1-\sigma}\exp\lc-\beta x \rc\rmd x =\Gamma(1-\sigma)\frac{(\beta+\lambda)^\sigma -\beta^\sigma}{\sigma}. $$

\end{proof}

\begin{proof}[Proof of Lemma \ref{lempostminiduniv}]
Follow directly from Lemma \ref{lempostminid} above.

\end{proof}

\subsection*{Proof of Proposition \ref{propassociation}}

\begin{proof}
    Since max-id random vectors are associated by \cite[Proposition 5.29]{resnickextreme1987} and association is preserved under multiplication with $-1$ we immediately obtain that the min-id random vector $\ubmX_n$ is associated, conditionally on $\mu$ and unconditionally.
    
    To prove the remaining claim, recall that according to \cite[Proposition 5.30]{resnickextreme1987} PRMs $N$ on a complete and separable metric space $\mathbb{Y}$ are associated, which is equivalent to the fact that for every collection of continuous functionals with compact support $\lc g_i\rc_{1\leq i\leq m}$ we have $\lc \int_{\mathbb{Y}} g_1(y) N(\rmd y),\ldots,\int_{\mathbb{Y}} g_m(y) N(\rmd y)\rc$ is an associated random vector. First, we reduce the problem to finite random measures on compact sets, since this will allow us to later invoke simpler arguments about continuous functionals on finite measures on compact sets. Thus, we w.l.o.g.\ consider finite IDRMs on $E_K:=E_d\cap K$, where $K$ denotes the union of the compact supports of the $\lc f_i\rc_{1\leq i\leq m}$. Note that $E_K$ is a complete and separable compact metric space and that $\mu(\cdot\cap K)$ can be viewed as a finite IDRM on $E_K$. Therefore, it is enough to focus on finite random measure on $K$, denoted as $M_d(E_K):=\{\eta(\cdot\cap K) \mid \eta\in M_d \}$, which is again a separable and complete metric space when equipped with the vague topology. Moreover, as $\eta\mapsto \int_{E_K} f_i(\bmx)\eta(\rmd x)$ is a continuous functional w.r.t.\ the vague topology on $M_d(E_K)$, see \cite[p. 111]{kallenberg2017}, and we have that 
    \begin{align*}
        \int_{E_K} f_i(\bmx)\mu(\rmd\bmx)&=\int_{E_K} f_i(\bmx)\alpha(\rmd x)+\int_{M_d(E_K)}\int_{E_K} f_i(\bmx)\eta(\rmd x)N(\rmd\eta) \\
        &=\tilde{f}_i(\alpha)+ \int_{M_d(E_K)}\Tilde{f}_i(\eta)N(\rmd\eta),
    \end{align*}
    where $\tilde{f}_i(\eta):=\int_{E_K} f_i(\bmx)\eta(\rmd\bmx)$ and $N$ denotes the PRM on $M_d(E_K)$ such that $\mu(\cdot \cap K) =\alpha(\cdot\cap K)+\int_{M^0_d} \eta(\cdot \cap E_K) N(\rmd\eta)$.
    Thus, we can write $\int_{E_K} f_i(\bmx)\mu(\rmd\bmx)$ as an integral of the PRM $N$ over a non-negative continuous functional $\Tilde{f}_i(\eta)=\int_{E_K} f_i(\bmx)\eta(\rmd x)$ on $M_d(E_K)$ and we can ignore the influence of $\tilde{f}_i(\alpha)$ when considering association, since association of a random vector is preserved under translation.   
    
    The remaining issue is that $\Tilde{f}_i$ does not necessarily have compact support on $M_d(E_K)$, which is why have to introduce another approximation. Since $K$ is compact the set $M_a:=\{ \mu \mid \mu\lc K\rc< a\}$ is an open set in the vague topology on $M_d(E_K)$, see \cite[Exercise 3.4.11]{resnickextreme1987}, and we have
    $$ \int_{M_d(E_K)}\Tilde{f}_i(\eta) N(\rmd\eta)=\lim_{a\to\infty}  \int_{M_d(E_K)}\Tilde{f}_i(\eta) \id_{\{ \eta\in M_a\}}N(\rmd\eta)$$
    by monotone convergence, using that $\mu(K)$ is finite and that $N$ is $\sigma$-finite. Moreover, $M_a$ is relatively compact w.r.t.\ the vague topology on $M_d(E_K)$, since every continuous function $h$ on $K$ assumes a finite maximum $m_h$ on the compact set $K$. Thus $\sup_{\eta\in M_a} h(\eta)<am_h$ for every continuous function $h$ in $K$ and by \cite[Proposition 3.16]{resnickextreme1987} $M_a$ is relatively compact. Moreover, \cite[Proposition 3.11]{resnickextreme1987} shows that $\id_{\{ \eta\in M_a\}}$ can be approximated from below by a sequence of monotone increasing continuous functions $h_{n,a}\in[0,1]$ such that $\lim_{n\to\infty} h_{n,a}(\eta)=\id_{\{\eta\in M_a\}}$. Thus, by monotone convergence, we have  
    $$ \int_{M_d(E_K)}\Tilde{f}_i(\eta) N(\rmd\eta)=\lim_{a\to\infty} \lim_{n\to\infty}  \int_{M_d(E_K)}\Tilde{f}_i(\eta) h_{n,a}(\eta) N(\rmd\eta).$$
    Now, the function $\Tilde{f}_i(\eta) h_{n,a}(\eta)$ is continuous with support contained in the relatively compact set $M_a$, thus has compact support on $M_d(E_K)$. Therefore, by \cite[Proposition 5.30 iii]{resnickextreme1987}, 
    $$ G_{n,a}:=\lc \int_{M_d(E_K)} \Tilde{f}_i(\eta)h_a(\eta) N(\rmd\eta) \rc_{1\leq i\leq m}  $$
    is associated for every $(n,a)$. Since association is preserved under weak convergence by \cite[Lemma 5.32 iii]{resnickextreme1987}, we obtain 
    $$ G:=\lc \int_{E^\prime_d}f_i(\bmx)\mu(\rmd\bmx)\rc_{1\leq i\leq m}=\lc \tilde{f}_i(\alpha)+\int_{M_d(E_K)} \Tilde{f}_i(\eta) N(\rmd\eta) \rc_{1\leq i\leq m}  $$
    is associated if we can show that $G$ can be expressed as a weak limit of a sequence of associated random vectors. First, observe that 
    $$\lim_{n\to\infty} G_{n,a}=\lc  \tilde{f}_i(\alpha)+\int_{M_d(E_K)} \Tilde{f}_i(\eta)\id_{\{\eta\in M_a\}} N(\rmd\eta) \rc_{1\leq i\leq m} =:G_{\infty,a}.$$
    Thus, an application of the dominated convergence theorem implies that the limit of Laplace transform of $\lim_{n\to\infty} G_{n,a}$ is the Laplace transform of $G_{\infty,a}$. Thus, $G_{\infty,a}$ is the weak limit of $G_{n,a}$ and $G_{\infty,a}$ is associated. Next, since $G=\lim_{a\to\infty}G_{\infty,a}$, another application of the dominated convergence theorem implies that the Laplace transform of $G$ is the limit of the Laplace transform of $\lim_{a\to\infty} G_{\infty,a}$. Thus, $G_{a}$ is the weak limit of $G_{\infty,a}$ and $G$ is associated, which proves the claim.
\end{proof}

\subsection*{Proof of Proposition \ref{propdistrofmeanfct}}

\begin{proof}
   Recall that every min-id random variable $X\sim\minid(\mu)$ has the stochastic representation $ X\sim \min_{i\in\N} x_i$ for some PRM $N=\sum_{i\in\N} \delta_{x_i}$ with intensity $\mu$. Thus, $I(f,\mu)$ has the stochastic representation 
    \begin{align*}
        I(f,\mu)&=\e_{X\sim \minid(\mu)}\lk f(X)\rk \sim \e_{N\sim PRM(\mu)}\lk f\lc \min_{i\in\N} x_i\rc\rk\\
        &=\e_{N\sim PRM(\mu)}\lk \min_{i\in\N} f\lc x_i\rc\rk=\e_{N\sim PRM(\mu_f)}\lk \min_{i\in\N}  x_i\rk \\
        &\sim \e_{X\sim \minid(\mu_f)}\lk X\rk = I( \text{id},\mu_f),   
    \end{align*} 
    where we have used that $f$ is monotone increasing and $\mu_f(\cdot)=\mu \lc \{  x\in(-\infty,\infty] \mid f(x)\in \cdot \}\rc$ denotes the IDEM obtained from the image measure of $\mu$ under the transformation $f$. Further, due to Proposition \ref{proptrafos} the Lévy characteristics of $\mu_f$ are $(\alpha_f,\nu_f)$.
    
    Next, let $\eta$ denote an exponent measure on $[0,\infty]$. Then,
   \begin{align*}
            I(\text{id},\eta)&=\int_0^\infty x \minid(\eta)(\rmd x) =\e_{ X\sim\minid(\eta)}\lk  X \rk =\int_0^\infty \exp\Big( -\eta \big( (-\infty, t] \big) \Big) \rmd t,  
        \end{align*}
        where we have used that $\e \lk X\rk=\int_0^\infty 1-P\lc X\leq t \rc \rmd t$ for $X\in[0,\infty]$. Therefore, letting $m\in\N$ we get 
        \begin{align*}
            \e\lk  I(f,\mu)^m \rk&=\e\lk \lc I\lc\text{id},\mu_f\rc \rc^m\rk=\e\bigg[ \bigg(  \int_0^\infty \exp\Big( -\mu_f \big( (-\infty, t] \big) \Big) \rmd t \bigg)^m\bigg] \\
            &=  \e\lk \int_0^\infty \ldots \int_0^\infty  \exp\lc-\sum_{i=1}^m \mu_f \big( (-\infty, t_i]\big)\rc  \rmd t_1\ldots\rmd t_m \rk\\
            &=\int_0^\infty \ldots \int_0^\infty \e\lk \exp\lc -\sum_{i=1}^m \mu_f \big( (-\infty, t_i]\big)\rc \rk  \rmd t_1\ldots\rmd t_m \\
            &=\int_0^\infty \ldots \int_0^\infty L_{\alpha_f,\nu_f}  \big(\bm 1,(t_1,\ldots,t_m)\big)\rmd t_1\ldots\rmd t_m.
        \end{align*}    
\end{proof}

\subsection*{Proof of Proposition \ref{propconddistrmNTR}}
\begin{proof}
    Note that, even though for the posterior distribution we only need to consider $\bmx_{IJ}\in\R^{\vert IJ\vert}$, we must allows for $\bmx_{IJ}\in E^\prime_{\vert IJ \vert}$ to verify that $K_{IJ}(\bmx_{IJ},\cdot)$ is the claimed conditional distribution of $\lambda^{(IJ)}$. Denote $J(IJ)=\{j\mid (i,j)\in IJ\text{ for some }\leqd\}$ and for some $J\subset J(IJ)$ denote $J^\complement=J(IJ)\setminus J$. Define the set
    \begin{align*}
       H(J)&:= \{ \bmy_{IJ}\in E_{\vert IJ\vert}^\prime \mid \forall \leqd :  \infty>y_{i,j}=y_{i,\underline{j}} \text{ for all } (i,j),(i,\underline{j})\in IJ \text{ with } j,\underline{j}\in J \\
    &\hspace{1cm}\text{ and }y_{i,j}=\infty \ \text{ for all } (i,j)\in IJ \text{ with } j\in J^\complement  \} ,  
    \end{align*} 
    which encodes that $\bmy_{IJ\cap (d\times J)}$ is finite with identical values in its $i$-margins and that $\bmy_{IJ\cap (d\times J^\complement)}=\binfty$.
    It is easy to see from (\ref{expmeasuremNTR}) that the exponent measure $\lambda^{(IJ)}$ concentrates on $\cup_{J\subset J(IJ)} H(J)$ and that $ H(J)\cap H(J^\prime)=\emptyset$ whenever $J\neq J^\prime$. 

    We need to verify (\ref{defconddistexpmeasure}) for the claimed representation of $K_{IJ}(\bmxij,\cdot)$. 
    Let $\tilde{\ubmx}:=\ubmx_{\setminus IJ}$ and consider an arbitrary $\mathsf{G}:(-\infty,\infty]^{\vert IJ\vert}\times(-\infty,\infty]^{d\times\N\setminus  IJ} \to [0,\infty)$  that vanishes on $\{\binfty\}\times (-\infty,\infty]^{d\times\N\setminus  IJ} $. Using that $J=\emptyset$ is not relevant since $\mathsf{G}\big(\binfty, \tilde{\ubmx}\big)=0$ we observe
    \begin{align*}
        &\int_{E_{\vert IJ\vert}^\prime}\int_{(-\infty,\infty]^{d\times \N\setminus IJ}} \mathsf{G}(\bmx_{IJ},\Tilde{\ubmx}) K_{IJ}(\bmx_{IJ},\rmd \Tilde{\ubmx}) \lambda^{(IJ)}(\rmd \bmx_{IJ}) \\
        &=\sum_{\emptyset\neq J\subset J(IJ)} \int_{H(J)}  \int_{(-\infty,\infty]^{d\times \N\setminus IJ}} \mathsf{G}(\bmx_{IJ},\Tilde{\ubmx}) K_{IJ}(\bmx_{IJ},\rmd \Tilde{\ubmx}) \lambda^{(IJ)}(\rmd \bmx_{IJ})\\
        &=\sum_{\emptyset\neq J\subset J(IJ)} \int_{(0,\infty)\times (-\infty,\infty)^d} \int_{H(J)}  \int_{(-\infty,\infty]^{d\times \N\setminus IJ}} \mathsf{G}(\bmx_{IJ},\Tilde{\ubmx}) K_{IJ}(\bmx_{IJ},\rmd \Tilde{\ubmx})\\
        &\hspace{1cm}\lc\otimes_{j\in \N}\minid\lc 
        a\delta_{\bmb} \rc\rc^{(IJ)} (\rmd \bmx_{IJ})  \rho(a,\bmb)\rmd (a,\bmb)\\
        &=\sum_{\emptyset\neq J\subset J(IJ)} \int_{(0,\infty)\times (-\infty,\infty)^d} \int_{\{ \bmy_{IJ}=\bmb_{IJ}(J)\}}  \int_{(-\infty,\infty]^{d\times \N\setminus IJ}} \mathsf{G}(\bmb_{IJ}(J),\Tilde{\ubmx}) K_{IJ}(\bmb_{IJ}(J),\rmd \Tilde{\ubmx}) \\
        &\hspace{1cm}\lc\otimes_{j\in \N}\minid\lc 
        a\delta_{\bmb} \rc\rc^{(IJ)} ( \rmd \bmx_{IJ})  \rho(a,\bmb)\rmd (a,\bmb)\\
        &=  \sum_{\emptyset\neq J\subset J(IJ)} \int_{(0,\infty)\times (-\infty,\infty)^d}  \int_{(-\infty,\infty]^{d\times \N\setminus IJ}} \mathsf{G}(\bmb_{IJ}(J),\Tilde{\ubmx}) K_{IJ}(\bmb_{IJ}(J),\rmd \Tilde{\ubmx}) \\
        &\lc\otimes_{j\in \N}\minid\lc  a\delta_{\bmb} \rc\rc^{(IJ)} \lc \{ \bmy_{IJ}=\bmb_{IJ}(J)\}\rc  \rho(a,\bmb)\rmd (a,\bmb) =:\#_1
        \end{align*}
        where $\bmb_{IJ}(J):=\lc b_i\id_{\{ j\in J\}}+ \infty \id_{\{j\in J^\complement\}}\rc_{(i,j)\in IJ}$ and 
        \begin{align}
            \lc\otimes_{j\in \N}\minid\lc  a\delta_{\bmb} \rc\rc^{(IJ)} \label{defminidmarginmeasure}
        \end{align}
        denotes the distribution of the $IJ$-margin of an i.i.d.\ sequence with law $\otimes_{j\in \N}\minid\lc a\delta_{\bmb} \rc$. One should note that those $b_i$ such that 
        $$i\not\in I(IJ,J):=\{i\in\{1,\ldots,d\} \mid (i,j)\in IJ\text{ for some }j\in J\}$$ 
        are integrated out because they do not appear in the function that is integrated. We thus obtain
        \begin{align*}
        &\#_1=\sum_{\emptyset\neq J\subset J(IJ)} \int_{ (-\infty,\infty)^{\vert I(IJ,J)\vert}} \   \int_{(-\infty,\infty]^{d\times \N\setminus IJ}} \mathsf{G}(\bmb_{IJ}(J),\Tilde{\ubmx}) K_{IJ}(\bmb_{IJ}(J),\rmd \Tilde{\ubmx}) \\
        & \int_0^\infty \times_{i\in I(IJ,J)^\complement } \int_{-\infty}^\infty    \lc\otimes_{j\in \N}\minid\lc  a\delta_{\bmb} \rc\rc^{(IJ)} \lc \{ \bmy_{IJ}=\bmb_{IJ}(J)\}\rc   \rho(a,\bmb) \rmd (a,\lc b_i\rc_{i\in I(IJ,J)^\complement}) \\
        & \hspace{1cm}  \rmd \lc \lc b_i\rc_{i\in I(IJ,J)}\rc =: \#_2
        \end{align*}
        It remains to observe that 
        $$  \int_0^\infty \times_{i\in I(IJ,J)^\complement } \int_{-\infty}^\infty   \lc\otimes_{j\in \N}\minid\lc  a\delta_{\bmb} \rc\rc^{(IJ)} \lc \{ \bmy_{IJ}=\bmb_{IJ}(J)\}\rc   \rho(a,\bmb) \rmd (a,\lc b_i\rc_{i\in I(IJ,J)^\complement}) $$
        is exactly equal to the denominator appearing in  $K_{IJ}(\bmb_{IJ}(J),\rmd \Tilde{\ubmx})$ since $I(\bmb_{IJ}(J))=\{i\mid (i,j)\in IJ\text{ for some } j\in J\}=I(IJ,J)$. Thus, the denominator cancels and we get
        \begin{align*}
        \#_2=&\sum_{\emptyset\neq J\subset J(IJ)} \int_{ (-\infty,\infty)^{\vert I(IJ,J)\vert}} \   \int_{(-\infty,\infty]^{d\times \N\setminus IJ}} \mathsf{G}(\bmb_{IJ}(J),\Tilde{\ubmx})  \\
        & \lc \times_{i\in I(\bmb_{IJ}(J))^\complement }\int_0^\infty\rc \int_0^\infty \otimes_{j\in\N} \minid\lc a\delta_{(\bmb_{IJ}(J),\bmb)(I(\bmb_{IJ}(J))}\rc \lc \{ \tilde\ubmy\in \rmd\tilde\ubmx , \bmy_{IJ}=\bmb_{IJ}(J)\}\rc \\
        &\rho(a,(\bmb_{IJ}(J),\bmb)(I(\bmb_{IJ}(J))) \rmd a \rmd \lc (b_i)_{i\in I(\bmb_{IJ}(J))^\complement} \rc\rmd \lc \lc b_i\rc_{i\in  I(\bmb_{IJ}(J))}\rc =:\#_3,
        \end{align*}
        where $\minid\lc a\delta_{(\bmb_{IJ}(J),\bmb)(I(\bmb_{IJ}(J))}\rc \lc \{ \tilde\ubmy\in \rmd\tilde\ubmx , \bmy_{IJ}=\bmb_{IJ}(J)\}\rc$ denotes that $\tilde{\ubmx}$ is integrated w.r.t.\ the (non-probability) measure $\otimes_{j\in\N} \minid\lc a\delta_{(\bmb_{IJ}(J),\bmb)(I(\bmb_{IJ}(J))}\rc \lc \{\ubmy \mid \tilde{\ubmy}\in \cdot\ , \bmy_{IJ}=\bmb_{IJ}(J)\}\rc$
        Now, we also note that we simply have $(\bmb_{IJ}(J),\bmb)(I(\bmb_{IJ}(J))=\bmb$, which gives 
        \begin{align*}
        \#_3=&\ \sum_{\emptyset\neq J\subset J(IJ)} \int_{ (0,\infty)\times (-\infty,\infty)^d} \   \int_{(-\infty,\infty]^{d\times \N\setminus IJ}} \mathsf{G}(\bmb_{IJ}(J),\Tilde{\ubmx})   \otimes_{j\in\N} \minid\lc a\delta_{\bmb}\rc \\
        &\hspace{1cm}\lc \{  \tilde\ubmy\in \rmd\tilde{\ubmx} , \bmy_{IJ}=\bmb_{IJ}(J)\}\rc \rho(a,\bmb) \rmd a \rmd \bmb \\
        &=   \int_{ (0,\infty)\times (-\infty,\infty)^d} \   \int_{(-\infty,\infty]^{d\times \N\setminus IJ}}   \sum_{\emptyset\neq J\subset J(IJ)}  \mathsf{G}(\bmb_{IJ}(J),\Tilde{\ubmx})    \otimes_{j\in\N} \minid\lc a\delta_{\bmb}\rc \\
        &\hspace{1cm}\lc \{ \tilde\ubmy\in \rmd\tilde{\ubmx} , \bmy_{IJ}=\bmb_{IJ}(J)\}\rc \rho(a,\bmb) \rmd a \rmd \bmb \\
        &\overset{\star}{=}   \int_{ (0,\infty)\times (-\infty,\infty)^d} \   \int_{(-\infty,\infty]^{d\times \N}}    \mathsf{G}(\bmx_{IJ},\Tilde{\ubmx})  \otimes_{j\in\N} \minid\lc a\delta_{\bmb}\rc \lc\rmd\ubmx\rc \rho(a,\bmb) \rmd a \rmd \bmb \\
        &=   \int_{(-\infty,\infty]^{d\times \N}}     \mathsf{G}(\bmx_{IJ},\Tilde{\ubmx})  \lambda(\rmd\ubmx) 
        \end{align*}
        which proves the claim, where in $\star$ we used the $\mathsf{G}\big(\binfty, \tilde{\ubmx}\big)=0$ and that 
        \begin{align*}
            &\sum_{\emptyset\neq J\subset J(IJ)} \lc\otimes_{j\in\N}\minid(a\delta_{\bmb})\rc\lc A\cap \{\bmy_{IJ}=\bmb_{IJ}(J)\}\rc\\
            &=\lc\otimes_{j\in\N}\minid(a\delta_{\bmb})\rc\lc A\setminus\{ \bmy_{IJ}=\binfty\}\rc.
        \end{align*} 
        
\end{proof}

\subsection*{Proof of Lemma \ref{lemsuppportbeta}}
\begin{proof}
The central observation for the proof of the claims is that for a Poisson point process $\tilde{N}=\sum_{k\in\N} \delta_{(a_k,\bmb_k)}$ whose intensity is given by (\ref{conddensmNTR}) we have 
$$P\lc \exists\ C\in(0,\infty)\text{ and }\leqd\text{ such that } \tilde{N}\lc(0,\infty)\times \{\bmb\in (0,\infty)^d\mid b_i=C\}\rc>1\rc=0.$$
To see this note that $\tilde{N}^{(i)}:=\sum_{k\in\N} \delta_{b_{k,i}}$ is a Poisson random measure on $(0,\infty)$ with diffuse intensity $$\lc\int_0^\infty \int_{(0,\infty)^{d-1}}\rho(a,\bmb)\rmd a\rmd (b_j)_{1\leq j\leq d,j\neq i}\rc\rmd b_i.$$
Now, it is well known that
\begin{align*}
  0&= P\lc \exists\ C\in(0,\infty)\text{ such that } \tilde{N}^{(i)}\lc\{C\}\rc>1\rc\\
  &= P\lc \exists\ C\in(0,\infty)\text{ such that } \tilde{N}\lc(0,\infty)\times \{\bmb\in (0,\infty)^d\mid b_i=C\}\rc>1\rc
\end{align*}
since $\tilde{N}^{(i)}$ is a simple point process on $(0,\infty)$ as it has a diffuse intensity measure, see \cite[Lemma 3.6]{kallenberg2017}. Thus,  each atom of $N=\sum_{k\in\N} a_k\delta_{\bmb_k}$ has components $(b_{k,i})_{\leqd}$ that are almost surely distinct from all  components of the atoms $(b_{\underline{k},i})_{\leqd}$ whenever $k\neq\underline{k}$.

The claim that $\beta_{\theta}$ is supported on $G^{\Psi(\theta)}$ now immediately follows as the atoms $(a_k,\bmb_k)$ corresponding to the exchangeable sequences $\ubmx^{(k)}$ in (\ref{stochrepmNTR}) correspond to almost surely district atoms with distinct components.

To prove the singularity, assume that $\beta_{\theta}(G^\Psi)>0$ for some collection of partitions $\Psi\neq \Psi(\theta)$. Then, there exist $i$, $j\neq \underline{j}$ such that either $(i,j),(i,\underline{j})\in \Psi^{(i)}_k(\theta)$ with $\beta_\theta\lc  y_{i,j}\neq y_{i,\underline{j}}\rc>0$ or $(i,j)\in \Psi^{(i)}_k(\theta)$ and $(i,\underline{j})\in \Psi^{(i)}_{\underline{k}}(\theta)$ for some $k\neq \underline{k}$ with $\beta_\theta\lc y_{i,j}=y_{i,\underline{j}}\rc>0$. In the first case there would exists a sequence $\ubmx^{(k)}$ in (\ref{stochrepmNTR}) such that $\ubmx^{(k)}$ is supported on at least two distinct finite atoms, which has probability $0$. In the second case there would exist $\ubmx^{(k)}$ and $\ubmx^{(\underline{k})}$ for which $b_{k,i}=b_{\underline{k},i}$ for $k\neq \underline{k}$, which has probability $0$. Therefore, singularity of $\beta_{\theta}$ and $\beta_{\tilde\theta}$ follows whenever $\Psi(\theta)\neq\Psi(\tilde\theta)$.
\end{proof}

\subsection*{Proof of Proposition \ref{propdenshitscenmNTR}}
\begin{proof}

 The general idea for the proof is to plug-in Proposition \ref{propconddistrmNTR} into (\ref{defbetameasures}) to verify that $\beta_\theta$ has the claimed density. First, note that $\beta_\theta(\rmd\bmx_{IJ})$ is concentrated in $\R^{\vert IJ\vert}$ and thus we only need to consider $\bmx_{\theta_l}\in\R^{\vert\theta_l\vert}$ when applying Proposition \ref{propconddistrmNTR}, which ensures that $I(\bmx_{\theta_l})=I(\theta_l)$.  Recall that the mass of $\beta_\theta$ is determined via its mass on $G^{\Psi(\theta)}$, which is a $d_{\Psi(\theta)}$-dimensional subspace of $\R^{\vert IJ\vert}$. Then, for $\lc c_{i,k}\rc_{1\leq k\leq L(\Psi^{(i)}),i\in I(IJ)}\in\R^{d_\Psi}$ and recalling the measures $\lc\otimes_{j\in \N}\minid\lc  a\delta_{\bmb} \rc\rc^{(IJ)}$ from (\ref{defminidmarginmeasure}), (\ref{defbetameasures}) implies that $\beta_\theta$ is uniquely determined by expressions of the form 
    \begin{align*}
        &\beta_{\theta}\lc \big\{ \bmx_{IJ}\in G^{\Psi(\theta)}\mid  x_{i,j}\leq c_{i,k}\ \forall (i,j)\in\Psi_k^{(i)}, 1\leq k\leq L(\Psi^{(i)}), i\in I(IJ) \big\}\rc\\
        &=\lc \times_{l=1}^{L(\theta)} \int_{(0,\infty)\times(0,\infty)^d}\int_{(0,\infty)^{\vert \theta_l\vert}} \rc \id_{\{x_{i,j}\leq c_{i,k}\ \forall (i,j)\in\Psi_k^{(i)}, 1\leq k\leq L(\Psi^{(i)}), i\in I(IJ) \}}\\
        &\hspace{1cm}\exp\lc -\lambda^{(IJ)}\lc(\bmxij,\binfty]^\complement\rc \rc \lc \prod_{l=1}^{L(\theta)}K_{\theta_l}\lc \bmx_{\theta_l},\{ \bmy_{IJ\setminus \theta_l}>\bmx_{IJ\setminus \theta_l} \}\rc \rc \\
        &\hspace{1cm}\lc \otimes_{l=1}^{L(\theta)}\lc\otimes_{j\in\N} \minid\lc a^{(l)}\delta_{\bmb^{(l)}}  \rc \rc^{(\theta_l)} \lc \rmd\bmx_{\theta_l} \rc \rho(a^{(l)},\bmb^{(l)})\rmd (a^{(l)},\bmb^{(l)}) \rc \\
         &=  \lc \times_{l=1}^{L(\theta)} \int_{(0,\infty)\times\{\bmb^{(l)}\mid b_i^{(l)}\leq  c_{i,k}\ \forall\ (i,j)\in \Psi^{(i)}_k\subset\theta_l ,1\leq k\leq L(\Psi^{(i)}),i\in I(IJ)\} } \rc \\
         &\hspace{1cm}\lc  \times_{l=1}^{L(\theta)} \int_{\{ \bmx_{\theta_l} \mid x_{i,j}= b^{(l)}_{i}\ \forall (i,j)\in \Psi^{(i)}_k\subset\theta_l ,1\leq k\leq L(\Psi^{(i)}),i\in I(IJ)\}} \rc \\
        &\hspace{1cm}\exp\lc -\lambda^{(IJ)} \lc(\bmxij,\binfty]^\complement\rc \rc    \lc \prod_{l=1}^{L(\theta)} K_{\theta_l}\lc \bmx_{\theta_l},\{ \bmy_{IJ\setminus \theta_l}>\bmx_{IJ\setminus \theta_l} \} \rc \rc \\
        &\hspace{1cm} \lc \otimes_{l=1}^{L(\theta)}\lc\otimes_{j\in\N} \minid\lc a^{(l)}\delta_{\bmb^{(l)}}  \rc \rc^{(\theta_l)}    \lc \rmd\bmx_{\theta_l} \rc\rc \lc \otimes_{l=1}^{L(\theta)}\rho(a^{(l)}, \bmb^{(l)})\rmd (a^{(l)},\bmb^{(l)}) \rc=:\#_1  \\ 
        \end{align*}
        where we have used that $\bmx_{\theta_l}\in \R^{\vert\theta_l\vert}$ and thus its $i$-margins consist of the corresponding values from $\bmb^{(l)}$, which enforces that $b^{(l)}_i\leq c_{i,k}$ for all $i,k$ s.t.\ there exists a $j$ with $(i,j)\in \Psi^{(i)}_k\subset\theta_l$. We further exploit this tie structure to plug-in the corresponding values of $\bmb^{(l)}$ into $\bmx_{IJ\setminus \theta_l}$ and $\bmx_{\theta_l}$ to obtain
        \begin{align*}
            &\#_1= \lc \times_{l=1}^{L(\theta)} \int_{(0,\infty)\times\{\bmb^{(l)}\mid b_i^{(l)}\leq  c_{i,k}\ \forall\ (i,j)\in \Psi^{(i)}_k\subset\theta_l ,1\leq k\leq L(\Psi^{(i)}),i\in I(IJ) \} } \rc \\
            &\hspace{1cm}\lc  \times_{l=1}^{L(\theta)} \int_{\{ \bmx_{\theta_l} \mid x_{i,j}= b^{(l)}_{i}\ \forall (i,j)\in \Psi^{(i)}_k\subset\theta_l ,1\leq k\leq L(\Psi^{(i)}),i\in I(IJ) \}} \rc \\
        &\hspace{1cm}\exp\lc -\lambda^{(IJ)} \lc(\tilde\bmb_{IJ},\binfty]^\complement\rc \rc    \lc \prod_{l=1}^{L(\theta)} K_{\theta_l}\lc \tilde\bmb_{\theta_l},\{ \bmy_{IJ\setminus \theta_l}>\tilde\bmb_{IJ\setminus \theta_l} \} \rc \rc \\
        &\hspace{1cm} \lc \otimes_{l=1}^{L(\theta)}\lc\otimes_{j\in\N} \minid\lc a^{(l)}\delta_{\bmb^{(l)}}  \rc \rc^{(\theta_l)} \rc   \lc \rmd\bmx_{\theta_l} \rc \lc \otimes_{l=1}^{L(\theta)}\rho(a^{(l)}, \bmb^{(l)})\rmd (a^{(l)},\bmb^{(l)}) \rc=:\#_2  \\ 
        \end{align*}
         where $\tilde\bmb_{\theta_l}=\lc b^{(l)}_i\rc_{(i,j)\in\theta_l}$, $\tilde\bmb_{IJ\setminus\theta_l}=\lc \sum_{ k=1}^{L(\theta)} b_i^{(k)}\id_{\{ (i,j)\in \theta_{k}\}} \rc_{(i,j)\in IJ\setminus\theta_l}$ and 
         $\tilde\bmb_{IJ}:=$ \\
         $\lc \sum_{k=1}^{L(\theta)} b_i^{(k)}\id_{\{ (i,j)\in \theta_k\}}\rc_{(i,j)\in IJ}$. Noticing that the integrand only depends on 
         $\lc \lc b_i^{(l)}\rc_{ i\in I(\theta_l)}\rc_{ 1\leq l\leq L(\theta)}$
         we may rearrange the terms to integrate out the remaining $\lc(b_i^{(l)})_{i\in I(\theta_l)^\complement}\rc_{1\leq l \leq L(\theta)}$ to obtain

         \begin{align*}
             &\#_2=  \lc\times_{l=1}^{L(\theta)}  \int_{\{(b_i^{(l)})_{i\in I( \theta_l)}\mid b_i^{(l)}\leq  c_{i,k}\ \forall\ (i,j)\in \Psi^{(i)}_k\subset\theta_l ,1\leq k\leq L(\Psi^{(i)}),i\in I(IJ) \} }  \rc   \\
             &\hspace{1cm}\exp\lc -\lambda^{(IJ)} \lc(\tilde\bmb_{IJ},\binfty]^\complement\rc \rc \prod_{l=1}^{L(\theta)}\Bigg(  K_{\theta_l}\lc \tilde\bmb_{\theta_l},\{ \bmy_{IJ\setminus \theta_l}>\tilde\bmb_{IJ\setminus \theta_l} \}  \rc  \\
             &\hspace{1cm}\Bigg(  \int_{(0,\infty)\times (0,\infty)^{\vert I(\theta_l)^\complement\vert}} \int_{\{ \bmx_{\theta_l} \mid x_{i,j}= b^{(l)}_{i}\ \forall (i,j)\in \Psi^{(i)}_k\subset\theta_l ,1\leq k\leq L(\Psi^{(i)}),i\in I(IJ) \}}      \\
                &\hspace{1cm} \lc\otimes_{j\in\N} \minid\lc a^{(l)}\delta_{\bmb^{(l)}}  \rc \rc^{(\theta_l)}    \lc \rmd\bmx_{\theta_l} \rc  \rho(a^{(l)}, \bmb^{(l)})  \rmd\lc a^{(l)},(b_i^{(l)})_{i\in I( \theta_l)^\complement} \rc \Bigg) \Bigg)  \\
                &\hspace{1cm}\lc\otimes_{l=1}^{L(\theta)} \rmd ((b_i^{(l)})_{i\in I( \theta_l)}) \rc =:\#_3   
         \end{align*}
        The key step now is to observe that the denominator $C(\tilde{\bmb}_{\theta_l})$ of $K_{\theta_l}\lc \tilde\bmb_{\theta_l},\{ \bmy_{IJ\setminus \theta_l}>\tilde\bmb_{IJ\setminus \theta_l} \}\rc$ is equal to
        \begin{align*}
            &\int_{(0,\infty)\times (0,\infty)^{\vert I(\theta_l)^\complement\vert}} \int_{\{ \bmx_{\theta_l} \mid x_{i,j}= b^{(l)}_{i}\ \forall (i,j)\in \Psi^{(i)}_k\subset\theta_l ,1\leq k\leq L(\Psi^{(i)}),i\in I(IJ) \}}  \\
            &
        \lc \otimes_{j\in\N} \minid\lc a^{(l)}\delta_{\bmb^{(l)}}  \rc\rc^{(\theta_l)} \lc \rmd \bmx_{\theta_l}\rc \rho(a^{(l)}, \bmb^{(l)}) \rmd\lc a^{(l)}, \lc b_i^{(l)}\rc_{i\in I(\theta_l)^\complement}\rc  ,
        \end{align*}   
        since, in the notation of Proposition \ref{propconddistrmNTR}, $I(\tilde\bmb_{\theta_l})=I(\theta_l)$ and $(\tilde\bmb_{\theta_l},\bmb^{(l)})(I(\tilde\bmb_{\theta_l}))=\bmb^{(l)}$  and
        \begin{align*}
           & \int_{\{ \bmx_{\theta_l} \mid x_{i,j}= b^{(l)}_{i}\ \forall (i,j)\in \Psi^{(i)}_k\subset\theta_l ,1\leq k\leq L(\Psi^{(i)}),\leqd \}} 
        \lc \otimes_{j\in\N} \minid\lc a^{(l)}\delta_{\bmb^{(l)}}  \rc\rc^{(\theta_l)} \lc \rmd \bmx_{\theta_l}\rc \\
        &=     \lc \otimes_{j\in\N} \minid\lc a^{(l)}\delta_{\bmb^{(l)}}  \rc\rc\lc \{ \ubmx\mid \bmx_{\theta_l}= \bmb_{\theta_l}\}\rc,
        \end{align*}
        which leads to the cancellation of $C(\tilde{\bmb}_{\theta_l})$ in $\#_3$. Thus, only the enumerator of \\ 
        $K_{\theta_l}\lc \tilde\bmb_{\theta_l},\{ \bmy_{IJ\setminus \theta_l}>\tilde\bmb_{IJ\setminus \theta_l} \}\rc$ remains and we have
        \begin{align*}
            &\#_3=\lc \times_{l=1}^{L(\theta)} \int_{\{(b_i^{(l)})_{i\in I( \theta_l)}\mid b_i^{(l)}\leq  c_{i,k}\ \forall\ (i,j)\in \Psi^{(i)}_k\subset\theta_l ,1\leq k\leq L(\Psi^{(i)}),i\in I(IJ) \} } \rc  \\
             &\hspace{1cm}\exp\lc -\lambda^{(IJ)} \lc(\tilde\bmb_{IJ},\binfty]^\complement\rc \rc \\
             &\hspace{1cm}\prod_{l=1}^{L(\theta)} \Bigg( \int_{(0,\infty)\times (0,\infty)^{\vert I(\theta_l)^\complement\vert}}     \lc\otimes_{j\in\N} \minid\lc a^{(l)}\delta_{(\tilde\bmb_{\theta_l},\bmb^{(l)})(I(\theta_l))}  \rc \rc \\
             &\hspace{1cm}\lc \{ \ubmy \mid \bmy_{\theta_l}= \tilde\bmb_{\theta_l}, \bmy_{IJ\setminus\theta_l}>\tilde\bmb_{IJ\setminus\theta_l} \} \rc  \\
             &\hspace{1cm}\rho(a^{(l)}, \bmb^{(l)})  \rmd\lc a^{(l)},(b_i^{(l)})_{i\in I( \theta_l)^\complement} \rc \Bigg) \lc\otimes_{l=1}^{L(\theta)}  \rmd ((b_i^{(l)})_{i\in I( \theta_l)}) \rc\\
             &=\lc \times_{i=1}^{d} \times_{k=1}^{L(\Psi^{(i)}(\theta))} \int_{-\infty}^{c_{i,k} } \rc  q_\theta\lc\bmb_{IJ}\rc \lc\otimes_{i=1}^d\otimes_{k=1}^{L(\Psi^{(i)}(\theta))} \rmd b_i^{(k)} \rc
        \end{align*}
        where  we defined
        \begin{align*}
            q_\theta(\bmb_{IJ}):=&\exp\lc -\lambda^{(IJ)} \lc(\bmb_{IJ},\binfty]^\complement\rc \rc \\
            &\prod_{l=1}^{L(\theta)} \Bigg( \int_{(0,\infty)\times (0,\infty)^{\vert I(\theta_l)^\complement\vert}}     \lc\otimes_{j\in\N} \minid\lc a^{(l)}\delta_{(\tilde\bmb_{\theta_l},\bmb^{(l)})(I(\theta_l))}  \rc \rc \\
            &\Big( \{ \ubmx \mid \bmx_{\theta_l}= \bmb_{\theta_l}, \bmx_{IJ\setminus\theta_l}>\bmb_{IJ\setminus\theta_l} \} \Big) \rho\lc a^{(l)}, (\tilde\bmb_{\theta_l},\bmb^{(l)})(I(\theta_l))\rc  \rmd\lc a^{(l)},(b_i^{(l)})_{i\in I( \theta_l)^\complement} \rc \Bigg) 
        \end{align*}
        and $\bmb_{IJ}:=\lc  \sum_{k=1}^{L(\Psi^{(i)}(\theta))} b_i^{(k)}\id_{\{ (i,j)\in \Psi^{(i)}_k\}}\rc_{(i,j)\in IJ}$, $\bmb_{\theta_l}:=\lc  \sum_{k=1}^{L(\Psi^{(i)}(\theta))} b_i^{(k)}\id_{\{ (i,j)\in \Psi^{(i)}_k\}}\rc_{(i,j)\in \theta_l}$ and $\bmb_{IJ\setminus\theta_l}:=\lc  \sum_{k=1}^{L(\Psi^{(i)}(\theta))} b_i^{(k)}\id_{\{ (i,j)\in \Psi^{(i)}_k\}}\rc_{(i,j)\in IJ\setminus\theta_l}$ are unique representations of the arguments of $q_\theta$ which only depend $\Psi(\theta)$, implicitly using $\sum_{l=1}^{L(\theta)} \vert I(\theta_l)\vert=\sum_{i\in I(IJ)} L(\Psi^{(i)})$.
    Therefore, $\beta_\theta$ has density $q_\theta$ which has the claimed representation as a $d_\Psi$-dimensional density w.r.t.\ Lebesgue measure, which proves the claim.

\end{proof}

\subsection*{Proof of Corollary \ref{cordistcondhitmntr}}
\begin{proof}
    Simply note that for any measurable $A\subset G^{\Psi(\theta)}$ we have
    \begin{align*}
       \beta_\theta(A)&=\int_A  q_\theta(\bmb_{IJ})\rmd\lc \lc b_i^{(k)} \rc_{1\leq k\leq L(\Psi^{(i)}(\theta))}\rc_{i\in I(IJ)} \\
       &=\int_A  \frac{q_\theta(\bmb_{IJ}) }{\sum_{\overset{\tilde\theta\in \mathcal{P}_{IJ}}{\Psi(\theta)=\Psi(\tilde{\theta})}} q_{\tilde\theta}  (\bmb_{IJ})} \Bigg( \sum_{\overset{\tilde\theta\in \mathcal{P}_{IJ}}{\Psi(\theta)=\Psi(\tilde\theta)}} q_{\tilde\theta} (\bmb_{IJ}) \Bigg)\rmd\lc \lc b_i^{(k)} \rc_{1\leq k\leq L(\Psi^{(i)}(\theta))}\rc_{i\in I(IJ)}\\
       &=\int_A  \frac{q_\theta(\bmb_{IJ}) }{\sum_{\overset{\tilde\theta\in \mathcal{P}_{IJ}}{\Psi(\theta)=\Psi(\tilde{\theta})}} q_{\tilde\theta}  (\bmb_{IJ})}  \lc \sum_{\overset{\tilde\theta\in \mathcal{P}_{IJ}}{\Psi(\theta)=\Psi(\tilde\theta)}} \beta_{\tilde\theta}(\rmd\bmb_{IJ}) \rc\\
        &=\int_A  \frac{q_\theta(\bmb_{IJ}) }{\sum_{\overset{\tilde\theta\in \mathcal{P}_{IJ}}{\Psi(\tilde{\theta})}=\Psi(\bmb_{IJ})} q_{\tilde\theta}  (\bmb_{IJ})} \minid\lc\lambda^{(IJ)}\rc\lc \rmd \bmb_{IJ}\rc,
    \end{align*}
    since $\beta_{\tilde\theta}(A)=0$ for all $\tilde{\theta}$ with $\Psi(\tilde\theta)\neq\Psi(\theta)$ by Lemma \ref{lemsuppportbeta}. Therefore
    $$ \frac{\rmd \beta_{\theta}}{\rmd \minid\lc \lambda^{(IJ)}\rc}(\bmb_{IJ}) = \frac{q_\theta(\bmb_{IJ}) }{\sum_{\overset{\tilde\theta\in \mathcal{P}_{IJ}}{\Psi(\theta)=\Psi(\tilde{\theta})}} q_{\tilde\theta}  (\bmb_{IJ})}   . $$
    Moreover, when $\Psi(\bmXIJ)\neq \Psi(\theta)$ we have $\bmXIJ\not\in G^{\Psi(\theta)}$ $ \minid\lc \lambda^{(IJ)}\rc$-almost-surely and therefore
    $$ \frac{\rmd \beta_{\theta}}{\rmd \minid\lc \lambda^{(IJ)}\rc}(\bmXIJ)=0 \ \minid\lc \lambda^{(IJ)}\rc\text{-almost-surely for all } \bmXIJ\not\in G^{\Psi(\theta)}. $$
    Thus, 
    $$ \frac{\rmd \beta_{\theta}}{\rmd \minid\lc \lambda^{(IJ)}\rc}(\bmXIJ)= \frac{\rmd \beta_{\theta}}{\rmd \minid\lc \lambda^{(IJ)}\rc}(\bmXIJ)\id_{\{\Psi(\theta)=\Psi(\bmXIJ)\}} $$
    $ \minid\lc \lambda^{(IJ)}\rc$-almost-surely, which proves the claim.
\end{proof}

\subsection*{Proof of Theorem \ref{thmpostmNTR}}
\begin{proof}
    Let $n$ such that $IJ\subset d\times n$. According to Theorem \ref{thmmainresult} the posterior Lévy measure $\bar{\nu}_{IJ}$ of $\bar{\mu}_{IJ}$ is given by 
    \begin{align*}
        &\otimes_{j=1}^n \minid(\eta) \lc\{ \ubmy_n\mid \bmy_{IJ}>\bmX_{IJ}\}\rc \nu(\rmd\eta)\\
        &=\otimes_{j=1}^n \minid(\eta) \lc\{ \ubmy_n\mid \bmy_{IJ}>\bmX_{IJ}\}\rc \rho(a,\bmb)\rmd(a,\bmb)\\
        &=\rho(a,\bmb)\exp\lc-a\sum_{j\in J(IJ)} \id_{\{ b_i\leq X_{i,j}\text{ for some } (i,j)\in IJ\}}\rc\rmd a\rmd \bmb
    \end{align*}
    which is the Lévy measure of a CRM with the claimed intensity. The distribution of the conditional hitting scenario is given in Corollary \ref{cordistcondhitmntr}.

    Next, conditionally on the conditional hitting scenario $\Tilde{\Theta}$, the distribution of $\ubmZ^{(n,l)}$ from (\ref{defcondextrseq}) is given by
    \begin{align*}
        &P(\ubmZ^{(n,l)}\in A)=\frac{ K_{\Tilde{\Theta}_l}\lc \bmX_{\Tilde{\Theta}_l},\Big\{ \ubmy\mid \lc\bmy_{n+j}\rc_{j\in\N}\in A ;\ \bmy_{IJ\setminus\tilde{\Theta}_l}>\bmX_{IJ\setminus\Tilde{\Theta}_l}\Big\}\rc}{K_{\Tilde{\Theta}_l}\lc \bmX_{\Tilde{\Theta}_l},\{ \bmy_{IJ\setminus\tilde{\Theta}_l}>\bmX_{IJ\setminus\Tilde{\Theta}_l}\}\rc}
    \end{align*}
    which by Proposition \ref{propconddistrmNTR} is equal to
    \begin{align*}
        &C(\Tilde{\Theta}_l,\bmXIJ)^{-1}\lc \times_{i=1}^{\vert I(\Tilde{\Theta}_l)^\complement\vert }\int_0^\infty\rc \int_0^\infty \otimes_{j\in\N} \minid\lc a\delta_{(\bmX_{\Tilde{\Theta}_l},\bmb)(I(\Tilde{\Theta}_l))}\rc \\
        &\lc \Big\{ \ubmy\mid \lc\bmy_{n+j}\rc_{j\in\N}\in A;\  \bmy_{IJ\setminus\tilde{\Theta}_l}>\bmX_{IJ\setminus\Tilde{\Theta}_l} ;\ \bmy_{\Tilde{\Theta}_l}=\bmX_{\Tilde{\Theta}_l}\Big\}\rc \\
        &\rho(a,(\bmX_{\Tilde{\Theta}_l},\bmb)(I(\Tilde{\Theta}_l)) \rmd a \rmd \lc (b_i)_{i\in I(\Tilde{\Theta}_l))^\complement} \rc \\
        &=  C(\Tilde{\Theta}_l,\bmXIJ)^{-1}\lc \times_{i=1}^{\vert I(\bmX_{\Tilde{\Theta}_l})^\complement\vert }\int_0^\infty\rc \int_0^\infty \otimes_{j\in\N} \minid\lc a\delta_{(\bmX_{\Tilde{\Theta}_l},\bmb)(I(\Tilde{\Theta}_l))}\rc\\
        &\lc \Big\{ \ubmy\mid \lc\bmy_{n+j}\rc_{j\in\N}\in A \Big\}\rc
        \otimes_{j=1}^n \lc\minid(a\delta_{\bmb})\rc  \lc \Big\{  \bmy_{IJ\setminus\tilde{\Theta}_l}>\bmX_{IJ\setminus\Tilde{\Theta}_l} ;\ \bmy_{\Tilde{\Theta}_l}=\bmX_{\Tilde{\Theta}_l}\Big\}\rc \\
        &\rho(a,(\bmX_{\Tilde{\Theta}_l},\bmb)(I(\Tilde{\Theta}_l)) \rmd a \rmd \lc (b_i)_{i\in I(\Tilde{\Theta}_l))^\complement} \rc =:\#_1
    \end{align*}
    where the normalizing constant is given by
    \begin{align*}
        &C(\Tilde{\Theta}_l,\bmXIJ)\\
        &:=\lc \times_{i=1}^{\vert I(\Tilde{\Theta}_l)^\complement\vert }\int_0^\infty\rc \int_0^\infty \otimes_{j=1}^n \lc\minid\lc a\delta_{(\bmX_{\Tilde{\Theta}_l},\bmb)(I(\Tilde{\Theta}_l))}\rc \rc  \\
        &\lc \Big\{  \bmy_{IJ\setminus\tilde{\Theta}_l}>\bmX_{IJ\setminus\Tilde{\Theta}_l};\ \bmy_{\Tilde{\Theta}_l}=\bmX_{\Tilde{\Theta}_l} \Big\} \rc
        \rho(a,(\bmX_{\Tilde{\Theta}_l},\bmb)(I(\Tilde{\Theta}_l))) \rmd a \rmd \lc (b_i)_{i\in I(\Tilde{\Theta}_l)^\complement} \rc\\
        &=\lc \times_{i=1}^{\vert I(\Tilde{\Theta}_l)^\complement\vert }\int_0^\infty\rc \int_0^\infty  \rho_n^{(l)}\lc a,\lc b_i\rc_{i\in I(\bmX_{\theta_l})^\complement},\bmX_{IJ}\rc \rho(a,(\bmX_{\Tilde{\Theta}_l},\bmb)(I(\Tilde{\Theta}_l))) \rmd a \rmd \lc (b_i)_{i\in I(\Tilde{\Theta}_l)^\complement} \rc
    \end{align*}
     with
    \begin{align*}
        &\rho_n^{(l)}\lc a,\lc b_i\rc_{i\in I(\bmX_{\theta_l})^\complement},\bmX_{IJ}\rc\\
        &:=\otimes_{j=1}^n \lc\minid\lc a\delta_{(\bmX_{\Tilde{\Theta}_l},\bmb)(I(\Tilde{\Theta}_l))}\rc \rc  \lc \big\{  \bmy_{IJ\setminus\tilde{\Theta}_l}>\bmX_{IJ\setminus\Tilde{\Theta}_l};\ \bmy_{\Tilde{\Theta}_l}=\bmX_{\Tilde{\Theta}_l} \big\} \rc .
    \end{align*}
    This allows to rewrite
    \begin{align*}
        \#_1=&  C(\Tilde{\Theta}_l,\bmXIJ)^{-1}\lc \times_{i=1}^{\vert I(\bmX_{\Tilde{\Theta}_l})^\complement\vert }\int_0^\infty\rc \int_0^\infty \otimes_{j\in\N} \minid\lc a\delta_{(\bmX_{\Tilde{\Theta}_l},\bmb)(I(\bmX_{\Tilde{\Theta}_l}))}\rc \lc  A \rc\\
        &\rho_n^{(l)}\lc a,\lc b_i\rc_{i\in I(\bmX_{\theta_l})^\complement},\bmXIJ\rc \\
        &\rho(a,(\bmX_{\Tilde{\Theta}_l},\bmb)(I(\Tilde{\Theta}_l)) \rmd a \rmd \lc (b_i)_{i\in I(\Tilde{\Theta}_l))^\complement} \rc.
    \end{align*}
    Thus, $\ubmZ^{(n,l)}$ can be generated as follows:
    \begin{enumerate}
        \item Draw $\lc A,\lc (B_i)_{i\in I(\bmX_{\Tilde{\Theta}_l)})^\complement}\rc\rc$ according to the density 
        $$C(\Tilde{\Theta}_l,\bmXIJ)^{-1}\rho_n^{(l)}\lc a,\lc b_i\rc_{i\in I(\Tilde\Theta_l)^\complement},\bmX_{IJ}\rc \rho(a,(\bmX_{\Tilde{\Theta}_l},\bmb)(I(\Tilde{\Theta}_l)) \rmd a \rmd \lc (b_i)_{i\in I(\Tilde{\Theta}_l))^\complement} \rc. $$
        .
        \item Conditionally on $\lc A,\lc (B_i)_{i\in I(\bmX_{\Tilde{\Theta}_l)})^\complement}\rc\rc$, define the random measures $\bar\mu_{IJ}^{(l)}:=a\delta_{(\bmX_{\Theta_l},\bmb)(I(\Tilde{\Theta}_l))}$ and draw an i.i.d.\ sequence $\ubmZ^{(n,l)}$ with $\bmZ^{(n,l)}_j\sim\minid\lc \bar{\mu}_{IJ}^{(l)} \rc$.
    \end{enumerate}
    Therefore, the posterior of $\mu$ is given by 
    $$ \mu\mid\bmXIJ \sim \bar{\mu}_{IJ}+\sum_{l=1}^{L(\Tilde{\Theta})} \bar{\mu}_{IJ}^{(l)} $$
    as claimed.
\end{proof}


\section{Proofs of results from the main text}
\label{appproofs}

\subsection*{Proof of Corollary \ref{coridexpmisidbyidexpm}}
\begin{proof}
    Infinite divisibility of $\mu^{(i,n)}$ simply follows by identifying the Laplace transform of the $\mu^{(i,n)}$ to have Lévy-Khintchine characteristics $(\alpha/n,\nu/n)$. Since $\mu$ is finite on every localizing set $U_i$, the $\mu^{(i,n)}$ must as well be almost surely finite on every localizing set. Moreover, since the $\mu^{(i,n)}$ are i.i.d., they must satisfy $\mu^{(i,n)}(\binfty)=\binfty$ almost surely. Thus, they are random exponent measure almost surely. Identifying each $\mu^{(i,n)}$ with a version that is an exponent measure for every realization, we obtain the first claim. The second claim simply follows by observing that each measure in $M_d^0$ is the exponent measure of a random vector with $\binfty$ as upper bound of its support, which shows that (\ref{prmrepidrm}) is simply a sum of exponent measures with $\binfty$ as common upper bound of the support of the underlying random vectors.
\end{proof}
\subsection*{Proof of Theorem \ref{thmidexpmimpliesexminid}}

\begin{proof}
   To prove that $\ubmX$ is an exchangeable min-id sequence it is sufficient to prove that its $p$-dimensional margins are min-id, since exchangeability is obvious by construction. To this purpose, let $p\in\N$ and choose a $p$-dimensional margin $IJ=(i_1,j_1),\ldots,(i_p,j_p)$. Denote by $J$ the unique indices in $(j_k)_{1\leq k\leq p}$. Define the set $I(j):=\{ i\mid (i,j)\in IJ\mid\}$ to collect the $i$-indices in IJ with common index $j$, i.e.\ the indices in $I(j)$ correspond to all components of the random vector $\bmX_{j}$ that appear in $\bmX_{IJ}$. The key observation now is that
    \begin{align*}
        S_{IJ}(\bmx_{IJ}):=P\lc \bmX_{IJ}>\bmx_{IJ}\rc&=\e\lk  P\lc \bmX_{IJ}>\bmx_{IJ} \mid \mu \rc  \rk = \e \lk \prod_{j\in J} \exp\lc -\mu(A_j) \rc\rk  ,
    \end{align*}
    where for all $j\in J$ we denote $ A_j:=\{ \bmy\in E_d \mid y_{i}\leq x_{i,j} \text{ for some } i\in I(j)  \}$. Due to the infinite divisibility of $\mu$, we obtain
    that
    \begin{align*}
        &\e \lk \prod_{j\in J} \exp\lc -\mu(A_j) \rc \rk  =\e \lk \exp \lc - \int_{E_d} \lc \sum_{j\in J} \id_{A_j}(\bmx) \rc \mu(\rmd\bmx) \rc\rk \\
        &=\exp\lc - \int_{E_d} \lc \sum_{j\in J} \id_{A_j}(\bmx) \rc \alpha(\rmd\bmx) -\int_{M_d^0} 1-\exp\lc -\int_{E_d} \lc\sum_{j\in J} \id_{A_j}(\bmx) \rc \eta(\rmd\bmx) \rc \nu(\rmd \eta)  \rc.
    \end{align*}
    To obtain that $\bmXIJ$ is min-id it is sufficient to show that $S_{IJ}^{1/n}(\cdot)$ is a survival function for every $n\in\N$. Since $\mu$ is an IDEM, there exist i.i.d.\ exponent measures $\lc \mu^{(i,n)}\rc_{1\leq i\leq n}$ such that $\mu\sim \sum_{\leqn} \mu^{(i,n)}$, which are themselves IDEMs with Lévy-Khintchine characteristics $(\alpha/n,\nu/n)$ by Corollary \ref{coridexpmisidbyidexpm}. By repeating the calculations above, $\mu^{(i,n)}$ defines the law of an exchangeable sequence with $p$-dimensional marginal survival function $S_{IJ}^{1/n}$. Thus, $S_{IJ}$ is the survival function of a min-id random vector. Therefore, $\ubmX$ is an exchangeable min-id sequence and there must exist an exponent measure $\lambda$ on $(E_{d})^\N$ such that the unconditional distribution of $\ubmX$ is $\minid(\lambda)$. 
    
    It remains to verify that $\lambda$ has the claimed decomposition. From the calculations above it follows that for every $\ubmx\in[-\infty,\infty)^\N$ and $A_\ubmx:=\{\underline{\bmy}\in \lc E_d\rc^\N\mid y_{i,j}\leq x_{i,j} \text{ for some }$ $\ijdn\}$, $\lambda$ has the representation 
    \begin{align*}
        &\lambda(A_\ubmx)= \int_{E_d} \lc \sum_{j\in\N} \id_{A_j} (\bmx) \rc \alpha(\rmd\bmx) +\int_{M_d^0} 1-\exp\lc -\int_{E_d} \lc\sum_{j\in\N} \id_{A_j}(\bmx) \rc \eta(\rmd\bmx) \rc \nu(\rmd \eta)  \\
        &=\sum_{j\in\N} \alpha(A_j)   +\int_{M_d^0} 1-\prod_{j\in\N}\exp\lc  -\eta({A_j}) \rc \nu(\rmd \eta) \\
        &=\lambda_\alpha(A_\ubmx)   +\int_{M_d^0} 1-\prod_{j\in\N}\text{min-id}(\eta)\lc \{ \bm y\in E_d\mid y_{i}>x_{i,j} \text{ for all }1\leq i\leq d \}\rc \nu(\rmd \eta) \\
        &=\lambda_\alpha(A_\ubmx)   +\int_{M_d^0} 1-\otimes_{j\in\N}\text{min-id}(\eta)\lc \{ \underline{\bm y}\in \lc E_d\rc^\N \mid y_{i,j}> x_{i,j} \text{ for all }\ijdn \}\rc \nu(\rmd \eta) \\
        &=\lambda_\alpha(A_\ubmx)   +\int_{M_d^0} \otimes_{j\in\N}\text{min-id}(\eta)\lc \{ \underline{\bm y}\in \lc E_d\rc^\N \mid y_{i,j}\leq x_{i,j} \text{ for some }\ijdn \}\rc \nu(\rmd \eta) \\
        &=\lambda_\alpha(A_\ubmx)   +\int_{M_d^0} \otimes_{j\in\N} \text{min-id}(\eta) \lc A_\ubmx\rc \nu(\rmd \eta).
    \end{align*}
    where $\lambda_\alpha$ is defined as the measure $\lambda_\alpha=\sum_{j\in\N} \otimes_{i=1}^{j-1} \delta_{\binfty} \otimes \alpha  \otimes_{i=j+1}^\infty \delta_{\binfty}$, which is the exponent measure of an i.i.d.\ sequence of min-id random vectors.
    Now, an inclusion-exclusion type argument provides that 
    $$\lambda=  \lambda_\alpha  +\int_{M_d^0} \otimes_{i\in\N}\minid(\eta) \nu(\rmd \eta)  $$
    on a $\pi$-stable generator of the Borel $\sigma$-algebra of $E_{d\times \N}$ as follows. First, by induction, one can show that for all $IJ=(i_k,j_k)_{1\leq k\leq p}\subset d\times \N$ and $\bm a,\bm b\in[-\infty,\infty)^p$ with $\bm a<\bm b$ we have 
    \begin{align*}
    &\lambda\left( \{ \ubmy\in E_{d\times \N} \mid y_{i_k,j_k}\in (a_k , b_k]\ \forall\ 1\leq k\leq p \}\right)
    \\
    &= -\sum_{c_i \in \{a_i,b_i\}^p} (-1)^{|\{ k\in \{1,\ldots,p\} \mid c_k=b_k\}|}\lambda\left( \{ \ubmy\in E_{d\times \N} \mid  y_{i_k,j_k}\leq c_k \text{ for some }1\leq k\leq n \} \right) ,       
    \end{align*} 
    We have already shown that the expression on the right hand side is identical to
    \begin{align*}
    \lambda_\alpha\left(  A_{\ubmx(\bm c)}\right)+\int_{M_d^0} \otimes_{j\in\N} \text{min-id}(\eta) \lc A_{\ubmx(\bm c)}\rc \nu(\rmd \eta)
    \end{align*}
    where $\ubmx(\bm c):=\times_{(i,j)\in d\times \N} \lc  c_k\id_{\{(i,j)\in IJ\}}-\infty\id_{(i,j)\not\in IJ}\rc$. This uniquely determines $\lambda$ on
    $$ \bigg\{  \{\ubmx\in E_{d\times \N} \mid x_{i_k,j_k}\in(a_i,b_i]\ \forall\ 1\leq k\leq p\} ,\bm a,\bm b\in [-\infty,\infty)^p, \bm a< \bm b , IJ=\lc (i_k,j_k)\rc_{1\leq k\leq p},p\in\N \bigg\}.$$
    Noticing that by setting $a_p$ and considering
    \begin{align*}
        &\lambda\left( \{ \ubmy\in E_{d\times \N} \mid y_{i_k,j_k}\in (a_k , b_k]\ \forall\ 1\leq k\leq p-1, y_{i_p,j_p}=\infty \}\right)=\\
        &\lambda\left( \{ \ubmy\in E_{d\times \N} \mid y_{i_k,j_k}\in (a_k , b_k]\ \forall\ 1\leq k\leq p-1 \}\right) \\
        &- \lim_{b_p\to\infty}\lambda\left( \{ \ubmy\in E_{d\times \N} \mid y_{i_k,j_k}\in (a_k , b_k]\ \forall\ 1\leq k\leq p \}\right)
    \end{align*} \
    we can determine $\lambda$ to be equal to $\lambda_\alpha+\lambda_\nu$ on sets of the form
     \begin{align*}
     & \Big\{  \big\{\ubmx\in E_{d\times \N} \ \big\vert  x_{i_k,j_k}\in(a_i,b_i]\ \forall\ 1\leq k\leq p\big\} ,\bm a,\bm b\in [-\infty,\infty]^p, \bm a< \bm b<\binfty \\ 
     &, IJ=\lc (i_k,j_k)\rc_{1\leq k\leq p},p\in\N \Big\},
     \end{align*}
    which is a $\pi$-stable generator of the $\sigma$-algebra on $E_{d\times \N}$. It follows that $\lambda$ can be expressed as
    \begin{align*}
        \lambda=\lambda_\alpha+\lambda_{\nu}.
    \end{align*}

    Concerning the uniqueness, we already know that the law of a min-id sequence is uniquely associated to its exponent measure and that the law of an IDEM is uniquely associated to its Lévy characteristics. Therefore, it solely remains to show that the decomposition of $\lambda$ into $\lambda_\alpha+\lambda_\nu$ is unique, i.e.\ that $\lambda_{\alpha_1}+\lambda_{\nu_1}=\lambda=\lambda_{\alpha_2}+\lambda_{\nu_2}$ implies that $\alpha_1=\alpha_2$ and $\nu_1=\nu_2$. The uniqueness of $\alpha$ simply follows from the fact that $\lambda_\alpha$ is supported on $\mathcal{W}$ with $\lambda_\nu(\mathcal{W})=0$, since for every $\eta$ with $\eta(E^\prime_d)\not=0$ we have $\minid(\eta)(\binfty)\in[0,1)$ which gives
    \begin{align*}
        \lambda_\nu\lc\mathcal{W}\rc&=\int_{M_d^0} \otimes_{j\in\N} \minid(\eta)\lc \{ \ubmx \in E_{d\times \N}\mid \bmx_j=\binfty \text{ for all but one } j\}\rc \nu(\rmd\eta)\\
        &=\int_{M_d^0} \sum_{j\in\N} \minid(\eta)(E_d)\prod_{i\in\N}\minid(\eta)(\binfty) \nu(\rmd\eta)=0.
    \end{align*}
    As $\lambda$ is unique, $\lambda_{\alpha_1}$ and $\lambda_{\alpha_2}$ must coincide on $\mathcal{W}$, and thus $\alpha_1=\alpha_2$. Next, consider $\lambda_{\nu}$ and recall that the distribution of an IDEM is uniquely determined by its Lévy characteristics. Thus, when $\nu_1\not=\nu_2$ there must exist two IDEMs $\mu_1$ and $\mu_2$ which are not identical in law but yield the same exchangeable min-id sequence. To this purpose, recall that an IDRM is uniquely determined by its evaluations on sets of the form $(\bmx,\binfty]^\complement$, where $\bmx\in [-\infty,\infty)^d$. Since $\mu\lc (\bmx,\binfty]^\complement\rc\geq 0$, it is thus sufficient to show that the Laplace transform of $\lc \mu\lc (\bmx_i,\binfty]^\complement\rc \rc_\leqn $ is uniquely determined by $\lambda$, which then implies that $(\alpha_1,\nu_1)$ and $(\alpha_2,\nu_2)$ must coincide. Therefore, for $\lc a_i\rc_\leqn\in \N^n$ recall that
    \begin{align*}
       \e\lk \exp\lc \sumn - a_i\mu\lc (\bmx_i,\binfty]^\complement\rc\rc \rk &=\exp\lc \lambda\lc  \lc\times_{i=1}^{a_1} (\bmx_1,\binfty] \times  \ldots\times_{i=1}^{a_n} (\bmx_n,\binfty] \times E_{d\times \N}\rc^\complement   \rc\rc.
    \end{align*}
    Next, recall that the law of a non-negative random vector is uniquely determined by its Laplace transform on $\N^n$, see e.g.\ \cite[p.\ 8]{kleiberstoyanov2013}. Thus, if there are two IDEMs $\mu_1$ and $\mu_2$ with different Lévy characteristics $(\alpha_1,\nu_1)$ and $(\alpha_2,\nu_2)$ there exists $( a_i)_\leqn \in\N^n$ and $(\bmx_i)_\leqn\in [-\binfty,\binfty)^n$ such that 
    $$\e\lk \exp\lc -\sumn  a_i\mu_i \lc (\bmx_i,\binfty]^\complement\rc\rc \rk\not=\e\lk \exp\lc -\sumn a_i\mu_2\lc (\bmx_i,\binfty]^\complement\rc\rc \rk.$$
    However, this implies that $\lambda_{\alpha_1}+\lambda_{\nu_1}\not=\lambda_{\alpha_2}+\lambda_{\nu_2}$ and we have a contradiction. Thus, $\lambda$ uniquely corresponds to a tuple of Lévy characteristics $(\alpha,\nu)$ and the law of $\ubmX$ is uniquely determined by $(\alpha,\nu)$. 

    The converse statement that every $\lambda$ of the form $\lambda=\lambda_\alpha+\lambda_\nu$ for some Lévy characteristics $(\alpha,\nu)$ is uniquely associated to an IDEM follows similarly.
\end{proof}

\subsection*{Proof of Proposition \ref{proptrafos}}

\begin{proof}
The IDEM property of $\mu$ is obvious in all three cases. Therefore, we focus on determining the Lévy characteristics in each case.
\begin{enumerate}
    \item[$(i)$] Note that $\mu_g$ is the image measure of $\mu$ under $g$. Thus, $\int_{E_d} f(\bmx)\mu_g(\rmd\bmx)=\int_{E_d} f(g(\bmx))\rmd\mu$, which gives us that 
    \begin{align*}
        &\e\lk \exp\lc -\int_{E_d} f(\bmx)\mu_g(\rmd\bmx)\rc \rk\\
        &=\exp\lc -\int_{E_d} f(g(\bmx))\alpha(\rmd\bmx)- \int_{M_d^0} 1-\exp\lc -\int f(g(\bmx))\eta(\rmd\bmx)\rc \nu(\rmd\eta)\rc \\
        &=\exp\lc -\int_{E_d} f(\bmx)\alpha_g(\rmd\bmx)- \int_{M_d^0} 1-\exp\lc- \int f(\bmx)\eta(\rmd\bmx)\rc \nu_g(\rmd\eta)\rc 
    \end{align*}
    for every non-negative measurable $f:E_d\to[0,\infty)$. Thus, $\mu_g$ has characteristic triplet $(\alpha_g,\nu_g)$.
    \item[$(ii)$] We note that $\mu\sim \alpha+\int_{M_d^0} \eta N(\rmd\eta)$ for a PRM $N=\sum_{i\in\N} \delta_{\eta_i}$ on $M_d^0$. Therefore, we have 
    \begin{align*}
        \mu^{(g)}(A)&=\int_{A} g(\bmx) \alpha(\rmd \bmx)+\int_{A} g(\bmx)  \lc \suminf \eta_i\rc (\rmd \bmx) = \alpha^{(g)}(A)+ \lim_{n\to\infty}\sumn\int_{A} g(\bmx)   \eta_i(\rmd \bmx) \\
        &=\alpha^{(g)}(A)+ \suminf\int_{A}  \eta_i^{(g)}(\rmd \bmx)=\alpha^{(g)}(A)+\int_{M_d^0} \eta^{(g)}(A) N(\rmd\eta),
    \end{align*}
    where we can interchange the limit in the integrator and the integration since $\sumn $ $\int_{A} g(\bmx)  \eta_i(\rmd \bmx)$ is a monotone increasing sequence. Note that $\suminf \delta_{\eta_i^{(g)}}$ is again a PRM with intensity measure $\nu\lc \{ \eta\mid \eta^{(g)}\in\cdot\}\rc$, since the map $\eta \mapsto \eta^{(g)}$ is measurable.
    Thus, we get
        \begin{align*}
        &\e\lk \exp\lc -\int_{E_d} f(\bmx)\mu^{(g)}(\rmd\bmx)\rc \rk\\
        &=\exp\lc -\int_{E_d} g(\bmx)f(\bmx)\alpha(\rmd\bmx) \rc \e\lk \exp\lc -\int_{E_d} f(\bmx) \lc \suminf\eta^{(g)}\rc(\rmd\bmx)\rc \rk \\
        &=\exp\lc -\int_{E_d} f(\bmx)\alpha^{(g)}(\rmd\bmx)  -\int_{M_d^0} 1-\exp\lc -\int_{E_d} f(\bmx) \eta(\rmd\bmx)\rc \nu^{(g)}(\rmd\eta)\rc  \\
    \end{align*}
    for every non-negative measurable $f:E_d\to[0,\infty)$, which proves the claim.
    \item[$(iii)$] Observe that 
    \begin{align*}
        \mu^{(\beta)}\lc A\rc&=\int_{A}\alpha\lc (-\binfty,\bmx]\rc  \beta(\rmd\bmx)+\int_{A}\suminf \eta_i\lc (-\binfty,\bmx]\rc  \beta(\rmd\bmx)\\
        &=\alpha^{(\beta)}\lc A\rc  +\suminf \eta_i^{(\beta)}\lc A\rc,
    \end{align*}
    where we can interchange integration and summation since $\eta_i\lc (-\binfty,\bmx]\rc\geq 0$.  Now, similar arguments as in $(ii)$ yield the claim.
\end{enumerate}
\end{proof}

\subsection*{Proof of Proposition \ref{propsubordinationcrm}}

\begin{proof}
\begin{itemize}
    \item[$(i)$]    
    First, when $\mu_0\lc (-\infty,t]\rc<\infty$ almost surely for all $t\in\R$ then 
    \begin{align*}
        P\lc \mu_i\lc (-\infty,t]\rc<\infty \rc&=\e \lk P\lc \mu_i\lc (-\infty,t]\rc<\infty \mid \mu_0\rc\rk\\
        &=\e \lk P\lc \int_{-\infty}^t \int_0^\infty a \rmd N^{(i,\mu_0)}(\rmd a,\rmd b) <\infty \mid \mu_0\rc\rk=1
    \end{align*}
    since (\ref{condcrmpolishpsace}) is almost surely satisfied by $\mu_0$ and where we used the notation that $N^{(i,\mu_0)}$ denotes a PRM on $(0,\infty)\times\R$ with intensity measure $\rho_i(a)\rmd a \mu_0(\rmd b)$. Thus, $\mu_i\lc (-\infty,t]\rc<\infty$ almost surely for all $1\leq i\leq d$ and $t\in\R$.
    
    To prove the remaining claims recall that the sum of independent CRMs is again a CRM. Further, since $\mu_0$ is an IDRM, we can find i.i.d.\ random measures $\lc \mu_0^{(j,n)}\rc_{1\leq j\leq n}$ such that $\mu_0\sim \sumn \mu_0^{(j,n)}$. Now, conditionally on $\lc \mu_0^{(j,n)}\rc_{1\leq j\leq n}$, for every $1\leq j\leq n$ we can construct independent vectors of CRMs $\lc\mu_i^{(j,n)}\rc_{1\leq i\leq d}$ where the entries $\mu_i^{(j,n)}$ are independent CRMs with Lévy characteristics $(\alpha_i/n,\nu_i^{(j,n)})$, where $\nu_i^{(j,n)}$ is the image of $(a,b)\mapsto a\delta_b$ under $\rho_i(a)\rmd a \rmd \mu_0^{(j,n)}$. Since the $\mu_0^{(j,n)}$ are i.i.d.\ we get that $\lc \lc\mu_i^{(j,n)}\rc_{1\leq i\leq d}\rc_{1\leq j\leq n}$ is a collection of $n$ i.i.d.\ vectors of random measures. Moreover, conditionally on $\lc \mu_0^{(j,n)}\rc_{1\leq  j\leq n}$, $\lc \sum_{j=1}^n \mu_i^{(j,n)}\rc_{1\leq i\leq d}$ is distributed as a vector of independent CRMs with respective base measures $\alpha_i$ and Lévy measures given as the image measure of $(a,b)\mapsto a\delta_b$ under $\rho_i\rmd a \sum_{j=1}^n \mu_0^{(j,n)} (\rmd b)$. Thus, since $\mu_0\sim \sum_{j=1}^n \mu_0^{(j,n)}$, we have $\lc \sum_{j=1}^n \mu_i^{(j,n)}\rc_{1\leq i\leq d}\sim \lc \mu_i\rc_{1\leq i\leq d}$.

    For brevity, we omit the details and refer to mimicking the proof of $(ii)$ to obtain the claimed Lévy characteristics of $\mu_i$.
    
    It remains to prove that when $\mu_0$ is a CRM then $\lc \mu_i\rc_{1\leq i\leq d}$ are also CRMs. Obviously, $\nu_{1,i}$ is the Lévy measure of a CRM. Further, $\nu_0$ is concentrated on measures of the form $\underline{A}:=\{\eta\mid\eta=a\delta_b +\infty\delta_\binfty \text{ for some }(a,b)\in(0,\infty)\times\R\}$. 
    In this case, $\crm_{\rho_i(a)\rmd a \eta(\rmd b)}$ is of the form $\crm_{\rho_i(a)\rmd a (a_0\delta_{b_0})(\rmd b)}$ and thus has at most one finite atom. Therefore, $\nu_{2,i}$ also concentrates on measures of the form $\underline{A}$ and the sum $\nu_{1,i}+\nu_{2,i}$ is still a measure concentrating on $\underline{A}$. Along the lines of the proof of \cite[Theorem 3.19]{kallenberg2017} it is now easy to deduce that every Lévy measure which concentrates on $\underline{A}$ is the Lévy measure of a CRM. Thus, the claim is proven.

    \item[$(ii)$] 
    First, it is obvious to see that $\mu_{\mu_1^{(\kappa_1)},\ldots,\mu^{(\kappa_d)}_d}$ as defined in (\ref{defsubordinatedCRMexpmeasure}) is an exponent measure whenever $\mu_i^{(\kappa_i)} \lc (-\infty,t]\rc<\infty$ for all $t\in\R$. Thus, it remains to show its infinite divisibility and for this purpose it is sufficient to derive the Lévy characteristics of $\mu_{\mu_1^{(\kappa_1)},\ldots,\mu^{(\kappa_d)}_d}$. 
    
    First, observe that $\mu_i=\alpha_i+\tilde{\mu}_i$ for all $0\leq i\leq d$, where $\tilde{\mu}_0$ is an IDRM with Lévy characteristics $(0,\nu_0)$ and $\tilde{\mu}_i$ is, conditionally on $\mu_0$, a CRM  without base measure and Lévy measure as the image measure of $(a,b)\mapsto a\delta_b$ under $\rho_i(a)\rmd a \mu_0(\rmd b)$. Note that for every non-negative measurable $f$ on $E^\prime_d$ we have 
    \begin{align*}
        &\int_{E_d^\prime} f(\bmx)\mu_{\mu_1^{(\kappa_1)},\ldots,\mu^{(\kappa_d)}_d}(\rmd\bmx) = \sumd \int_\R f(\bmx^{(i)}(s))\mu_i^{(\kappa_i)}(\rmd s)\\
        &=\sumd \int_{\R} f(\bmx^{(i)}(s))\alpha^{(\kappa_i)}_i(\rmd s)+ \sumd \int_{\R} f(\bmx^{(i)}(s))\kappa_i(s,y)\tilde{\mu}^{(\kappa_i)}_i(\rmd s)\\
        &=\int_{E^\prime_d}f(\bmx) \mu_{\alpha_1^{(\kappa_1)},\ldots,\alpha^{(\kappa_d)}_d}(\rmd\bmx)+\int_{E^\prime_d} f(\bmx)\mu_{\tilde{\mu}_1^{(\kappa_1)},\ldots,\tilde{\mu}^{(\kappa_d)}_d}(\rmd\bmx)
    \end{align*}
    where $\bmx^{(i)}(s)$ has all entries equal to $\infty$ expect for the $i$-th entry equal to $s$. Let $\lc \zeta_i\rc_{1\leq i\leq d}$ denote independent univariate CRMs without base measure and Lévy measure as the image measure of $(a,b)\mapsto a\delta_b$ under $\rho_i(a)\rmd a \alpha_0(\rmd b)$. Define
    \begin{align*}
        T_1:=&- \log\lc \e\lk \exp\lc- \int_{E_d^\prime} f(\bmx) \mu_{\zeta^{(\kappa_1)}_1,\ldots,\zeta^{(\kappa_d)}_d}(\rmd\bmx)\rc\rk \rc    \\
        &=-\log\lc  \e\lk \exp\lc- \sumd \int_{\R} f(\bmx^{(i)}(s))  \zeta_i^{(\kappa_i)} (\rmd s) \rc\rk \rc\\
        &=-\log\lc \e\lk  \exp\lc- \sumd \int_\R\lc \int_\R f(\bmx^{(i)}(s))\kappa_i(s,y)\rmd s \rc \zeta_i(\rmd y)\rc \rk \rc\\
        &= \sumd\int_{\R}   \int_0^\infty  1-\exp\lc -\int_\R f(\bmx^{(i)}(s)) (a\delta_b)^{(\kappa_i)}(\rmd s) \rc \rho_i(a)\rmd a  \alpha_0(\rmd b)    \\
        &= \sumd\int_{\R}   \int_0^\infty  1-\exp\lc - \int_{E_d^\prime} f(\bmx)\lc\otimes_{j=1}^{i-1} \delta_\infty \times (a\delta_b)^{(\kappa_i)}\otimes_{j=i+1}^d\rc (\rmd \bmx) \rc \rho_i(a)\rmd a  \alpha_0(\rmd b) 
    \end{align*}
    and
    \begin{align*}
        T_2:=& \sumd \int_{\R}  \int_{\R} f(\bmx^{(i)}(s))\kappa_i(s,y) \rmd s \alpha_i(\rmd y) =\int_{E_d^\prime}f(\bmx) \mu_{\alpha_1^{(\kappa_1)},\ldots,\alpha^{(\kappa_d)}_d}(\rmd \bmx) .
    \end{align*}
    We obtain
    \begin{align*}
    &-\log\lc \e\lk \exp\lc- \int_{E_d^\prime} f(\bmx)\mu_{\mu^{(\kappa_1)}_1, \ldots,\mu^{(\kappa_d)}_d}(\rmd\bmx)\rc \rk\rc\\
    &=-\log\lc \e\lk \exp\lc -\sumd \int_{\R} \int_{\R}  f(\bmx^{(i)}(s)) \kappa_i(s,y)  \rmd s\mu_i(\rmd y) \rc  \rk \rc\\
    &=-\log\lc \e\lk \e\lk \exp\lc -\sumd\int_{\R} \lc \int_{\R} f(\bmx^{(i)}(s))\kappa_i(s,y) \rmd s \rc \mu_i(\rmd y) \rc  \ \bigg\vert\ \alpha_0,\tilde{\mu}_0 \rk \rk \rc \\
    &=-\log\lc\e\lk \e\lk \exp\lc -\sumd\int_{\R} \lc \int_{\R} f(\bmx^{(i)}(s))\kappa_i(s,y) \rmd s \rc \tilde{\mu}_i(\rmd y) -T_2\rc  \ \bigg\vert\ \alpha_0,\tilde{\mu}_0 \rk \rk \rc \\
     &=-\log\lc \e\lk \exp\lc - \sumd \int_{\R}   \int_0^\infty \lc 1-e^{ -\int_{\R} af(\bmx^{(i)}(s))\kappa_i(s,b) \rmd s } \rc \rho_i(a)\rmd a  \tilde{\mu}_0(\rmd b)  \rc   \rk \rc \\
     & +T_1 +T_2\\
    &=-\log\lc \e\lk \exp\lc - \int_{\R} \lc \sumd \int_0^\infty \lc  1-e^{ -\int_{\R} af(\bmx^{(i)}(s))\kappa_i(s,b) \rmd s } \rc \rho_i(a)\rmd a \rc \tilde{\mu}_0(\rmd b)  \rc  \rk \rc \\
    & +T_1 +T_2 \\
    &=  \int_{M^0_d} \lc 1-\exp\lc -\int_{\R}  \lc \sumd \int_{0}^\infty \lc 1- e^{ -\int_{\R} af(\bmx^{(i)}(s))\kappa_i(s,b)\rmd s } \rc \rho_i(a)\rmd a \rc \eta(\rmd b) \rc \rc \nu_0(\rmd\eta)   \\
    &+T_1+T_2\\
    &=  \int_{M^0_d} \Bigg( 1-\exp\Bigg( -\sumd\int_{\R}   \int_{0}^\infty \lc 1- e^{ -\int_\R\lc \int_{\R} f(\bmx^{(i)}(s))\kappa_i(s,y)\rmd s \rc (a\delta_b)(\rmd y) } \rc \rho_i(a)\rmd a  \eta(\rmd b) \Bigg) \Bigg) \nu_0(\rmd\eta) \\ 
    & +T_1+T_2\\
    &\overset{\star}{=}  \int_{M^0_d} \lc 1-\e \lk e^{ -\sumd \int_{\R} \lc \int_{\R}  f(\bmx^{(i)}(s))\kappa_i(s,b)\rmd s \rc \eta_i(\rmd b)  }\rk  \rc  \nu_0(\rmd\eta)+ T_1+T_2 \\
    &=  \int_{M^0_d}  \e\lk 1- \exp\lc -\int_{E_d^\prime} f(\bmx) \mu_{\eta^{(\kappa_1)}_1,\ldots,\eta^{(\kappa_d)}_d} (\rmd\bmx) \rc\rk   \nu_0(\rmd\eta)  +T_1+T_2 \\
    &=  \int_{M^0_d}\int_{\times_{i=1}^d M^0_d}   1- \exp\lc -\int_{E_d^\prime} f(\bmx) \mu_{\eta^{(\kappa_1)}_1,\ldots,\eta^{(\kappa_d)}_d} (\rmd\bmx) \rc \otimes_{i=1}^d \crm_{\rho_i(a)\rmd a \eta(\rmd b)}(\rmd \eta_i)  \nu(\rmd\eta)\\
    &+T_1+T_2 \\
    \end{align*}
    where in $\star$ $\lc \eta_i\rc_{1\leq i\leq d}$ denote independent CRMs without base measure and intensity $\rho_i(a)\rmd a \eta(\rmd b)$ and $\mu_{\eta^{(\kappa_1)}_1,\ldots,\eta^{(\kappa_d)}_d} $ denotes an exponent measure constructed as in (\ref{defsubordinatedCRMexpmeasure}). Thus, combining the above we have that $\mu_{\mu^{(\kappa_1)}_1, \ldots,\mu^{(\kappa_d)}_d}$ has base measure $\mu_{\alpha^{(\kappa_1)}_1,\ldots,\alpha^{(\kappa_d)}_d}$ and Lévy measure $\nu=\nu_1+\nu_2$, where $\nu_1$ is given by 
    $$  \nu_1(\eta \in A)= \sumd \int_{0}^\infty \int_\R  \id_{\{ \otimes_{j=1}^{i-1} \delta_\infty \times (a\delta_b)^{(\kappa_i)}\otimes_{j=i+1}^d \delta_\infty\in A \}} \rho_i(a)\rmd a \alpha_0(\rmd b) $$
    and $\nu_2$ is defined as
    $$ \nu_2(\eta \in A)=\int_{M^0_d}\int_{\times_{i=1}^d M^0_d}  \id_{\big\{ \mu_{\eta_1^{(\kappa_1)},\ldots,\eta_d^{(\kappa_d)}} \in A \big\}} \otimes_{i=1}^d \crm_{\rho_i(a)\rmd a \eta(\rmd b)}(\rmd \eta_i) \nu_0(\rmd\eta)$$
   proving the claims.
   
    \end{itemize}

\end{proof}

\subsection*{Proof of Theorem \ref{thmdombryeyiminkcond}}
\begin{proof}
    Note that every min-id sequence $\ubmX$ can be viewed as a continuous min-id process with index set $d\times\N$. Condition \ref{condcontmargins} implies that $\ubmX$ has continuous marginal distributions and therefore \cite[Theorem 3.3]{dombryeyiminko2013regular} can be applied to every $\ubmX_n+k$, as their results are only formulated for compact index sets. A translation of their formulas for max-id processes to min-id sequences yields the claimed representation.
\end{proof}

\subsection*{Proof of Lemma \ref{lemcondhitseqex}}
\begin{proof}
   From (\ref{defconddistexpmeasure}) we can deduce that for every $IJ\subset d\times n$ the probability kernel $K_{IJ}(\bmx_{IJ},\cdot)$ $\lambda^{(IJ)}$-almost surely defines the law of a sequence of random vectors $\ubmZ^{(IJ)}$ such that $\lc \bmZ^{(IJ)}_{n+j}\rc_{j\in\N}$ is exchangeable, since otherwise it could be proven that $\lambda$ is non-exchangeable on a $\lambda$ non-nullset, which would imply that $\ubmX$ is not exchangeable. The details are left to the reader. Next, by (\ref{eqnpostcomp2}) and conditionally on $\Tilde{\Theta}$, we observe that the law of the sequences $\ubmZ^{(n,l)}$  corresponds to the distribution the sequence $\ubmZ^{(\Tilde{\Theta}_l)}\sim K_{\Tilde{\Theta}_l}(\bmX_{\Tilde{\Theta}_l},\cdot)$ conditionally on $ \ubmZ^{(\Tilde{\Theta}_l)}_{ n\setminus \Tilde{\Theta}_l}> \ubmX_{n\setminus \Tilde{\Theta}_l}$. Since $\lc\bmZ_{n+j}^{(\Tilde{\Theta}_l)}\rc_{j\in\N}$ is exchangeable, one can deduce that $\ubmZ^{(n,l)}$ must be exchangeable as well and the claim is proven. 
\end{proof}

\subsection*{Proof of Theorem \ref{thmmainresult}}
\begin{proof}
    Everything except for the independence of $\overline{\mu}_n$ and $S_n$ follows from the discussion above the theorem and Lemma \ref{lemcondhitseqex}. \cite[Theorem 3.3 3.]{dombryeyiminko2013regular} provides that $\underline{\bm Y}^{(n)}$ and $\lc \Tilde{\Theta},\lc\underline{\bm Z}^{(n,l)}\rc_{1\leq i\leq L}\rc$ are independent conditionally on $\ubmX_n$. Since $\overline{\mu}_n$ and $S^{(n,l)}$ can be recovered from $\underline{\bm Y}^{(n)}$ and $\underline{\bm Z}^{(n,l)}$, respectively, we obtain their independence.
\end{proof}

\subsection*{Proof of Lemma \ref{lemconddistexpregcond}}
\begin{proof}
    By definition, $K_{IJ}\lc \bmx_{IJ},\cdot\rc$ is a probability measure on $(-\infty,\infty]^{md-\vert IJ\vert}$ for every $\bmx_{IJ}\in E_{\vert IJ\vert}^\prime$, since $f^{(IJ)}$ and $g^{(IJ)}$ are the appropriate marginalization of $f^{(d\times m)}$ and $
    g^{(d\times m)}$.   It remains to show that for every $\mathsf{G}:E_{\vert IJ\vert}\times E_{md-\vert IJ\vert}\to [0,\infty)$ which vanishes on $\{\binfty\}\times E_{md-\vert IJ\vert}$ (\ref{defconddistexpmeasure}) holds. We have
    \begin{align*}
       &\int_{E^\prime_{\vert IJ\vert}}\int_{E_{md-\vert IJ\vert}} \mathsf{G}\lc \bmx_{IJ},\ubmx_{m\setminus IJ}\rc K_{IJ}\lc  \bmx_{IJ},\rmd \ubmx_{m\setminus IJ}\rc \lambda^{(IJ)}\lc  \rmd \bmx_{IJ}\rc\\
       &=\int_{E^\prime_{\vert IJ\vert}}\int_{E_{md-\vert IJ\vert}} \mathsf{G}\lc \bmx_{IJ},\ubmx_{m\setminus IJ}\rc \frac{f^{(d\times m)}\lc \ubmx_m \rc+g^{(d\times m)}\lc \ubmx_m\rc}{f_{IJ}\lc \bmx_{IJ} \rc+g^{(IJ)}\lc \bmx_{IJ} \rc} \lc \Lambda^{(\infty)}_{md-\vert IJ\vert}\rc (\rmd\ubmx_{m\setminus IJ})\\
       &\ \ \ \ \ \ \lc f^{(IJ)}\lc\bmx_{IJ}\rc+g^{(IJ)}\lc\bmx_{IJ}\rc  \rc\lc\Lambda_{\vert IJ\vert}^{(\infty)}\rc (\rmd \bmx_{IJ})\\
       &=\int_{E_{md}} \mathsf{G}\lc \bmx_{IJ},\ubmx_{m\setminus IJ}\rc
       \lambda^{(d\times m)}(\rmd\ubmx_m),
    \end{align*}
    since $f^{(d\times m)}(\rmd\ubmx_m)+g^{(d\times m)}(\rmd\ubmx_m) \lc\otimes_{i=1}^{md} \Lambda^{(\infty)}\rc(\rmd\ubmx_m)=\lambda^{(d\times m)}(\rmd\ubmx_m)$, using that $\mathsf{G}(\binfty,\cdot)=0$. Thus (\ref{defconddistexpmeasure}) holds and $K_{IJ}$ is the regular conditional probability of $\lambda^{(d\times m)}$.
\end{proof}

\subsection*{Proof of Lemma \ref{lemconddistextrfctregcond}}

\begin{proof}
To ease the notation let $m=n+k$ and set $\tilde{\ubmX}_m:=\lc \bmX_j\id_{\{j\leq d\}}+\bmx_j\id_{\{n< j\leq m\}} \rc_{1\leq j\leq m}$ and $\hat{\ubmX}_m:=\lc \bmX_j\id_{\{j\leq d\}}+\binfty\id_{\{n< j\leq m\}} \rc_{1\leq j\leq m}$. We use Lemma \ref{lemconddistexpregcond}, Fubini, the representation of $f^{(IJ)}$ in (\ref{def_f_IJ}) and Theorem \ref{thmdombryeyiminkcond} to obtain
    \begin{align*}
    &P\lc  \bm Z^{(n,l)}_{i} >\bmx_{i} \text{ for all }n< i\leq m \mid \Tilde{\Theta},\ubmX_n\rc\\
    &=\frac{ K_{\Tilde{\Theta}_l}\lc \bmX_{\Tilde{\Theta}_l}, \{ \ubmy_{m\setminus\Tilde{\Theta}_l}\in E_{md-\vert \tilde{\Theta}_l\vert } \mid \ubmy_{m\setminus\Tilde{\Theta}_l}>\tilde{\ubmX}_{m\setminus\Tilde{\Theta}_l}\}\rc }
    {K_{\Tilde{\Theta}_l}\lc \bmX_{\Tilde{\Theta}_l},\{\ubmy_{n\setminus \Tilde{\Theta}_l}\in E_{nd-\vert \tilde{\Theta}_l\vert } \mid \ubmy_{n\setminus \Tilde{\Theta}_l}>\ubmX_{n\setminus \Tilde{\Theta}_l}\}\rc}  \\
    &=C(\ubmX_n,\Tilde{\Theta}_l)^{-1}  \int_{\{ \ubmy_{m\setminus \Tilde{\Theta}_l}>\Tilde{\ubmX}_{m\setminus \Tilde{\Theta}_l}\}} \lc f^{(d\times m)}+g^{(d\times m)}\rc\lc (\tilde{\ubmX}_m,\ubmy_m)(\Tilde{\Theta}_l)\rc  \Lambda^{(\infty)}_{md-\vert \Tilde{\Theta}_l\vert} \lc\rmd \ubmy_{m\setminus \Tilde{\Theta}_l} \rc  \\
    &= C(\ubmX_n,\Tilde{\Theta}_l)^{-1} \Bigg( \int_{M_d^0}   \int_{\{ \ubmy_{m\setminus \Tilde{\Theta}_l}>\Tilde{\ubmX}_{m\setminus \Tilde{\Theta}_l} \}}  \prod_{i=1}^m f_\eta\lc (\tilde{\ubmX}_m,\ubmy_m)(\Tilde{\Theta}_l)\rc \Lambda^{(\infty)}_{md-\vert \Tilde{\Theta}_l\vert}  \lc\rmd \ubmy_{m\setminus \Tilde{\Theta}_l} \rc  \nu(\rmd\eta) \\
    &+ \int_{\{ \ubmy_{m\setminus \Tilde{\Theta}_l}>\Tilde{\ubmX}_{m\setminus \Tilde{\Theta}_l}\}} g^{(d\times m)}\lc (\tilde{\ubmX}_m,\ubmy_m)(\Tilde{\Theta}_l)\rc  \Lambda^{(\infty)}_{md-\vert \Tilde{\Theta}_l\vert} \lc\rmd \ubmy_{m\setminus \Tilde{\Theta}_l} \rc \Bigg)  \\
    &\overset{\star}{=}C(\ubmX_n,\Tilde{\Theta}_l)^{-1}  \Bigg( \int_{M_d^0} \int_{\{ \bmy_{n+j}>\bmx_{j}\forall 1\leq j\leq k\}}  \prod_{j=1}^{k} f_\eta(\bmy_{n+i}) \int_{\{ \ubmy_{n\setminus \Tilde{\Theta}_l}>\ubmX_{n\setminus \Tilde{\Theta}_l} \}} \prod_{i=1}^n f_\eta\lc (\bmX_i,\bmy_i)(\Tilde{\Theta}_l)\rc \\
    &\ \ \ \Lambda^{(\infty)}_{nd-\vert \Tilde{\Theta}_l\vert}   \lc\rmd \ubmy_{n\setminus\Tilde{\Theta}_l}\rc  \Lambda^{(\infty)}_{kd}  \lc\rmd\lc \bmy_{n+j}\rc_{1\leq j\leq k} \rc  \nu(\rmd\eta)   \\
    & + \int_{\{ \bmy_{n+j}=\binfty \forall 1\leq j\leq k, \ubmy_{n\setminus \Tilde{\Theta}_l}>\ubmX_{n\setminus \Tilde{\Theta}_l}\}} g^{(d\times m)}\lc (\hat{\ubmX}_m,\ubmy_m)(\Tilde{\Theta}_l)\rc \Lambda^{(\infty)}_{md-\vert \Tilde{\Theta}_l\vert}  \lc\rmd \ubmy_{m\setminus \Tilde{\Theta}_l} \rc   \Bigg) \\
    &\overset{\star}{=}C(\ubmX_n,\Tilde{\Theta}_l)^{-1}  \Bigg( \int_{M_d^0} \underset{1\leq  j\leq k}{\otimes} \minid(\eta) \lc  \big \{ \ubmy_k\mid \bmy_{j}>\bmx_{j}\ \forall\ 1\leq j\leq k \big\}  \rc  h(\eta,\ubmX_n,\Tilde{\Theta}_l) \nu(\rmd\eta) \\
    &\ \ \ \Lambda^{(\infty)}_{nd-\vert \Tilde{\Theta}_l\vert}   \lc\rmd \ubmy_{n\setminus\Tilde{\Theta}_l}\rc  \Lambda^{(\infty)}_{kd}  \lc\rmd\lc \bmy_{n+j}\rc_{1\leq j\leq k} \rc  \nu(\rmd\eta)   \\
    & + \int_{\{  \ubmy_{n\setminus \Tilde{\Theta}_l}>\ubmX_{n\setminus \Tilde{\Theta}_l}\}} g^{(d\times n)}\lc (\ubmX_n,\ubmy_n)(\Tilde{\Theta}_l)\rc \Lambda^{(\infty)}_{nd-\vert \Tilde{\Theta}_l\vert}  \lc\rmd \ubmy_{n\setminus \Tilde{\Theta}_l} \rc   \Bigg), \\
\end{align*}
where 
\begin{align*}
    C(\ubmX_n,\Tilde{\Theta}_l):=& \int_{M_d^0} h(\eta,\ubmX_n,\Tilde{\Theta}_l) \nu(\rmd\eta)\\
    &+ \int_{\{  \ubmy_{n\setminus \Tilde{\Theta}_l}>\ubmX_{n\setminus \Tilde{\Theta}_l}\}} g^{(d\times n)}\lc (\ubmX_n,\ubmy_n)(\Tilde{\Theta}_l)\rc \Lambda^{(\infty)}_{nd-\vert \Tilde{\Theta}_l\vert}  \lc\rmd \ubmy_{n\setminus \Tilde{\Theta}_l} \rc
\end{align*}
and
$$h(\eta,\ubmX_n,\Tilde{\Theta}_l) :=  \int_{\{ \ubmy_{n\setminus \Tilde{\Theta}_l}>\ubmX_{n\setminus \Tilde{\Theta}_l}  \}} \prod_{i=1}^n f_\eta\lc (\ubmX_n,\ubmy_n)(\Tilde{\Theta}_l)\rc \lc  \Lambda^{(\infty)}_{nd-\vert \Tilde{\Theta}_l\vert} \rc \lc\rmd\ubmy_{n\setminus \Tilde{\Theta}_l} \rc.$$
Note that in $\star$ we have repeatedly used that $\ubmX_n$ is real-valued and that $g^{(d\times m)}$ is concentrated on $\mathcal{W}$. Moreover, $C(\ubmX_n,\Tilde{\Theta}_l)$ is real-valued since $C(\ubmX_n,\Tilde{\Theta}_l)\leq f^{(\Tilde{\Theta}_l)}(\ubmX_{\Tilde{\Theta}_l})+g^{(\Tilde{\Theta}_l)}(\ubmX_{\Tilde{\Theta}_l})$.
\end{proof}

\subsection*{Proof of Lemma \ref{lemcondhitscenregcond}}
\begin{proof}
    The distribution of the conditional hitting scenario can be characterized analytically. Recalling (\ref{defbetameasures}), we define the family of measures $\lc\beta_\theta\rc_{\theta\in\mathcal{P}_n}$ via $\beta_\theta (A):=P\lc\ubmX_n\in A, \Theta=\theta\rc$. Trivially, each $\beta_\theta$ is absolutely continuous w.r.t.\ $\beta:=\sum_{\theta\in\mathcal{P}_n} \beta_\theta$ with a density $d\beta_\theta/d\beta$ on $\R^{d\times n}$. Now, \cite[Theorem 3.2.1]{dombryeyiminko2013regular} provides that 
    $$ \tau(\ubmX_n,\Tilde\Theta)=\frac{d\beta_{\Tilde\Theta}}{d\beta}(\ubmX_n). $$
    and by (\ref{defbetameasures}) $\beta_{\theta}(\rmd\ubmx_n)$ is given by
    $$ \exp\lc -\lambda^{(d\times n)}\lc (\ubmx_n,\binfty]^\complement\rc\rc \prod_{l=1}^{L(\theta)} K_{ \theta_l}\lc \bmx_{\theta_l},\big\{ \ubmy_{n\setminus \theta_l}\mid \ubmy_{n\setminus \theta_l}>\ubmx_{n\setminus \theta_l} \big\}\rc   \otimes_{l=1}^L\lambda^{(\theta_l)}(\rmd \bmx_{\theta_l}).$$
    Noting that 
    $$\otimes_{l=1}^{L(\Theta)} \lambda^{(\Theta_l)}(\rmd \bmx_{\Theta_l})=\lc\prod_{l=1}^{L(\Theta)} f^{(\Theta_l)}(\bmx_{\Theta_l})+g^{(\Theta_l)}(\bmx_{\Theta_l})\rc  \lc \Lambda_{nd}^{(\infty)} \rc(\rmd\ubmx_n)$$
    we get that $\beta$ and $\beta_\theta$ both have a density w.r.t.\ $\Lambda^{(\infty)}_{nd}$. Thus, $\frac{d\beta_{\theta}}{d\beta}$ is given by
    \begin{align*}
        &\frac{\exp\lc- \lambda^{(d\times n)}\lc (\ubmx_n,\binfty]^\complement\rc\rc \prod_{l=1}^{L(\theta)} K_{ \theta_l}\lc \bmx_{\theta_l},\{ \ubmy_{n\setminus \theta_l}>\ubmx_{n\setminus \theta_l}\}\rc \prod_{l=1}^{L(\theta)} \lc f^{(\theta_l)}+g^{(\theta_l)}\rc (\bmx_{\theta_l}) }
    {\sum_{\tilde{\theta}\in\mathcal{P}_n}\exp\lc -\lambda^{(d\times n)}\lc (\ubmx_n,\binfty]^\complement\rc\rc \prod_{l=1}^{L(\tilde{\theta})} K_{ \Tilde{\theta}_l}\lc \bmx_{\tilde{\theta}_l},\{ \ubmy_{n\setminus \Tilde{\theta}_l}>\ubmx_{n\setminus \Tilde{\theta}_l}\}\rc \prod_{l=1}^{L(\Tilde{\theta})} \lc f^{(\tilde{\theta}_l)}+g^{(\tilde{\theta}_l)}\rc(\bmx_{\tilde{\theta}_l})}\\
    &=     \frac{ \prod_{i=1}^{L(\theta)}  \int_{\{  \ubmy_{n\setminus \theta_l}>\ubmx_{n\setminus \theta_l}\}} \lc f^{(d\times n)}+ g^{(d\times n)}\rc\lc (\ubmx_n,\underline{\bmy}_n)(\tilde{\theta}_l) \rc  \lc  \Lambda^{(\infty)}_{nd-\vert \theta_l\vert}\rc  \rmd \lc \ubmy_{n\setminus \theta_l}\rc     }
    {\sum_{\tilde{\theta}\in\mathcal{P}_n}  \prod_{i=1}^{L(\tilde{\theta})} \int_{\{  \ubmy_{n\setminus \tilde{\theta}_l}>\ubmx_{n\setminus \Tilde{\theta}_l}\}} \lc f^{(d\times n)}+ g^{(d\times n)}\rc\lc (\ubmx_n,\underline{\bmy}_n)(\tilde{\theta}_l) \rc   \lc \Lambda^{(\infty)}_{nd-\vert \tilde{\theta}_l\vert}\rc   \rmd  \lc\ubmy_{n\setminus \Tilde{\theta}_l}\rc   }
    \end{align*}
    where we used the representation of $K_{IJ}$ from Lemma \ref{lemconddistexpregcond} and $(\ubmx_n,\underline{\bmy}_n)(\theta_l)=\big( \id_{\{ (i,j)\in \theta_{l} \}}x_{i,j}+ y_{i,j}\id_{\{(i,j)\not \in \theta_{l}\}}\big)_{(i,j)\in d\times n}$.
\end{proof}

\subsection*{Proof of Lemma \ref{lemsurvivmodel}}
\begin{proof}
    First we need to show that $\mu_i^{(\kappa_i)}([0,t))<\infty$ for all $t>0$.
    Note that by Condition \ref{condkernel} we have that for every $t>0$ we can find a $C>0$ such that $\int_0^t \int_0^\infty \kappa_i(s,b)\mu_i(\rmd b)\rmd s =\int_0^t \int_0^C \kappa_i(s,b)\mu_i(\rmd b)\rmd s $. Thus, since $\kappa_i$ is also bounded by Condition \ref{condkernel}, $\mu_i^{(\kappa_i)}([0,t))<\infty$ for all $t>0$ if and only if $\mu_i\lc [0,C)\rc<\infty$ for all $C>0$. It is easy to see that $\mu_i\lc [0,C)\rc<\infty$ if $\mu^{(\kappa_0)}_0\lc [0,C)\rc<\infty$. By the same arguments as above, $\mu^{(\kappa_0)}_0\lc [0,C)\rc<\infty$ if and only if $\int_0^{\bar{C}} \int_0^\infty \min\{a_0,1\} l(a_0,b_0)\rmd a_0\rmd b_0<\infty$ for all $\bar{C}>0$, which is satisfied by Condition \ref{condkernel}.

    Next, we show that $\mu_i^{(\kappa_i)}([0,t))>0$ for all $t>0$. To see that $\mu_0([0,t))>0$ almost surely notice that $\int_0^\infty l(a_0,b_0)\rmd a_0=\infty$ for every $b_0>0$ together with $l(a_0,b_0)>0$ implies that $\mu_0$ has infinitely many non-zero atoms which are dense in $[0,t)$, thus $\mu_0([0,t))>0$ almost surely. This also implies that $\mu_0^{(\kappa_0)}([0,t))>0$ since for every non-zero atom $b$ of $\mu_0$ we use Condition \ref{condkernel} to find a $\Bar{b}>b$ such that $\kappa_0(\tilde{s},b)>0$ for all $\tilde{s}\in(b,\Bar{b})$, which implies $\int_{\{b\}}\int_b^{\Bar{b}} \kappa_0(\tilde{s},\tilde{b}) \rmd\tilde{s} \mu_0(\rmd \tilde{b})=\mu_0(\{b\})\int_b^{\Bar{b}} \kappa_0(\tilde{s},b) \rmd\tilde{s}>0 $, implying $\mu_0^{(\kappa_0)}((b,\Bar{b}))>0$. Therefore, $\mu_i([0,t))>0$ almost surely, since $\int_0^\infty \rho_i(a)\rmd a=\infty$ and, conditionally on $\mu_0$, the PRM representation of $\mu_i$ has infinitely many non-zero atoms in $[0,t)]$. Similarly as before, we can now deduce that $\mu_i^{(\kappa_i)}([0,t))>0$ almost surely. 
    
    Let us show that $\mu_{\mu^{(\kappa_1)}_1,\ldots,\mu^{(\kappa_d)}_d}$ is an IDEM. For $\int_0^\infty \kappa_i(s,b)\rmd s=\infty$ we now obviously have $\lim_{t\to\infty}\mu_i^{(\kappa_i)}([0,t))=\infty$ and $\mu_{\mu^{(\kappa_1)}_1,\ldots,\mu^{(\kappa_d)}_d}$ is an IDEM. If $\mu_0([0,\infty))=\infty$ we have  $\mu_0^{(\kappa_0)}([0,\infty))=\int_0^\infty \int_0^\infty \kappa_0(s,b)\rmd s\mu_0(\rmd b)\geq \int_0^\infty \int_b^{b+\delta} C\rmd s\mu_0(\rmd b)=\infty$. Therefore, $\mu_i([0,\infty))=\infty$ almost surely and the analogous argument yields $\mu_i^{(\kappa_i)}([0,\infty))=\infty$. Thus, $\mu_{\mu^{(\kappa_1)}_1,\ldots,\mu^{(\kappa_d)}_d}$ is an IDEM and  Proposition \ref{proptrafos} in combination with Proposition \ref{propsubordinationcrm} $(ii)$ provides that the Lévy measure has the claimed form.

    It remains to verify Condition \ref{condabsolutecont}. Since for each $1\leq i\leq d$  $\mu_i^{(\kappa)}([0,\infty))=\infty$ we have $\e\lk \exp\lc -\mu^{(\kappa_i)}_i([0,t])\rc\rk=\exp\lc- \int_{M_d^0} \minid(\eta)\lc \{ \bmx\in E_d\mid x_i\leq t \}\rc \nu(\rmd\eta)\rc\to 0$ for $t\to\infty$, i.e. $\lim_{t\to\infty}\int_{M_d^0} \minid(\eta)\lc \{ \bmx\in E_d\mid x_i\leq t \}\rc \nu(\rmd\eta)=\infty$. Furthermore, from the representation of $\nu$ it is easy to see that $\nu$-almost-every $\eta$ has a density w.r.t.\ $\Lambda^{(\infty)}_d$ of the claimed form, which shows that Condition \ref{condabsolutecont} is satisfied.
\end{proof}

\subsection*{Proof of Lemma \ref{lemdenshierachicalmodel}}

\begin{proof}
First, note that the density of $\mu_{\eta^{(\kappa_1)}_1,\ldots,\eta^{(\kappa_d)}_d}$ w.r.t. $\Lambda^{(\infty)}_{d}$ is $\prodd e_{\eta_i,\kappa_i}(x_i)$. We use the general formula for the density $f^{(d\times m)}$ defined in (\ref{def_f_m}) to get
    \begin{align*}
        &f^{(d\times m)}(\ubmx_m)=\sumd   \id_{\{ x_{k,j}=\infty \forall (k,j)\in d\times m, k\not=i\}}\int_{0}^\infty \int_{0}^\infty  \lc \prod_{j=1}^m e_{a\delta_b,\kappa_i}(x_{i,j}) \rc  \rho_i(a)\rmd a \alpha^{(\kappa_0)}_0(\rmd b)\\
        &+ \int_{0}^\infty \int_{0}^\infty \int_{\times_{i=1}^d M^0_d}  \prod_{j=1}^m \prodd  e_{\eta_i,\kappa_i}(x_{i,j}) \otimes_{i=1}^d \crm_{\rho_i(a)\rmd a (a_0\delta_{b_0})^{(\kappa_0)}(\rmd b)}(\rmd \eta_i) l(a_0,b_0)\rmd a_0\rmd b_0\\
        &=\sumd   \id_{\{ x_{k,j}=\infty \forall (k,j)\in d\times m, k\not=i\}}\int_{0}^\infty \int_{0}^\infty  \prod_{j=1}^m  e_{a\delta_b,\kappa_i}(x_{i,j})    \rho_i(a)\rmd a \alpha^{(\kappa_0)}_0(\rmd b)\\
        &+  \int_{0}^\infty \int_{0}^\infty  \prodd  \e_{\eta_i\sim \crm_{\rho_i(a)\rmd a (a_0\delta_{b_0})^{(\kappa_0)}(\rmd b)}}\bigg[ \prod_{j=1}^m    e_{\eta_i,\kappa_i}(x_{i,j})  \bigg]   l(a_0,b_0)\rmd a_0\rmd b_0 
    \end{align*}
    Using (\ref{def_f_IJ}), the claimed result immediately follows.
\end{proof}

\section{Technical results on random measures and their support}
\label{appdefnrandmeas}
Given the sequence of measurable sets $(U_i)_{i\in\N}$ defined in (\ref{deflocseq}) we may construct a localizing ring by $\hat{S}:=\cup_{n\in\N} \bc\lc E^\prime_d\rc\cap S_n $, where $S_n:= \cup_{i=1}^n U_i$. According to \cite[Chapter 2]{kallenberg2017} a random measure on (the standard Borel space) $E^\prime_d$ is a random element in the space 
$$\{ \mu \mid \mu \text{ is measure on $E^\prime_d$ which is finite on every set in } \hat{S}  \},$$
which is equipped with the Borel $\sigma$-algebra generated by all evaluations of measures on Borel sets of $E^\prime_d$. We show that this definition coincides with our definition of a random measure from Definition \ref{defnrandommeasure}

\begin{prop}
    $\mu$ is a random measure on $E^\prime_d$ with localizing ring $\hat{S}$ in the sense of \cite[Chapter 2]{kallenberg2017} if and only if it can be extended to a random measure on $E_d$ with localizing sets $(U_i)_{i\in\N}$ in the sense of Definition \ref{defnrandommeasure}.
\end{prop}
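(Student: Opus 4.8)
The plan is to show the two constructions are really the same object dressed up differently, by identifying the localizing structures. First I would recall the precise setup on both sides. On the $E_d'$ side, a random measure with localizing ring $\hat{S}=\cup_{n} \bc(E_d')\cap S_n$ lives on the space of measures on $E_d'$ that are finite on every set of $\hat{S}$, equivalently finite on each $S_n$, equivalently finite on each $U_i$ (since $S_n=\cup_{i\le n}U_i$ and the $U_i$ are a partition, finiteness on all $S_n$ is equivalent to finiteness on all $U_i$). On the $E_d$ side, Definition \ref{defnrandommeasure} uses the partition $(U_i)_{i\in\N}$ of $E_d'$ and requires $\eta(U_i)<\infty$ and $\eta(\binfty)=\infty$; so a measure in $M_d$ is exactly a measure on $E_d$ which restricts to a measure on $E_d'$ that is finite on each $U_i$, plus a forced infinite atom at $\binfty$.

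Next I would set up the bijection. Given a measure $\eta$ on $E_d'$ finite on each $U_i$, define its extension $\bar\eta$ on $E_d$ by $\bar\eta(A):=\eta(A\cap E_d')+\infty\cdot\id_{\{\binfty\in A\}}$; this lands in $M_d$. Conversely, given $\eta\in M_d$, restrict to $E_d'$ to get a measure finite on each $U_i$. These two maps are mutually inverse, so I get a bijection $\Phi$ between the underlying measure spaces of the two definitions. The real content is that $\Phi$ is bimeasurable for the two Borel $\sigma$-algebras (both generated by evaluation maps $\eta\mapsto\eta(B)$ on Borel sets). For $B\subseteq E_d'$ Borel, $\bar\eta(B)=\eta(B)$, so evaluation maps on the $E_d'$ side pull back to evaluation maps on the $E_d$ side; conversely for $B\subseteq E_d$ Borel, $\bar\eta(B)=\eta(B\cap E_d')+\infty\id_{\{\binfty\in B\}}$ is a measurable function of $(\eta(B\cap E_d'))$, hence measurable for the $E_d'$-side $\sigma$-algebra (the constant-added-when-$\binfty\in B$ term is deterministic). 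Since the generating families of evaluation maps correspond under $\Phi$ up to deterministic modifications, $\Phi$ and $\Phi^{-1}$ are measurable, so the two $\sigma$-algebras are carried onto each other. Therefore $\mu$ is a measurable map into one space iff $\Phi\circ\mu$ (resp.\ $\Phi^{-1}\circ\mu$) is measurable into the other, which is exactly the claimed equivalence.

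Finally I would check the local-finiteness bookkeeping matches: being a random measure in Kallenberg's sense for the ring $\hat{S}$ means all evaluations on sets of $\hat{S}$ are finite, which (as noted above) is the same as finiteness on every $U_i$, i.e.\ the condition built into $M_d$; so no measure is lost or gained in either direction, and the almost-sure infinite atom at $\binfty$ on the $E_d$ side carries no information and is simply appended deterministically. The step I expect to be the main obstacle is not conceptual but the careful verification that the two Borel $\sigma$-algebras generated by evaluation maps actually coincide under the extension/restriction map — in particular making sure that allowing Borel sets $B\subseteq E_d$ containing $\binfty$ does not enlarge the $\sigma$-algebra beyond what evaluations on $E_d'$-Borel sets already generate, which is where the deterministic nature of $\bar\eta(\{\binfty\})=\infty$ is used.
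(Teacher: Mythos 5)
Your proposal is correct and follows essentially the same route as the paper's proof: both reduce the statement to (a) the equivalence of finiteness on the ring $\hat{S}$ with finiteness on each $U_i$ (using $S=B\cap\cup_{i\le n}U_i$ in one direction), and (b) the observation that appending the deterministic infinite atom at $\binfty$ is a measurable operation that does not change the evaluation-generated $\sigma$-algebra. Your write-up is somewhat more explicit about the bimeasurability of the extension/restriction bijection, which the paper compresses into the remark that the relevant $\sigma$-algebras coincide, but the argument is the same.
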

\begin{proof}
    Note that the $\sigma$-algebra that defines measurability of a random measure on $E^\prime_d$ is identical in both definitions. 
    Thus, it remains to show that the finiteness of $\mu$ on the localizing sets $(U_i)_{i\in\N}$ is equivalent to finiteness of $\mu$ on the localizing ring of $E^\prime_d$.
    
    Obviously, if $\mu$ is a random measure on $E^\prime_d$ with localizing ring $\hat{S}$ we have $\mu\lc U_i\rc<\infty$, since each $U_i$ is measurable and thus an element of the localizing ring. Therefore, a random measure in the sense of \cite[Chapter 2]{kallenberg2017} is a random measure on $E^\prime_d$ according to Definition \ref{defnrandommeasure}.
    
    For the converse, we solely need to show that Definition \ref{defnrandommeasure} implies that $\mu$ is finite on every set $A\in\hat{S}$. By the definition of $\hat{S}$ there exists a Borel set $B$ on $E^\prime_d$ and an $n\in\N$ such that $A=B\cap \cup_{i=1}^n U_i$. Therefore $\mu(A)<\infty$ if $\mu(U_i)<\infty$ for all $i\in\N$, which proves the claim.
    
    Finally, it remains to observe that setting $\mu(\binfty)=\infty$ is a measurable extension of the random measure $\mu$ on $E^\prime_d$ to a random measure on $E_d$. To see this, note that $\{\omega\in\Omega \mid \mu(C)\in D\}=\{\omega\in\Omega \mid \mu(C\setminus\{\binfty\})\in D\}$ for every measurable $C\subset E^\prime_d$ and measurable $D\subset[0,\infty]$. Moreover, $\{\omega\in\Omega \mid \mu(C)\in D\}=\emptyset \id_{\{\infty\not\in D\}}+\Omega\id_{\{\infty\in D\}}$ for every measurable $C\subset E_d$ containing $\binfty$ and every measurable $D\subset[0,\infty]$. Thus, its now easy to see that the evaluation functional $C\mapsto \mu(C)$ is measurable for every measurable $C\subset E_d$, which implies measurability of $\mu$ as a random measure on $E_d$.
\end{proof}

Further, we show that if an IDEM $\mu$ has deterministic support, we may transfer this property to its corresponding Lévy characteristics.

\begin{lem}
\label{lemsupptrafoidem}
    Assume that an IDEM $\mu$ on $E_d$ with Lévy characteristic $(\alpha,\nu)$  has support that is almost surely contained in a measurable set $S\subseteq E_d$. Then, $\alpha\lc S^\complement\rc=0$ and $\nu$ is concentrated on $\big\{\eta\mid \eta\lc S^\complement\rc=0\big\}$. 
\end{lem}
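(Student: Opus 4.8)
The plan is to exploit the Lévy--Khintchine representation of Theorem \ref{thmidrandmeasurelevykhintchine} together with the PRM representation $\mu \sim \alpha + \sum_{j\in\N}\eta_j$, where $N = \sum_j \delta_{\eta_j}$ is a PRM on $M_d^0$ with intensity $\nu$. The key observation is that, since all the $\eta_j$ and $\alpha$ are non-negative measures, we have the pointwise identity $\mu(S^\complement) = \alpha(S^\complement) + \sum_{j\in\N}\eta_j(S^\complement)$, with every summand non-negative. Hence $\mu(S^\complement) = 0$ almost surely forces $\alpha(S^\complement) = 0$ (the deterministic term must vanish) and, simultaneously, $\eta_j(S^\complement) = 0$ almost surely for every atom of the PRM. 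The remaining work is to translate the latter ``almost surely for every atom'' into the statement ``$\nu$ is concentrated on $\{\eta \mid \eta(S^\complement)=0\}$''.

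First I would fix a localizing set $U_i$ and consider the measurable functional $\eta \mapsto \eta(S^\complement \cap U_i)$ on $M_d^0$, which is finite for every $\eta \in M_d^0$. For $\varepsilon > 0$, let $A_{i,\varepsilon} := \{\eta \in M_d^0 \mid \eta(S^\complement \cap U_i) > \varepsilon\}$; by the integrability condition $\int_{M_d^0}\min\{\eta(U_i);1\}\,\nu(\rmd\eta) < \infty$, we have $\nu(A_{i,\varepsilon}) \le \varepsilon^{-1}\int_{M_d^0}\min\{\eta(U_i);1\}\,\nu(\rmd\eta) < \infty$ (for $\varepsilon<1$), so restricted to $A_{i,\varepsilon}$ the measure $\nu$ is finite. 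By standard properties of Poisson random measures, $N(A_{i,\varepsilon})$ is Poisson distributed with mean $\nu(A_{i,\varepsilon})$, and if $\nu(A_{i,\varepsilon}) > 0$ then with positive probability there is at least one atom $\eta_j \in A_{i,\varepsilon}$, which would give $\mu(S^\complement) \ge \eta_j(S^\complement \cap U_i) > \varepsilon > 0$, contradicting $\mu(S^\complement) = 0$ a.s. Hence $\nu(A_{i,\varepsilon}) = 0$ for all $i \in \N$ and all $\varepsilon > 0$. Taking a countable union over $\varepsilon \downarrow 0$ and $i \in \N$ gives $\nu(\{\eta \mid \eta(S^\complement \cap U_i) > 0 \text{ for some } i\}) = 0$, i.e. $\nu$-almost every $\eta$ satisfies $\eta(S^\complement \cap U_i) = 0$ for all $i$; since $(U_i)_{i\in\N}$ exhausts $E_d^\prime$ and $S^\complement \cap \{\binfty\}$ is irrelevant (or: $\eta(\{\binfty\})$ is not counted in $M_d^0$-type statements — more precisely one treats $\binfty$ separately, noting $\binfty \in S$ whenever $\mu$ is supported in $S$ since $\mu(\{\binfty\}) = \infty$), we conclude $\eta(S^\complement) = 0$ for $\nu$-a.e. $\eta$.

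For the base measure, the cleanest route is to take Laplace transforms: choosing $f = \id_{S^\complement \cap U_i}$ in the formula of Theorem \ref{thmidrandmeasurelevykhintchine} gives
\begin{align*}
\e\lk \exp\lc -\mu(S^\complement \cap U_i)\rc\rk = \exp\Big( -\alpha(S^\complement \cap U_i) - \int_{M_d^0} 1 - \exp\lc -\eta(S^\complement \cap U_i)\rc \nu(\rmd\eta)\Big).
\end{align*}
The left-hand side equals $1$ since $\mu(S^\complement \cap U_i) = 0$ a.s., and the integral term vanishes by the previous paragraph, forcing $\alpha(S^\complement \cap U_i) = 0$; letting $i \to \infty$ yields $\alpha(S^\complement \cap E_d^\prime) = 0$, and again $\binfty \in S$ handles the remaining point. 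I expect the main (though modest) obstacle to be the bookkeeping around the point $\binfty$ and the localizing sets: one must be careful that ``support contained in $S$'' is interpreted on $E_d$, where $\mu(\{\binfty\}) = \infty$ always, so $\binfty$ is automatically in any such $S$, and that the $\sigma$-finiteness of $\nu$ (Lemma \ref{lemlevymeassigmafinite}) legitimizes the countable-union argument. No deep idea is needed beyond the non-negativity of the building blocks and the Poisson structure of $N$.
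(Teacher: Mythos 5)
Your proposal is correct and follows essentially the same route as the paper: both arguments reduce to the observation that if $\nu$ charges $\{\eta\mid \eta(S^\complement)>\varepsilon\}$ (the paper works with $\eta(S^\complement)>a$ directly, you localize further by $U_i$), then the PRM $N$ has an atom in that set with positive probability, forcing $\mu(S^\complement)>0$ with positive probability and contradicting the support assumption, while $\alpha(S^\complement)=0$ follows from non-negativity. Your extra bookkeeping around $\binfty$ and the finiteness of $\nu(A_{i,\varepsilon})$ is harmless but not needed; the paper's proof is a slightly terser version of the same argument.
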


\begin{proof}
We prove the claim by contradiction. To this purpose, define $A:=\big\{ \eta\mid \eta\lc S^\complement\rc >0\big\}$ and assume $\mu\lc S^\complement\rc=0$ almost surely but $\nu(A)>0$. Note that $A$ is measurable since $S$ is measurable. Moreover, if $\nu(A)>0$ there exists $a>0$ s.t. $A_a:=\big\{ \eta\mid \eta\lc S^\complement\rc >a\big\}$ satisfies $\nu\lc A_a\rc>0$. Therefore, since $\mu\sim \alpha +\int_{M^0_d}\eta N(\rmd\eta)=\alpha +\int_{M^0_d\cap A}\eta N(\rmd\eta)+\int_{M^0_d\cap A^\complement}\eta N(\rmd\eta)$, we get $0=\mu\lc S^\complement\rc=\alpha\lc S^\complement\rc +\int_{M^0_d\cap A}\eta\lc S^\complement\rc N(\rmd\eta)> aN\lc A_a\rc$ almost surely. However, since $\nu(A_a)>0$ we have $P(N(A_a)>0)=1-\exp(-\nu (A_a))>0$ which is a contradiction.
\end{proof}

This also has implications for the support of the posterior of an IDEM.
\begin{cor}
     If $\mu$ is an IDEM which is concentrated on some set $S\subset E_d$, then $\overline{\mu}_n$ and $\lc\overline{\mu}_n^{(l)}\rc_{1\leq i\leq L(\Tilde{\Theta})}$ from Theorem \ref{thmconddistmixtureofminid} are concentrated on $S$ as well. Therefore, the posterior of an IDEM $\mu$ inherits the support of the prior.
\end{cor}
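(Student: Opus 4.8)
The idea is to push everything through Lemma~\ref{lemsupptrafoidem} and the series representation of an IDEM from Corollary~\ref{coridexpmisidbyidexpm}, after two elementary observations. First, since $\mu$ is concentrated on $S$ and every element of $M_d$ puts infinite mass on $\binfty$, we must have $\binfty\in S$; indeed, $\binfty\in S^\complement$ would force $\mu(S^\complement)\geq\mu(\{\binfty\})=\infty$, contradicting $\mu(S^\complement)=0$ a.s. In particular $(\infty\delta_\binfty)(S^\complement)=0$. Second, Lemma~\ref{lemsupptrafoidem} applied to $\mu$ yields $\alpha(S^\complement)=0$ and $\nu\big(\{\eta\mid\eta(S^\complement)>0\}\big)=0$.

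Next I would treat $\overline{\mu}_n$. It is the IDEM with Lévy characteristics $(\alpha,\overline{\nu}_n)$, where $\overline{\nu}_n(\rmd\eta)=\prod_{j=1}^n\exp(-\eta((\bmX_j,\binfty]^\complement))\,\nu(\rmd\eta)$ has a density in $[0,1]$ with respect to $\nu$; hence $\overline{\nu}_n$ is again concentrated on $\{\eta\mid\eta(S^\complement)=0\}$. By Corollary~\ref{coridexpmisidbyidexpm}, $\overline{\mu}_n\sim\alpha+\sum_{j\in\N}\bar\eta_j$, with $(\bar\eta_j)_{j\in\N}$ the atoms of a PRM of intensity $\overline{\nu}_n$; almost surely each atom satisfies $\bar\eta_j(S^\complement)=0$, and $\alpha(S^\complement)=0$, so $\overline{\mu}_n(S^\complement)=0$ a.s.

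Then I would treat each $\overline{\mu}_n^{(l)}$. By Theorem~\ref{thmconddistmixtureofminid}, conditionally on $\tilde\Theta$ its law is $C(\ubmX_n,\tilde\Theta_l)^{-1}\big(h(\eta,\ubmX_n,\tilde\Theta_l)\,\nu(\rmd\eta)+K(\ubmX_n,\tilde\Theta_l)\,\delta_{(\infty\delta_\binfty)}\big)$. The first component is absolutely continuous with respect to $\nu$, hence assigns zero probability to the $\nu$-null set $\{\eta\mid\eta(S^\complement)>0\}$; the second component is the Dirac mass at $\infty\delta_\binfty$, which puts no mass on $S^\complement$ by the first observation above. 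Therefore $P\big(\overline{\mu}_n^{(l)}(S^\complement)>0\mid\tilde\Theta\big)=0$, and integrating out $\tilde\Theta$ gives $\overline{\mu}_n^{(l)}(S^\complement)=0$ a.s. Finally, since $\mu\mid\ubmX_n\sim\overline{\mu}_n+\sum_{l=1}^{L(\tilde\Theta)}\overline{\mu}_n^{(l)}$ is an almost surely finite sum ($L(\tilde\Theta)\leq dn$) of random measures each giving zero mass to $S^\complement$ a.s., the posterior is again concentrated on $S$.

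\textbf{Main obstacle.} There is no real analytic difficulty here; the only point requiring care is bookkeeping around $\binfty$: one must observe that ``$\mu$ concentrated on $S$'' forces $\binfty\in S$, so that the extra atom $\infty\delta_\binfty$ appearing in the posterior mixture is harmless, and that absolute continuity of $\overline{\nu}_n$ and of the mixing measure of $\overline{\mu}_n^{(l)}$ with respect to $\nu$ is exactly what transfers the $\nu$-null-set conclusion of Lemma~\ref{lemsupptrafoidem} from the prior characteristics to the posterior ones.
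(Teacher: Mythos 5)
Your proof is correct and is exactly the argument the paper intends: the corollary is stated without proof as an immediate consequence of Lemma \ref{lemsupptrafoidem}, and your write-up supplies the omitted details in the natural way (transfer of the support condition to $(\alpha,\nu)$, absolute continuity of $\overline{\nu}_n$ and of the mixing measure of $\overline{\mu}_n^{(l)}$ with respect to $\nu$, and the harmlessness of the atom at $\infty\delta_\binfty$ since $\binfty\in S$). No gaps.
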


For example, when $\minid(\mu)$ concentrates on survival functions of random vectors with independent components, i.e.\ $\mu$ concentrates on $\cup_{i=1}^d \ee_i$, the posterior of $\overline{\mu}_n$ and $\lc\overline{\mu}^{(l)}\rc_{1\leq l\leq L(\Tilde{\Theta})}$ concentrate on  $\cup_{i=1}^d \ee_i$ as well. 

\begin{lem}
\label{lemlevymeassigmafinite}
    The Lévy measure of an IDEM on $E_d$ is $\sigma$-finite.
\end{lem}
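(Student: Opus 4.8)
The plan is to read off $\sigma$-finiteness of $\nu$ directly from the integrability condition attached to the Lévy--Khintchine characteristics in Theorem \ref{thmidrandmeasurelevykhintchine}, namely that $\int_{M_d^0}\min\{\eta(U_i);1\}\,\nu(\rmd\eta)<\infty$ for every $i\in\N$, combined with the fact that the localizing sets $(U_i)_{i\in\N}$ form a measurable partition of $E_d^\prime$. No auxiliary construction is needed; the statement is essentially a bookkeeping consequence of these two facts.

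First I would fix $i,k\in\N$ and consider $A_{i,k}:=\{\eta\in M_d^0\mid \eta(U_i)\geq 1/k\}$. This set is measurable because $\eta\mapsto\eta(U_i)$ is measurable by the very definition of the $\sigma$-algebra $\mathcal{G}$ on $M_d$ (it is generated by evaluations of measures on measurable sets). A Markov-type estimate then bounds its $\nu$-measure: since $\min\{\eta(U_i);1\}\geq (1/k)\,\id_{A_{i,k}}(\eta)$ for every $k\geq 1$, one gets $\nu(A_{i,k})\leq k\int_{M_d^0}\min\{\eta(U_i);1\}\,\nu(\rmd\eta)<\infty$.

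Second I would verify that the countable family $\{A_{i,k}\}_{i,k\in\N}$ exhausts $M_d^0$. If $\eta\in M_d^0$, then by definition $\eta(E_d^\prime)\neq 0$, and since the $U_i$ partition $E_d^\prime$ we have $\sum_{i\in\N}\eta(U_i)=\eta(E_d^\prime)>0$; hence $\eta(U_i)>0$ for some $i$, and therefore $\eta\in A_{i,k}$ once $k$ is chosen large enough that $1/k\leq\eta(U_i)$. Thus $M_d^0=\bigcup_{i,k\in\N}A_{i,k}$ is a countable union of sets of finite $\nu$-measure, which is exactly the assertion that $\nu$ is $\sigma$-finite.

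There is no genuine obstacle in this argument; the only points that deserve a line of care are the elementary domination $\min\{\eta(U_i);1\}\geq (1/k)\id_{A_{i,k}}$ that legitimizes the bound on $\nu(A_{i,k})$, and the use of the partition property of $(U_i)_{i\in\N}$ to pass from $\eta(E_d^\prime)\neq 0$ to the existence of an index $i$ with $\eta(U_i)>0$.
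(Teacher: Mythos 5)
Your proof is correct and follows essentially the same route as the paper's: both cover $M_d^0$ by the sets $\{\eta\mid\eta(U_i)\geq 1/k\}$ and bound their $\nu$-measure by a Markov-type estimate against $\int_{M_d^0}\min\{\eta(U_i);1\}\,\nu(\rmd\eta)<\infty$ from Theorem \ref{thmidrandmeasurelevykhintchine}. Nothing is missing.
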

\begin{proof}
    Let $A_{i,\epsilon}:=\{\eta\in M^0_d\mid \eta(U_i)\geq \epsilon\}$ and define the countable collection of sets $(A_{i,1/j})_{i,j\in\N}$. Moreover, for every $\eta\in M^0_d$ we have $\eta(U_i)>0$ for some $i\in\N$ because otherwise $\eta(E_d^\prime)=0$. Thus, $M_d^0=\cup_{i,j\in\N} A_{i,1/j}$ and it remains to show $\nu\lc A_{i,1/j} \rc<\infty$ for all $i,j\in\N$. We know that $\int_{M_d^0} \min\{\eta(U_i);1\}\nu(\rmd\eta)<\infty$ for all $i\in\N$ by Theorem \ref{thmidrandmeasurelevykhintchine} and therefore $\nu( A_{i,1/j})=\int_{M_d^0} \id_{\{\eta(U_i)\geq 1/j\}}\nu(\rmd\eta)\leq j\int_{M_d^0} \min\{\eta(U_i);1\}\nu(\rmd\eta)<\infty$, which proves the claim.
\end{proof}

\section{Overview of notation}
\label{appgloss}
\begingroup
\small
\renewcommand{\arraystretch}{1.15}
\subsection*{Spaces and (index) sets}
\begin{center}
\begin{tabular}{p{0.35\textwidth}p{0.64\textwidth}}
\textbf{Symbol} & \textbf{Meaning} \\
$E_p$ & $(-\infty,\infty]^p$ with $p\in\N\cup\{\infty\}$ \\
$E_p^\prime$ & $E_p\setminus\{\binfty\}$ \\
$d\times\N$ & $\{ (i,j) \mid \leqd, j\in \N\}$ \\
$d\times n$ & $\{1,\ldots,d\}\times\{1,\ldots,n\}$ \\
$IJ$ & $=((i_k,j_k))_{1\leq k\leq p}$\\
$I(IJ)$   & $\{i \mid (i,j) \in IJ \text{ for some } j\in\N\}$ \\
$J(IJ)$   &  $\{j \mid (i,j) \in IJ \text{ for some } \leqd\}$ \\
$M_d$ & $\big\{ \eta \mid \eta \text{ is a measure on }E_d \text{ such that } \eta((\bm x,\binfty]^\complement)<\infty\text{ for all }\bmx\in\R^d\text{ and }\eta(\binfty)=\infty \big\}$ \\
$M_d^0$ & $M_d\cap \{\eta \mid \eta(E_d^\prime)\neq 0\}$ \\
$\mathcal{P}_n$ ($\mathcal{P}_{IJ}$) & Set of all partitions of $d\times n$ ($IJ$) \\
$\ee_i$ & $\{\bmx\in E_d\mid x_i<\infty, x_j=\binfty\ \forall\ 1\leq j\not=i\leq d\}$ \\
$\ee_i(t)$ & $\big\{ \bmx\in E_d \mid x_i\leq t, x_j=\infty\ \forall\ 1\leq i\not=j\leq d \big\} $ \\
$\ee_D$ & $\{ \bmx\in E_d \mid x_i<\infty\ \forall\ i\in D \text{ and }x_j=\infty\ \forall\  1\leq j\not\in D \leq d\}$ 
\end{tabular}
\end{center}

\subsection*{Sequences, vectors, and min-id objects}
\begin{center}
\begin{tabular}{p{0.35\textwidth}p{0.65\textwidth}}
    \textbf{Symbol} & \textbf{Meaning} \\
$\ubmX$ & $(\bmX_j)_{j\in\N}\in\R^{d\times\N}$ \\
$X_{i,j}$ & Coordinate $i$ of observation $j$ in $\ubmX$\\
$\ubmX_n$, $\bmX_{IJ}$ &  $\ubmX_n=(\bmX_j)_{1\leq j\leq n}$, and $\bmX_{IJ}=\lc X_{i,j}\rc_{(i,j)\in IJ}$. \\
$\ubmX_{\setminus IJ}$ & $(X_{i,j})_{ (i,j)\in d\times \N\setminus IJ}$ \\

$\ubmx$, $\ubmx_n$, $\bmx_{IJ}$, $\ubmx_{\setminus IJ}$ & Deterministic counterpart of $\ubmX$, $\ubmX_n$, $\bmXIJ$ and $\ubmX_{\setminus IJ}$ \\
$\minid(\eta)$ & Law of a min-id sequence or vector with exponent measure $\eta$ \\
$\lambda$ & Exponent measure of the min-id sequence $\ubmX$ which is generated by drawing an IDEM $\mu$ and then drawing i.i.d.\ random vectors $\bmX_j\sim \minid(\mu)$  \\
$\lambda_\alpha$ & The part of $\lambda$ that can be attributed the the base measure $\alpha$ \\
$\lambda_\nu$ & The part of $\lambda$ that can be attributed the the L\'evy measure measure $\nu$ \\
$\lambda^{(IJ)}$ & Exponent measure of $\bmX_{IJ}$ \\
\end{tabular}
\end{center}

\subsection*{(Random) measures}
\begin{center}
\begin{tabular}{p{0.35\textwidth}p{0.65\textwidth}}
\textbf{Symbol} & \textbf{Meaning} \\
$\mu$ & Infinitely divisible random exponent measure \\
$\alpha$ & Deterministic base measure in the L\'evy--Khintchine representation of $\mu$ \\
$\nu$ & L\'evy measure in the L\'evy--Khintchine representation of $\mu$ on $M_d^0$ \\
$\eta$ & Non-random element of $M_d^0$ \\
$\bar\mu$ & The posterior of $\mu$, i.e.\ $\bar\mu\sim \mu\mid\ubmX_n$ \\
$\overline{\nu}_n$ & Posterior L\'evy measure of $\mu$ \\
$L_{(\alpha,\nu)}(\cdot,\ubmx_m)$ & Multivariate Laplace transform of the vector $(\mu((\bmx_i,\binfty]^\complement))_{1\leq i\leq m}$. \\
$\eta_f(\cdot)$ & Image measure of $\eta$ under $f$, i.e.\ $\eta \lc \{ x\mid f(x) \in \cdot \}\rc$ \\
$\mu_{\mu_1^{(\kappa_1)},\ldots,\mu_d^{(\kappa_d)}}$ & Exponent measure from (\ref{defsubordinatedCRMexpmeasure} built from smoothed univariate IDEMs $\lc\mu_i\rc_{\leqd}$ \\
$\Lambda^{(\infty)}=\delta_\infty+\Lambda$ & Lebesgue measure augmented by a point mass at $\infty$ \\
$\Lambda_p^{(\infty)}$ & $\otimes_{i=1}^p \Lambda^{(\infty)}$ \\
\end{tabular}
\end{center}

\subsection*{Other}
\begin{center}
\begin{tabular}{p{0.35\textwidth}p{0.65\textwidth}}
\textbf{Symbol} & \textbf{Meaning} \\
$\Theta=\Theta(IJ)=(\Theta_1,\ldots,\Theta_L)$ & Hitting scenario (random partition) of $IJ$ \\
$L=L(\Theta(IJ))$ & Number of sets in the partition $\Theta(IJ)$ \\
$\Tilde{\Theta}=(\tilde\Theta_1,\ldots,\tilde\Theta_L)$ & Conditional hitting scenario, i.e.\ $\Theta(IJ)$ conditioned on the observed sample $\bmXIJ$. \\
$\tau(\bmX_{IJ},\theta)$ & $P(\Tilde{\Theta}=\theta)=P(\Theta=\theta \mid \bmXIJ)$ \\
$K_{IJ}(\bmx_{IJ},\cdot)$ & Conditional distribution of $\lambda$ given $\bmx_{IJ}$ \\
$(\ubmX_{IJ},\ubmy_{IJ})(\Tilde{\Theta}_l)$ & $\lc X_{i,j}\id_{\{(i,j)\in\tilde{\Theta}_l\}}+y_{i,j}\id_{\{(i,j)\not\in\tilde{\Theta}_l\}} \rc_{(i,j)\in IJ}$ \\
$\ubmY^{(n)}$ & Exchangeable min-id sequence in the stochastic representation of the posterior \\
$\ubmZ^{(n)}$ & Exchangeable sequence in the stochastic representation of the posterior which stems from the conditional hitting scenario \\
$\lc\ubmZ^{(n,l)}\rc_{1\leq l\leq L}$ & Stochastic decomposition of the exchangeable sequence $\ubmZ^{(n)}=\min_{1\leq l\leq L}\ubmZ^{(n,l)}$ \\
$\bmX_{\Tilde\Theta_l}(i)$ & Unique value of $\lc X_{i,j}\rc_{(i,j)\in\tilde\Theta_l}$, when it exists \\
$(\bmX_{\Tilde\Theta_l},\bm B)(I(\Tilde\Theta_l))$ & $d$-dimensional vector formed from $X_i$ for $i\in I(\Tilde\Theta_l)$ and filler values from $\bm B$ elsewhere \\
\end{tabular}
\end{center}

\endgroup


\end{document}